\newtheorem{theorem}{Theorem}
\newtheorem{assumption}{Assumption}
\newtheorem{lemma}{Lemma}
\newenvironment{proof}[1][Proof]{\textbf{#1.} }{\ \rule{0.5em}{0.5em}}
\begin{document}
\onehalfspacing
\title{\textbf{Estimation of the Local Conditional Tail
		Average Treatment Effect}\thanks{We are grateful to the editor, Ivan Canay, the associate editor and two anonymous reviewers for valuable comments and suggestions on
		previous versions of the paper. We thank seminar participants in 2019 macroeconometric modelling workshop (Academia Sinica), 2020 Annual Meeting of Taiwan Econometric Society, 2021 Delhi Winter School-the Econometric Society, The 5th International Conference on Econometrics and Statistics (EcoSta 2022), 2022 Asian Meeting of the Econometric Society, CRETA Seminar, National Chengchi University and National Taiwan University for helpful comments.
	}}
\author{Le-Yu Chen\thanks{%
		Institute of Economics, Academia Sinica, 128, Section 2, Academia Road,
		Nankang, Taipei 115, Taiwan. E-mail: \texttt{lychen@econ.sinica.edu.tw}} \\
	Academia Sinica \and Yu-Min Yen\thanks{%
		Department of International Business, National Chengchi University, 64,
		Section 2, Zhi-nan Road, Wenshan, Taipei 116, Taiwan. E-mail: \texttt{%
			yyu\_min@nccu.edu.tw}} \\
	 National Chengchi University}
\date{\today}
\maketitle
\begin{abstract}The conditional tail average treatment effect (CTATE) is defined as a difference between the conditional tail expectations of potential outcomes, which can capture heterogeneity and deliver aggregated local information on treatment effects over different quantile levels and is closely related to the notion of second-order stochastic dominance and the Lorenz curve. These properties render it a valuable tool for policy evaluation. In this paper, we study estimation of the CTATE locally for a group of compliers (local CTATE or LCTATE) under the two-sided noncompliance framework. We consider a semiparametric treatment effect framework under endogeneity for the LCTATE estimation using a newly introduced class of consistent loss functions jointly for the conditional tail expectation and quantile. We establish the asymptotic theory of our proposed LCTATE estimator and provide an efficient algorithm for its implementation. We then apply the method to evaluate the effects of participating in programs under the Job Training Partnership Act in the US.\\
\textbf{JEL classification: C13, C14, C21}\\
\textbf{Keywords: Causal inference, Conditional tail expectation, Endogeneity, Semiparametric estimation, Treatment effects.}
\end{abstract}
\clearpage

\doublespacing
\section{Introduction}

The treatment effect of a policy change is often heterogeneous among individuals. Accounting for such heterogeneity is crucial for evaluating the effects of and understanding the mechanisms underlying the policy \citep{HSC_1997}. To capture the heterogeneity of treatment effects, the quantile treatment effect (QTE) is a frequently used measure, which is computed as the difference between a given quantile of the distribution of the potential outcome subject to the policy being changed and that of the potential outcome under the policy had it not been changed. In this paper, we consider an alternative measure for evaluating heterogeneous treatment effects: the conditional tail average treatment effect (CTATE), defined as a difference between the conditional tail expectations (CTEs) of two potential outcomes at a given quantile level. The CTATE amounts to a (rescaled) integral of the QTE over a specified range of quantiles and is thus useful for delivering aggregated local information on treatment effect.   

Recently, \citet{FZ_2016} proposed a class of consistent loss functions (henceforth the FZ loss) for jointly estimating the quantile and CTE of a random variable. Through the FZ loss, we develop a semiparametric estimation procedure to jointly estimate the CTATE and QTE for the group of compliers under the two-sided noncompliance framework \citep[][Chapter~24]{IR_2015}, which is commonly employed in addressing endogeneity with instrumental variables and in developing various estimators for the local treatment effects in the causal inference literature \citep[e.g.,][]{IA_1994, AIR_1996, AAI_2002, Abadie_2003, FM_2013, DHL_2014, BCFH_2017, FH_2017, CHW_2020, FFHL_2020, WPZF_2021, HLL_2022}. Under this framework, if some regularity conditions hold \citep{AAI_2002, Abadie_2003}, endogeneity can be eliminated within samples of compliers. Using this fact, we demonstrate that both the QTE and CTATE locally for compliers can be estimated through a weighted FZ loss minimization-based estimation approach. We then establish the asymptotic theory for the resulting local CTATE (LCTATE) estimator and provide an efficient and stable algorithm for its implementation.

We now further remarks on the usefulness of the CTATE in empirical policy evaluation research. The conditional tail expectation (CTE) is related to the notion of second order stochastic dominance (SOSD). Stochastic dominance is a uniform order relation between the distributions of stochastic outcomes and is often used in ranking individuals' preferences under uncertainty. Conventionally, the formulations of first and second order stochastic dominances (FOSD and SOSD) are based on the properties of the cumulative distributions of competing random outcomes. Such formulations can also be equivalently established using their quantiles and CTEs \citep{Levy_2016}. For policy evaluation, if the QTE is nonnegative over all and positive over some quantile levels, the policy being changed first order stochastically dominates that that had not been changed. Likewise, if the CTATE is nonnegative over all and positive over some quantile levels, we can then attest that the policy being changed second order stochastically dominates that that had not been changed. Therefore we can use the CTATE to rank counterfactual outcome distributions in the SOSD sense when ranking by the FOSD criterion is empirically inconclusive. In addition, the CTATE has recently attracted growing interest in the study of optimal treatment assignment policies. \cite{LSW_2023} show the importance of the CTATE in constructing optimal regret policies in the worst-case welfare when sample selection is biased. Their study demonstrates the usefulness of the CTATE for robust policy learning. 

The CTATEs at two quantile levels can be used to calculate an average of the QTEs between the two quantile levels, which we term the inter-quantile average treatment effect (IQATE). The IQATE is also useful for summarizing heterogeneous treatment effects to reveal an overall picture especially when the QTEs exhibit a large fluctuation over a specified range of quantiles of interest. Moreover, related to the distributional comparison of economic outcomes, the CTE is directly connected to the Lorenz curve for comparing income inequalities. Hence our proposed procedure for estimating the CTATE may be used to quantify the Lorenz effect, which captures the shift in the Lorenz curve due to changes in policy regimes \citep{CFM_2013}.      


In this paper, we estimate the CTEs of potential outcomes by minimizing the FZ loss \citep{FZ_2016}. To the best of our knowledge, the FZ loss is the only known class of consistent loss functions for estimating the CTE of a random variable. Using consistent loss to estimate parameters of statistical functionals results in an M-estimation problem, which facilitates computation and statistical analysis of parameter estimators through empirical risk minimization. The FZ loss is non-smooth in model parameters, hence rendering the estimation problem computationally challenging. To address this issue, we propose an iterative scheme that decomposes the estimation problem into a smooth optimization and a weighted quantile regression estimation problem. This then yields a computation algorithm that can be efficiently and stably implemented in practice. We apply our proposed method to analyze the effects of the Job Training Partnership Act (JTPA) program participation. Our empirical findings indicate that, when endogeneity is considered, the estimated LCTATE on the earnings for adult women is nonnegative over all and statistically significantly positive over some quantile levels, but that for adult men is not. Our results also reveal that for both adult men and women, evidence for the positivity of local QTEs on earnings is much weaker than that of LCTATEs over all quantile levels. For adult women, conditional on the group of compliers, there hence appears to be second order stochastic dominance of the distribution of potential earnings from participating in the JTPA programs over that from not participating. However, the SOSD results do not seem to hold for the case of adult men. As a risk-averse individual prefers the distribution of a random outcome that second order stochastically dominates that of another competing outcome, our empirical results suggest that at least, the JTPA could be beneficial for risk-averse female workers who complied with assignment of the JTPA offer.

Conditional tail expectation (CTE), as a parameter of interest, has already received considerable attention in finance and risk management. In particular, CTE of an asset's return, known as the expected shortfall, is now a benchmark risk assessment measure in the industry. In the previous literature, the estimation of expected shortfalls is often based on estimated truncated means of the assets' returns \citep[see e.g.,][]{LX_2013,Hill_2015}. Recently, there has been a growing interest in structural models for forecasting the expected shortfall using the FZ loss \citep{PFC_2019, Taylor_2019, MT_2020, CYY_2022}. These studies address the problem of finding the best model to predict the expected shortfall. In contrast, our paper, which also builds upon the FZ loss minimization approach, focuses on the CTATE estimation in a causal inference framework and thus has a fundamentally different objective from these previous works. Finally, in a study independent of ours, \citet{WTH_2024} developed a method for estimating the causal effect related to the CTE of the potential outcome for compliers. However, their estimation relies on a plug-in conditional quantile estimate and is quite different from ours, which uses the FZ loss.

The remainder of this paper is organized as follows. In Section 2, we introduce the notion of CTATE, set forth the causal framework, and present our estimation approach as well as the implementation algorithm. In Section 3, we establish the asymptotic properties of our CTATE estimator. In Section 4, we conduct simulation experiments to examine the finite-sample performance of the proposed method. In Section 5, we illustrate the usefulness of our approach in an empirical application using the JTPA data. Finally, we conclude the paper in Section 6. The proofs of all the technical results are provided in the appendix.
\section{Methodology}
\subsection{Conditional tail average treatment effect}
We follow the Rubin causal model with potential outcomes \citep[see, e.g.,][Chapter~1]{IR_2015}. Let $D$ denote the treatment status having a value in $\{0,1\}$ (binary variable). Following convention, here an individual is deemed treated if and only if her treatment status $D$ takes a value of $1$, and $0$ otherwise. We use $Y_{d}$ to denote the potential outcome under the $d$th treatment, where $d\in\{0,1\}$, and $Y:=DY_{1}+(1-D)Y_{1}$ to denote the observed outcome. To formally define the CTATE, let $Q_{Y_{d}|X}(\tau)$ denote the $\tau$-quantile of the distribution of the potential outcome $Y_{d}$ conditional on $X$. Let $CTE_{Y_{d}|X}\left( \tau \right) :=E\left[Y_{d}|X,Y_{d}\leq Q_{Y_{d}|X}(\tau)\right]$ denote the CTE of $Y_{d}$ at the $\tau$-quantile level, given $X$. Note that, for any continuous random
variable $Y_{d}$,
\begin{equation}
CTE_{Y_{d}|X}\left(\tau\right) =\frac{1}{\tau}\int\nolimits_{0}^{\tau
}Q_{Y_{d}|X}(u)du. \label{CTE}
\end{equation}
We define the CTATE at the $\tau $th quantile level as 
\begin{equation}
CTATE(\tau):=CTE_{Y_{1}|X}\left( \tau \right)
-CTE_{Y_{0}|X}\left( \tau \right) .  \label{CTATE}
\end{equation}%
In the literature, the QTE conditional on $X$ at the $\tau $th quantile is defined as
\begin{equation}
QTE(\tau):=Q_{Y_{1}|X}(\tau )-Q_{Y_{0}|X}(\tau).
\label{QTE}
\end{equation}%
Using (\ref{CTE}), (\ref{CTATE}) and (\ref{QTE}), we note that the CTATE is related to QTE through the equation: 
\begin{equation}
CTATE(\tau)=\frac{1}{\tau}\int\nolimits_{0}^{\tau
}QTE(u)du.  \label{CTATE and QTE}
\end{equation} Therefore, if the function $Q_{Y_{d}|X}(\tau )$ is linear in quantile-specific parameters (e.g., \citet{CH_2006}, \citet{CHJ_2007}, \citet{CH_2008} and \citet{CHJ_2009}), then both $CTE_{Y_{d}|X}\left(\tau\right)$ and $CTATE(\tau )$ are also linear in quantile-specific parameters. Accordingly, for such a linear specification setting, if the treatment status $D$ is also exogenous, we can consistently estimate the structural parameters of the function $Q_{Y_{d}|X}(\tau )$ by minimizing the check loss for the standard quantile regression of \citet{KB_1978} and estimate those of $CTE_{Y_{d}|X}\left(\tau\right)$ by minimizing the FZ loss. This then allows for consistent estimation of the corresponding QTE and CTATE. However, in the presence of endogeneity, such an estimation procedure may be invalid and may bias the results. 

\subsection{Estimating the CTATE under two-sided noncompliance}
We consider treatment effect estimation in a two-sided noncompliance framework in which the treatment status is potentially endogenous. Assume that there is also a binary instrumental variable $Z\in \{0,1\}$, which acts as an indicator for a random assignment of eligibility for receiving a treatment. Given noncompliance, there are four types ($T$) of individuals in the population: compliers $(T=c)$, always-takers $(T=a)$, never-takers $(T=ne)$ and defiers $(T=de)$. Always-takers will and never-takers will not be treated regardless of their eligibility to receive the treatment. Compliers will take the treatment only if they are eligible and defiers will do so only if they are not eligible. Let $D_{z}$, $z\in \{0,1\}$, denote potential treatment statuses under eligibility assignment $z$. We can then summarize the relations between type $T$ and the potential treatment statuses $D_{1}$ and $D_{0}$ as follows: (1) $D_{1}=D_{0}=1$ if and only if $T=a$; (2) $D_{1}=D_{0}=0$ if and only if $T=ne$; (3) $D_{1}=1$ and $D_{0}=0$ if and only if $T=c$; (4) $D_{1}=0$ and $D_{0}=1$ if and only if $T=de$.

Following \citet{AAI_2002}, we make the following assumptions to identify the treatment effects for \textit{compliers} under the two-sided noncompliance framework:
\begin{assumption}[\citet{AAI_2002}]For almost every realization of $X$:
\label{Assumption AAI}
	\item[1.] $(Y_{1},Y_{0},D_{1},D_{0})\perp Z|X$.
	\item[2.] $0<P(Z=1|X)<1$.
	\item[3.] $P(D_{1}=1|X)\neq P(D_{0}=1|X)$.
	\item[4.] $P(D_{1}\geq D_{0}|X)=1$.
\end{assumption}

Assumptions similar to those stated above are common in the literature on the Rubin causal model with instrumental variables \citep[see e.g.,][Chapter 24]{IA_1994, AIR_1996, Abadie_2003, IR_2015}. Assumption 1.1 is analogous to an exclusion restriction requiring that, given $X$, the instrument $Z$ should not directly affect both the potential outcomes and potential treatment statuses. Thus the effect of $Z$ on outcome $Y$ can only occur through that on the observed treatment status $D$. This assumption holds when $Z$ is randomly assigned conditional on the covariates $X$. Assumptions 1.2 and 1.3 are mild and ensure that the relevance condition that $Z$ and $D$ are correlated conditional on $X$ holds. Assumption 1.3 also guarantees that $P\left(T=c|X\right)>0$ \citep{Abadie_2003} so that there is a nonzero proportion of compliers in the population. Assumption 1.4 is a monotonicity condition, implying that the potential treatment status $D_{z}$ is weakly increasing with $z$, thereby ruling out the existence of defiers. This condition is plausible in various applications. See, for example, \citet{AAI_2002} and \citet{Abadie_2003} for further discussions on these assumptions.

Under Assumption 1, \citet{AAI_2002} show that unconfounded treatment selection holds within the group of compliers:
\begin{equation}
\left(Y_{1},Y_{0}\right)\perp D|X,T=c.
\label{unconfounded_assignment_condition}
\end{equation} 
Therefore, for the group of compliers, the distribution of potential outcome $Y_{d}$ given $X$ amounts to that of the observed outcome $Y$ given $X$ and $D=d$, which enables us to derive the causal parameters of interest for the compilers. In the present paper, we focus on the CTATE estimation for compliers, or local CTATE (LCTATE). Let $CTATE(\tau,T=c)$ and $QTE(\tau, T=c)$ denote the CTATE and QTE for compliers at the $\tau$th quantile, respectively, given $X$. If the complier quantile regression model is specified as linear-in-parameters, 
\begin{equation}
Q_{Y|D,X,T=c}(\tau )=\alpha _{1,\tau }D+X^\top\boldsymbol{\beta}_{1,\tau },
\label{linear_q_Y}
\end{equation}%
it then follows from (\ref{CTE}) that the complier CTE given $D$ and $X$ is also linear in parameters such that
\begin{equation}
CTE_{Y|D,X,T=c}(\tau )=\alpha_{2,\tau}D+X^\top\boldsymbol{\beta}_{2,\tau},
\label{linear_e_Y}
\end{equation}
where
\[
\alpha_{2,\tau} =\frac{1}{\tau}\int\nolimits_{0}^{\tau }\alpha
_{1,u}du,\text{ } \boldsymbol{\beta}_{2,\tau } =\frac{1}{\tau}\int\nolimits_{0}^{\tau }\boldsymbol{\beta}_{1,u}du.
\]

From above, it can be seen that $CTATE(\tau,T=c)=\alpha_{2,\tau}$ and $QTE(\tau, T=c)=\alpha_{1,\tau}$. In (\ref{linear_e_Y}), we are mainly concerned with the scalar parameter $\alpha_{2,\tau}$, which delivers the $\tau$th quantile level CTATE for compliers. However, this model is not directly applicable for estimation because the group of compliers are generally not observed in the data. To address this issue, we follow the weighting scheme of \citet{AAI_2002} and \citet{Abadie_2003}, who show that under Assumptions 1, given any function $h(Y,D,X)$ with a finite mean, the following result holds:
\begin{equation}
E\left[h\left(Y,D,X\right)\right|T=c]=\frac{1}{P(T=c)}E\left[K(D,Z,X)h\left(Y,D,X\right)\right],
\label{AAI_weight_loss}
\end{equation}where
\begin{eqnarray}
K(D,Z,X) &=&1-\frac{D(1-Z)}{1-\pi(X)}-\frac{(1-D)Z}{\pi(X)},
\label{AAI_weight}\\
\pi(X)&=&P(Z=1|X)\label{pi_X}.
\end{eqnarray}

Exploiting this result, we can estimate the parameters in (\ref{linear_e_Y}) by minimizing a sample analog of the numerator term of (\ref{AAI_weight_loss}) with the function $h$ being a loss function for the estimation of the CTE. For this purpose, we propose using the FZ loss \citep{FZ_2016}, which leads to a weighted empirical FZ loss minimization-based estimation problem. Before formally setting forth the FZ loss in the next section, we note that the weight $K\left(D,Z,X\right)$ may not be non-negative for all realizations of $\left(D,Z,X\right)$, thus rendering the estimation problem computationally difficult. Following \citet{AAI_2002}, we replace $K\left(D,Z,X\right)$ with \begin{eqnarray}
\bar{K}\left(Y,D,X\right)& = & E\left[K\left(D,Z,X\right)|Y,D,X\right]\nonumber\\
& = & 1-\frac{D(1-E\left[Z|Y,D,X\right])}{1-\pi(X)}-\frac{(1-D)E\left[Z|Y,D,X\right]}{\pi(X)}
\label{AAI_star}
\end{eqnarray} 
It is straightforward to see that
\begin{equation}
E\left[\bar{K}(Y,D,X)h\left(Y,D,X\right)\right] = E\left[K(D,Z,X)h\left(Y,D,X\right)\right].
\label{key_equation} 
\end{equation}
In addition, under Assumption 1, we have that $\bar{K}\left(Y,D,X\right) = P\left(T=c|Y,D,X\right)$ so that the projected weight $\bar{K}\left(Y,D,X\right)$ is almost surely non-negative. 

For practical implementation, we plug in consistent estimators of the unknown components $\pi(X)$ and $E\left(Z|Y,D,X\right)$ of (\ref{AAI_star}) to estimate $\bar{K}\left(Y,D,X\right)$. Let $\hat{\bar{K}}_{i}:=\hat{\bar{K}}(Y_{i},D_{i},X_{i})$ denote such a plug-in estimator evaluated at the $i$th observation. We will show in Section 3 that $\hat{\bar{K}}$ is consistent for $\bar{K}$ under certain regularity conditions. However, the estimated weight $\hat{\bar{K}}_{i}$ might not be guaranteed to lie within the interval $(0,1)$ for every observation in the finite samples. If $\hat{\bar{K}}_{i}>1$ or $\hat{\bar{K}}_{i}<0$, we will then further truncate it by resetting its value to be $1$ or 0 accordingly.
 
Before we proceed to parameter estimation, it is worth comparing the setting of our paper to that of the instrumental variable quantile regression (IVQR) framework of \cite{CH_2005, CH_2006}. In the IVQR setting, the outcome $Y$ is assumed to be strictly increasing in a structural unobservable that can correlate with the treatment status $D$ but is uniformly distributed conditional on the covariates $X$ and instruments $Z$. Such a setting allows for identification of the quantiles of potential outcomes $Q_{Y_{d}|X}(\tau)$ for $d\in\{0,1\}$ as well as the resulting QTE and CTATE, which are conditional on $X$ but unconditional over the agent's compliance status. By contrast, in our framework the QTE and CTATE are only identified for compliers. As the agent's compliance decision can be endogenous, the QTE and CTATE  unconditionally over the agent's compliance status are generally different from their local counterparts conditional on the group of compilers.

There are other notable differences between the IVQR setting and the causal framework of our paper. The IVQR leaves the dimensionality of unobserved heterogeneity in the treatment selection equation unrestricted. However, in the outcome equation, all unobserved heterogeneities should be summarized by a scalar unobservable that needs to satisfy the rank similarity condition, a restriction on variations of this unobservable across different treatment statuses. By contrast, our setting does not restrict dimensionality of the unobservables in the outcome equation but its monotonicity assumption (Assumption 1.4) essentially restricts the unobservables in the treatment selection equation to being a scalar. Thus the IVQR and local QTE models are nonnested and may complement each other in empirical analysis \citep{Melly_2017}.

Despite the differences between the IVQR and local QTE, \cite{Wuthrich_2020} show that the QTE in the IVQR amounts to that for compliers at adjusted quantile levels under certain regularity conditions. This finding has interesting implications for exploring the connection between the two estimation frameworks. For example, if the QTE for compliers is constant across quantiles, then the CTATE for compliers and the IVQR-based QTE are all constant across quantiles. Moreover, the CTATE for compliers and the IVQR-based QTE will all be of the same sign if the QTE for complier is positive (or negative) across all quantiles.


\subsection{FZ loss function}

The FZ loss \citep{FZ_2016} is a class of consistent loss functions for eliciting both the quantile and CTE of a random variable, meaning that these two statistical functionals can be jointly identified by minimizing the expectation of the FZ loss. Interestingly, unlike the quantile, which can be elicited with some consistent loss functions (say, the check loss), there does not exist a consistent loss function solely for eliciting the CTE \citep{FZ_2016}. 

Let $\mathcal{F}$ be a class of distribution functions on real numbers $\mathbb{R}$ with finite first moments and unique $\tau$-quantiles. For a random variable $Y$ following a distribution $F\in\mathcal{F}$, let $Q_{Y}\left(\tau\right)$ denote the $\tau$-quantile and $CTE_{Y}\left(\tau\right)$ denote the corresponding CTE. \citet{FZ_2016} show that
\begin{equation}
\left(Q_{Y}\left(\tau\right),CTE_{Y}\left(\tau\right)\right)=\arg\min_{\left(q,e\right)}E\left[FZ_{\tau}\left(q,e,Y\right)\right],\label{FZ_minimization}
\end{equation}where\begin{eqnarray}
	FZ_{\tau}\left(q,e,y\right) & = & 1\left\{ y\leq q\right\} \left[G_{1}\left(q\right)-G_{1}\left(y\right)\right]-\tau G_{1}\left(q\right)\nonumber \\
	&  & +G_{2}^{\prime}\left(e\right)\left[e+LQ_{\tau}\left(q,y\right)\right] -G_{2}\left(e\right)+\eta\left(y\right)\label{FZ_loss}
\end{eqnarray}
is the FZ loss. Here $LQ_{\tau}\left(q,y\right):=\tau^{-1}\max\left(q-y,0\right)-q$, $\tau\in\left(0,1\right)$ is the quantile level, $(q,e,y)\in\mathbb{R}^{3}$ and $e\leq q$. $G_{1}\left(.\right)$,
$G_{2}\left(.\right)$ and $\eta\left(.\right)$ are functions of
real numbers and $G_{2}^{\prime}\left(.\right)$ is the subgradient
of $G_{2}\left(.\right)$. It is required that $1_{\left[-\infty,q\right)}G_{1}\left(.\right)$ be $F$ integrable for all $q\in\mathbb{R}$ and all $F\in\mathcal{F}$, and
$G_{2}^{\prime}\left(.\right)$ and $\eta(.)$ be $F$ integrable for all $F\in\mathcal{F}$. In addition, $G_{1}\left(.\right)$ should be increasing and $G_{2}\left(.\right)$ should be increasing and convex. If $G_{2}\left(.\right)$ is both strictly increasing and strictly convex, the minimizer in (\ref{FZ_minimization}) is unique and the loss function $FZ_{\tau}\left(q,e,y\right)$
is said to be \textit{strictly} consistent\footnote{Let $L(x,y)$ denote a loss function for obtaining a statistical functional of a random variable. In our case, $L=F_{\tau}^{sp}$ and $x=(q,e)$. 
Let $\mathcal{F}$ denote a class of distribution functions and $F$ be an element in $\mathcal{F}$. Let $\lambda:\mathcal{F}\mapsto \mathbb{S}$ denote a statistical functional which maps $F\in \mathcal{F}$ to a set $\mathbb{S}$, which in our case amounts to the set $\mathbb{R}^{2}$. $L(x,y)$ is consistent for the statistical functional $\lambda(F)$ if $ E_{F}\left[L\left(\lambda(F),Y\right)\right]\leq E_{F}\left[L\left(x,Y\right)\right]$ for all $F\in\mathcal{F}$ and all $x\in\mathbb{S}$ and a random variable $Y$ following the distribution $F$. If a loss function is consistent and $E_{F}\left[L\left(\lambda(F),Y\right)\right]=E_{F}\left[L\left(x,Y\right)\right]$ implies $x=\lambda\left(F\right)$, the loss function is said to be strictly consistent.} for the $\tau$-quantile and the corresponding CTE of all $F\in\mathcal{F}$ \citep{FZ_2016}.


Let $\boldsymbol{\theta}_{1,\tau}=(\alpha_{1,\tau},\boldsymbol{\beta}_{1,\tau}^\top)^\top$ and $\boldsymbol{\theta}_{2,\tau}=(\alpha_{2,\tau},\boldsymbol{\beta}_{2,\tau}^\top)^\top$ be the vectors that collect the true parameters of models (\ref{linear_q_Y}) and (\ref{linear_e_Y}), respectively. Using the weighted loss minimization approach of Section 2.2, we can estimate $\boldsymbol{\theta}_{1,\tau}$ and $\boldsymbol{\theta}_{2,\tau}$ by minimizing the sample analog of the left-hand side of (\ref{key_equation}), with the function $h$ being replaced by the FZ loss (\ref{FZ_loss}). Specifically, let $\hat{\boldsymbol{\theta}}_{1,\tau}$ and $\hat{\boldsymbol{\theta}}_{2,\tau}$ denote the estimators of $\boldsymbol{\theta}_{1,\tau}$ and $\boldsymbol{\theta}_{2,\tau}$. Assume that the data consist of a sample of $n$ observations $\left( Y_{i},D_{i},X_{i},Z_{i}\right), i=1,\ldots,n$. The estimators $\hat{\boldsymbol{\theta}}_{1,\tau}$ and $\hat{\boldsymbol{\theta}}_{2,\tau}$ can then be computed as the solution to the problem
\begin{equation}
	\min_{\left(\boldsymbol{\theta}_{1},\boldsymbol{\theta}_{2}\right)}\frac{1}{n}\sum_{i=1}^{n}\tilde{K}_{i}FZ_{\tau}\left(q_{i}\left(\boldsymbol{\theta}_{1}\right),e_{i}\left(\boldsymbol{\theta}_{2}\right),Y_{i}\right),\label{empirical_loss}
\end{equation}
where $q_{i}(\boldsymbol{\theta}_{1})=q(D_{i},X_{i},\boldsymbol{\theta}_{1}):= \alpha _{1}D_{i}+X_{i}^\top\boldsymbol{\beta}_{1}$ and $e_{i}(\boldsymbol{\theta}_{2})=e(D_{i},X_{i},\boldsymbol{\theta}_{1}):=\alpha _{2}D_{i}+X_{i}^\top\boldsymbol{\beta}_{2}$ are regression functions of (\ref{linear_q_Y}) and (\ref{linear_e_Y}) evaluated at parameter values $\boldsymbol{\theta}_{1}=(\alpha_{1},\boldsymbol{\beta}_{1}^\top)^\top$ and $\boldsymbol{\theta}_{2}=(\alpha_{2},\boldsymbol{\beta}_{2}^\top)^\top$, and $\tilde{K}_{i}=\min\left\{\max\left\{\hat{\bar{K}}_{i},0\right\},1\right\}$ is an estimate of $\bar{K}\left(Y_{i},D_{i},X_{i}\right)$ truncated to take value in the interval $[0,1]$.

\subsection{Specification of the FZ loss and an implementation algorithm}
To implement the estimation practically, we must specify the functional
forms of $G_{1}\left(.\right)$, $G_{2}\left(.\right)$ and $\eta\left(.\right)$.
In the present paper, we set $G_{1}\left(t\right)=0$, $G_{2}\left(t\right)=\eta\left(t\right)=\ln\left(1+\exp\left(t\right)\right)$ (the softplus function)
for $t\in\mathbb{R}$, which enables us to solve the problem (\ref{empirical_loss}) through a simple iterative computation algorithm.  
Under these settings, the loss function $FZ_{\tau}\left(q,e,y\right)$
becomes 
\begin{equation}
	FZ_{\tau}^{sp}\left(q,e,y\right)=\frac{\exp\left(e\right)}{1+\exp\left(e\right)}\left[e+LQ_{\tau}\left(q,y\right)\right]-\ln\left[1+\exp\left(e\right)\right]+\ln\left[1+\exp\left(y\right)\right].\label{FZ_sp}
\end{equation}
The loss function (\ref{FZ_sp}) is defined for all $\left(q,e,y\right)\in\mathbb{R}^{3}$. As $G_{2}(.)$ is the softplus function, it is straightforward to verify that $FZ_{\tau}^{sp}\left(q,e,y\right)$ is a strictly consistent loss function. 
Moreover, because $FZ_{\tau}(q,e,y)\geq0$ whenever $\eta(y)=\tau G_{1}(y)+G_{2}(y)$ \citep{DB_2019}, it follows that $FZ_{\tau}^{sp}\left(q,e,y\right)$ is a non-negative function as well. 

For our specifications, $G_{1}(t)=0$ is the most frequently used specification in practice \citep{FZ_2016,DB_2019,PFC_2019}.
Previous studies also have found that such a choice of $G_{1}(t)$ is a good candidate to make the FZ loss fulfill the positively homogeneous property, which means that the ordering of the losses will be unaffected by changing the unit of measurement of the data. Notice that although $G_{1}(q)=G_{1}(y)=0$, the check loss is still implicitly kept in $LQ_{\tau}(q,y)$ in (\ref{FZ_sp}) and therefore estimating the quantile function is still viable through $FZ_{\tau}^{sp}$. As for $G_{2}(t)=\ln(1+\exp(t))$, our setting makes its first and second order derivatives bounded for all $t\in \mathbb{R}$. This property helps us to establish the relevant asymptotic results and asymptotic covariance matrix of the estimators in a parsimonious and less restrictive manner (see Section 3). In addition, this setting does not restrict the sign of the CTE estimate, which is crucial for applications in microeconometrics. In Appendix A.10, we provide a more detailed discussion on issues of the specifications. We also show that using alternative specifications for $G_{1}(t)$ and $G_{2}(t)$ does not seem to affect our empirical results.

With the FZ loss specification (\ref{FZ_sp}), we solve the following minimization problem
to obtain estimates of $\boldsymbol{\theta}_{1,\tau}$ and $\boldsymbol{\theta}_{2,\tau}$:
\begin{equation}
	\min_{\left(\boldsymbol{\theta}_{1},\boldsymbol{\theta}_{2}\right)}\frac{1}{n}\sum_{i=1}^{n}\tilde{K}_{i}FZ_{\tau}^{sp}\left(q_{i}\left(\boldsymbol{\theta}_{1}\right),e_{i}\left(\boldsymbol{\theta}_{2}\right),Y_{i}\right).\label{FZ_sp_emp_loss}
\end{equation}
However, joint minimization over $\left(\boldsymbol{\theta}_{1},\boldsymbol{\theta}_{2}\right)$ in (\ref{FZ_sp_emp_loss}) could be computationally nontrivial, as the function $LQ_{\tau}\left(q,y\right)$ involves the term $\max\left(q-y,0\right)$, which has a kink and is thus not everywhere differentiable. \citet{PFC_2019} adopted a smoothed approximation of the objective function in their FZ loss minimization problem. They solved the smoothed problem to obtain an approximate solution, which was then used as an initial estimate in a non-gradient based numerical procedure for the joint minimization of the original non-smoothed objective function in the estimation problem. Here we propose an alternative computational approach, which can be efficiently and stably implemented to solve the problem (\ref{FZ_sp_emp_loss}). 

Our computational algorithm builds on the following iterative scheme. Let $\hat{\boldsymbol{\theta}}_{1}^{(k)}$ and $\hat{\boldsymbol{\theta}}_{2}^{(k)}$ denote the computed estimates of $\boldsymbol{\theta}_{1}$ and $\boldsymbol{\theta}_{2}$ at
the $k$-th iteration step. Note that, fixing the value of $\boldsymbol{\theta}_{1}$ at $\hat{\boldsymbol{\theta}}_{1}^{(k)}$, minimization of the objective function of (\ref{FZ_sp_emp_loss}) with respect to $\boldsymbol{\theta}_{2}$ is a smooth optimization problem, which can be easily solved using commonly used numerical optimization solvers (e.g., the \texttt{optim} function of \texttt{R}). 
Next, fixing the value of $\boldsymbol{\theta}_{2}$ at $\hat{\boldsymbol{\theta}}_{2}^{(k)}$, minimization of the objective function of (\ref{FZ_sp_emp_loss}) with respect to $\boldsymbol{\theta}_{1}$ reduces to the following weighted quantile regression estimation problem
\begin{equation}
	\min_{\boldsymbol{\theta}_{1}}\frac{1}{n}\sum_{i=1}^{n}\hat{\xi}_{i}^{\left(k\right)}c_{\tau}\left(Y_{i}-q_{i}\left(\boldsymbol{\theta}_{1}\right)\right),\label{WLAD}
\end{equation}
where $c_{\tau}(u) = (\tau - 1\{u<0\})u$ is the check loss and the weight
\begin{equation}
\hat{\xi}_{i}^{\left(k\right)}=\tilde{K}_{i}\frac{\exp\left[e_{i}\left(\hat{\boldsymbol{\theta}}_{2}^{\left(k\right)}\right)\right]}{1+\exp\left[e_{i}\left(\hat{\boldsymbol{\theta}}_{2}^{\left(k\right)}\right)\right]}
\label{weight_xi_hat}
\end{equation}
is nonnegative. To see this, note that fixing $\boldsymbol{\theta}_{2}$ at $\hat{\boldsymbol{\theta}}_{2}^{(k)}$, the only term involving with $\boldsymbol{\theta}_{1}$ in (\ref{FZ_sp_emp_loss}) is $\hat{\xi}_{i}^{(k)}LQ_{\tau}\left(q_{i}(\boldsymbol{\theta}_{1}),Y_{i}\right)$. Furthermore, it can be shown that 
$\hat{\xi}_{i}^{(k)}LQ_{\tau}\left(q_{i}(\boldsymbol{\theta}_{1}),Y_{i}\right) \propto \hat{\xi}_{i}^{(k)}c_{\tau}(Y_{i}-q_{i}(\boldsymbol{\theta}_{1}))$. Problem (\ref{WLAD}) can be solved efficiently using well-known algorithms for quantile regression estimation (e.g., the \texttt{quantreg} package of \texttt{R}).

The proposed iterated scheme is an example of alternating optimization (or block coordinate descent algorithm). This type of optimization procedure is designed to replace the joint optimization of a multivariate function over all variables, which is sometimes difficult to solve, with a sequence of easily solved optimizations involving grouped subsets of the variables. 
Note that in practice, $\hat{\boldsymbol\theta}_{1}^{(1)}$ used for the first iteration step can be obtained from directly solving a weighted quantile regression estimation problem with weight $\tilde{K}_{i}$. Using the arguments above, our estimators $\hat{\boldsymbol{\theta}}_{1,\tau}$ and $\hat{\boldsymbol{\theta}}_{2,\tau}$ are thus computed as the solution upon the convergence of the iterative algorithm.

\subsection{Comparison with an integrated-QTE based estimator}
From (\ref{CTATE and QTE}), a more straightforward approach to estimate the CTATE for compliers is to integrate estimates of the QTE for compliers from zero to a specific quantile level $\tau$. Suppose (\ref{linear_q_Y}) holds and let 
\begin{equation}
	\widehat{\text{IntQ}}^{\star}(\tau) = \frac{1}{\tau}\int_{0}^{\tau}\hat{\alpha}_{1,u}du
	\label{IntQ}
\end{equation}denote such an integrated-QTE based estimator for estimating the CTATE for compliers at quantile level $\tau\in (0,1)$, where $\hat{\alpha}_{1,u}$ is an estimate of $\alpha_{1,u}$, the QTE for compliers at quantile level $u$. The integrated-QTE based estimator is derived from the difference between two integrated-quantile based estimators (e.g., \cite{LW_2018}) for estimating the CTE of $Y_d$ at quantile level $\tau$. One advantage of (\ref{IntQ}) is its ability to accommodate various models of QTE, such as those for compliers under the two-sided non-compliance framework \citep{AAI_2002}, or the instrumental variable quantile regression (IVQR) framework of \citet{CH_2004, CH_2005, CH_2006, CH_2008}.

Our proposed estimator for estimating the CTATE requires a specification of the functions $G_1$ and $G_2$ in the FZ loss. 
Nonetheless, our proposed estimator using the FZ loss still has several advantages over the integrated-QTE based estimator. One of them is that it is easier to derive theoretical properties of the proposed estimator. For example, to establish ``pointwise'' consistency of $\widehat{\text{IntQ}}^{\star}(\tau)$ for estimating the CTATE at $\tau\in (0,1)$, since this estimator integrates QTE estimates $\hat{\alpha}_{1,u}$ over $u\in (0,\tau)$, typically one would have to derive uniform consistency of $\hat{\alpha}_{1,u}$ 
over $u\in (0,\tau)$. However, as shown in Theorem 1 in Section 3, such uniform approximation property is not required to establish the pointwise consistency of our proposed estimator.

Furthermore, since the integration domain in (\ref{IntQ}) includes low quantile indices that fall around the neighborhood of zero, conducting inference based on the estimator $\widehat{\text{IntQ}}^{\star}(\tau)$ requires addressing the statistical behavior of the QTE estimators evaluated at extreme quantiles. The literature on extremal quantile estimation and inference indicates that the finite sample distribution of the quantile regression estimator evaluated at extremal quantile levels cannot be accurately approximated by the normal distribution \citep{Chernozhukov_2005, CF_2011}. Consequently, the asymptotic analysis of $\widehat{\text{IntQ}}^{\star}(\tau)$ is nonstandard as the process of estimated QTEs over $u\in (0,\tau)$ does not generally converge weakly to a Gaussian process.

In practice, one might consider using a ``trimmed'' version of (\ref{IntQ}) to circumvent the issues associated with estimation at extreme quantiles, such as
\begin{equation}
	\widehat{\text{IntQ}}(\tau)=\frac{1}{\tau}\int_{\underline{\tau}}^{\tau}\hat{\alpha}_{1,u}du,
	\label{trimmed_IntQ}
\end{equation}where $\underline{\tau}$ is a small constant and $0<\underline{\tau}<\tau$. It is evident that (\ref{trimmed_IntQ}) will incur a truncation bias in the estimation of the CTATE ($\text{IntQ}^{\star}(\tau)$) at quantile level $\tau$, but it is hoped that this bias will not be excessively large if the trimming constant $\underline{\tau}$ is set to a very small value. However, as demonstrated in simulation results in Appendix A.9, even for a very small $\underline{\tau}$, 
the resulting bias can remain substantial.

Finally, the integrated-QTE based estimator (\ref{IntQ}) relies on the computation of $\hat{\alpha}_{1,u}$ over a finely distributed collection of quantile indices for numerical integration. The computational burden of the QTE-based approach can be substantially reduced by employing a more sophisticated numerical integration algorithm than the brute-force one of using the estimated QTEs evaluated at many different quantiles. 
	However, differences in computational costs between the two competing estimators could still occur as the complexity of the regression models and the sample size increase.


\subsection{Inter-quantile average treatment effect} 

We conclude Section 2 by remarking here on one related causal parameters that build upon the CTATE. Define inter-quantile expectation (IQE) of a potential outcome $Y_{d}$ between quantile levels $\tau^{\prime}$ and
$\tau$ with $0<\tau^{\prime}<\tau<1$ as
\begin{eqnarray*}
	IQE\left(\tau^{\prime},\tau,d\right) & := & E\left[Y_{d}|Q_{Y_{d}|X}\left(\tau^{\prime}\right)\leq Y_{d}\leq Q_{Y_{d}|X}\left(\tau\right),X\right]\\
	&=& \frac{\tau CTE_{Y_{d}|X}\left(\tau\right)-\tau^{\prime}CTE_{Y_{d}|X}\left(\tau^{\prime}\right)}{\tau-\tau^{\prime}}. 
\end{eqnarray*}
Using the notion of IQE, we can define the inter-quantile average treatment effect (IQATE), which is the difference
between inter-quantile expectations of two potential outcomes:
\begin{eqnarray}
IQATE(\tau^{\prime},\tau) &:=& IQE\left(\tau^{\prime},\tau,1\right)-IQE\left(\tau^{\prime},\tau,0\right)\nonumber\\
&=&\frac{\tau CTATE(\tau)-\tau^{\prime}CTATE(\tau^{\prime})}{\tau-\tau^{\prime}}.
\label{IQATE}
\end{eqnarray}
Through (\ref{CTATE and QTE}), the IQATE can be viewed as an aggregate of quantile treatment effects over a range of quantile levels and would thus be useful for providing a summary of heterogeneous treatment effects locally over a specified quantile range. Under Assumption 1 and model (\ref{linear_e_Y}), we can deduce from (\ref{IQATE}) that the IQATE for compliers in the binary treatment causal framework of Section 2.2 reduces to 
\[
\frac{\tau\alpha_{2,\tau}-\tau^{\prime}\alpha_{2,\tau^{\prime}}}{\tau-\tau^{\prime}},    
\]
which can be readily estimated using the proposed estimator for estimating CTATE for compliers.

\section{Asymptotic Result}
In this section we establish asymptotic properties of the estimators $\hat{\boldsymbol{\theta}}_{1,\tau}$ and $\hat{\boldsymbol{\theta}}_{2,\tau}$. We assume that the probability $\pi\left(X\right)$ in (\ref{pi_X}) depends on some parameters $\boldsymbol{\gamma}_{0}$ and rewrite it as
$\pi\left(X,\boldsymbol{\gamma}_{0}\right)$. Let $W:=\left(D, X^\top\right)^\top$, $V:=\left(Y,W\right)$ and $v_{0}\left(V\right):=E[Z|V]$. Let $\mathcal{V}$ denote the support of $V$ and $\boldsymbol{\Theta}$ be the parameter space of $\left(\boldsymbol{\theta}_{1,\tau},\boldsymbol{\theta}_{2,\tau}\right)$. Define
\begin{eqnarray}
K\left(W,Z;\boldsymbol{\gamma}\right) &:=&  1-\frac{D(1-Z)}{1-\pi(X,\boldsymbol{\gamma})}-\frac{(1-D)Z}{\pi(X,\boldsymbol{\gamma})}, \label{AAI_star2}\\
\bar{K}\left(V;v,\boldsymbol{\gamma}\right) &:=&  1-\frac{D(1-v\left(V\right))}{1-\pi(X,\boldsymbol{\gamma})}-\frac{(1-D)v\left(V\right)}{\pi(X,\boldsymbol{\gamma})}, \label{AAI_star1}\\
\tilde{K}\left(V;v,\boldsymbol{\gamma}\right)&:=&\min\left\{\max\left\{\bar{K}\left(V;v,\boldsymbol{\gamma}\right),0\right\},1\right\}.\label{AAI_star3}
\end{eqnarray}
With (\ref{AAI_star2}) to (\ref{AAI_star3}), it can be seen that $K\left(D,Z,X\right)$ in (\ref{AAI_weight}) can be expressed as $ K\left(W,Z;\boldsymbol{\gamma}_{0}\right)$, and $\bar{K}\left(Y,D,X\right)$ in (\ref{AAI_star}) can be expressed as $\bar{K}\left(V;v_{0},\boldsymbol{\gamma}_{0}\right)$. The truncated version of $\bar{K}\left(Y,D,X\right)$ is then given by $\tilde{K}\left(V;v_{0},\boldsymbol{\gamma}_{0}\right)$.

In the following, we derive asymptotic results for a more general setup where $Q_{Y_{i}|W_{i},T=c}\left(\tau\right)$ and $ CTE_{Y_{i}|W_{i},T=c}\left(\tau\right)$ take parametric structural forms $q_{i}\left(\boldsymbol{\theta}_{1}\right):=q\left(W_{i},\boldsymbol{\theta}_{1}\right)$ and $e_{i}\left(\boldsymbol{\theta}_{2}\right):=e\left(W_{i},\boldsymbol{\theta}_{2}\right)$, which could be nonlinear but are known up to some finite dimensional vectors of parameters. Let $\hat{\boldsymbol{\gamma}}$ and $\hat{v}$ denote some estimators of $\boldsymbol{\gamma}_{0}$ and $v_{0}$. We are interested in the parameter values $\boldsymbol{\theta}_{1,\tau}$ and $\boldsymbol{\theta}_{2,\tau}$, which are estimated by $\hat{\boldsymbol{\theta}}_{1,\tau}$ and $\hat{\boldsymbol{\theta}}_{2,\tau}$, where   
\begin{equation}
\left(\hat{\boldsymbol{\theta}}_{1,\tau},\hat{\boldsymbol{\theta}}_{2,\tau}\right)=\arg\min_{\left(\boldsymbol{\theta}_{1},\boldsymbol{\theta}_{2}\right)\in\boldsymbol{\Theta}}\frac{1}{n}\sum_{i=1}^{n}\tilde{K}\left(V_{i};\hat{v},\hat{\boldsymbol{\gamma}}\right)FZ_{\tau}\left(q_{i}\left(\boldsymbol{\theta}_{1}\right),e_{i}\left(\boldsymbol{\theta}_{2}\right),Y_{i}\right),
\label{sample_analogue}
\end{equation}
and
\begin{equation}
 \left(\boldsymbol{\theta}_{1,\tau},\boldsymbol{\theta}_{2,\tau}\right)  =\arg\min_{\left(\boldsymbol{\theta}_{1},\boldsymbol{\theta}_{2}\right)\in\boldsymbol{\Theta}}E\left[FZ_{\tau}\left(q\left(W,\boldsymbol{\theta}_{1}\right),e\left(W,\boldsymbol{\theta}_{2}\right),Y\right)|T=c\right].\label{estimation_problem2}
\end{equation} 
We make the following regularity assumptions for the consistency of the parameter estimators.

\begin{assumption}
\item[1.] The data $\left(Y_{i},D_{i},X_{i},Z_{i}\right)$, $i=1,\ldots,n$, are i.i.d..
\item[2.] The distribution of $Y$ conditional on $W$ and $T=c$ is absolutely continuous with respect to the Lebesgue measure.
\item[3.] The parameter space $\boldsymbol{\Theta}$ is compact.
\item[4.] $q\left(W,\boldsymbol{\theta}_{1}\right)$ and $e\left(W,\boldsymbol{\theta}_{2}\right)$ are continuous in $\boldsymbol{\theta}_{1}$ and $\boldsymbol{\theta}_{2}$ respectively for every $\left(\boldsymbol{\theta}_{1},\boldsymbol{\theta}_{2}\right)\in\boldsymbol{\Theta}$.
\item[5.] The components $G_{1}\left(.\right)$, $G_{2}\left(.\right)$ and $G_{2}^{\prime}\left(.\right)$ in (\ref{FZ_loss}) are continuous differentiable functions with $G_{1}^{\prime}\left(.\right)\geq0$,
$G_{2}^{\prime}\left(.\right)>0$ and $G_{2}^{\prime\prime}\left(.\right)>0$.
\item[6.] $P\left(\left\{ q\left(W,\boldsymbol{\theta}_{1}\right)=q\left(W,\boldsymbol{\theta}_{1,\tau}\right)\right\} \cap\left\{ e\left(W,\boldsymbol{\theta}_{2}\right)=e\left(W,\boldsymbol{\theta}_{2,\tau}\right)\right\} |T=c\right)=1$
implies $\left(\boldsymbol{\theta}_{1},\boldsymbol{\theta}_{2}\right)=\left(\boldsymbol{\theta}_{1,\tau},\boldsymbol{\theta}_{2,\tau}\right).$
\item[7.] There is a function $b\left(V\right)$ with $E\left[b\left(V\right)\right]<\infty$ such that
$b\left(V\right)>\left| FZ_{\tau}\left(q\left(W,\boldsymbol{\theta}_{1}\right),e\left(W,\boldsymbol{\theta}_{2}\right),Y\right)\right| $ for all $\left(\boldsymbol{\theta}_{1},\boldsymbol{\theta}_{2}\right)\in\boldsymbol{\Theta}$.
\item[8.] The estimated weight $\bar{K}(V;\hat{\boldsymbol{\gamma}},\hat{v})$ is uniformly consistent such that
\[\sup_{V\in\mathcal{V}}\left|\bar{K}\left(V;\hat{\boldsymbol{\gamma}},\hat{v}\right)-\bar{K}\left(V;\boldsymbol{\gamma}_{0},v_{0}\right)\right|=o_{p}\left(1\right).\]
\end{assumption}

Assumptions 2.1, 2.3 and 2.4 are standard. Assumption 2.2 is sufficient for the uniqueness of any $\tau$-quantile of the distribution of $Y$ given $W$ and $T=c$. Assumption 2.5 requires that the FZ loss function (\ref{FZ_loss}) be strictly consistent for eliciting quantiles and CTEs. Assumption 2.6 is the rank condition for parameter identification. For (\ref{linear_q_Y}) and (\ref{linear_e_Y}), this condition immediately holds when, conditional on $T=c$, the support of $W$ is not contained in any proper linear subspace of $\mathbb{R}^{k}$ where $k$ denotes the dimension of $W$. Assumptions 2.5 and 2.6 ensure that the solution $\left(\boldsymbol{\theta}_{1,\tau},\boldsymbol{\theta}_{2,\tau}\right)$ is unique in the expected FZ loss minimization problem (\ref{estimation_problem2}). Assumptions 2.7 and 2.8 are imposed to establish uniform convergence of the empirical objective function in (\ref{sample_analogue}) to its population counterpart. Assumption 2.7 is a dominance condition, which can hold under the compactness of $\boldsymbol{\Theta}$ and the aforementioned requirements for the FZ loss function, provided that $W$ has a bounded support. Assumption 2.8 hinges on the consistency of the plug-in estimators $\hat{\boldsymbol{\gamma}}$ and $\hat{v}$. With Assumptions 1 and 2, we can establish the following result.
\begin{theorem}
If Assumptions 1 and 2 hold, then, for any given $\tau\in\left(0,1\right)$,
\[\left(\hat{\boldsymbol{\theta}}_{1,\tau},\hat{\boldsymbol{\theta}}_{2,\tau}\right)\stackrel{p}{\rightarrow}\left(\boldsymbol{\theta}_{1,\tau},\boldsymbol{\theta}_{2,\tau}\right).\] 	
\end{theorem}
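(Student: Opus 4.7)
The plan is to apply the classical consistency argument for extremum estimators with a plug-in nuisance weight: (i) establish that the population-level objective
\[
Q\left(\boldsymbol{\theta}_{1},\boldsymbol{\theta}_{2}\right):=E\left[\bar{K}\left(V;v_{0},\boldsymbol{\gamma}_{0}\right)FZ_{\tau}\left(q\left(W,\boldsymbol{\theta}_{1}\right),e\left(W,\boldsymbol{\theta}_{2}\right),Y\right)\right]
\]
is uniquely minimized on $\boldsymbol{\Theta}$ at $\left(\boldsymbol{\theta}_{1,\tau},\boldsymbol{\theta}_{2,\tau}\right)$; (ii) show that the sample criterion in (\ref{sample_analogue}) converges to $Q$ uniformly over $\boldsymbol{\Theta}$ in probability; (iii) conclude via the standard consistency theorem for M-estimators.

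For step (i), I would first note that under Assumption 1, $\bar{K}\left(V;v_{0},\boldsymbol{\gamma}_{0}\right)=P\left(T=c|V\right)\in[0,1]$ almost surely, so truncation is inactive at the true nuisance value and $\tilde{K}\left(V;v_{0},\boldsymbol{\gamma}_{0}\right)=\bar{K}\left(V;v_{0},\boldsymbol{\gamma}_{0}\right)$ a.s. Applying the AAI identity (\ref{AAI_weight_loss})--(\ref{key_equation}) with $h=FZ_{\tau}$ then yields
\[
Q\left(\boldsymbol{\theta}_{1},\boldsymbol{\theta}_{2}\right)=P\left(T=c\right)\,E\left[FZ_{\tau}\left(q\left(W,\boldsymbol{\theta}_{1}\right),e\left(W,\boldsymbol{\theta}_{2}\right),Y\right)\big|T=c\right],
\]
which is a positive multiple (since $P(T=c)>0$ by Assumption 1.3) of the population objective in (\ref{estimation_problem2}). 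Assumption 2.5 makes the FZ loss strictly consistent for the quantile/CTE pair, so for almost every $W$ the conditional map $(q,e)\mapsto E\left[FZ_{\tau}(q,e,Y)|W,T=c\right]$ is uniquely minimized at $\left(Q_{Y|W,T=c}(\tau),CTE_{Y|W,T=c}(\tau)\right)$; Assumption 2.6 then translates this pointwise uniqueness into uniqueness of $\left(\boldsymbol{\theta}_{1,\tau},\boldsymbol{\theta}_{2,\tau}\right)$ in parameter space.

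For step (ii), I would decompose the empirical-minus-population gap as $A_{n}+B_{n}$, where
\[
A_{n}\left(\boldsymbol{\theta}_{1},\boldsymbol{\theta}_{2}\right)=\frac{1}{n}\sum_{i=1}^{n}\bigl[\tilde{K}\left(V_{i};\hat{v},\hat{\boldsymbol{\gamma}}\right)-\tilde{K}\left(V_{i};v_{0},\boldsymbol{\gamma}_{0}\right)\bigr]FZ_{\tau}\left(q_{i}\left(\boldsymbol{\theta}_{1}\right),e_{i}\left(\boldsymbol{\theta}_{2}\right),Y_{i}\right)
\]
isolates the nuisance estimation error and $B_{n}$ is the oracle-weight remainder. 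Since the truncation $x\mapsto\min\{\max\{x,0\},1\}$ is $1$-Lipschitz, Assumption 2.8 implies $\sup_{V\in\mathcal{V}}\bigl|\tilde{K}\left(V;\hat{v},\hat{\boldsymbol{\gamma}}\right)-\tilde{K}\left(V;v_{0},\boldsymbol{\gamma}_{0}\right)\bigr|=o_{p}(1)$; combined with the envelope $b(V)$ of Assumption 2.7 and the ordinary law of large numbers this yields $\sup_{\boldsymbol{\Theta}}|A_{n}|\leq o_{p}(1)\cdot n^{-1}\sum_{i=1}^{n}b(V_{i})=o_{p}(1)$. For $B_{n}$, the summand is bounded by $b(V)$ (since $\tilde{K}\in[0,1]$), continuous in $\left(\boldsymbol{\theta}_{1},\boldsymbol{\theta}_{2}\right)$ by Assumption 2.4, and defined on the compact $\boldsymbol{\Theta}$ by Assumption 2.3; with the i.i.d. sampling of Assumption 2.1, these are exactly the hypotheses of a uniform law of large numbers, giving $\sup_{\boldsymbol{\Theta}}|B_{n}|=o_{p}(1)$.

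The main obstacle is handling the plug-in weight $\tilde{K}\left(V_{i};\hat{v},\hat{\boldsymbol{\gamma}}\right)$ uniformly over $\boldsymbol{\Theta}$: the $\hat{v},\hat{\boldsymbol{\gamma}}$ cannot simply be replaced by their limits inside an expectation, which is why Assumption 2.8 is imposed as a \emph{uniform}-in-$V$ statement rather than a pointwise one, and why the truncation matters (its $1$-Lipschitz property lets the uniform consistency of $\bar{K}$ transfer to $\tilde{K}$ for free, after which the envelope $b(V)$ absorbs the dependence on $\left(\boldsymbol{\theta}_{1},\boldsymbol{\theta}_{2}\right)$). Once (i) and (ii) are in hand, the standard consistency theorem for extremum estimators delivers $\left(\hat{\boldsymbol{\theta}}_{1,\tau},\hat{\boldsymbol{\theta}}_{2,\tau}\right)\stackrel{p}{\rightarrow}\left(\boldsymbol{\theta}_{1,\tau},\boldsymbol{\theta}_{2,\tau}\right)$.
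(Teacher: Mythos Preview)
Your proposal is correct and mirrors the paper's proof almost exactly: the paper also invokes the standard extremum-estimator consistency theorem (Theorem~2.1 of Newey--McFadden), establishes unique minimization of the population objective via strict consistency of the FZ loss together with Assumption~2.6 and the AAI weighting identity, and proves uniform convergence by the same two-piece decomposition---a nuisance-weight term controlled by the $1$-Lipschitz truncation, Assumption~2.8, and the envelope $b(V)$, and an oracle-weight term handled by a ULLN (their Lemma~2.4 of Newey--McFadden) under Assumptions~2.1, 2.3, 2.4, 2.7.
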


In our numerical studies, we estimate the function $v_{0}$ using the power series estimator \citep{Newey_1997} and $\gamma_{0}$ with some parametric binary choice model. Let $W=(W_{d},W_{c})$, where $W_{d}$ and $W_{c}$ denote the discrete and continuous components of $W$ respectively. Let $\mathcal{W}_{d}$ denote the support of $W_{d}$, which takes only a finite number of possible values. The estimator $\hat{v}(V)$ takes the form
\begin{equation}
\hat{v}(V):=\sum_{m\in\mathcal{W}_{d}}1\{W_{d}=m\}\hat{v}_{m}(Y,W_{c}),
\label{v_hat}
\end{equation}
where, for each $m\in\mathcal{W}_{d}$, $\hat{v}_{m}(Y,W_{c})$ is a power series estimator of the conditional expectation $v_{0,m}(Y,W_{c}):=E[Z|Y,W_{d}=m,W_{c}]$. Let $\kappa_{m}$ denote the number of series terms in the power series approximation of $v_{0,m}$, $s_{m}$ be the order of continuous derivatives of $v_{0,m}$ and $r$ be the dimension of $W_{c}$. The next assumption allows us to verify the uniform consistency of the estimated weight $\hat{\bar{K}}(.)$.

\begin{assumption} 
\item[1.] The support of $(Y,W_{c})$ conditional on $W_{d}$ is a Cartesian product of compact connected intervals on which $(Y,W_{c})$ has a probability density function that is bounded away from zero. 
\item[2.] $v_{0,m}(Y,W_{c})$ is continuously differentiable of order $s_{m}$ on the support of $(Y,W_{c})$. 
\item[3.] There is a constant $\varepsilon>0$ such that, for the neighborhood $\left\Vert \boldsymbol{\gamma}-\boldsymbol{\gamma}_{0}\right\Vert \leq\varepsilon$,
$\pi\left(x,\boldsymbol{\gamma}\right)$ is bounded away from 0 and 1 and has a partial derivative $\nabla_{\boldsymbol{\gamma}}\pi\left(x,\boldsymbol{\gamma}\right)$, which is uniformly bounded for $\boldsymbol{\gamma}$ in this neighborhood and for $x$ over the support of $X$.
\item[4.] $\hat{\boldsymbol{\gamma}}\stackrel{p}{\rightarrow}\boldsymbol{\gamma}_{0}$. 
\end{assumption}

Assumptions 3.1 and 3.2 are standard conditions for power series estimators \citep{Newey_1997}. Assumptions 3.3 is also a mild condition on the smoothness of $\pi(X,\boldsymbol{\gamma})$. Assumptions 3.4 requires that the estimator $\hat{\boldsymbol{\gamma}}$ be consistent for $\boldsymbol{\gamma}_{0}$. Under Assumption 3, we have the following result.

\begin{lemma}
	Suppose Assumptions 2.1, 2.3 and 3 hold. Then Assumption 2.8 also holds under the series approximation conditions that $r+1<s_{m}$, $\kappa_{m}\rightarrow\infty$ and $\kappa_{m}^{3}/n\rightarrow0$ for each $m\in\mathcal{W}_{d}$.
\end{lemma}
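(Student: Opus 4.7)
The plan is to decompose $\bar{K}(V;\hat{\boldsymbol{\gamma}},\hat{v})-\bar{K}(V;\boldsymbol{\gamma}_{0},v_{0})$ into a part driven by the series estimator of the nuisance function $v_{0}$ and a part driven by the parametric estimator of $\boldsymbol{\gamma}_{0}$, and to bound each uniformly over $V\in\mathcal{V}$ using the boundedness and smoothness conditions in Assumption~3. Specifically, I would split
\begin{align*}
\bar{K}(V;\hat{\boldsymbol{\gamma}},\hat{v}) - \bar{K}(V;\boldsymbol{\gamma}_{0},v_{0})
&= \bigl[\bar{K}(V;\hat{\boldsymbol{\gamma}},\hat{v}) - \bar{K}(V;\boldsymbol{\gamma}_{0},\hat{v})\bigr] \\
&\quad + \bigl[\bar{K}(V;\boldsymbol{\gamma}_{0},\hat{v}) - \bar{K}(V;\boldsymbol{\gamma}_{0},v_{0})\bigr]
\end{align*}
and handle the two bracketed pieces separately.

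For the nuisance-function piece, a direct calculation from (\ref{AAI_star1}) gives
\[
\bar{K}(V;\boldsymbol{\gamma}_{0},\hat{v}) - \bar{K}(V;\boldsymbol{\gamma}_{0},v_{0}) = \biggl[\frac{D}{1-\pi(X,\boldsymbol{\gamma}_{0})} - \frac{1-D}{\pi(X,\boldsymbol{\gamma}_{0})}\biggr]\bigl(\hat{v}(V)-v_{0}(V)\bigr),
\]
and Assumption~3.3 bounds the factor in brackets uniformly, so $\sup_{V\in\mathcal{V}}\bigl|\bar{K}(V;\boldsymbol{\gamma}_{0},\hat{v}) - \bar{K}(V;\boldsymbol{\gamma}_{0},v_{0})\bigr|\leq C\sup_{V\in\mathcal{V}}|\hat{v}(V)-v_{0}(V)|$ for some finite $C$. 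Because $\mathcal{W}_{d}$ is finite, the decomposition (\ref{v_hat}) reduces $\sup_{V}|\hat{v}-v_{0}|$ to the maximum over $m$ of the sup-norm error of each power series estimator $\hat{v}_{m}$, which I would import from the uniform convergence theorem for series estimators in \citet{Newey_1997}: Assumptions~2.1 and 3.1--3.2 supply the i.i.d., compact-support-with-bounded-density and smoothness hypotheses, while the rate conditions $\kappa_{m}\to\infty$, $\kappa_{m}^{3}/n\to0$ and $s_{m}>r+1$ are exactly those required by that theorem.

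For the parametric piece, I would apply a mean value expansion in $\boldsymbol{\gamma}$ on the event $\{\|\hat{\boldsymbol{\gamma}}-\boldsymbol{\gamma}_{0}\|\leq\varepsilon\}$, which has probability tending to one by Assumption~3.4:
\[
\bar{K}(V;\hat{\boldsymbol{\gamma}},\hat{v}) - \bar{K}(V;\boldsymbol{\gamma}_{0},\hat{v}) = \nabla_{\boldsymbol{\gamma}}\bar{K}(V;\tilde{\boldsymbol{\gamma}},\hat{v})^{\top}(\hat{\boldsymbol{\gamma}}-\boldsymbol{\gamma}_{0}),
\]
where $\tilde{\boldsymbol{\gamma}}$ lies on the segment joining $\boldsymbol{\gamma}_{0}$ and $\hat{\boldsymbol{\gamma}}$. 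Differentiating (\ref{AAI_star1}) puts $\pi(X,\tilde{\boldsymbol{\gamma}})$ and $1-\pi(X,\tilde{\boldsymbol{\gamma}})$ in the denominators and $\nabla_{\boldsymbol{\gamma}}\pi(X,\tilde{\boldsymbol{\gamma}})$ together with $\hat{v}(V)$ in the numerators; Assumption~3.3 bounds the former uniformly away from zero and the latter uniformly above, while $\hat{v}$ is uniformly bounded on an event of probability tending to one because $v_{0}\in[0,1]$ and $\hat{v}\to v_{0}$ uniformly by the previous paragraph. Hence $\sup_{V}\|\nabla_{\boldsymbol{\gamma}}\bar{K}(V;\tilde{\boldsymbol{\gamma}},\hat{v})\|=O_{p}(1)$ and the parametric piece is $O_{p}(\|\hat{\boldsymbol{\gamma}}-\boldsymbol{\gamma}_{0}\|)=o_{p}(1)$ uniformly in $V$. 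Combining this bound with that for the nuisance piece via the triangle inequality yields Assumption~2.8.

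The main obstacle is establishing the uniform consistency of the power series estimator $\hat{v}_{m}$, since that is where the delicate balance between series length $\kappa_{m}$, smoothness index $s_{m}$ and covariate dimension $r$ actually gets used; I would appeal to the corresponding result of \citet{Newey_1997} rather than rederive it. The mean value step in $\boldsymbol{\gamma}$ is then routine given the boundedness of $\pi$, $1-\pi$ and $\nabla_{\boldsymbol{\gamma}}\pi$ built into Assumption~3.3.
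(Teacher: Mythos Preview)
Your proposal is correct and follows essentially the same approach as the paper: both decompose the difference via a telescoping sum, appeal to \citet{Newey_1997} for uniform consistency of the series estimator $\hat{v}_{m}$ (using Assumptions 2.1, 3.1, 3.2 and the rate conditions), and apply a mean value expansion in $\boldsymbol{\gamma}$ together with Assumption 3.3 for the parametric piece. The only cosmetic difference is the order of the telescoping---the paper first replaces $\hat{v}$ by $v_{0}$ at $\hat{\boldsymbol{\gamma}}$ and then $\hat{\boldsymbol{\gamma}}$ by $\boldsymbol{\gamma}_{0}$ at $v_{0}$, whereas you do the reverse---which is immaterial.
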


We now provide further regularity assumptions for deriving the asymptotic distribution of the estimators.

\begin{assumption}
\item[1.] $\hat{\boldsymbol{\gamma}}$ is asymptotically linear with influence
function $\psi\left(X\right)$:
\[
\sqrt{n}\left(\hat{\boldsymbol{\gamma}}-\boldsymbol{\gamma}_{0}\right)=\frac{1}{\sqrt{n}}\sum_{i=1}^{n}\psi\left(X_{i}\right)+o_{p}\left(1\right),
\]
where $E\left[\psi\left(X\right)\right]=0$ and $E\left| \psi\left(X\right)\right| ^{2}<\infty$.
\item[2.] $E\left[\left(A_{k,\tau}\left(V\right)\right)^{2}|T=c\right]<\infty$,
where $A_{k,\tau}\left(V\right)$, $k=1,\ldots,5$ are the terms defined
in Lemma A.1 in Appendix A.3.	
\item[3.] The matrix 
\[
\nabla_{\boldsymbol{\theta\theta}}E\left[FZ_{\tau}\left(q\left(W,\boldsymbol{\theta}_{1,\tau}\right),e\left(W,\boldsymbol{\theta}_{2,\tau}\right),Y\right)|T=c\right]
\] is nonsingular, where $\boldsymbol{\theta}:=\left(\boldsymbol{\theta}_{1},\boldsymbol{\theta}_{2}\right)$.
\item[4.] The conditional density $f_{Y|W,T=c}\left(q\left(W,\boldsymbol{\theta}_{1,\tau}\right)\right)$ is bounded away from 0 over the support of $W$.
\item[5.] The estimated weight $\bar{K}(V;\hat{\boldsymbol{\gamma}},\hat{v})$ satisfies that \[\sup_{V\in\mathcal{V}}\left|\bar{K}\left(V;\hat{\boldsymbol{\gamma}},\hat{v}\right)-\bar{K}\left(V;\boldsymbol{\gamma}_{0},v_{0}\right)\right|=o_{p}\left(n^{-1/4}\right).\]
\end{assumption}

Assumption 4.1 can be easily fulfilled for various parametric estimators in econometrics. Assumption 4.2 is a technical condition related to the local Lipschitz continuity of the FZ loss on neighborhood of the true parameter value. Assumptions 4.3 and 4.4 are for the existence of the asymptotic covariance matrix of our estimator.  Assumption 4.5, which strengthens Assumption 2.8, requires that $\bar{K}(V;\hat{v},\hat{\boldsymbol{\gamma}})$ should converge uniformly at a rate faster than $n^{-1/4}$. For the power series estimator (\ref{v_hat}), this assumption holds under the conditions of Lemma 1 with the growth rate of $\kappa_{m}$ being further restricted such that $\kappa_{m}^{6}/n\rightarrow0$ and $n^{1/4}\kappa_{m}^{1-s_{m}/(r+1)}\rightarrow0$.\par 
The next theorem establishes the asymptotic normality of the proposed estimator and provides the form of its asymptotic covariance matrix.

\begin{theorem}
If Assumptions 1, 2, and 4 hold, then 
\[
\sqrt{n}\left(\left(\hat{\boldsymbol{\theta}}_{1,\tau},\hat{\boldsymbol{\theta}}_{2,\tau}\right)-\left(\boldsymbol{\theta}_{1,\tau},\boldsymbol{\theta}_{2,\tau}\right)\right)\stackrel{d}{\rightarrow}N\left(\mathbf{0},\mathbf{H}_{\tau}^{-1}\boldsymbol{\Omega}_{\tau}\mathbf{H}_{\tau}^{-1}\right),
\]
where
\begin{eqnarray*}
\mathbf{H}_{\tau} & = & \nabla_{\boldsymbol{\theta\theta}}E\left[FZ_{\tau}\left(q\left(W,\boldsymbol{\theta}_{1,\tau}\right),e\left(W,\boldsymbol{\theta}_{2,\tau}\right),Y\right)|T=c\right]\times P\left(T=c\right),\\
\boldsymbol{\Omega}_{\tau} & = & E\left[\mathbf{J}_{\tau}\mathbf{J}_{\tau}^\top\right],\\
\mathbf{J}_{\tau} & = & K\left(W,Z;\boldsymbol{\gamma}_{0}\right)\nabla_{\boldsymbol{\theta}}FZ_{\tau}\left(q\left(W,\boldsymbol{\theta}_{1,\tau}\right),e\left(W,\boldsymbol{\theta}_{2,\tau}\right),Y\right)+\mathbf{M}_{\tau}\psi\left(X\right),\\
\mathbf{M}_{\tau} & = & E\left[\nabla_{\boldsymbol{\theta}}FZ_{\tau}\left(q\left(W,\boldsymbol{\theta}_{1,\tau}\right),e\left(W,\boldsymbol{\theta}_{2,\tau}\right),Y\right)\left[\nabla_{\boldsymbol{\gamma}}K\left(W,Z;\boldsymbol{\gamma}_{0}\right)\right]^\top\right].
\end{eqnarray*}
\end{theorem}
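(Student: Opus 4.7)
My plan is to treat this as a semiparametric M-estimation problem and follow the standard three-step route: consistency (supplied by Theorem 1), a quadratic expansion of the empirical criterion around $(\boldsymbol{\theta}_{1,\tau},\boldsymbol{\theta}_{2,\tau})$, and an asymptotically linear representation of the score at the truth. Write $\boldsymbol{\theta}_{\tau}:=(\boldsymbol{\theta}_{1,\tau},\boldsymbol{\theta}_{2,\tau})$ and let $\hat{\boldsymbol{\theta}}_{\tau}$ denote the corresponding estimator. A Taylor expansion in the smooth $e$-direction combined with a Bahadur-type representation for the non-smooth $q$-direction (stochastic equicontinuity of the empirical score over an $n^{-1/2}$-neighborhood of $\boldsymbol{\theta}_{\tau}$, justified by standard bracketing bounds for the indicator class $\{1\{Y\leq q(W,\boldsymbol{\theta}_{1})\}\}$ and the local Lipschitz control of Assumption 4.2), together with the iterated-expectation identity $E[\bar{K}(V;v_{0},\boldsymbol{\gamma}_{0})\,\cdot\,]=P(T=c)\,E[\,\cdot\,|T=c]$, yields
\[
\sqrt{n}(\hat{\boldsymbol{\theta}}_{\tau}-\boldsymbol{\theta}_{\tau})=-\mathbf{H}_{\tau}^{-1}\sqrt{n}\,G_{n}(\boldsymbol{\theta}_{\tau};\hat{v},\hat{\boldsymbol{\gamma}})+o_{p}(1),
\]
where
\[
G_{n}(\boldsymbol{\theta};v,\boldsymbol{\gamma}):=\frac{1}{n}\sum_{i=1}^{n}\tilde{K}(V_{i};v,\boldsymbol{\gamma})\,\nabla_{\boldsymbol{\theta}}FZ_{\tau}(q_{i}(\boldsymbol{\theta}_{1}),e_{i}(\boldsymbol{\theta}_{2}),Y_{i}),
\]
with $\nabla_{\boldsymbol{\theta}}FZ_{\tau}$ interpreted as the subgradient at the kink $Y=q_{i}(\boldsymbol{\theta}_{1})$. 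Invertibility of $\mathbf{H}_{\tau}$ is Assumption 4.3, and the coefficient $P(T=c)$ in $\mathbf{H}_{\tau}$ comes from the iterated-expectation identity above.

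The main work is then obtaining an asymptotically linear representation for $\sqrt{n}\,G_{n}(\boldsymbol{\theta}_{\tau};\hat{v},\hat{\boldsymbol{\gamma}})$. Assumption 4.5 renders the truncation in $\tilde{K}$ asymptotically inactive, so I replace $\tilde{K}$ by $\bar{K}(V;\hat{v},\hat{\boldsymbol{\gamma}})$ and decompose
\[
\sqrt{n}\,G_{n}(\boldsymbol{\theta}_{\tau};\hat{v},\hat{\boldsymbol{\gamma}})=\frac{1}{\sqrt{n}}\sum_{i}K_{i}\nabla_{\boldsymbol{\theta}}FZ_{\tau,i}+R_{1n}+R_{2n}^{(v)}+R_{2n}^{(\boldsymbol{\gamma})}+o_{p}(1),
\]
where $R_{1n}:=\frac{1}{\sqrt{n}}\sum_{i}[\bar{K}(V_{i};v_{0},\boldsymbol{\gamma}_{0})-K_{i}]\nabla_{\boldsymbol{\theta}}FZ_{\tau,i}$ is the oracle gap and $R_{2n}^{(v)},R_{2n}^{(\boldsymbol{\gamma})}$ come from linearizing $\bar{K}(V;\hat{v},\hat{\boldsymbol{\gamma}})-\bar{K}(V;v_{0},\boldsymbol{\gamma}_{0})$ in $v$ (exact, since $\bar{K}$ is linear in $v$) and in $\boldsymbol{\gamma}$ (first-order, via Assumption 3.3). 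The key algebraic identity
\[
\bar{K}(V;v_{0},\boldsymbol{\gamma}_{0})-K(W,Z;\boldsymbol{\gamma}_{0})=-(Z-v_{0}(V))\,\nabla_{v}\bar{K}(V;v_{0},\boldsymbol{\gamma}_{0}),\qquad\nabla_{v}\bar{K}=\frac{D}{1-\pi}-\frac{1-D}{\pi},
\]
together with the standard power-series linearization
\[
\frac{1}{\sqrt{n}}\sum_{i}\phi(V_{i})(\hat{v}(V_{i})-v_{0}(V_{i}))=\frac{1}{\sqrt{n}}\sum_{i}\phi(V_{i})(Z_{i}-v_{0}(V_{i}))+o_{p}(1)
\]
(justified by the first-order condition of the series regression together with the rate restrictions accompanying Assumption 4.5) applied with $\phi=\nabla_{v}\bar{K}\cdot\nabla_{\boldsymbol{\theta}}FZ_{\tau}$ gives $R_{1n}+R_{2n}^{(v)}=o_{p}(1)$. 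This Neyman-orthogonality cancellation is precisely what leaves $K$, rather than $\bar{K}$, in the influence function $\mathbf{J}_{\tau}$.

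For the $\boldsymbol{\gamma}$-piece, Assumptions 3.3 and 4.1, combined with the iterated-expectation identity $E[\nabla_{\boldsymbol{\gamma}}\bar{K}\cdot h(V)]=E[\nabla_{\boldsymbol{\gamma}}K\cdot h(V)]$, produce $R_{2n}^{(\boldsymbol{\gamma})}=\mathbf{M}_{\tau}\cdot\frac{1}{\sqrt{n}}\sum_{i}\psi(X_{i})+o_{p}(1)$ with the $\mathbf{M}_{\tau}$ of the theorem. Summing, invoking the Lindeberg--L\'evy CLT (finite second moments by Assumption 4.2), and plugging into the Slutsky step above gives $\sqrt{n}\,G_{n}(\boldsymbol{\theta}_{\tau};\hat{v},\hat{\boldsymbol{\gamma}})\stackrel{d}{\to}N(\mathbf{0},\boldsymbol{\Omega}_{\tau})$ and hence the stated $N(\mathbf{0},\mathbf{H}_{\tau}^{-1}\boldsymbol{\Omega}_{\tau}\mathbf{H}_{\tau}^{-1})$ limit. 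The hardest step I anticipate is the Neyman-orthogonality cancellation: the series linearization has to be controlled uniformly against the non-smooth $\nabla_{\boldsymbol{\theta}}FZ_{\tau}$ (which still carries the jump $1\{Y\leq q\}$), and the $o_{p}(n^{-1/4})$ rate of Assumption 4.5 together with the series growth restrictions $\kappa_{m}^{6}/n\to0$ and $n^{1/4}\kappa_{m}^{1-s_{m}/(r+1)}\to0$ is precisely what is needed to absorb the remainder of this step.
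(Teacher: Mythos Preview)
Your proposal is correct and follows essentially the same route as the paper's proof: both obtain the asymptotic linear representation $\sqrt{n}(\hat{\boldsymbol{\theta}}_{\tau}-\boldsymbol{\theta}_{\tau})=-\mathbf{H}_{\tau}^{-1}\sqrt{n}\,G_{n}(\boldsymbol{\theta}_{\tau};\hat v,\hat{\boldsymbol{\gamma}})+o_{p}(1)$ via an M-estimation argument (the paper invokes Theorem~5.23 of van der Vaart together with the local Lipschitz bound of its Lemma~2), then linearize the score in $\boldsymbol{\gamma}$ to produce the $\mathbf{M}_{\tau}\psi(X)$ contribution, and finally replace $\bar K(\hat v,\boldsymbol{\gamma}_{0})$ by $K(\boldsymbol{\gamma}_{0})$ in the leading term.

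The only noteworthy difference is packaging of that last step. The paper dispatches it in one line by citing ``similar arguments used in proving Lemma~A.1 in \citet{AAI_2002}'', whereas you unpack it as a Neyman-orthogonality cancellation $R_{1n}+R_{2n}^{(v)}=o_{p}(1)$ built on the exact identity $\bar K-K=-(Z-v_{0})\nabla_{v}\bar K$ and a series-regression linearization. These are the same argument at different levels of resolution; your version makes transparent \emph{why} estimating $v_{0}$ leaves no first-order trace in $\mathbf{J}_{\tau}$ while estimating $\boldsymbol{\gamma}_{0}$ does. One caveat: the series linearization you quote requires that $\phi(V)=\nabla_{v}\bar K\cdot\nabla_{\boldsymbol{\theta}}FZ_{\tau}$ be well approximated by the series basis so that the regression residual $\hat v-Z$ is (approximately) orthogonal to it; this is exactly what the smoothness and rate conditions behind Assumption~4.5 (and AAI's Lemma~A.1) deliver, so it would be worth making that dependence explicit when you write the proof out in full.
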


For the asymptotic covariance matrix of the estimators, the effect of the presence of
the estimated parameter $\hat{\boldsymbol{\gamma}}$ is taken into
account. Ignoring the effect of the estimated nuisance parameters
will cause the asymptotic covariance matrix to be inconsistent, leading
to invalid confidence interval constructions \citep{NM_1994}. Without such an effect, the vector $\mathbf{J}_{\tau}$ will only have the first term. For estimating the asymptotic covariance matrix, we can use a plug-in estimator by replacing $\left(\boldsymbol{\theta}_{1,\tau},\boldsymbol{\theta}_{2,\tau},v_{0},\boldsymbol{\gamma}_{0}\right)$ in the formula of the asymptotic covariance matrix with their estimates $\left(\hat{\boldsymbol{\theta}}_{1,\tau},\hat{\boldsymbol{\theta}}_{2,\tau},\hat{v},\boldsymbol{\hat{\gamma}}\right)$. In Appendix A.1, we detail the estimation procedures and derive the asymptotic result of the plug-in estimator when $FZ_{\tau}^{sp}\left(q,e,y\right)$ is used and the structural forms of $q\left(W,\boldsymbol{\theta}_{1,\tau}\right)$ and $e\left(W,\boldsymbol{\theta}_{2,\tau}\right)$ are linear in parameters. The resulting estimated asymptotic covariance matrix is then used to construct the pointwise confidence bands of the estimated CTATE and QTE for compliers in the empirical application in Section 5.  

The theoretical results of the proposed estimator above can be used immediately to derive the theoretical properties for estimating the IQATE in (\ref{IQATE}). Again, we focus on the case when the models are linear in parameters. Divide the interval $\left(0,1\right)$ into $L+1$ subintervals with $L$ break points $\tau_{1},\ldots,\tau_{L}$, where $0<\tau_{1}<\tau_{2}\ldots\tau_{L}<1$. For $l,l^{\prime}\in\left\{ 1,\ldots,L\right\} $, and $l^{\prime}<l$,  the IQATE between quantile levels $\tau_{l^{\prime}}$ and $\tau_{l}$ is
\[
IQATE\left(\tau_{l},\tau_{l^{\prime}}\right)=\frac{\tau_{l}\alpha_{2,\tau_{l}}-\tau_{l^{\prime}}\alpha_{2,\tau_{l^{\prime}}}}{\tau_{l}-\tau_{l^{\prime}}}.
\]
If Assumptions 1, 2 and 4 hold, using Theorem 1 we can obtain the pointwise consistency: 
\begin{equation}
\frac{\tau_{l}\hat{\alpha}_{2,\tau_{l}}-\tau_{l^{\prime}}\hat{\alpha}_{2,\tau_{l^{\prime}}}}{\tau_{l}-\tau_{l^{\prime}}}\stackrel{p}{\rightarrow}\frac{\tau_{l}\alpha_{2,\tau_{l}}-\tau_{l^{\prime}}\alpha_{2,\tau_{l^{\prime}}}}{\tau_{l}-\tau_{l^{\prime}}}.
\label{IQATE_consistency}
\end{equation}
The standard deviation of the above estimator for IQATE with $\hat{\alpha}_{2,\tau_{l}}$ and $\hat{\alpha}_{2,\tau_{l^{\prime}}}$ is given by
\[\sqrt{
	\frac{1}{\left(\tau_{l}-\tau_{l^{\prime}}\right)^{2}}\left[\tau_{l}^{2}Var\left(\hat{\alpha}_{2,\tau_{l}}\right)+\tau_{l^{\prime}}^{2}Var\left(\hat{\alpha}_{2,\tau_{l^{\prime}}}\right)-2\tau_{l}\tau_{l^{\prime}}Cov\left(\hat{\alpha}_{2,\tau_{l}},\hat{\alpha}_{2,\tau_{l^{\prime}}}\right)\right]}.
\]
If $\dim\left(X\right)=p\times1$, $Var\left(\hat{\alpha}_{2,\tau_{l}}\right)$ (and $Var\left(\hat{\alpha}_{2,\tau_{l}^{\prime}}\right)$) is the $\left(p+2\right)$th
diagonal element of the matrix $\mathbf{H}_{\tau_{l}}^{-1}\boldsymbol{\Omega}_{\tau_{l}}\mathbf{H}_{\tau_{l}}^{-1}$
(and $\left(\mathbf{H}_{\tau_{l^{\prime}}}^{-1}\boldsymbol{\Omega}_{\tau_{l^{\prime}}}\mathbf{H}_{\tau_{l^{\prime}}}^{-1}\right)$), and $Cov\left(\hat{\alpha}_{2,\tau_{l}},\hat{\alpha}_{2,\tau_{l^{\prime}}}\right)$
is the $\left(p+2\right)$th diagonal element of the matrix $\mathbf{H}_{\tau_{l}}^{-1}E\left[\mathbf{J}_{\tau_{l}}\mathbf{J}_{\tau_{l^{\prime}}}^\top\right]\mathbf{H}_{\tau_{l^{\prime}}}^{-1}$,
both scaled by $n^{-1}$.

Next we provide the result of uniform consistency: $
\left(\hat{\boldsymbol{\theta}}_{1,\tau},\hat{\boldsymbol{\theta}}_{2,\tau}\right)\stackrel{p}{\rightarrow}\left(\boldsymbol{\theta}_{1,\tau},\boldsymbol{\theta}_{2,\tau}\right)$ uniformly over $\tau\in\mathcal{T}\subset\left(0,1\right)$. We focus on a special case in which the loss function $FZ_{\tau}^{sp}(q,e,y)$ is used. Let 
\[h\left(\tau,\boldsymbol{\theta}\right):=\tau\left[ FZ_{\tau}^{sp}\left(q\left(W,\boldsymbol{\theta}_{1}\right),e\left(W,\boldsymbol{\theta}_{2}\right),Y\right)-\ln(1+\exp(Y))\right],
\]where $\boldsymbol{\theta}:=\left(\boldsymbol{\theta}_{1},\boldsymbol{\theta}_{2}\right)$. Notice that minimizing $E\left[FZ_{\tau}^{sp}\left(q\left(W,\boldsymbol{\theta}_{1}\right),e\left(W,\boldsymbol{\theta}_{2}\right),Y\right)\right]$
and $E\left[h\left(\tau,\boldsymbol{\theta}\right)\right]$
w.r.t. $\boldsymbol{\theta}$ will obtain the same minimizer, as
the former is just the latter scaled by a positive and finite constant $1/\tau$ and plus $E\left[\ln(1+\exp(Y))\right]$. Therefore using $h\left(\tau,\boldsymbol{\theta}\right)$ as a loss
function to estimate $\boldsymbol{\theta}$ will yield the same result
as using $FZ_{\tau}^{sp}\left(q\left(\boldsymbol{\theta}_{1}\right),e\left(\boldsymbol{\theta}_{2}\right),y\right)$. We will use $h\left(\tau,\boldsymbol{\theta}\right)$
as the loss function in proving the uniform consistency. The proof will rely on using the following additional assumptions and results of \citet{Newey_1991} to show that the empirical loss function uniformly converges to the true loss function over $\left(\tau,\boldsymbol{\theta}\right)\in\mathcal{T}\times\boldsymbol{\Theta}$. 
\begin{assumption}
	\item[1.] $\mathcal{T}$ is a closed sub-interval of $(0,1)$.
	\item[2.] The partial derivative functions $\nabla_{\boldsymbol{\theta}}E\left[h\left(\tau,\boldsymbol{\theta}\right)|T=c\right]$
	and $\frac{\partial}{\partial\tau}E\left[h\left(\tau,\boldsymbol{\theta}\right)|T=c\right]$ are both bounded for all $\left(\tau,\boldsymbol{\theta}\right)\in\mathcal{T}\times\boldsymbol{\Theta}$.
	\item[3.] $E\left[B_{k}\left(V\right)|T=c\right]<\infty$,
	where $B_{k}\left(V\right)\geq 0$, $k=1,\ldots,5$ are the terms defined
	in Lemma A.2 in Appendix A.4.
\end{assumption}

Assumption 5.1 is standard. Assumption 5.2 is for the equicontinuity of the true loss function $E[\bar{K}(V;v_{0},\boldsymbol{\gamma}_{0})h(\tau,\boldsymbol{\theta})]$. Assumptions 5.3 requires the first moment of the quantity $B_{k,\tau}(V)$, $k=1,\ldots,5$ (defined in Lemma A.2 in the Appendix A.4) to be finite. This is a technical condition related to the global Lipschitz continuity of the FZ loss on the parameter space. The following theorem establishes uniform consistency of our estimator over the quantile index range $\mathcal{T}$.
\begin{theorem}
	Let the FZ loss in (\ref{sample_analogue}) take the form (\ref{FZ_sp}). If Assumptions 1, 2 and 5 hold, then
	\[\sup_{\tau\in\mathcal{T}}\left\lVert(\hat{\boldsymbol{\theta}}_{1,\tau},\hat{\boldsymbol{\theta}}_{2,\tau})-\left(\boldsymbol{\theta}_{1,\tau},\boldsymbol{\theta}_{2,\tau}\right)\right\rVert\stackrel{p}{\rightarrow}0.\]
\end{theorem}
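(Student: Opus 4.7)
The plan is to upgrade Theorem~1 to uniform-in-$\tau$ consistency by establishing uniform convergence of the sample objective to its population counterpart over $\mathcal{T}\times\boldsymbol{\Theta}$ and then applying a standard argmin theorem. As the excerpt points out, minimizing $FZ_{\tau}^{sp}$ and $h(\tau,\boldsymbol{\theta})$ yield the same minimizer for each $\tau$, so throughout I would work with the rescaled loss $h$, which is convenient because (i) it is smooth in $\tau$ and (ii) its partial derivatives in $(\tau,\boldsymbol{\theta})$ can be controlled through Assumptions~5.2 and~5.3. I would follow the route of \citet{Newey_1991}: pointwise convergence plus stochastic equicontinuity implies uniform convergence over the compact index set $\mathcal{T}\times\boldsymbol{\Theta}$.

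The first step is to replace the estimated weight $\tilde{K}(V_i;\hat v,\hat{\boldsymbol{\gamma}})$ by the oracle weight $\tilde{K}(V_i;v_0,\boldsymbol{\gamma}_0)$. Writing
\[
\hat{Q}_n(\tau,\boldsymbol{\theta}) \;=\; \frac{1}{n}\sum_{i=1}^n \tilde{K}(V_i;\hat v,\hat{\boldsymbol{\gamma}})\,h(\tau,\boldsymbol{\theta};V_i),
\qquad
Q_n^{\ast}(\tau,\boldsymbol{\theta}) \;=\; \frac{1}{n}\sum_{i=1}^n \tilde{K}(V_i;v_0,\boldsymbol{\gamma}_0)\,h(\tau,\boldsymbol{\theta};V_i),
\]
the difference $\hat{Q}_n - Q_n^{\ast}$ is bounded in absolute value by $\sup_V |\tilde{K}(V;\hat v,\hat{\boldsymbol{\gamma}}) - \tilde{K}(V;v_0,\boldsymbol{\gamma}_0)|$ times $n^{-1}\sum_i |h(\tau,\boldsymbol{\theta};V_i)|$. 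The first factor is $o_p(1)$ by Assumption~2.8 (the truncation map is $1$-Lipschitz), and the second factor is $O_p(1)$ uniformly in $(\tau,\boldsymbol{\theta})$ because Assumption~5.3 provides integrable envelopes for the $B_k(V)$ that dominate $h$ and its gradients on $\mathcal{T}\times\boldsymbol{\Theta}$. Hence $\sup_{(\tau,\boldsymbol{\theta})}|\hat{Q}_n - Q_n^{\ast}| = o_p(1)$, and it suffices to show $Q_n^{\ast}$ converges uniformly to its expectation.

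Next I verify uniform convergence of $Q_n^{\ast}$ to $Q(\tau,\boldsymbol{\theta}) := E[\tilde{K}(V;v_0,\boldsymbol{\gamma}_0)\,h(\tau,\boldsymbol{\theta};V)]$. Pointwise convergence at each fixed $(\tau,\boldsymbol{\theta})$ follows from the standard LLN since $\tilde K$ is bounded in $[0,1]$ and $E|h(\tau,\boldsymbol{\theta};V)|<\infty$. Stochastic equicontinuity is obtained from the global Lipschitz-type bound in Assumption~5.3: for $(\tau,\boldsymbol{\theta})$ and $(\tau',\boldsymbol{\theta}')$ in $\mathcal{T}\times\boldsymbol{\Theta}$, the difference $|h(\tau,\boldsymbol{\theta};V) - h(\tau',\boldsymbol{\theta}';V)|$ is dominated by $(\|\boldsymbol{\theta}-\boldsymbol{\theta}'\|+|\tau-\tau'|)$ times a sum involving the $B_k(V)$, so compactness of $\mathcal{T}\times\boldsymbol{\Theta}$ together with the finite-mean envelope yields stochastic equicontinuity of $Q_n^\ast$. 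Combined with pointwise convergence on a dense countable subset, Newey~(1991, Corollary~2.2) delivers $\sup_{(\tau,\boldsymbol{\theta})}|Q_n^\ast(\tau,\boldsymbol{\theta}) - Q(\tau,\boldsymbol{\theta})|=o_p(1)$. Using $\bar K = P(T=c|V)$, the limit equals $P(T=c)\,E[h(\tau,\boldsymbol{\theta})|T=c]$, which is equicontinuous in $(\tau,\boldsymbol{\theta})$ by the bounded-derivatives condition in Assumption~5.2.

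The final step is the standard argmin argument. For each $\tau\in\mathcal{T}$, Assumptions~2.5--2.6 (already used in Theorem~1) imply that $\boldsymbol{\theta}\mapsto E[h(\tau,\boldsymbol{\theta})|T=c]$ has a unique minimizer $(\boldsymbol{\theta}_{1,\tau},\boldsymbol{\theta}_{2,\tau})$ and is well-separated on the compact $\boldsymbol{\Theta}$; the equicontinuity of $Q$ in $\tau$ makes the separation margin uniform in $\tau$. Combining this uniform well-separatedness with the uniform convergence $\sup|\hat{Q}_n - Q|=o_p(1)$ by the standard argument (taking $\varepsilon$-balls around each $(\boldsymbol{\theta}_{1,\tau},\boldsymbol{\theta}_{2,\tau})$ and bounding the probability of the empirical argmin lying outside) yields the claimed uniform consistency. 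I expect the main technical obstacle to be the second step, specifically verifying stochastic equicontinuity in $(\tau,\boldsymbol{\theta})$ in the presence of the indicator $1\{Y\le q\}$ inside the quantile component $LQ_\tau(q,Y)$; here I would exploit the fact that this indicator enters only through $\tau^{-1}\max(q-Y,0)-q$, which is globally $1$-Lipschitz in $q$, so smoothness in $\tau$ and Lipschitz continuity in $\boldsymbol{\theta}$ both flow through Assumption~5.3's envelopes $B_k$.
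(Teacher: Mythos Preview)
Your proposal is correct and follows essentially the same route as the paper: both work with the rescaled loss $h(\tau,\boldsymbol{\theta})$, invoke \citet{Newey_1991} to upgrade pointwise convergence to uniform convergence over $\mathcal{T}\times\boldsymbol{\Theta}$ via stochastic equicontinuity (obtained from the global Lipschitz bound guaranteed by Assumption~5.3, which is Lemma~3 in the paper) together with equicontinuity of the population objective from Assumption~5.2, and then conclude with the standard argmin argument on the compact parameter space. The only cosmetic difference is that the paper cites Corollary~2.1 rather than~2.2 of \citet{Newey_1991} and folds the weight-replacement step into the stochastic-equicontinuity calculation rather than separating it out first as you do.
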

\section{Simulation}
In this section we examine the finite-sample performance of
our proposed method in the simulations. The data $\left(Y_{i},D_{i},Z_{i},X_{i}\right)$, $i=1,\ldots,n$, where $X_{i}=\left(X_{1i},X_{2i}\right)$, for the simulation are generated according to the following design: 
\begin{eqnarray*}
X_{1i},X_{2i} & \overset{\text{i.i.d}}\sim & U\left(0,1\right),\\
Z_{i}|X_{i} & \overset{\text{i.i.d}}\sim & Bern\left(\Phi\left(-1+X_{1i}+X_{2i}\right)\right),\\
D_{i} & = & Z_{i}D_{1i}+\left(1-Z_{i}\right)D_{0i},\\
Y_{i} & = & D_{i}Y_{1i}+\left(1-D_{i}\right)Y_{0i}.
\end{eqnarray*}
Here $U\left(0,1\right)$ is the uniform distribution in $[0,1]$, $Bern\left(p\right)$ is 
the Bernoulli distribution with parameter $p$, $\Phi\left(.\right)$ denotes the cumulative distribution function of the standard normal random variable, and  
\begin{eqnarray*}
D_{1i} & = & 1\left\{ \vartheta_{i}>-0.67\right\} ,\\
D_{0i} & = & 1\left\{ \vartheta_{i}>0.67\right\} ,\\
Y_{1i} & = & \left(b_{0}+b_{1}+b_{2}X_{1i}+b_{3}X_{2i}\right)\varepsilon_{i},\\
Y_{0i} & = & \left(b_{1}+b_{2}X_{1i}+b_{3}X_{2i}\right)\varepsilon_{i},\\
\left(\varepsilon_{i},\vartheta_{i}\right)|X_{i} & \overset{\text{i.i.d}}\sim & MVN\left(\mathbf{0},\Sigma_{\varepsilon,\vartheta}\right),\Sigma_{\varepsilon,\vartheta}=\left(\begin{array}{cc}
1 & \rho\\
\rho & 1
\end{array}\right),
\end{eqnarray*}
where $MVN\left(\mathbf{0},\Sigma_{\varepsilon,\vartheta}\right)$ denotes the multivariate normal distribution with mean zero and covariance matrix $\Sigma_{\varepsilon,\vartheta}$. We set $T_{i}=a$, if $D_{1i}=D_{0i}=1$; $T_{i}=c$, if $D_{1i}=1,D_{0i}=0$; and $T_{i}=ne$,
if $D_{1i}=D_{0i}=0$. Under this simulation design, the proportion of compliers is approximately 50\%, the proportions of always and never takers are both approximately 25\%, and there is no defier. The correlation coefficient $\rho$ controls for the degree of endogeneity. When $\rho\neq0$, for $T=a,ne$ (always and never takers), the treatment status $D$ is correlated with potential
outcomes $Y_{1}$ and $Y_{0}$ through $\varepsilon$ and endogeneity
arises. But for $T=c$ (compliers), the condition of unconfounded
treatment selection: $D\perp\left(Y_{1},Y_{0}\right)|\left(X_{1},X_{2}\right)$ holds here. Under this setting, 
\begin{eqnarray*}
Q_{Y|D,X,T=c}\left(\tau\right) & = & \left(b_{0}D+b_{1}+b_{2}X_{1}+b_{3}X_{2}\right)Q_{\varepsilon|X,T=c}\left(\tau\right),\\
 & = & \alpha_{1,\tau}D+\beta_{11,\tau}+\beta_{12,\tau}X_{1}+\beta_{13,\tau}X_{2},\\
CTE_{Y|D,X,T=c}\left(\tau\right) & = & \left(b_{0}D+b_{1}+b_{2}X_{1}+b_{3}X_{2}\right)CTE_{\varepsilon|X,T=c}\left(\tau\right),\\
 & = & \alpha_{2,\tau}D+\beta_{21,\tau}+\beta_{22,\tau}X_{1}+\beta_{23,\tau}X_{2}.
\end{eqnarray*}
The QTE and CTATE for compliers are $\alpha_{1,\tau}:=b_{0}Q_{\varepsilon|X,T=c}\left(\tau\right)$
and $\alpha_{2,\tau}:=b_{0}CTE_{\varepsilon|X,T=c}\left(\tau\right)$ respectively. We set the parameters $b_{0}=1$, $b_{1}=0$ and $b_{2}=b_{3}=1$. We consider two different sample sizes $n\in\{500,3000\}$, and the simulation is iterated 1000
times. We perform simulations under the cases of $\rho=0$ (no endogeneity) and $\rho=0.5$
(with endogeneity).  

We consider two alternative constructions of the projected weights in the weighted FZ loss minimization approach and assess how they affect the finite-sample performance of our proposed method. The first weight estimator is given by

\begin{equation}
\bar{K}\left(Y,D,X;\hat{v},\hat{\boldsymbol{\gamma}}\right)=1-\frac{D\left(1-\hat{v}\left(Y,D,X\right)\right)}{1-\pi\left(X,\hat{\boldsymbol{\gamma}}\right)}-\frac{\left(1-D\right)\hat{v}\left(Y,D,X\right)}{\pi\left(X,\hat{\boldsymbol{\gamma}}\right)},\label{estimation_K}
\end{equation}
where $\hat{v}\left(Y,D,X\right)$ is an estimate for $E\left[Z|Y,D,X\right]$ from a polynomial regression, and $\pi\left(X,\hat{\boldsymbol{\gamma}}\right)$ is an estimate for $P(Z=1|X)$ from a probit model with an intercept term and covariates $\left(X_{1},X_{2}\right)$. The estimate $\hat{v}\left(Y,D,X\right)$ is computed separately for the $D=1$ and $D=0$ groups, and the covariates used in the regressions include $(Y,X)$, their higher order and interaction terms. For the second weight estimator, we first calculate $K\left(D,Z,X\right)$ in (\ref{AAI_weight}) for each observation using $\pi\left(X,\hat{\boldsymbol{\gamma}}\right)$ obtained from the same probit model used in the first estimator. We then fit a polynomial regression of the calculated $K\left(D,Z,X\right)$ on
$\left(Y,D,X\right)$, their higher order and interaction terms and two additional covariates $\left(D_{0},D_{1}\right)$ which entail crucial information for classifying the types of individuals. The fitted value from the polynomial regression is used as the second projected weight estimator. By the law of iterated expectations, the identity (\ref{key_equation}) still holds with the weight function $\bar{K}(Y,D,X)$ being replaced by the weight $\bar{K}_{2}(Y,D,X,D_{1},D_{0}):=E[K(D,Z,X)|Y,D,X,D_{1},D_{0}]$, which motivates our construction of the second type of projected weight. Adopting the proof of Lemma 3.2 in \citet{AAI_2002}, we can see that $\bar{K}_{2}(Y,D,X,D_{1},D_{0})=P(T=c|Y,D,X,D_{1},D_{0})$. Thus working with the weight $\bar{K}_{2}$ amounts to estimating with information on classifying the types of individuals. We note that this weight estimator is infeasible in practice because we do not observe both $D_{0}$ and $D_{1}$ for each individual in the data. We compare the results using (\ref{estimation_K}) with those of $\bar{K}_{2}$ as the latter exploits more information and is expected to improve the performances of the resulting weighted FZ loss minimization estimators. Finally, since the weights are bounded from zero to one, if the estimated weights are greater than one or less than zero, we will shrink their values to one or zero.

Let M1 and M2 denote the weighted FZ loss minimization estimation approaches implemented using the aforementioned first and second projected weight estimators respectively. In addition, we also consider the following two benchmarks where no weighting scheme is used: (1) Estimation using all data without imposing any weight on the samples (no adjustment for endogeneity, denoted
by M3); (2) Estimation using only the sample of compliers (oracle
estimation, denoted by M4).

Figures \ref{figure4} to \ref{figure7} summarize the simulation results, including the biases, variances and mean squared
errors (MSEs) of the CTATE and QTE estimators under settings M1 to M4. The key findings are discussed as follows. For $\rho=0$, there is no concern of endogeneity. In this case, the CTATE and QTE for compliers estimated using all the data without imposing any weight scheme (M3), have the lowest variance and MSE. Imposing weights to account for endogeneity would increase estimator variability, as can be seen from our theoretical result in Theorem 2, which shows the estimated weight
contributes to variances of the proposed estimators. The estimation results using only the complier samples (M4) are associated with higher variance than those of M3. This is mainly due to the sample size effect as M4 tends to use fewer observations than M3 does in the implementation. The plots of variances for the estimated CTATE for compliers reveal
a downward trend as the quantile level rises, but those for the estimated QTE for compliers indicate that the variances are larger at the lower and higher quantile levels. For the estimation bias, the simulation results are mixed and there is no clear dominant method here. The bias, variance and MSE are all improved as the sample size increases from 500 to 3000.

For $\rho=0.5$, the endogeneity issue arises. In this case, M3 results in a much higher bias and MSE than the other three methods, although it still results in a lower variance. The high MSE of M3 is mostly due to the high bias from a lack of adjustment for endogeneity. The reasons for the lower variance of M3 are the same as those in the case of $\rho=0$: all samples are used in the estimations and there is no estimated weight. Notably, M4 (the oracle estimation case) is associated with the lowest bias, because it uses only the sample of compliers. The bias curve under M2 is flat over the quantile levels, whereas that under M1 shows declining trend. It is worth noting that M1 results in a higher MSE than M2, which suggests that using potential treatment status information 
to estimate the weight may be helpful on improving the MSE. The
performances of M1, M2 and M4 evidently improve as the sample size increased. In summary, in the case of endogeneity, although estimation without adjustment for endogeneity (M3) could still yield a lower variance, it could also result in a higher bias and thus a higher MSE. The higher variances under M1 and M2 arise from the use of the estimated weights. However, these weight schemes help to effectively mitigate the estimation bias. Overall, relative to M3, the MSE under M1 and M2 can be substantially improved under the weighting adjustment for endogeneity.
\section{Empirical Application: Effects of Enrolling in JTPA Services on Earnings}
In the following we illustrate the usefulness of our proposed method through an empirical study. We estimate the CTATE for compliers using our proposed weighted FZ loss minimization approach to evaluate the effects of enrolling in Title II programs of the Job Training Partnership Act (JTPA) in the US. We use the data of adult men and women who participated in these programs between November 1987 and September 1989. These data were previously used by \citet{AAI_2002}, who estimated the QTE for compliers on the earnings of the job training programs. We assume that the observations are i.i.d., as in \citet{AAI_2002}, for estimation. The outcome variable $Y$ is the sum of earnings in the 30 months after the random assignment. In practice, the sum of 30-month earnings is generally viewed as a continuous variable, and we assume that it remains continuously distributed given $W=(D,X)$ and conditional on the group of compliers. 
The treatment variable $D$ is a binary variable for enrollment in the JTPA services (1) or not (0), and the instrumental variable $Z$ is a binary variable for being offered such services (1) or not (0). The exogenous covariates include age, which is a categorical variable, as well as a set of dummy variables: black, Hispanic, high-school graduates (including GED holders), marital status, AFDC receipt (for adult women), whether the applicant worked for at least 12 weeks in the 12 months preceding the random assignment, the original recommended service strategy (classroom, OJT/JSA, other), and whether earnings data were from the second follow-up survey. These exogenous covariates are all discrete variables, and as shown in Section 3, this is allowed in our method. The total sample size is 11,204 (5,102 for adult men and 6,102 for adult women).


Offers of the JTPA services were randomly assigned to applicants but only approximately 60\% of those who were offered the services enrolled in the programs \citep{AAI_2002}. This may induce the problem of endogeneity in that the treatment status may be self-selected and correlated with the potential outcomes. As the offers were randomly assigned and were considered to potentially affect an applicant's intention to participate in the program, we use offer assignment as an instrumental variable. Finally, in the data, there were still individuals who received the JTPA services but did not obtain the assignment. 
However, as pointed out by \cite{AAI_2002}, the proportion of such violations relative to the entire samples is very low (less than 2\%); 
therefore this has a negligible impact on our estimation.

We estimate (\ref{linear_q_Y}) and (\ref{linear_e_Y}) over a grid of quantile levels $\tau$ ranging from 0.1 to 0.9 with the grid size being 0.01 (81 grid points). We estimate $E[Z|Y,D,X]$ with a power series estimator separately for $D=1$ and $D=0$, and include the outcome variable $Y$ and its higher order terms as covariates. The offer assignment probability, $P(Z=1|X)$, is estimated using a probit model. These estimates are  then used to construct the weight $\tilde{K}(.)$ in (\ref{AAI_star3}) to account for endogeneity in the estimation problem. In Figure \ref{figure1}, we present the CTATE and QTE estimates after endogenous adjustment for compliers (the parameter estimates $\alpha_{1,\tau}$ of (\ref{linear_q_Y}) and $\alpha_{2,\tau}$ of (\ref{linear_e_Y})) and the corresponding 95\% pointwise confidence band (pcb), evaluated with the analytic standard errors of the estimators (see Appendix A.1), over the specified range of quantile levels. We also show the 95\% bootstrap pointwise and simultaneous confidence bands (scb's) with 300 bootstrap samples. As suggested by \cite{HL_2021}, we construct the bootstrap pcb using the percentile method \citep[][p.327]{Van_1998}, which recenters the bootstrapped parameter estimator yet does not rescale it by its standard deviation. We construct two scb’s. The first one, denoted by scb-bootstrap-ns, is also constructed using the percentile method, and the second one, denoted by scb-bootstrap-qs, standardizes the bootstrapped estimator using the rescaled bootstrap quantile spread \citep{CFM_2013} to estimate the asymptotic standard error of the parameter estimator. The procedures for constructing the scb's can be found in Appendix A.5. The results without the endogenous adjustment are shown in Appendix A.6.


For both adult men and women, the CTATE estimates consistently increase across quantile levels, whereas the QTE estimates exhibit some fluctuations. Upon comparing the pcb's calculated using the analytic standard errors and the bootstrap method, the differences are small. For the case of adult men, the scb's demonstrate that both the CTATE and QTE estimates generally lack statistical significance across most quantile levels. This implies a negligible stochastic dominance relationship, indicating that the benefits of participating in the JTPA program for adult men are not conclusively supported by our data. For the case of adult women, the strength of the CTATE estimates surpasses that of men, with scb's showing statistically significant positivity at certain quantile levels. Overall, the results in Figure \ref{figure1} suggest a stronger evidence for adult women that earnings from not participating in the JTPA are second order stochastically dominated by those from participating in the JTPA, and the JTPA was beneficial for risk averse female workers who complied with the assignment of the JTPA offer.

We then divide the quantile levels into eight equal-length intervals and estimate the corresponding inter-quantile average treatment effect (IQATE) for compliers. We also compare the estimated IQATE with a naive estimator: a local average of the QTE estimates for compliers (LAVG-QTE) within the same interval of quantile levels. Figure \ref{figure2} shows the estimation results of the IQATE, the corresponding pcb's (implemented with the analytic standard errors and bootstrap), and the LAVG-QTE for earnings of adult men and women at different quantile level intervals.

For adult men, only the IQATE estimates above the quantile level of 0.6 are statistically significantly positive. It is worth noting that at the intervals of quantile levels of 0.8 and 0.9, the QTE estimates are not all statistically significantly positive under the pcb's, but the IQATE estimate is. This indicates that the IQATE estimate may provide a more coherent result when we want to evaluate a policy at quantile level intervals. For adult women, the IQATE estimates are statistically significantly positive over the eight quantile level intervals. Comparing the IQATE and LAVG-QTE estimates, for both cases, they are not very different when the quantile level is above 0.5, but at the quantile levels lower than 0.5, they rather show some mild differences in the case of adult women.
\section{Conclusion}
We have introduced the conditional tail average treatment effect (CTATE), defined as the difference between CTEs of potential outcomes. The CTATE is a valuable tool for policy evaluations, as it allows for capturing the heterogeneity of treatment effects over different quantiles and is useful for detecting second order stochastic dominance and for estimating the Lorenz curve. We have developed a semiparametric method using a class of consistent loss functions proposed by \citet{FZ_2016} to estimate the CTATE for compliers under endogeneity. We also have derived asymptotic properties for our proposed estimator. Our simulation results show that the proposed method works well in the presence of treatment endogeneity. We apply our estimation approach to a policy evaluation for the JTPA program participation. We find that, after adjustment for endogeneity, for the case of adult men, both the FOSD and SOSD relationships hardly held between earning distributions of those who participated in the JTPA and of those who did not. Yet, for adult women, the SOSD of earnings for the JTPA participant over those for non-participant appeared to hold. These empirical results suggest that the JTPA could be beneficial for the risk-averse female workers who complied with the assignment of the JTPA offer.

\clearpage
\begin{figure}[!htb]
	\centering
	\mbox{
		\includegraphics[height=4.8cm,width=4.8cm]{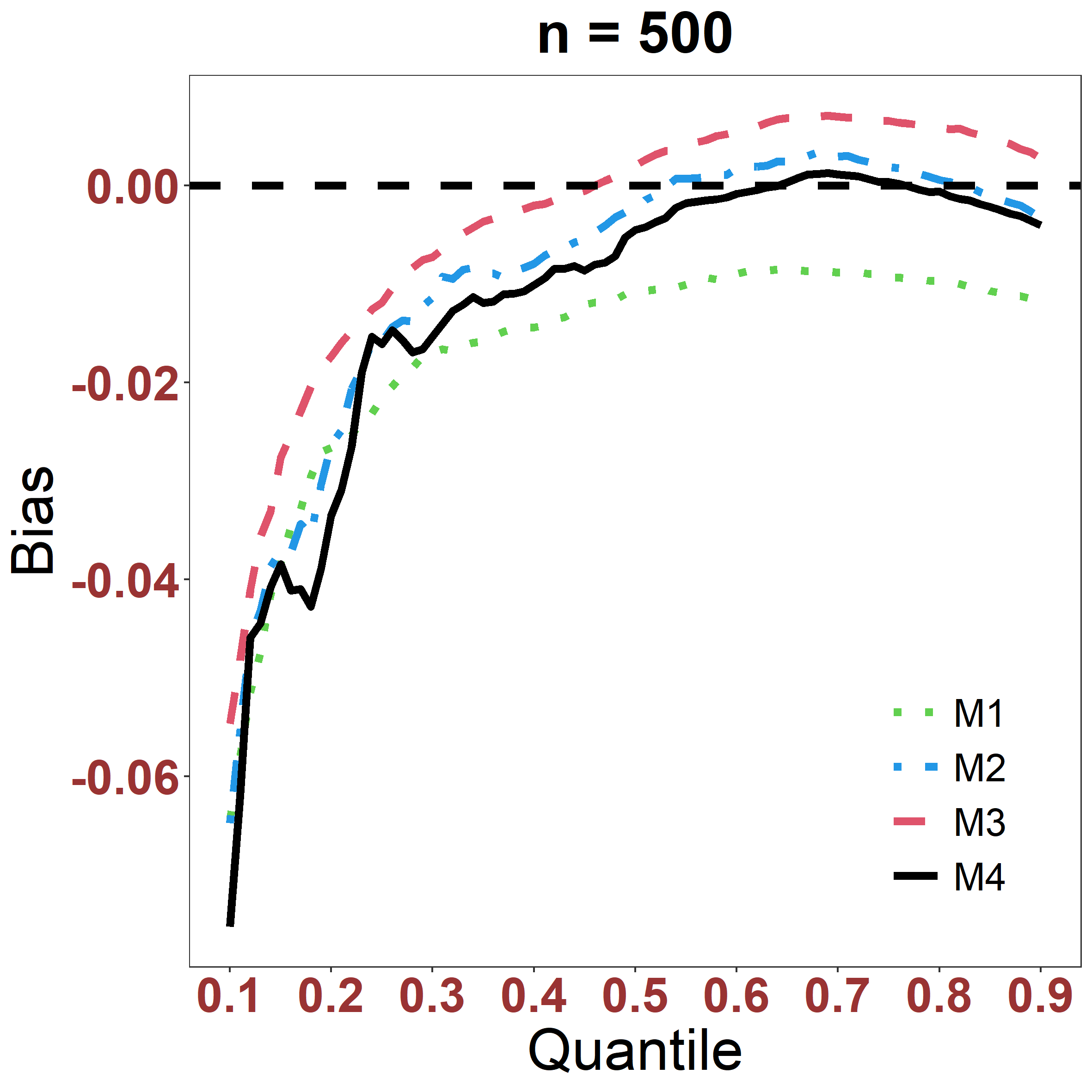}
		\includegraphics[height=4.8cm,width=4.8cm]{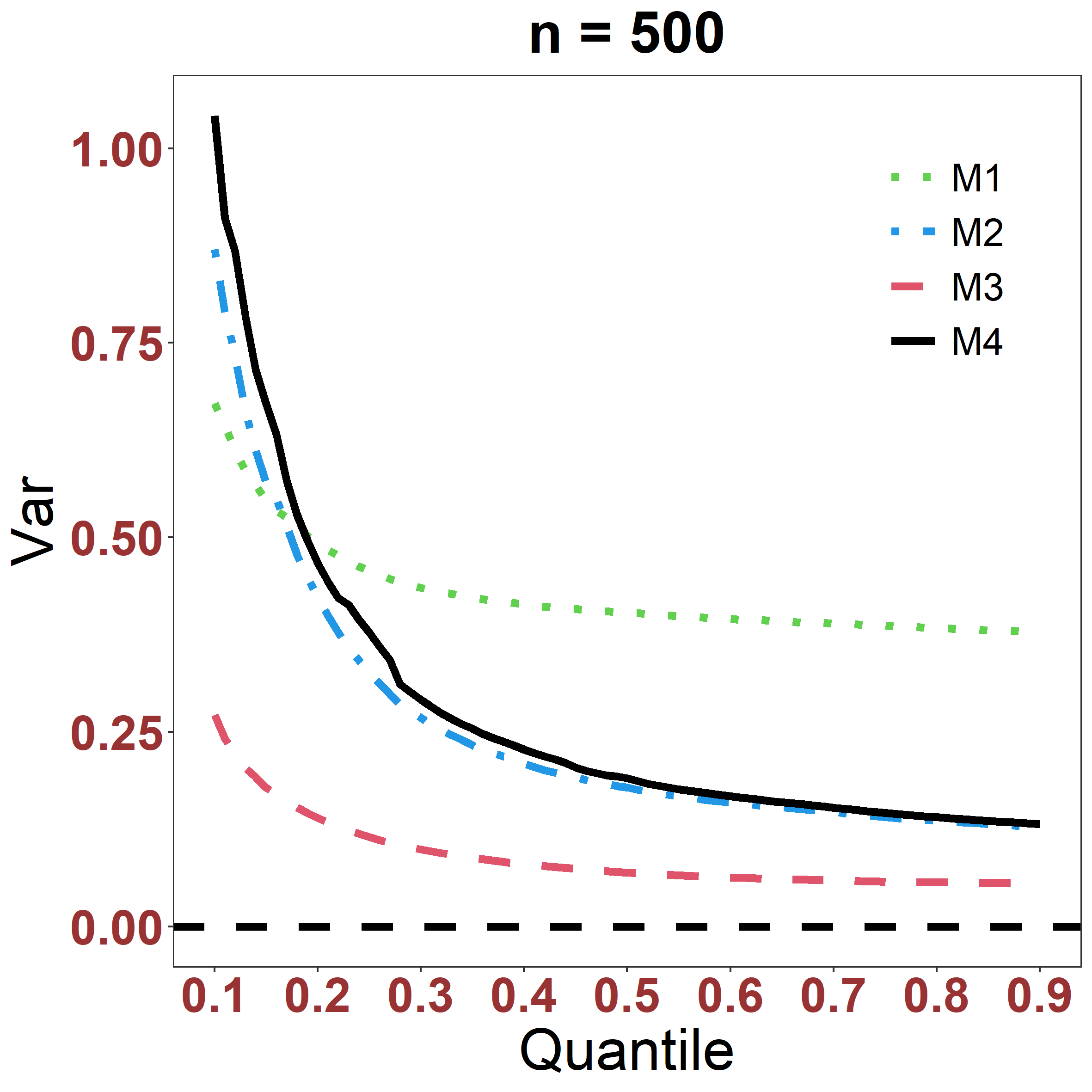}
		\includegraphics[height=4.8cm,width=4.8cm]{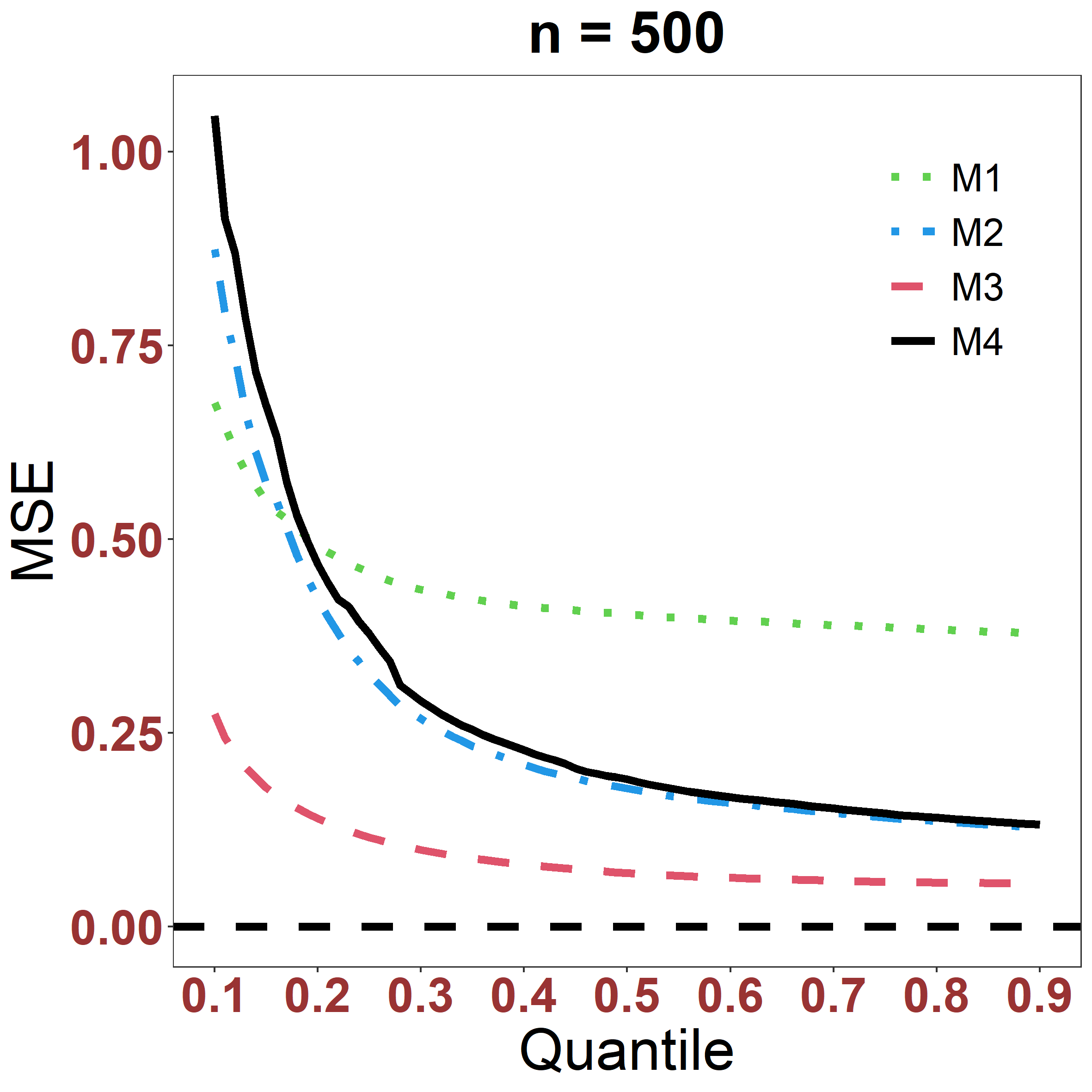}
	}
	\mbox{
	\includegraphics[height=4.8cm,width=4.8cm]{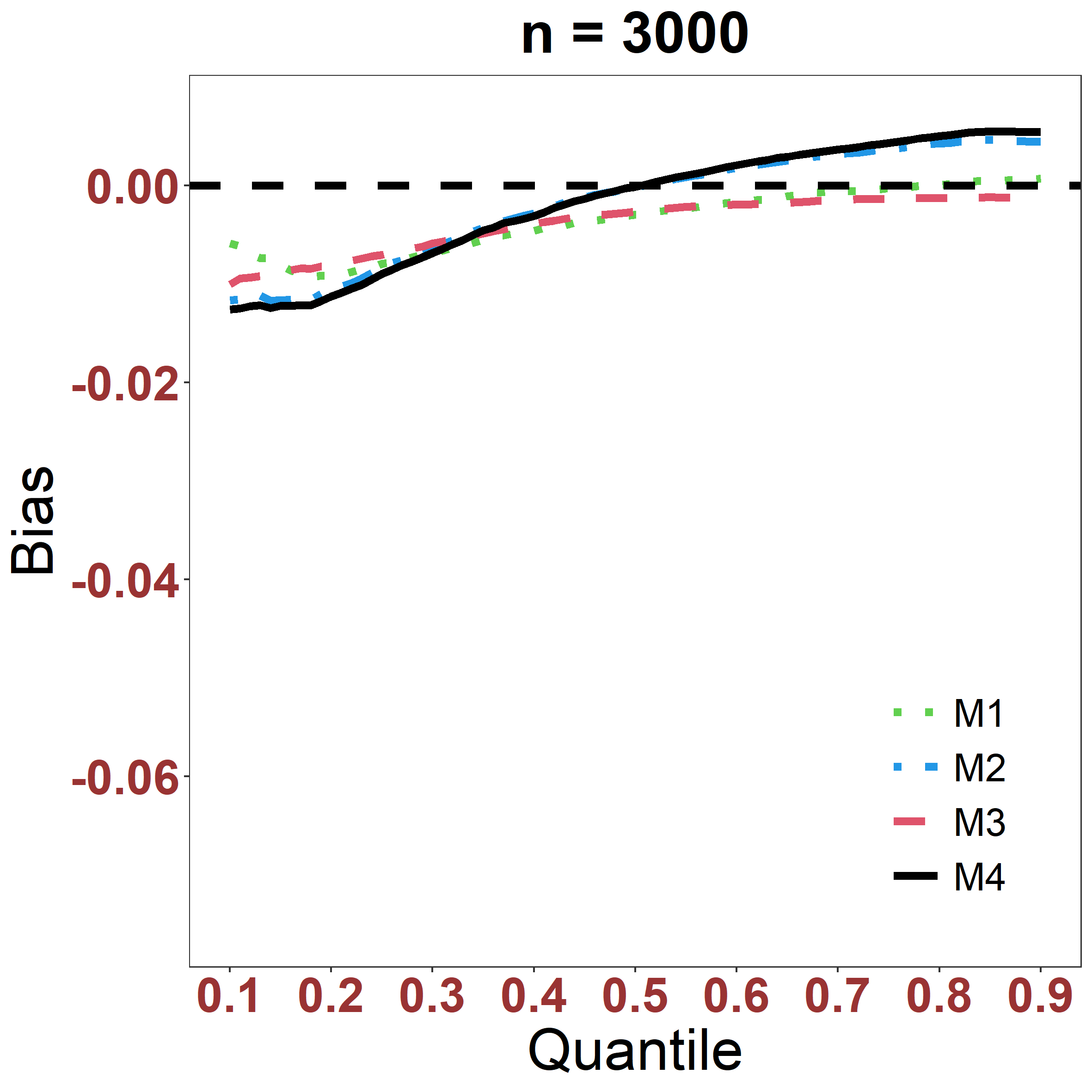}
	\includegraphics[height=4.8cm,width=4.8cm]{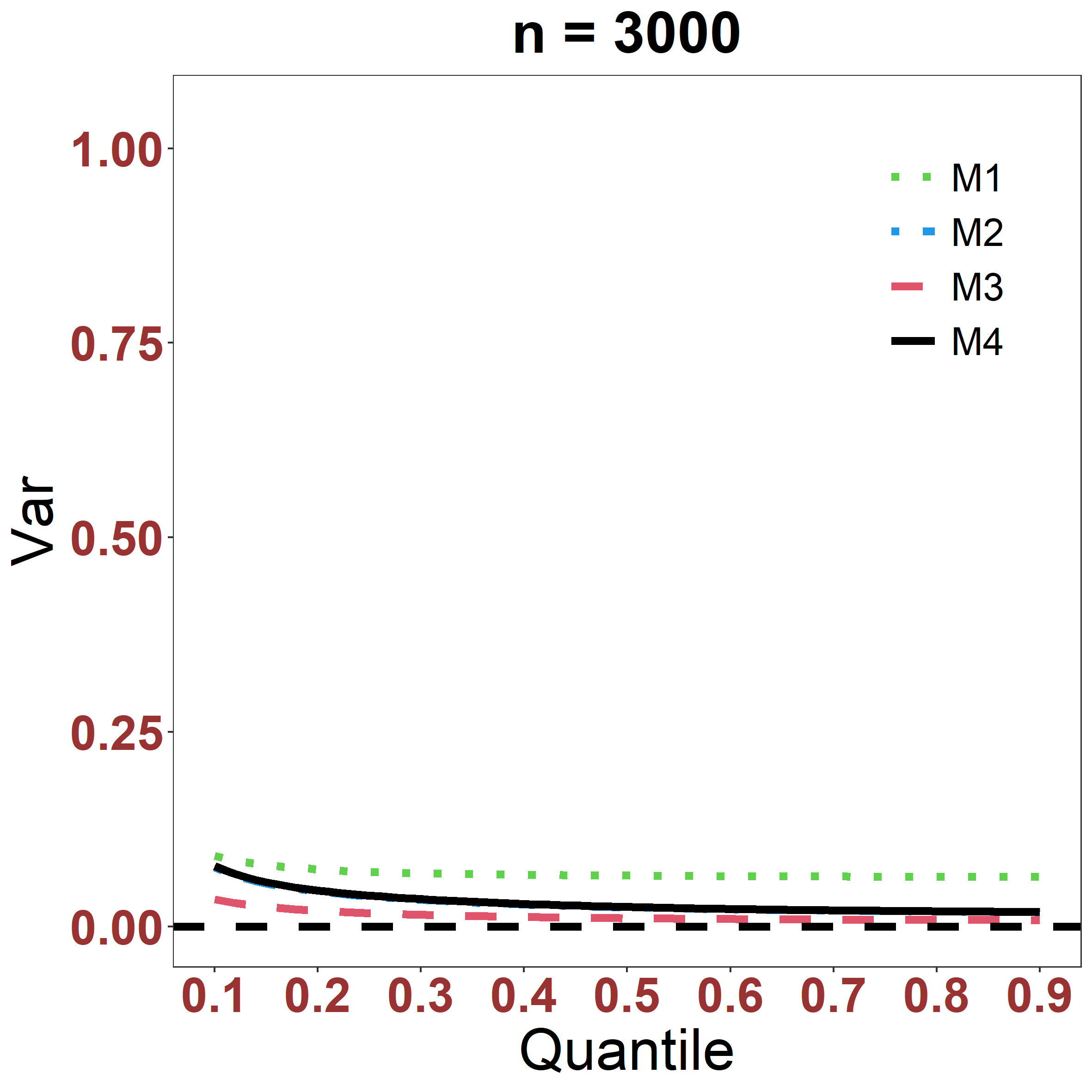}
	\includegraphics[height=4.8cm,width=4.8cm]{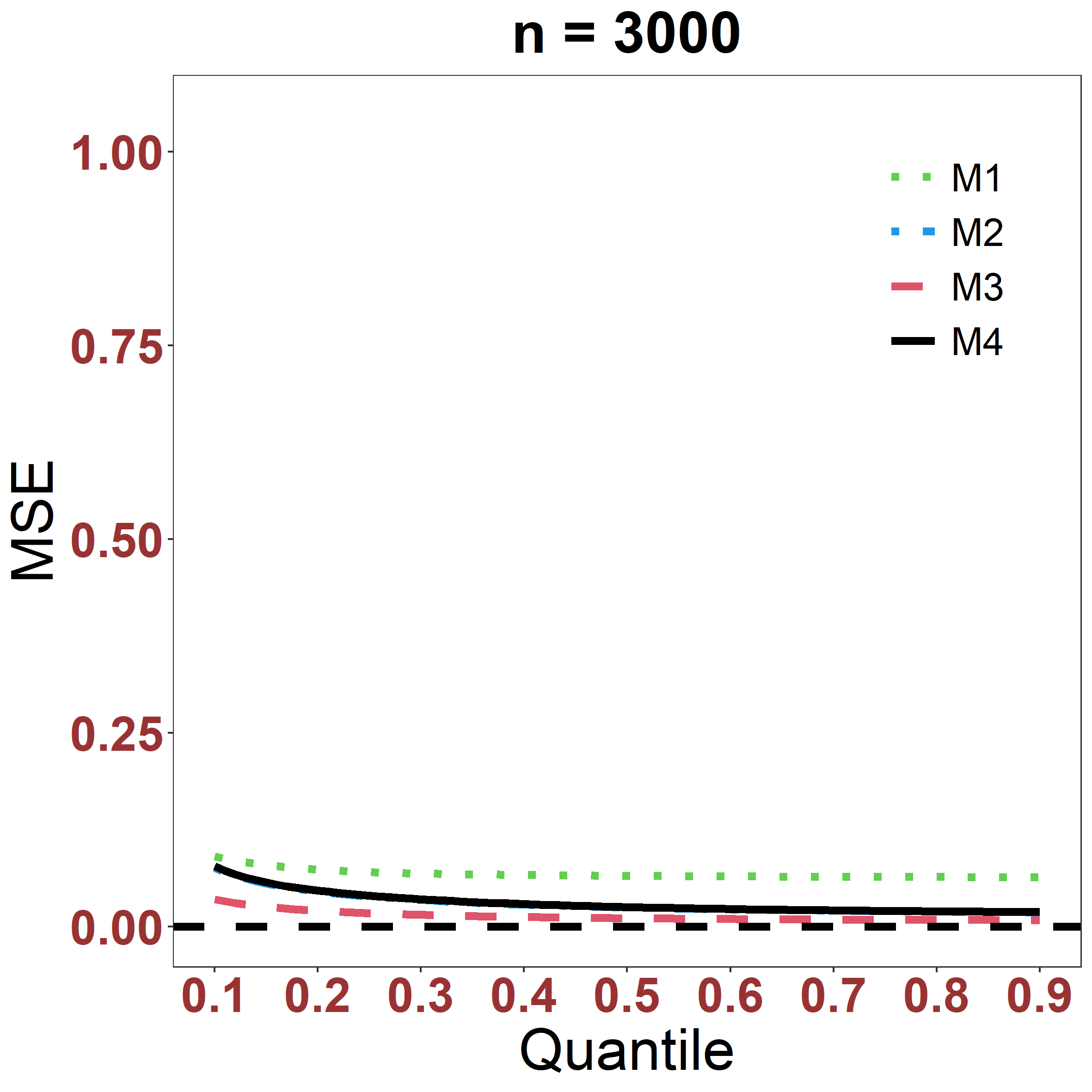}
}	
\caption{Bias, variance and MSE of estimator for estimating the CTATE for compliers when $\rho=0$ and $n=500$.}
\label{figure4}
\end{figure}
\begin{figure}[!htb]
	\centering
	\mbox{
	\includegraphics[height=4.8cm,width=4.8cm]{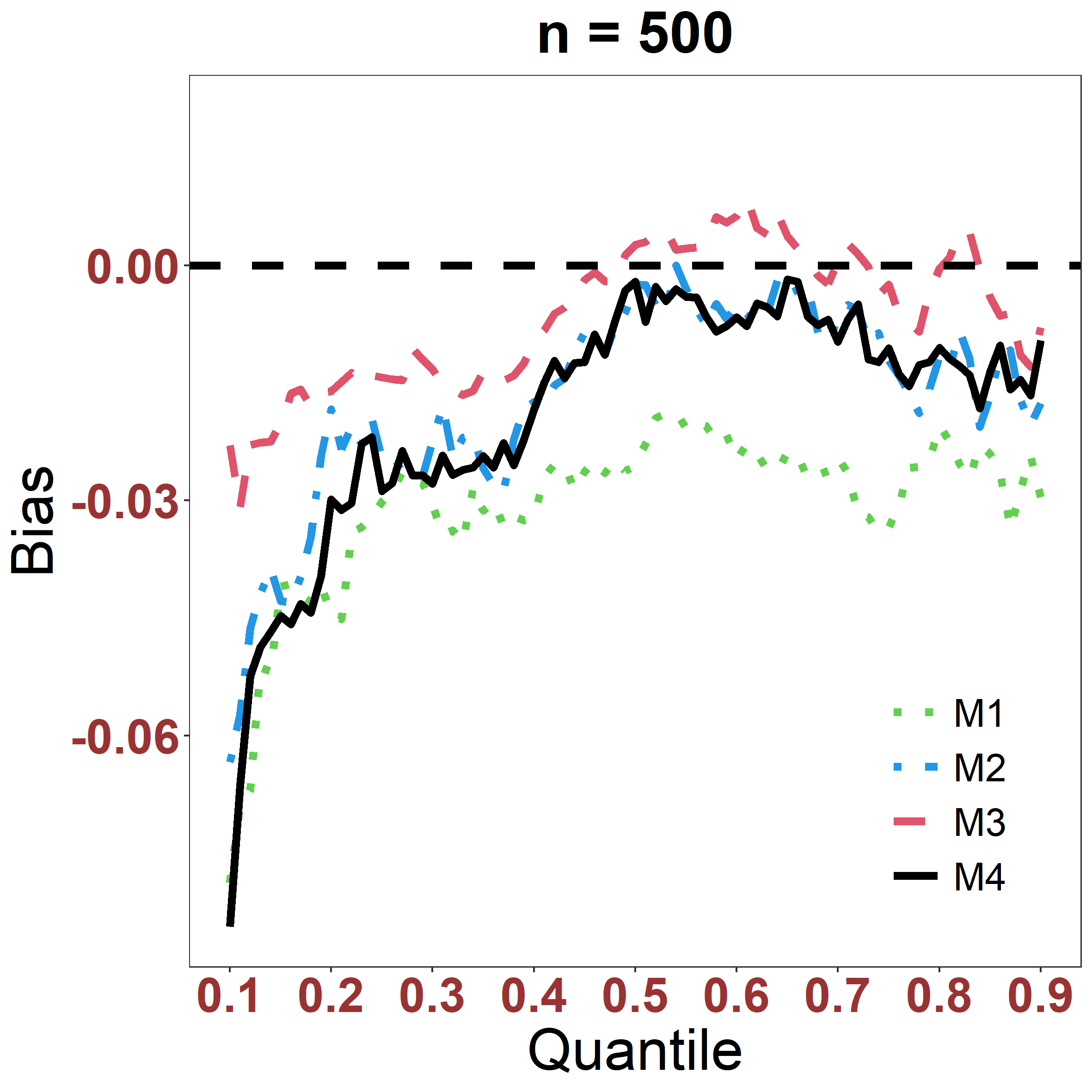}
	\includegraphics[height=4.8cm,width=4.8cm]{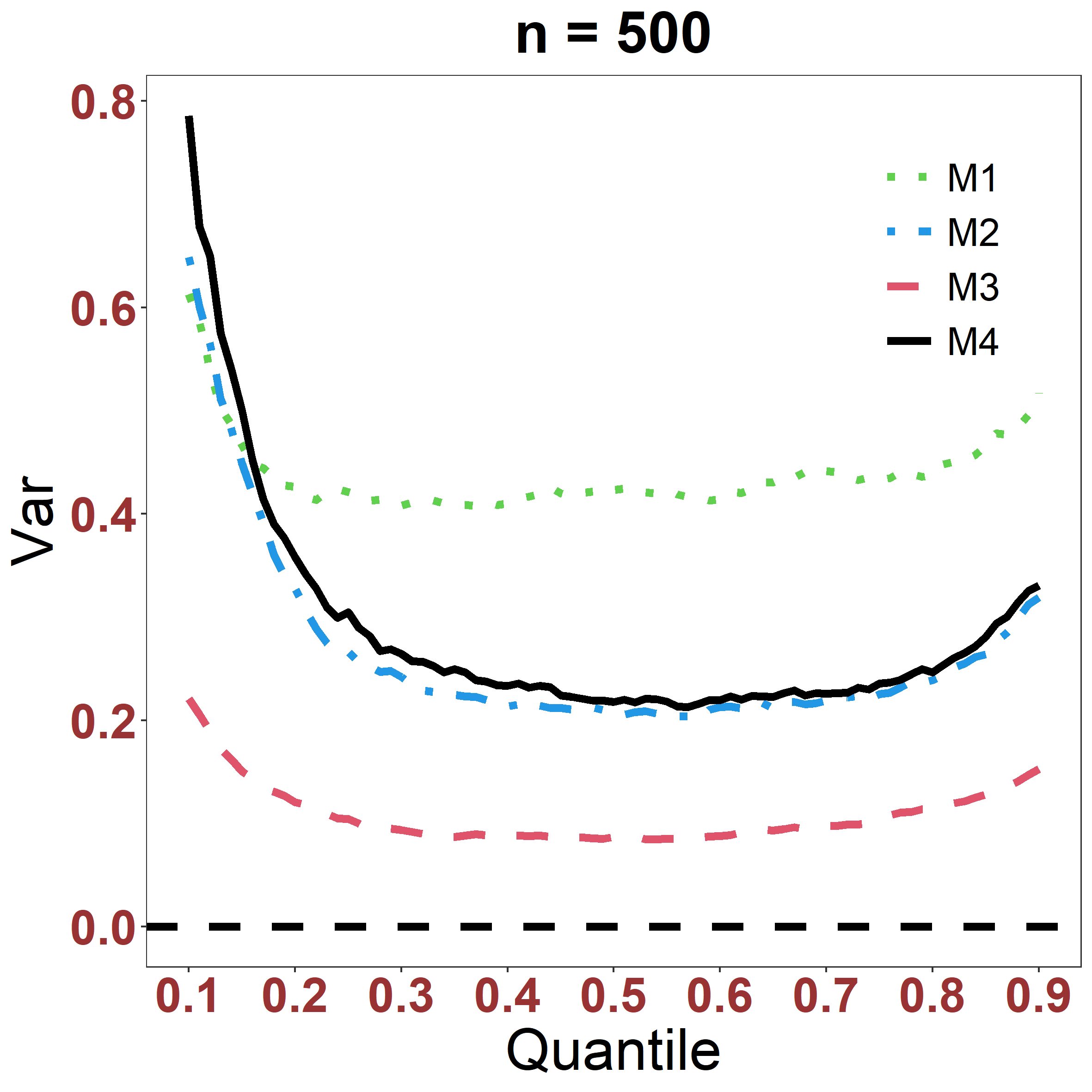}
	\includegraphics[height=4.8cm,width=4.8cm]{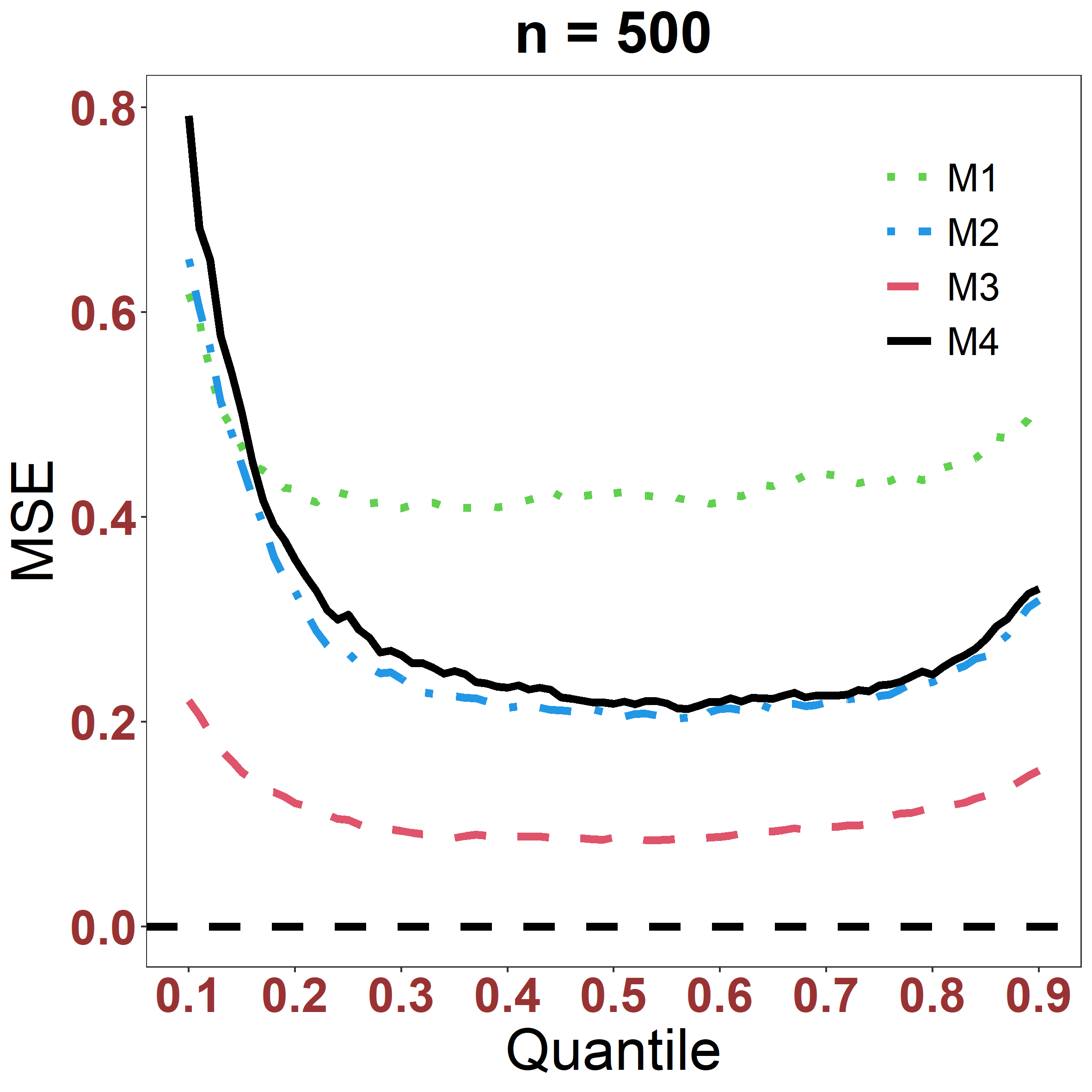}
}
	\mbox{
		\includegraphics[height=4.8cm,width=4.8cm]{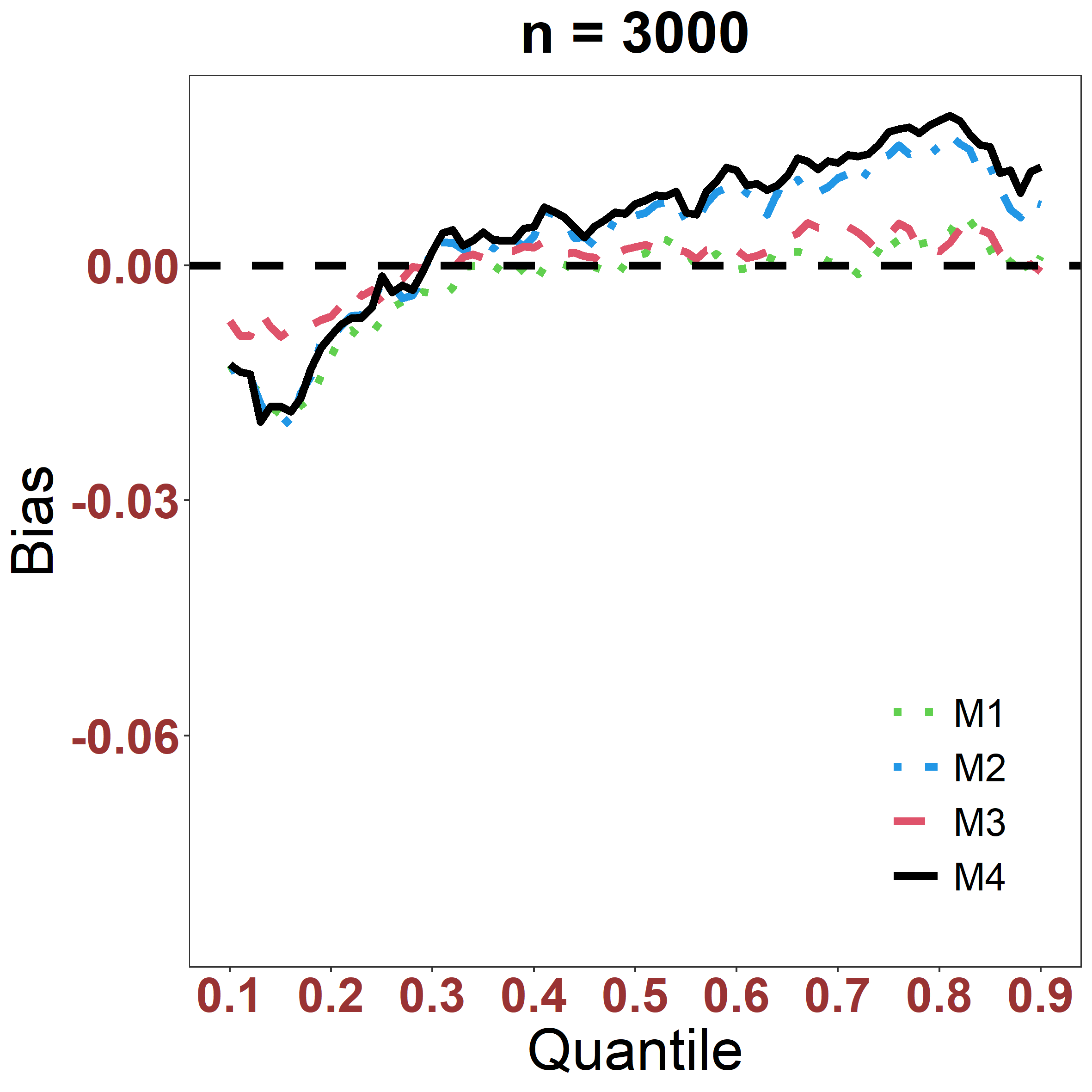}
		\includegraphics[height=4.8cm,width=4.8cm]{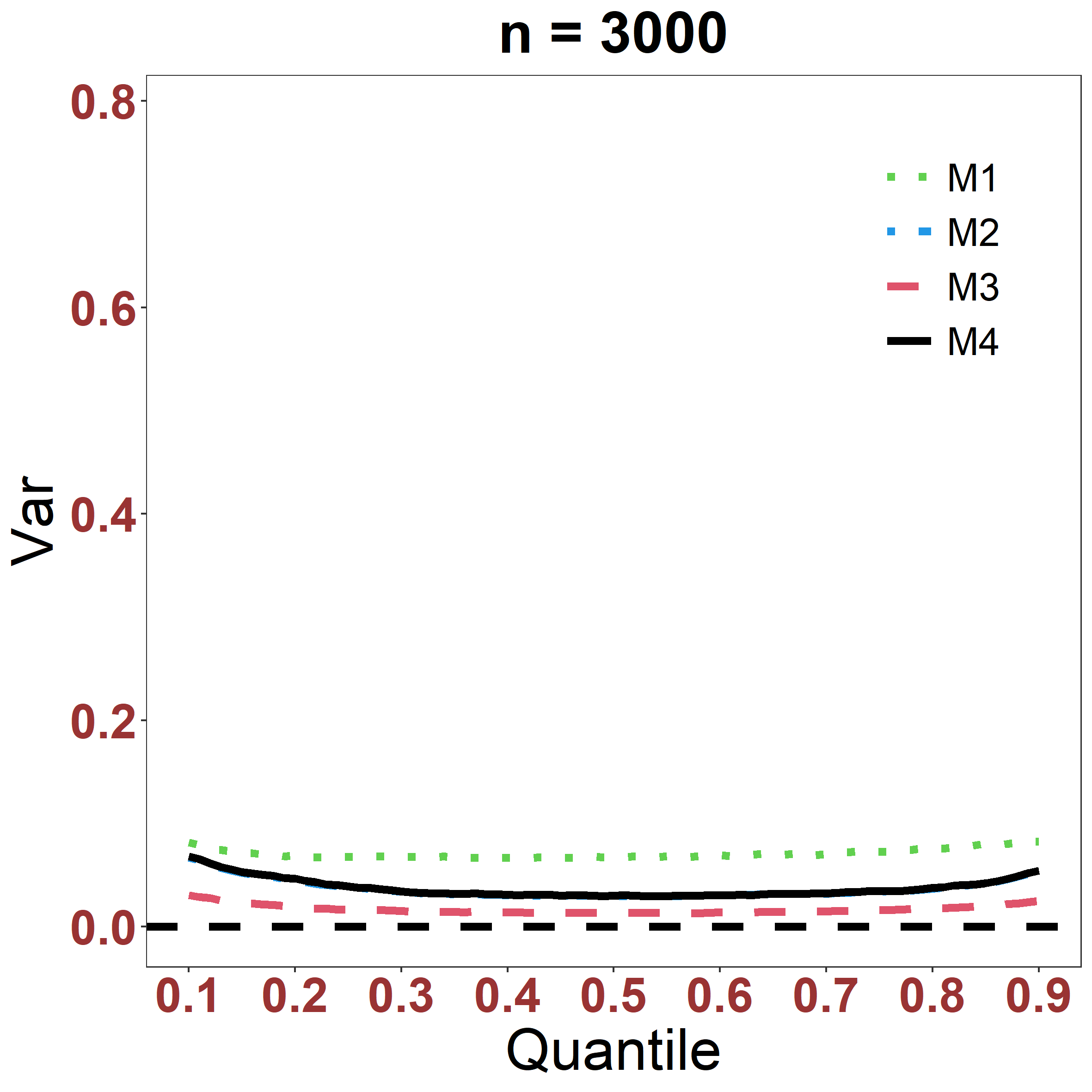}
		\includegraphics[height=4.8cm,width=4.8cm]{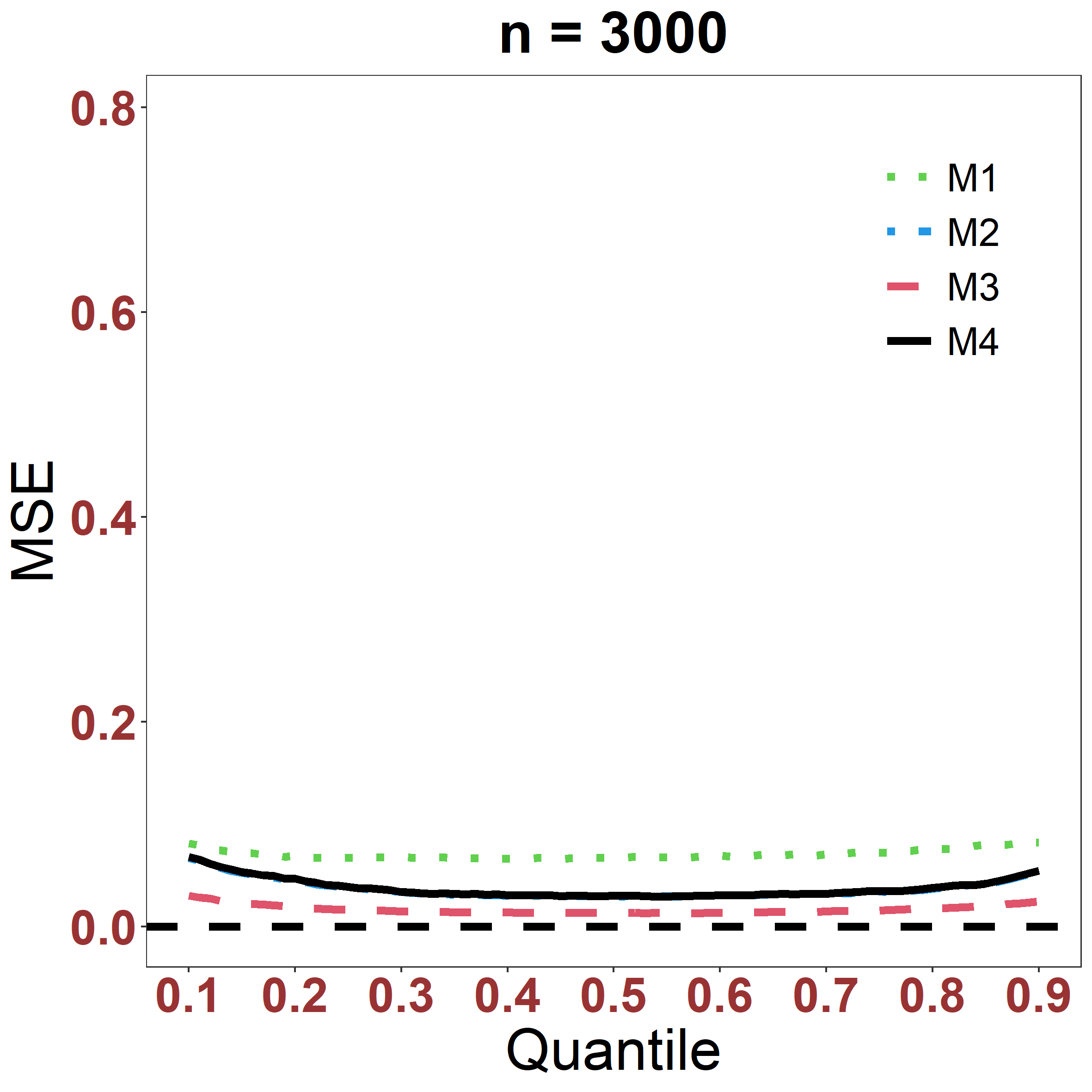}
	}	
	\caption{Bias, variance and MSE of estimator for estimating the QTE for compliers when $\rho=0$ and $n=3,000$.}
	\label{figure5}
\end{figure}
\begin{figure}[!htb]
	\centering
	\mbox{
		\includegraphics[height=4.8cm,width=4.8cm]{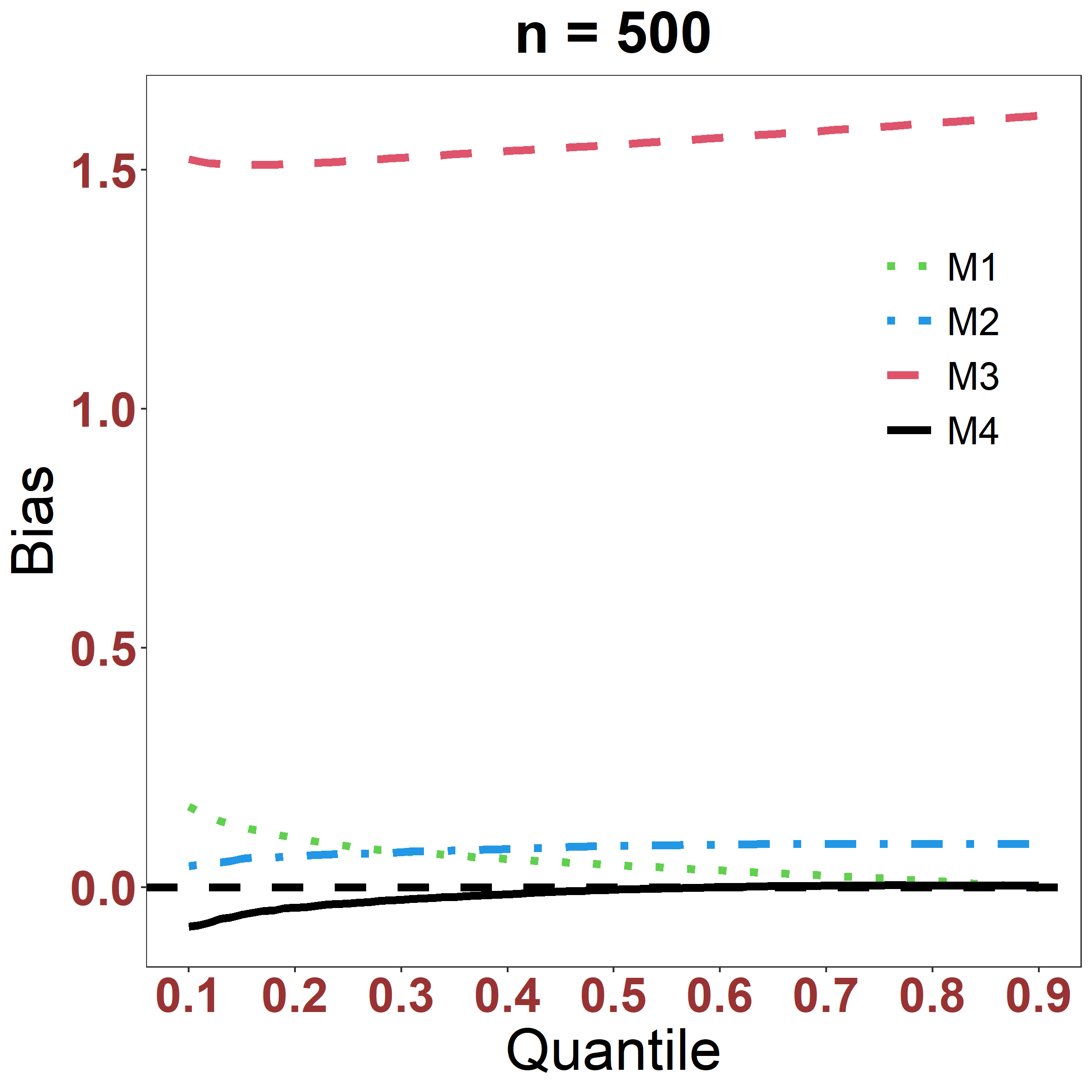}
		\includegraphics[height=4.8cm,width=4.8cm]{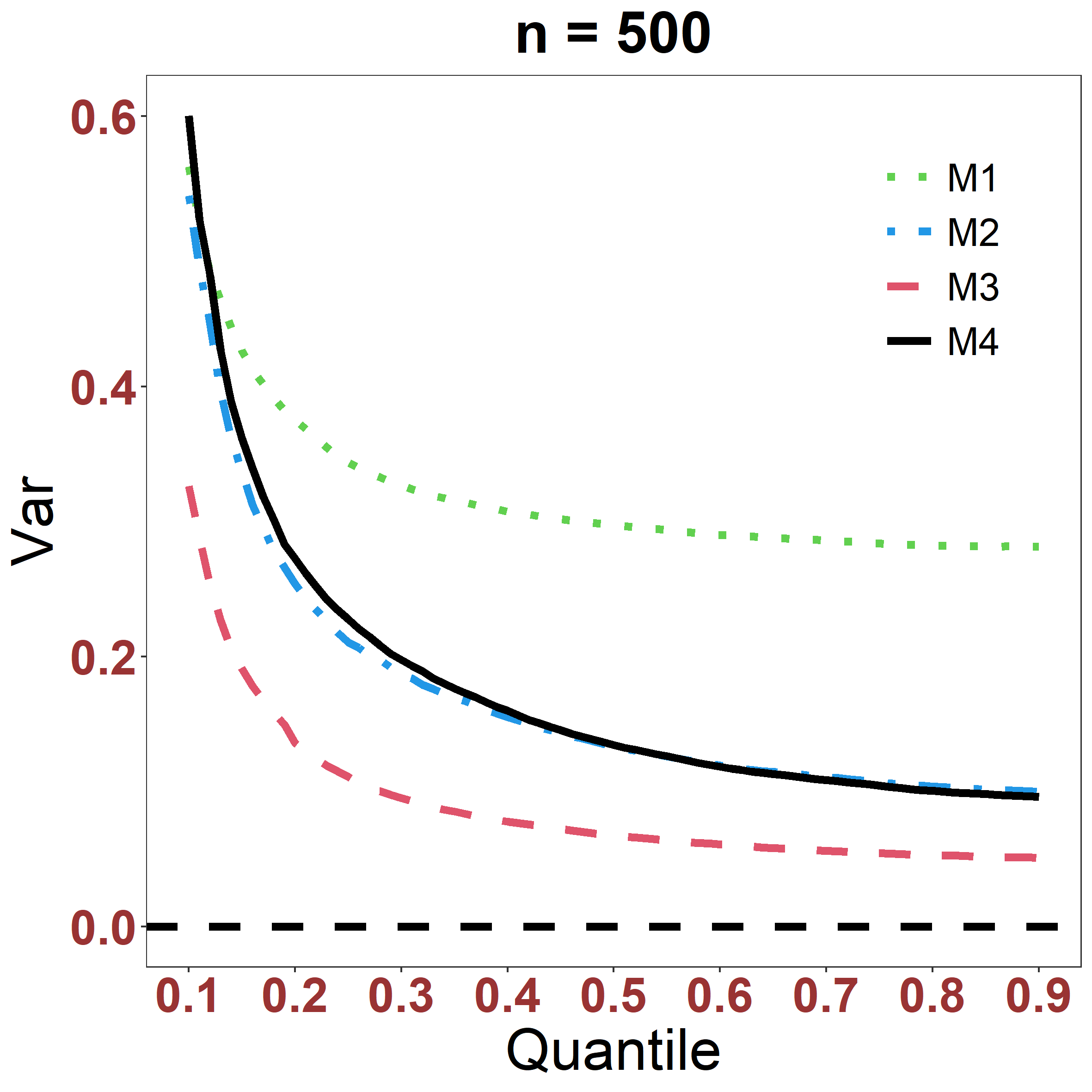}
		\includegraphics[height=4.8cm,width=4.8cm]{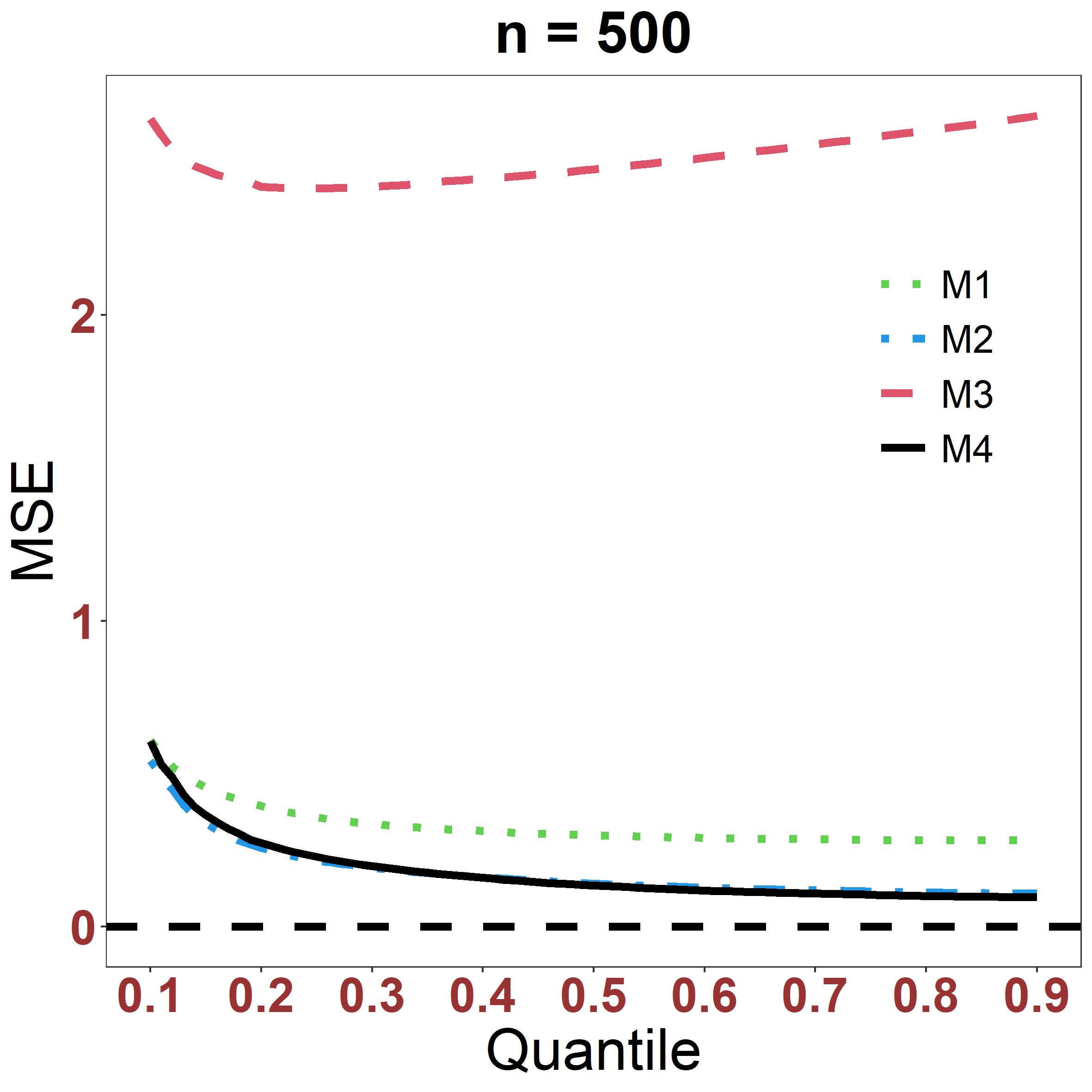}
	}
	\mbox{
		\includegraphics[height=4.8cm,width=4.8cm]{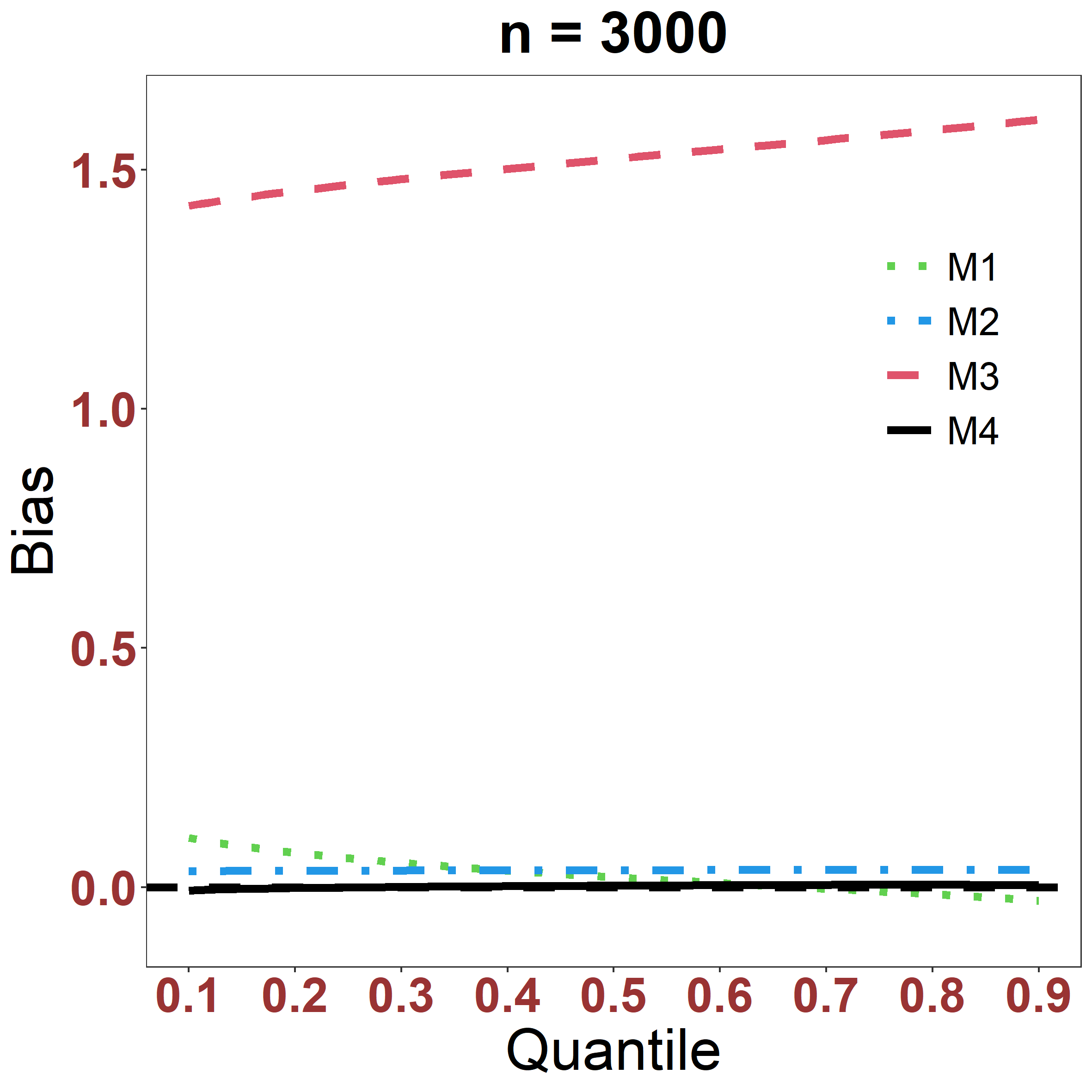}
		\includegraphics[height=4.8cm,width=4.8cm]{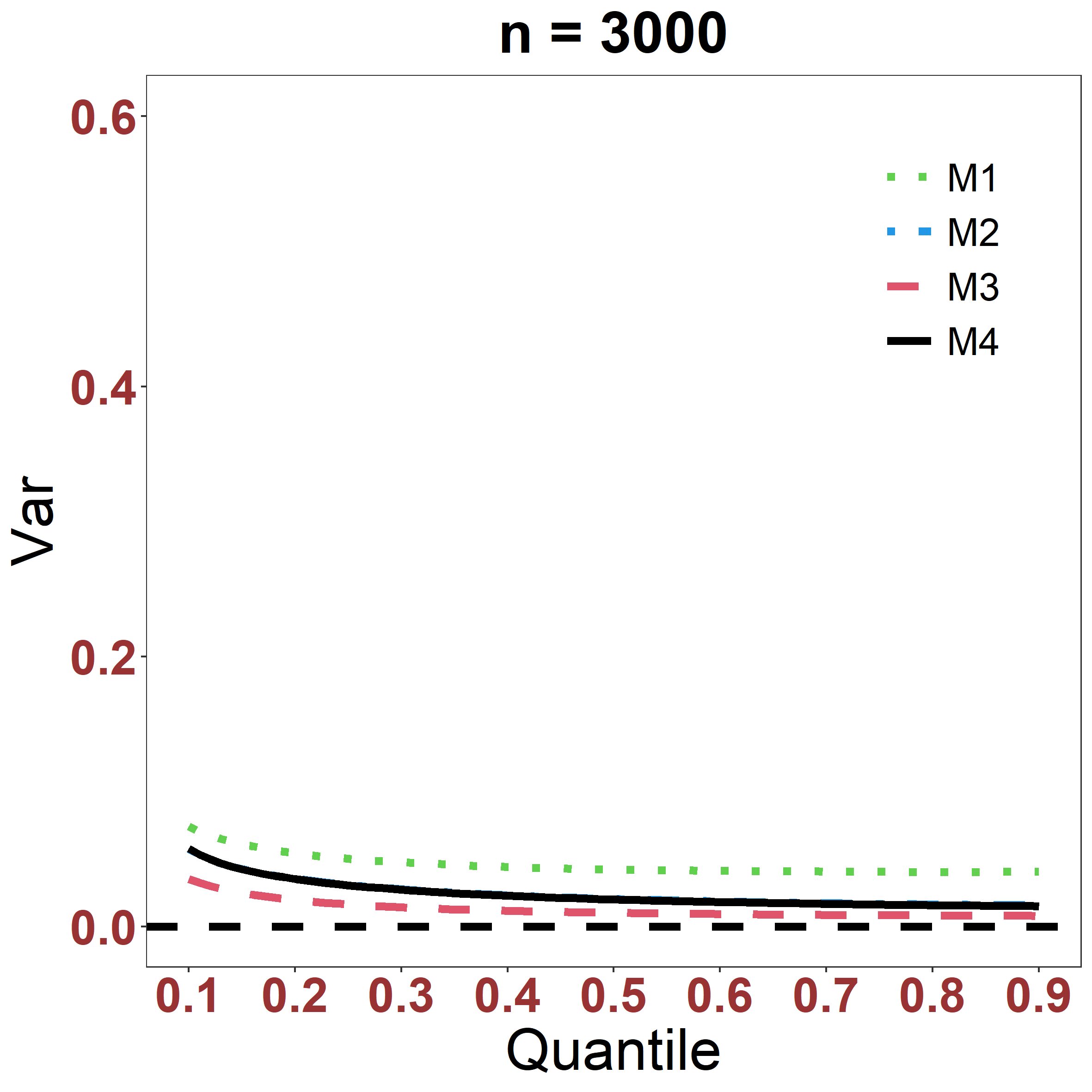}
		\includegraphics[height=4.8cm,width=4.8cm]{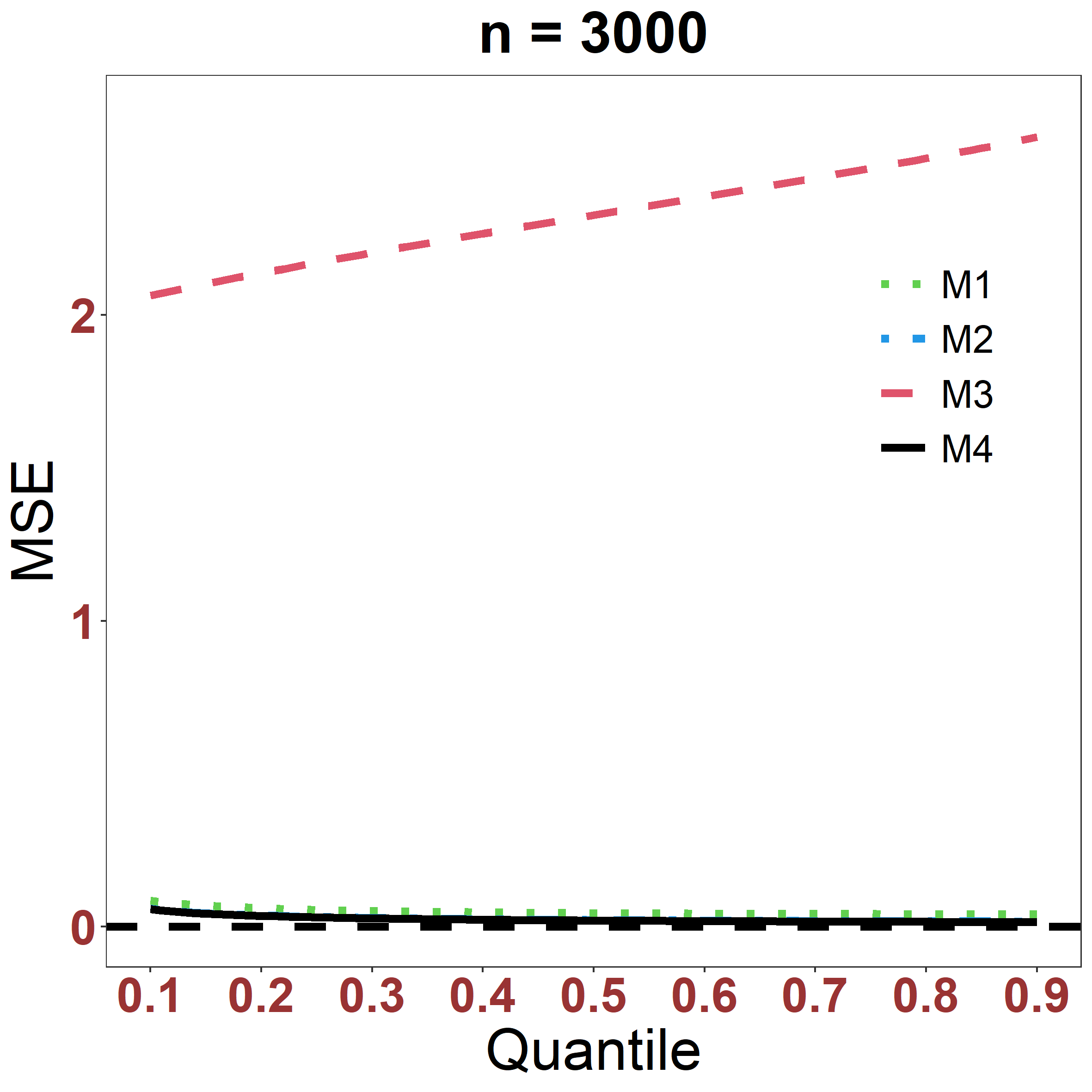}
	}	
	\caption{Bias, variance and MSE of estimator for estimating the CTATE for compliers when $\rho=0.5$ and $n=500$.}
	\label{figure6}
\end{figure}
\begin{figure}[!htb]
	\centering
	\mbox{
		\includegraphics[height=4.8cm,width=4.8cm]{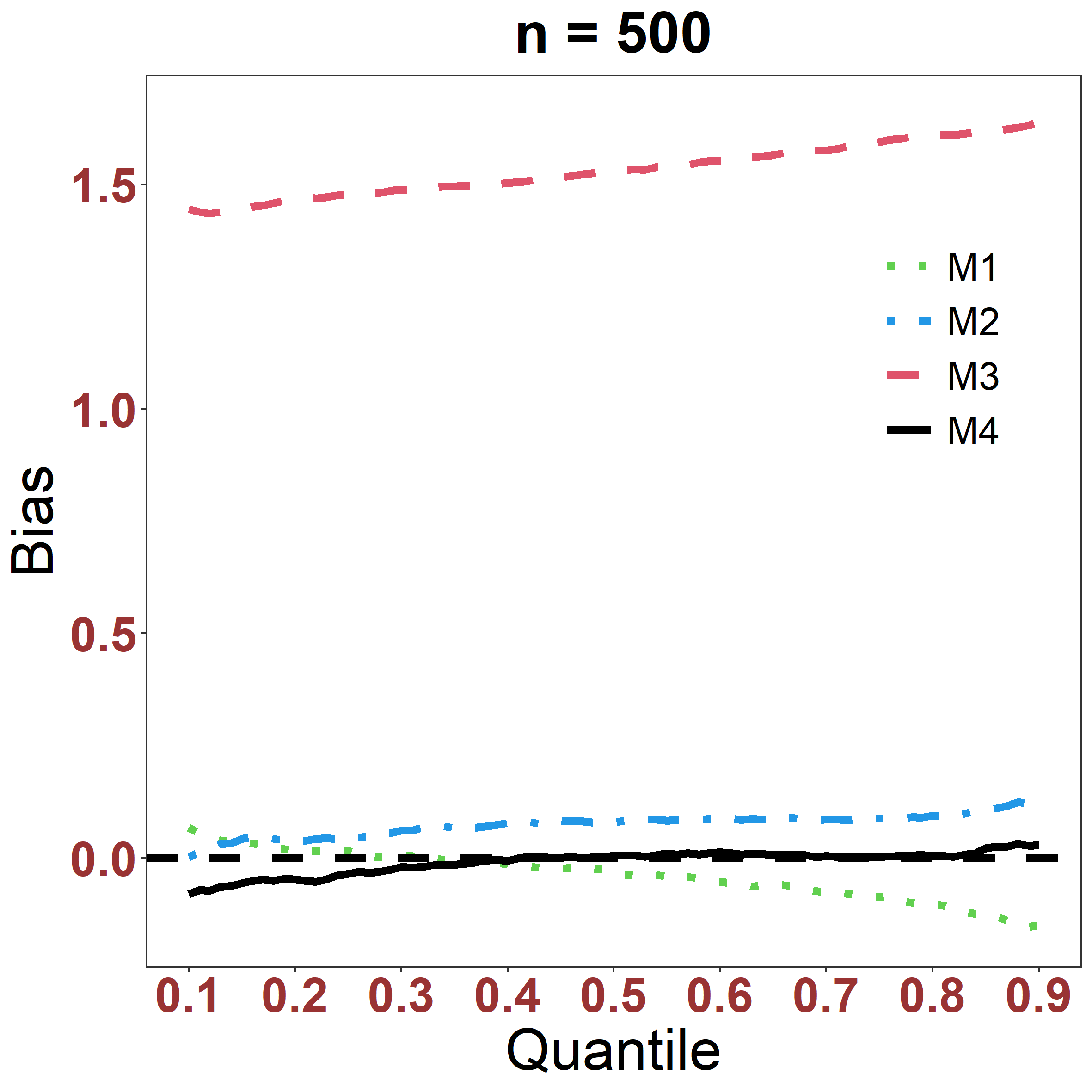}
		\includegraphics[height=4.8cm,width=4.8cm]{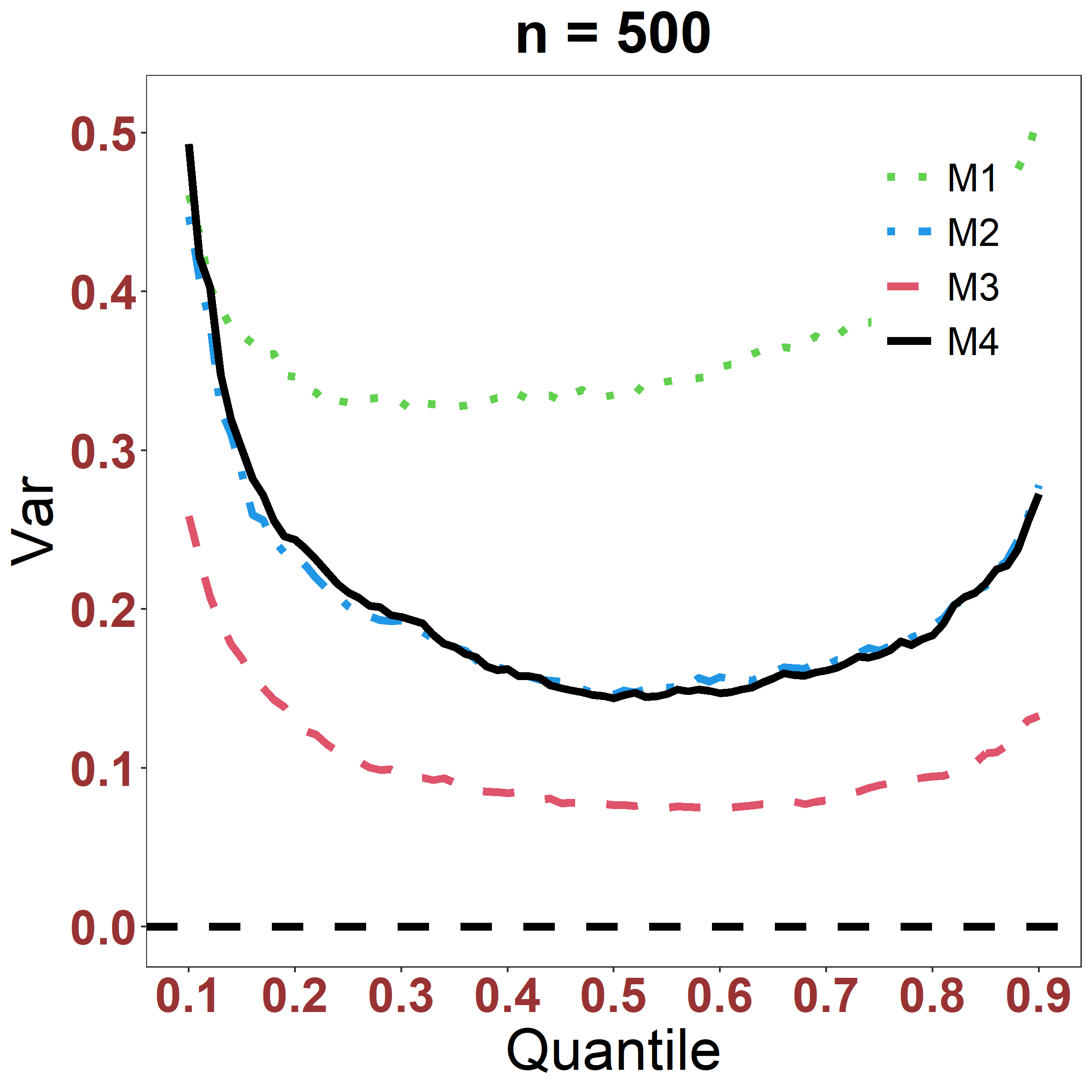}
	\includegraphics[height=4.8cm,width=4.8cm]{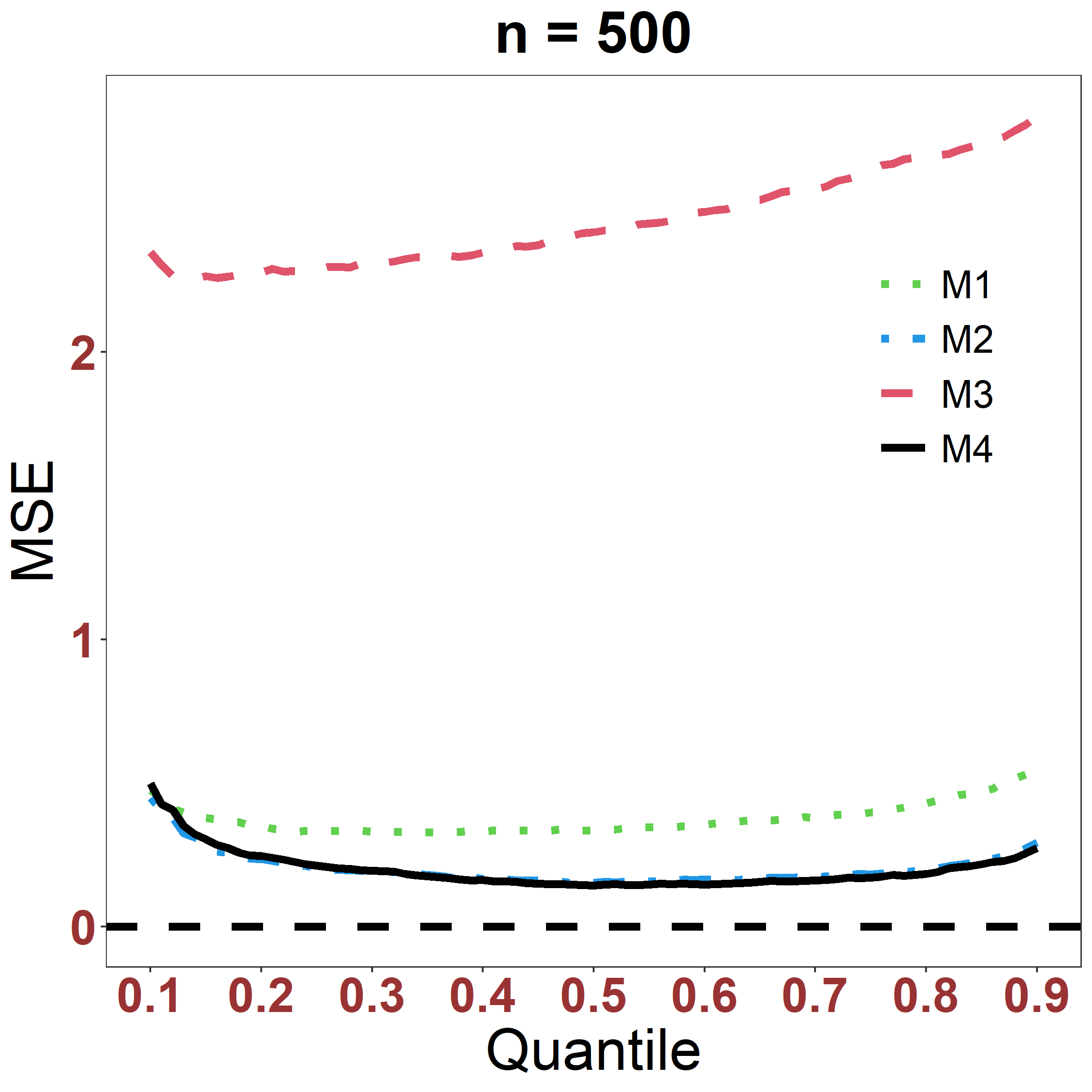}
	}
	\mbox{
		\includegraphics[height=4.8cm,width=4.8cm]{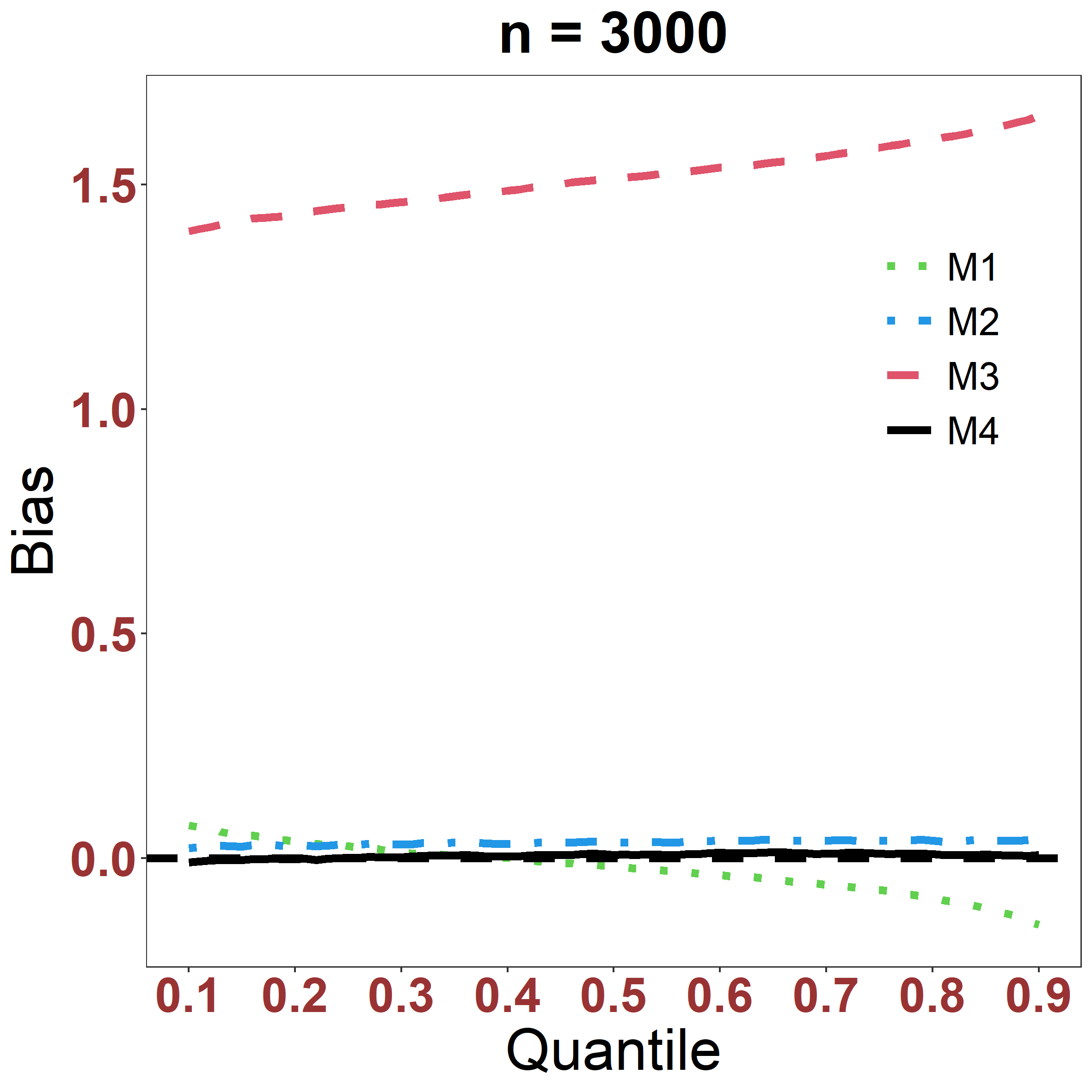}
		\includegraphics[height=4.8cm,width=4.8cm]{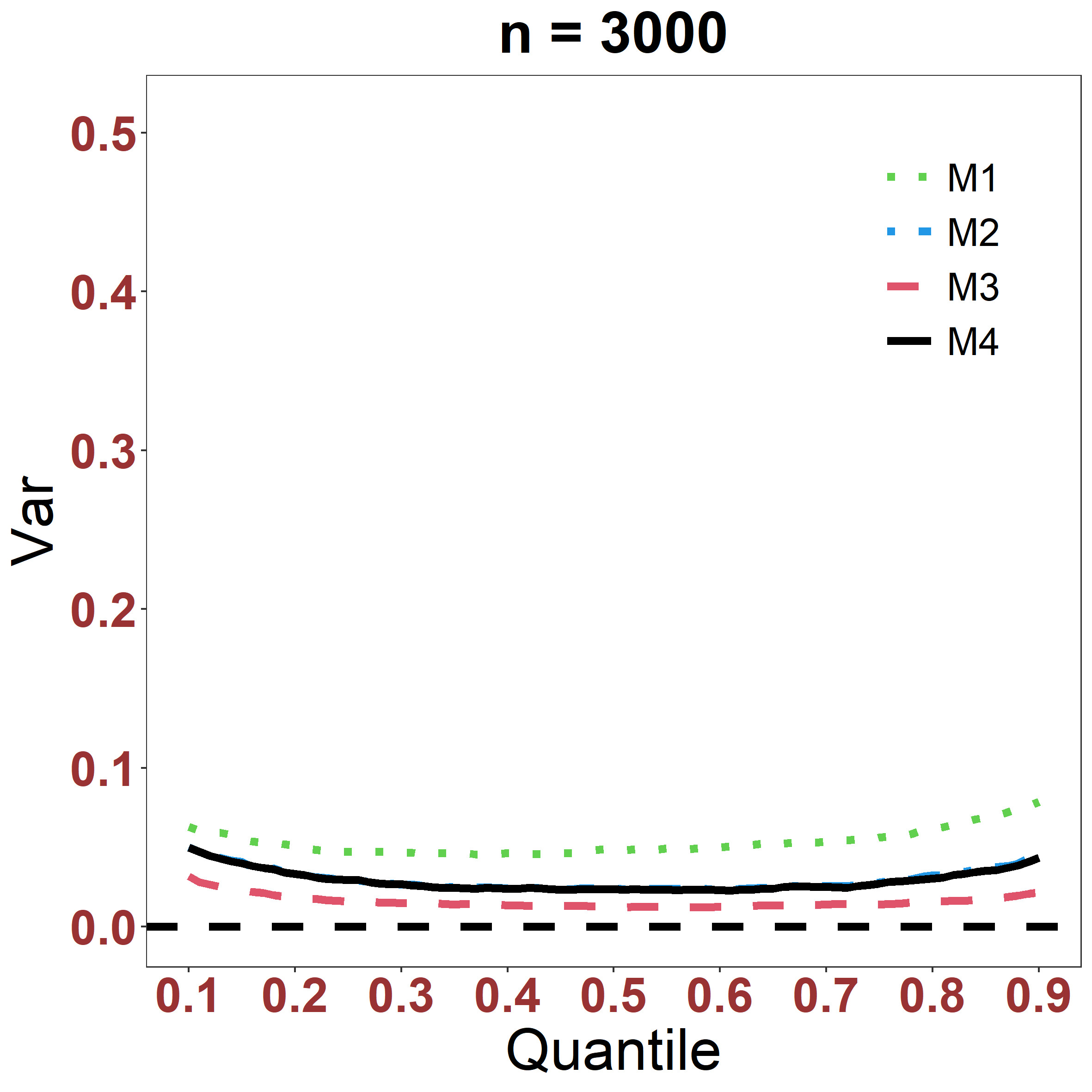}
		\includegraphics[height=4.8cm,width=4.8cm]{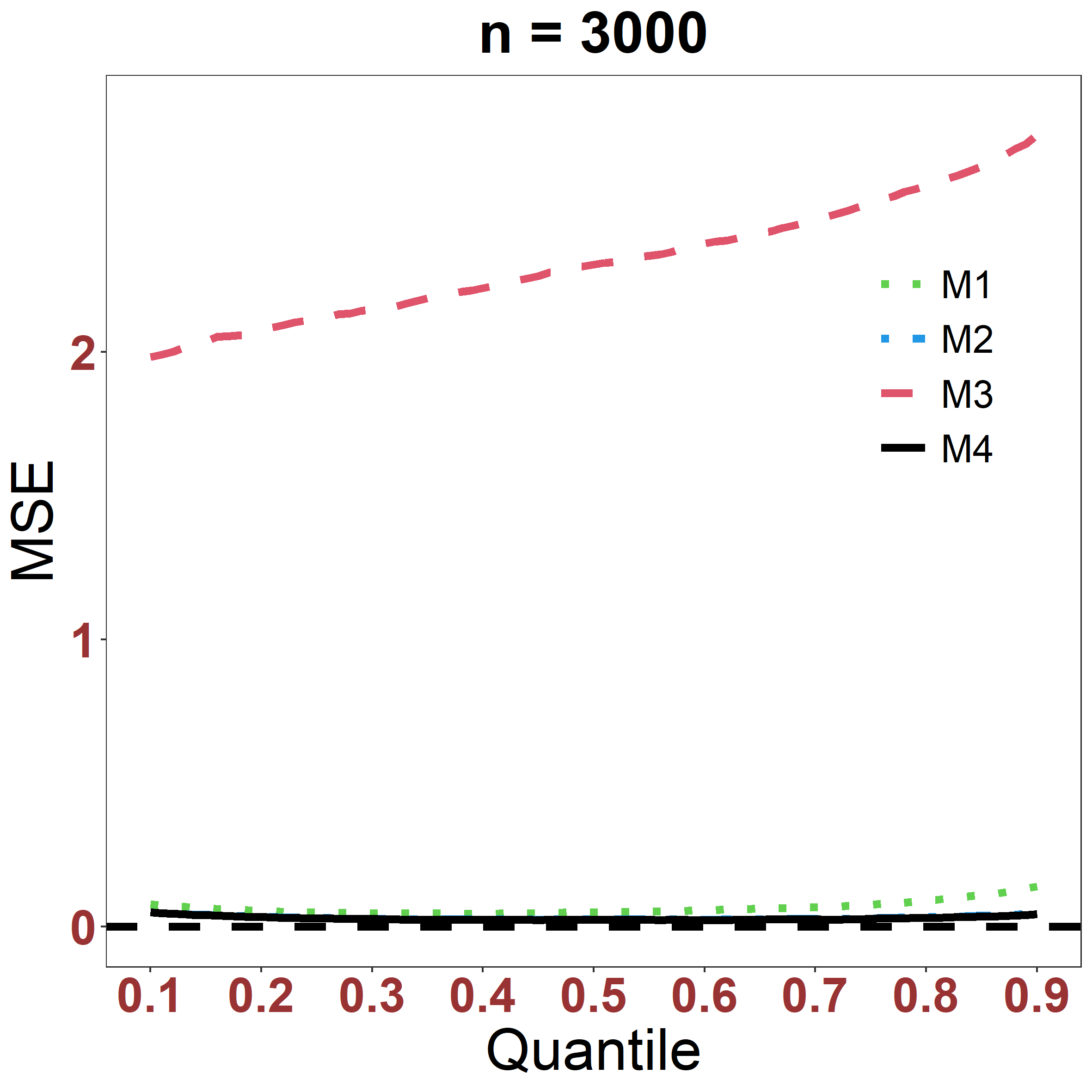}
	}	
	\caption{Bias, variance and MSE of estimator for estimating the QTE for compliers when $\rho=0.5$ and $n=3,000$.}
	\label{figure7}
\end{figure}

\begin{figure}[ht]
	\captionsetup[subfigure]{justification=centering}
\begin{subfigure}[b]{0.44\textwidth}
			\includegraphics[width=\textwidth]{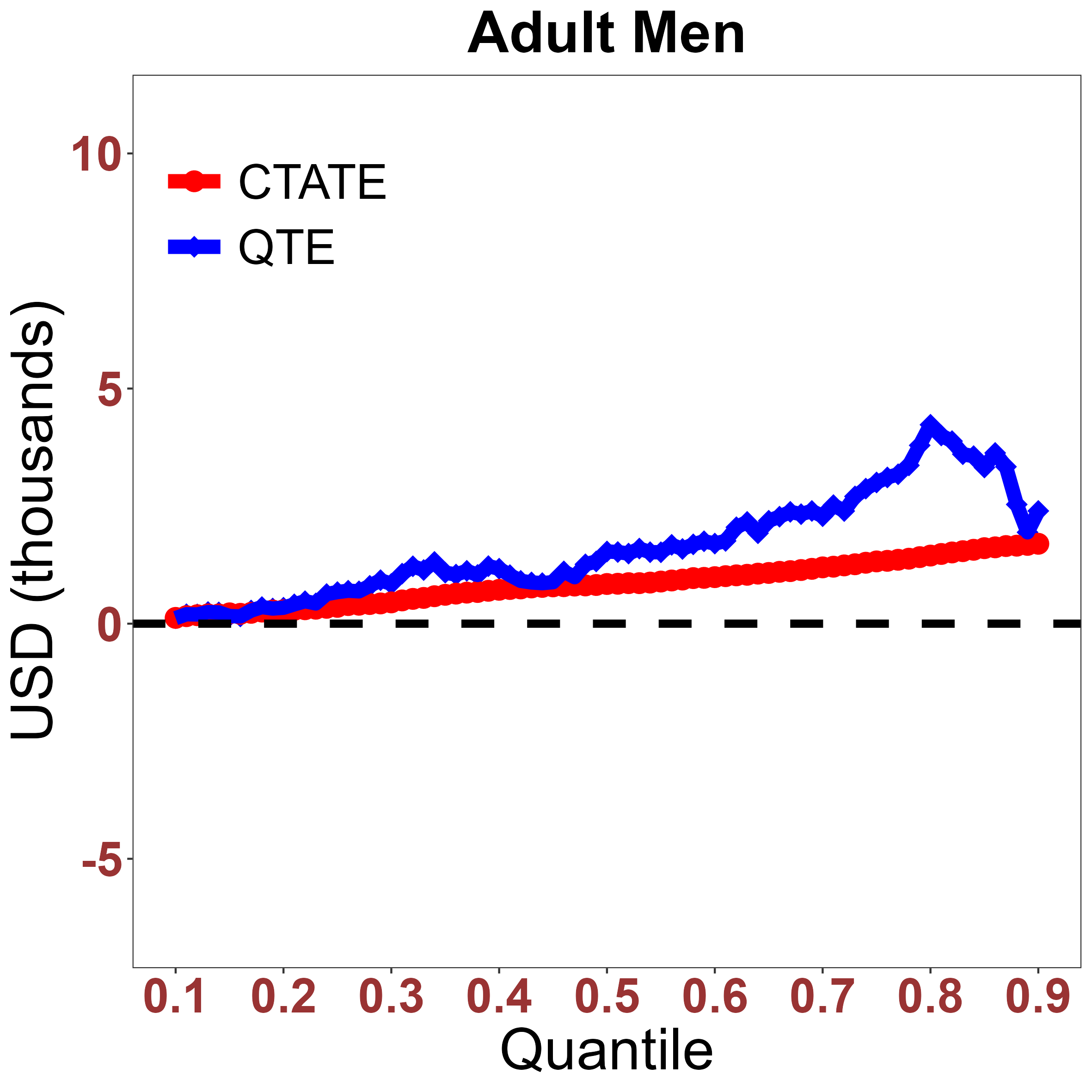}
\end{subfigure}			
 \hfill
 \begin{subfigure}[b]{0.44\textwidth}	
			\includegraphics[width=\textwidth]{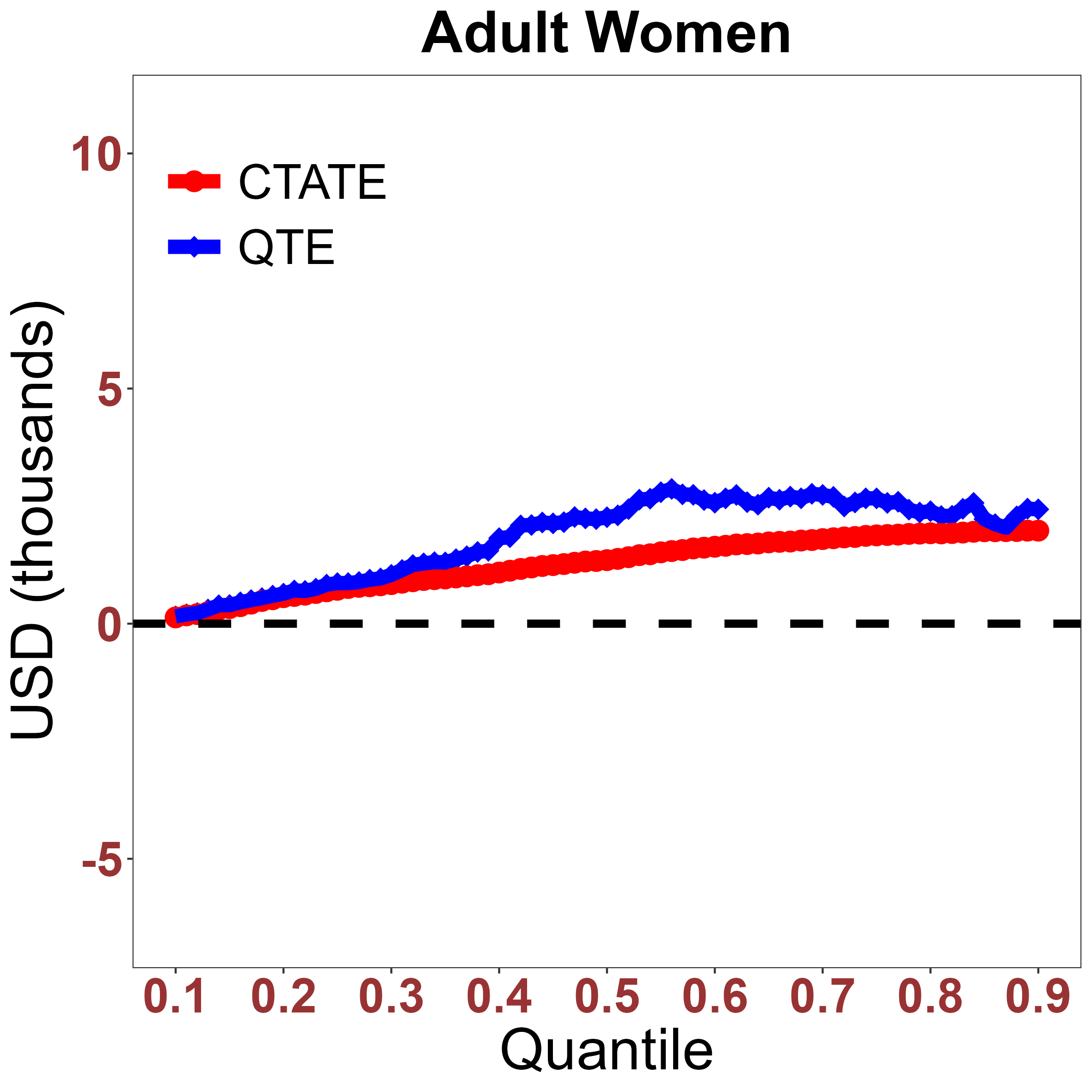}
\end{subfigure}\\			
				
\begin{subfigure}[b]{0.44\textwidth}
		\includegraphics[width=\linewidth]{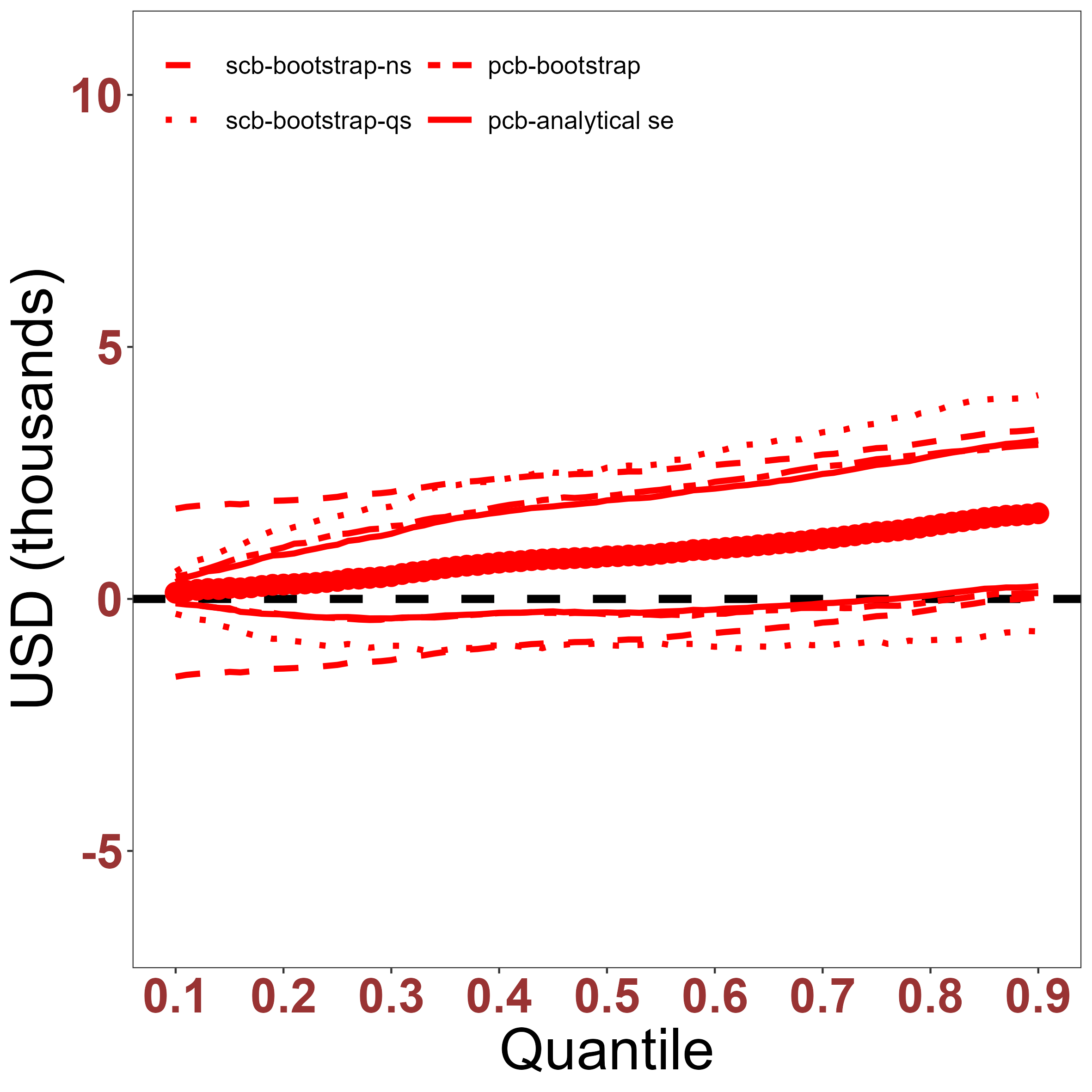}
	\end{subfigure}			
 \hfill
	\begin{subfigure}[b]{0.44\textwidth}	

		\includegraphics[width=\linewidth]{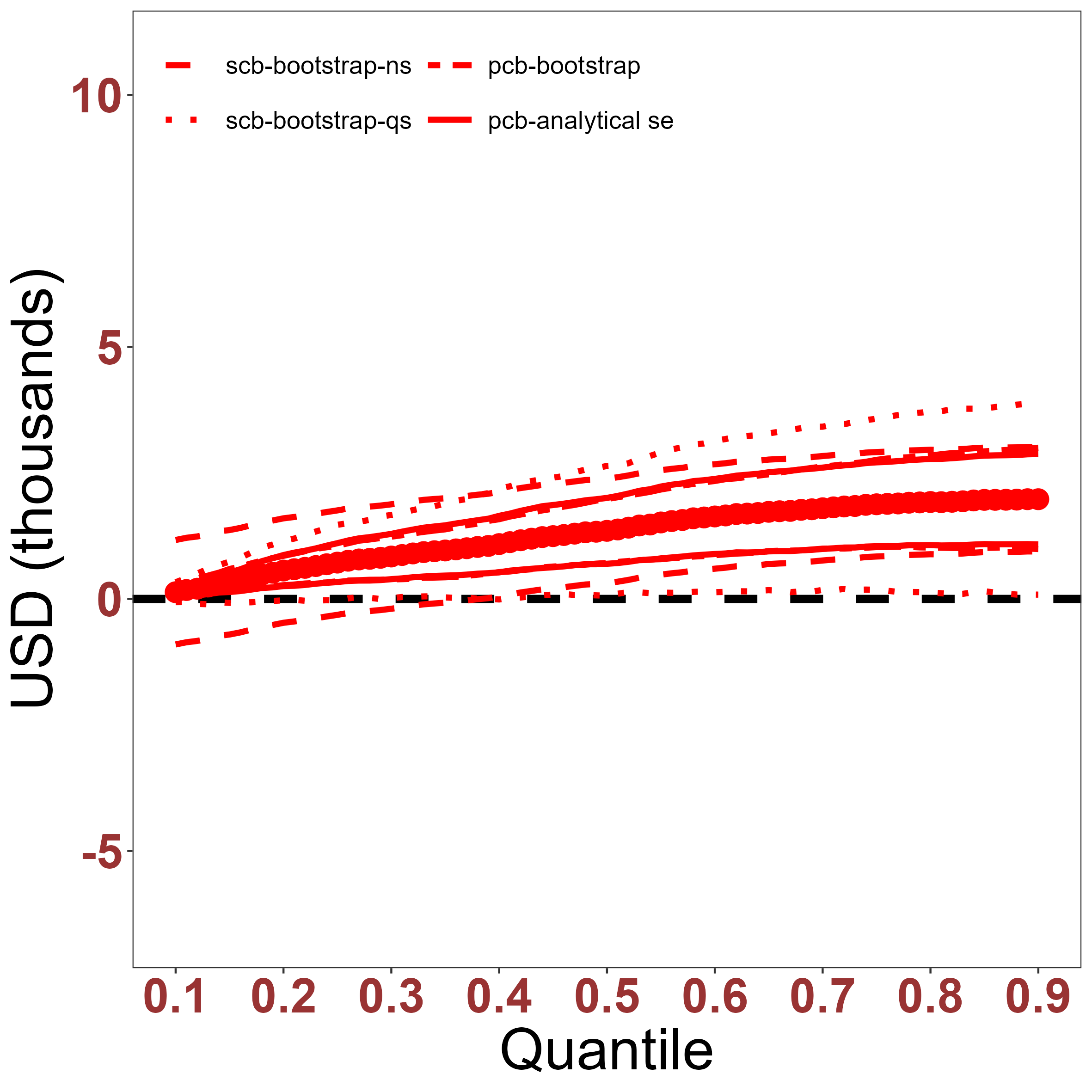}
	\end{subfigure}	\\	
	
	\begin{subfigure}[b]{0.44\textwidth}
		
		\includegraphics[width=\textwidth]{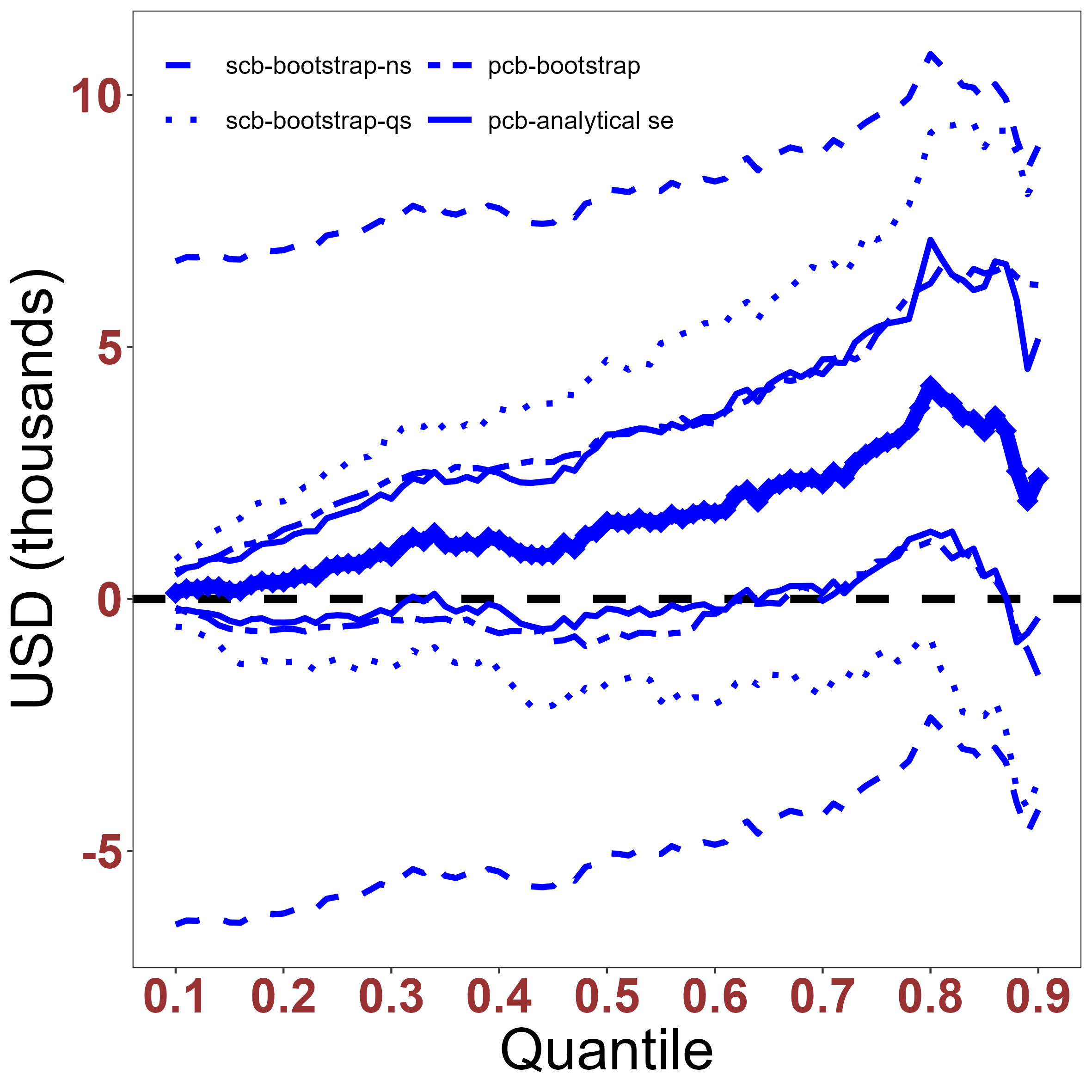}
	\end{subfigure}			
	\hfill
	\begin{subfigure}[b]{0.44\textwidth}	
		
		\includegraphics[width=\linewidth]{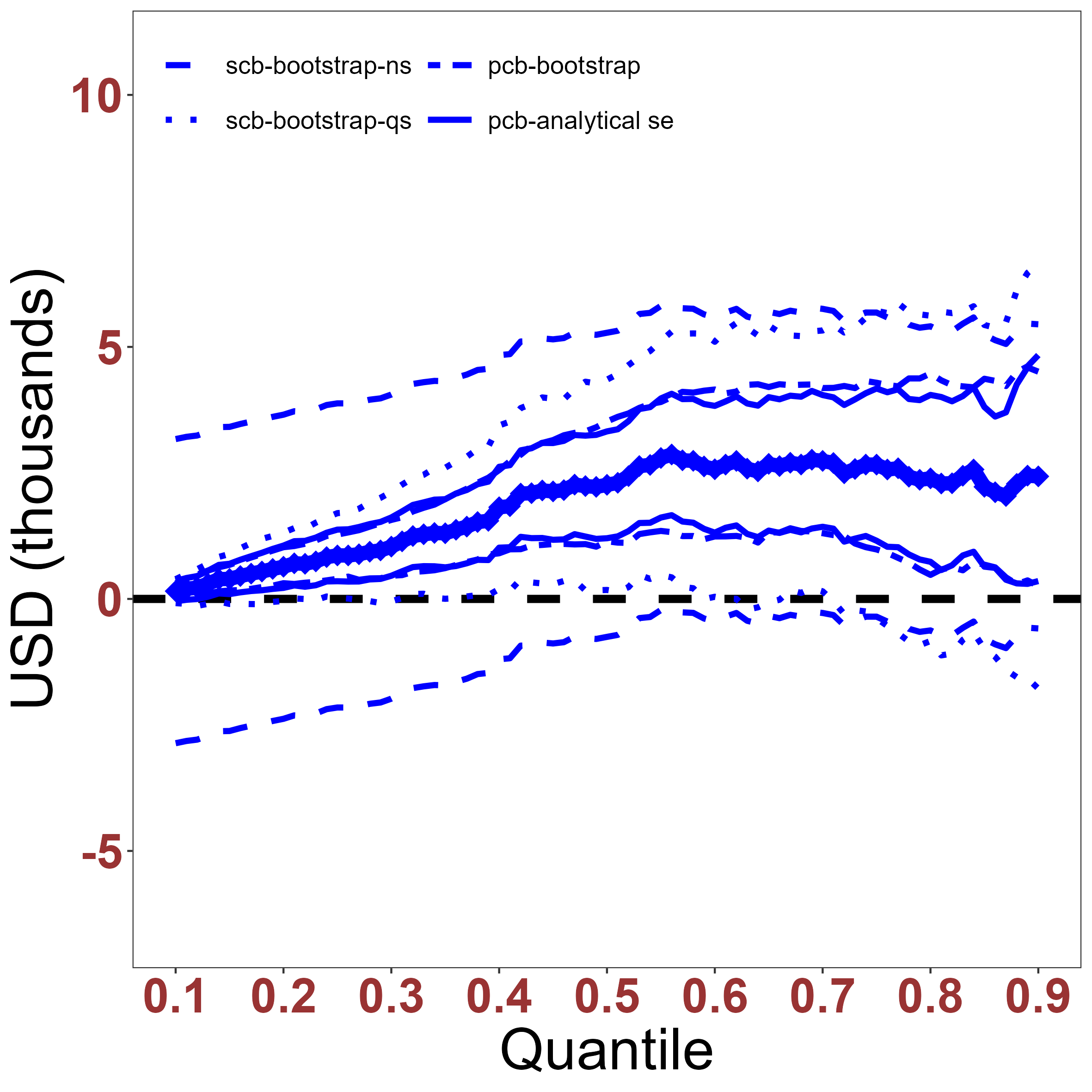}
	\end{subfigure}
		
	\caption{CTATE and QTE estimates of adult men's and women's earnings for compliers and the 95\% pointwise and simultaneous confidence bands when the treatment is endogenous. 
		}
	\label{figure1}
\end{figure}

\begin{figure}[ht]
	\captionsetup[subfigure]{justification=centering}
	\begin{subfigure}[b]{0.49\textwidth}
		\includegraphics[width=\textwidth]{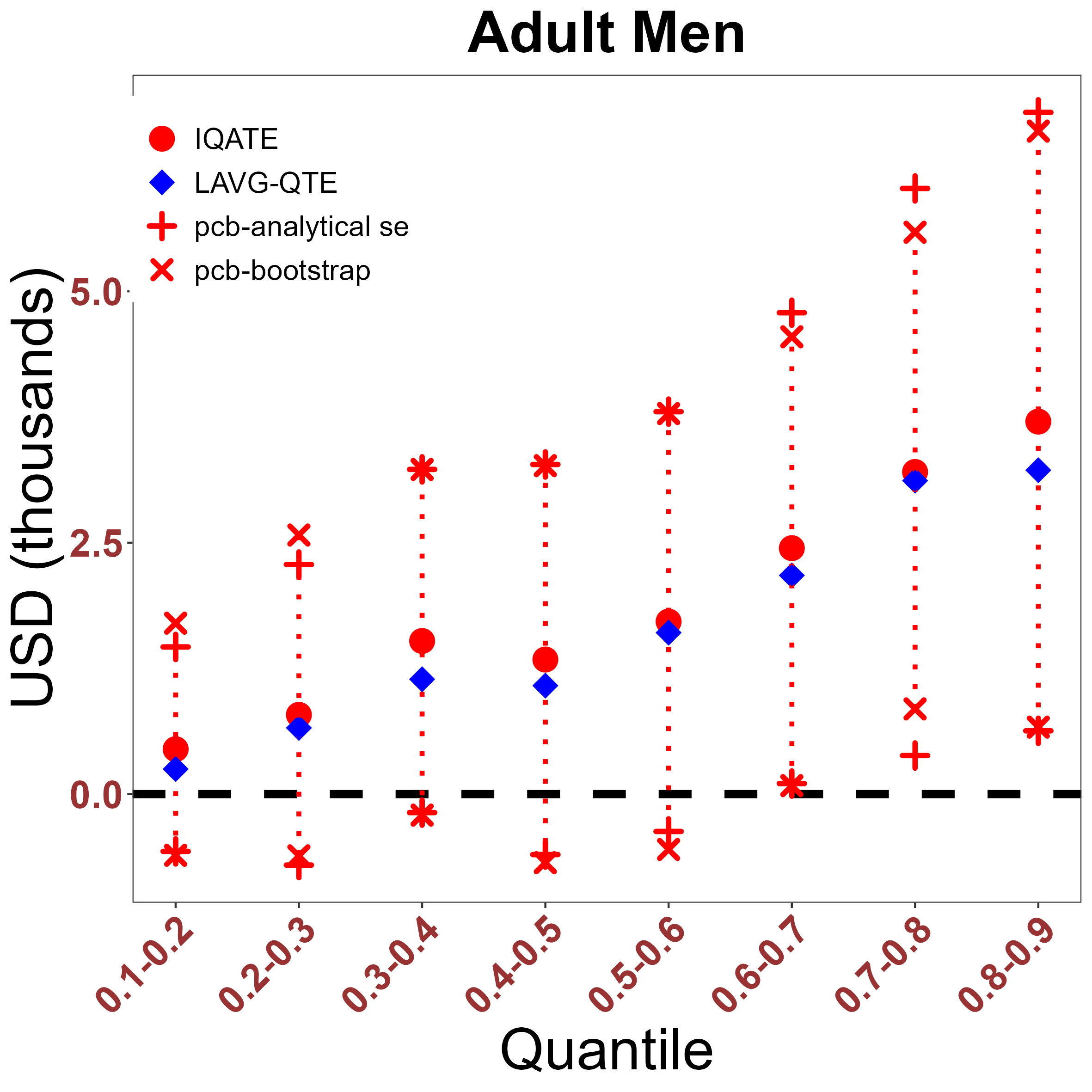}
	\end{subfigure}			
	\hfill
	\begin{subfigure}[b]{0.49\textwidth}	
		\includegraphics[width=\textwidth]{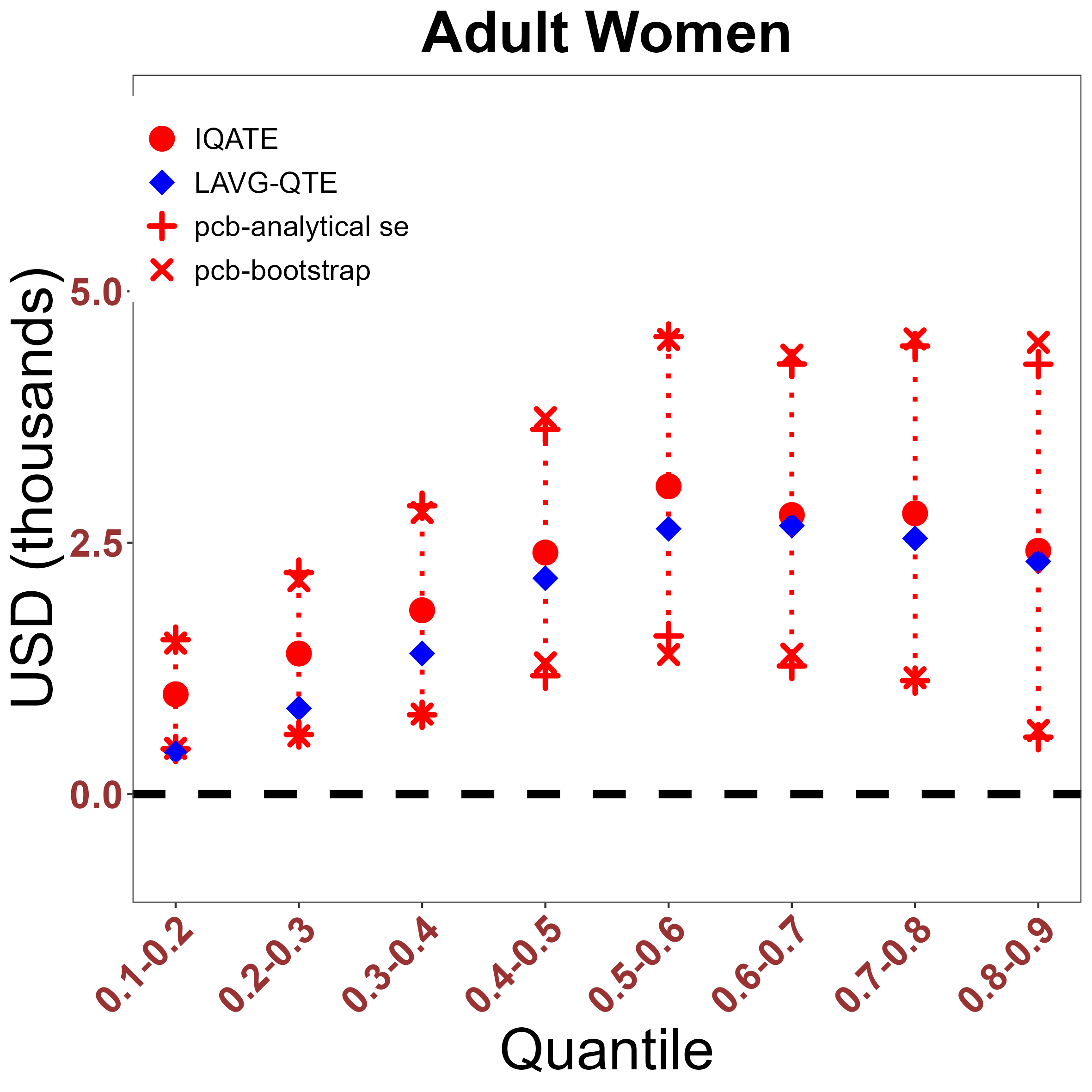}
	\end{subfigure}
	\caption{IQATE estimates of adult men's and women's earnings for compliers and the 95\% pointwise confidence bands when the treatment is endogenous.
		}
	\label{figure2}
\end{figure}
\clearpage

{\large \renewcommand{\theequation}{A-\arabic{equation}} %
\setcounter{equation}{0} \appendix
}
\appendix \numberwithin{equation}{section}
\counterwithin{theorem}{section}
\counterwithin{lemma}{section}
\counterwithin{assumption}{section}
\appendix
\appendix
\section{Appendix}
\small
To ease the notation, throughout the Appendix, we will omit the random variables in the arguments for the structural functions $q(W,\boldsymbol{\theta}_{1,\tau})$ and $e(W,\boldsymbol{\theta}_{2,\tau})$ as well as the weights $K\left(W,Z;\boldsymbol{\gamma}\right)$, $\bar{K}\left(V;v,\boldsymbol{\gamma}\right)$ and $\tilde{K}\left(V;v,\boldsymbol{\gamma}\right)$ and abbreviate them respectively by $q(\boldsymbol{\theta}_{1,\tau})$, $e(\boldsymbol{\theta}_{2,\tau})$ $K\left(\boldsymbol{\gamma}\right)$, $\bar{K}\left(v,\boldsymbol{\gamma}\right)$ and $\tilde{K}\left(v,\boldsymbol{\gamma}\right)$. We also let $q_{i}(\boldsymbol{\theta}_{1,\tau}):=q(W_{i},\boldsymbol{\theta}_{1,\tau})$, $e_{i}(\boldsymbol{\theta}_{2,\tau}):=e(W_{i},\boldsymbol{\theta}_{2,\tau})$, $K_{i}\left(\boldsymbol{\gamma}\right):=K\left(W_{i},Z_{i};\boldsymbol{\gamma}\right)$, $\bar{K}_{i}\left(v,\boldsymbol{\gamma}\right):=\bar{K}\left(V_{i};v,\boldsymbol{\gamma}\right)$ and $\tilde{K}_{i}\left(v,\boldsymbol{\gamma}\right):=\tilde{K}\left(V_{i};v,\boldsymbol{\gamma}\right)$.
\subsection{Estimating the Asymptotic Covariance Matrix of the Estimators}
For estimating the asymptotic covariance matrix of the estimators, we focus on the case where the function $FZ_{\tau}^{sp}\left(q,e,y\right)$ is used and the complier quantile and conditional tail expectation models are linear in parameters. In this setting, $\nabla_{\boldsymbol{\theta}_{1}}q\left(\boldsymbol{\theta}_{1,\tau}\right)=\nabla_{\boldsymbol{\theta}_{2}}e\left(\boldsymbol{\theta}_{2,\tau}\right)=W$, where $W=\left(D,X^\top\right)^\top$. Recall that 
\[
FZ_{\tau}^{sp}\left(q,e,y\right)=\frac{\exp\left(e\right)}{1+\exp\left(e\right)}\left[e+LQ_{\tau}\left(q,y\right)\right]+\ln\left(1+\exp\left(y\right)\right)-\ln\left(1+\exp\left(e\right)\right),
\]where $LQ_{\tau}\left(q,y\right):=\tau^{-1}\max\left(q-y,0\right)-q$.
Let $G_{sp}(t):=\ln(1+\exp(t))$ denote the softplus function. Then
\[
G_{sp}^{\prime}\left(e\right)=\frac{\exp\left(e\right)}{1+\exp\left(e\right)},G_{sp}^{\prime\prime}\left(e\right)=\frac{\exp\left(e\right)}{\left[1+\exp\left(e\right)\right]^{2}}.
\]
Notice that both $G_{sp}^{\prime}\left(e\right)$ and $G_{sp}^{\prime\prime}\left(e\right)$ are bounded and continuous functions of $e$. 

Let $\mathbf{J}_{\tau}^{sp}$ denote $\mathbf{J}_{\tau}$ when $FZ_{\tau}^{sp}\left(q,e,y\right)$ is used:
\[
\mathbf{J}_{\tau}^{sp}=K\left(\boldsymbol{\gamma}_{0}\right)\nabla_{\boldsymbol{\theta}}FZ_{\tau}^{sp}\left(q\left(\boldsymbol{\theta}_{1,\tau}\right),e\left(\boldsymbol{\theta}_{2,\tau}\right),Y\right)+\mathbf{M}_{\tau}\psi\left(X\right),
\]
where
\begin{eqnarray}
	\nabla_{\boldsymbol{\theta}}FZ_{\tau}^{sp}\left(q\left(\boldsymbol{\theta}_{1,\tau}\right),e\left(\boldsymbol{\theta}_{2,\tau}\right),Y\right) & = & \left(\begin{array}{c}
		\frac{1}{\tau}\left(1\left\{ Y\leq q\left(\boldsymbol{\theta}_{1,\tau}\right)\right\} -\tau\right)G_{sp}^{\prime}\left(e\left(\boldsymbol{\theta}_{2,\tau}\right)\right)W\\
		G_{sp}^{\prime\prime}\left(e\left(\boldsymbol{\theta}_{2,\tau}\right)\right)\left[e\left(\boldsymbol{\theta}_{2,\tau}\right)+LQ_{\tau}\left(q\left(\boldsymbol{\theta}_{1,\tau}\right),Y\right)\right]W
	\end{array}\right),\label{FZ_der}
\end{eqnarray}
and $\boldsymbol{\Omega}_{\tau}^{sp}=E\left[\mathbf{J}_{\tau}^{sp}\left(\mathbf{J}_{\tau}^{sp}\right)^\top\right]$. Note that the function (\ref{FZ_der}) coincides with the derivative of $FZ_{\tau}^{sp}$ with respect to $\boldsymbol{\theta}=(\boldsymbol{\theta}_{1},\boldsymbol{\theta}_{2})$ whenever the latter exists. 
Let $\mathbf{C}_{\tau,11}^{sp}$, $\mathbf{C}_{\tau,22}^{sp}$ and
$\mathbf{H}_{\tau}^{sp}$ denote $\mathbf{C}_{\tau,11}$, $\mathbf{C}_{\tau,22}$
and $\mathbf{H}_{\tau}$ in the proof of Theorem 2 when $FZ_{\tau}^{sp}\left(q,e,y\right)$
is used:
\begin{eqnarray*}
	\mathbf{H}_{\tau}^{sp} & = & \left(\begin{array}{cc}
		\mathbf{C}_{\tau,11}^{sp*} & \mathbf{0}\\
		\mathbf{0} & \mathbf{C}_{\tau,22}^{sp*}
	\end{array}\right),
\end{eqnarray*}
where
\begin{align*}
\mathbf{C}_{\tau,11}^{sp*} & =\mathbf{C}_{\tau,11}^{sp}\times P\left(T=c\right)\\
& =E\left[\bar{K}\left(v_{0},\gamma_{0}\right)f_{Y|W,T=c}\left(q\left(\boldsymbol{\theta}_{1,\tau}\right)\right)\frac{1}{\tau}G_{sp}^{\prime}\left(e\left(\boldsymbol{\theta}_{2,\tau}\right)\right)WW^\top\right],\\
\mathbf{C}_{\tau,22}^{sp*} & =\mathbf{C}_{\tau,22}^{sp}\times P\left(T=c\right)\\
& =E\left[\bar{K}\left(v_{0},\gamma_{0}\right)G_{sp}^{\prime\prime}\left(e\left(\boldsymbol{\theta}_{2,\tau}\right)\right)WW^\top\right].
\end{align*}
The asymptotic covariance matrix of the proposed estimators is $\left(\mathbf{H}_{\tau}^{sp}\right)^{-1}\boldsymbol{\Omega}_{\tau}^{sp}\left(\mathbf{H}_{\tau}^{sp}\right)^{-1}.$ To estimate this asymptotic covariance matrix, we use the plug-in
estimator: replacing $\left(\boldsymbol{\theta}_{1,\tau},\boldsymbol{\theta}_{2,\tau},v_{0},\boldsymbol{\gamma}_{0}\right)$
in $\left(\mathbf{H}_{\tau}^{sp}\right)^{-1}\boldsymbol{\Omega}_{\tau}^{sp}\left(\mathbf{H}_{\tau}^{sp}\right)^{-1}$
with estimates $\left(\hat{\boldsymbol{\theta}}_{1,\tau},\hat{\boldsymbol{\theta}}_{2,\tau},\hat{v},\boldsymbol{\hat{\gamma}}\right)$  where $\left(\hat{\boldsymbol{\theta}}_{1,\tau},\hat{\boldsymbol{\theta}}_{2,\tau}\right)$ are obtained from the method introduced in Section 2 and $\left(\hat{v},\boldsymbol{\hat{\gamma}}\right)$ are constructed such that Assumption 4 holds. We first propose using
\[\hat{\boldsymbol{\Omega}}_{\tau}^{sp} = \frac{1}{n}\sum_{i=1}^{n}\hat{\mathbf{J}}_{i,\tau}^{sp}\hat{\mathbf{J}}_{i,\tau}^{sp\top}\]
to estimate $\boldsymbol{\Omega}_{\tau}^{sp}$, where 
\[
\hat{\mathbf{J}}_{i,\tau}^{sp}=K_{i}\left(\hat{\boldsymbol{\gamma}}\right)\nabla_{\boldsymbol{\theta}}FZ_{\tau}^{sp}\left(q_{i}\left(\hat{\boldsymbol{\theta}}_{1,\tau}\right),e_{i}\left(\hat{\boldsymbol{\theta}}_{2,\tau}\right),Y_{i}\right)+\hat{\mathbf{M}}_{\tau}\hat{\psi}\left(X_{i}\right),
\]
and 
\[
\hat{\mathbf{M}}_{\tau}=\frac{1}{n}\sum_{i=1}^{n}\nabla_{\boldsymbol{\theta}}FZ_{\tau}\left(q_{i}\left(\hat{\boldsymbol{\theta}}_{1,\tau}\right),e_{i}\left(\hat{\boldsymbol{\theta}}_{2,\tau}\right),Y_{i}\right)\nabla_{\gamma}K_{i}\left(\hat{\boldsymbol{\gamma}}\right).
\]
Here $\hat{\psi}\left(X_{i}\right)$ is an estimate of $\psi\left(X_{i}\right)$. For example,
if a probit model with a link function $\Phi\left(X_{i}^\top\boldsymbol{\gamma}\right)$
is used to estimate $\boldsymbol{\gamma}$,
\begin{equation}
\psi\left(X_{i}\right)=-\left(\frac{1}{n}\sum_{i=1}^{n}\frac{\partial s\left(\boldsymbol{\gamma};X_{i}\right)}{\partial\boldsymbol{\gamma}}\right)^{-1}s\left(\boldsymbol{\gamma};X_{i}\right),\label{influence_probit}
\end{equation}
where 
\[
s\left(\boldsymbol{\gamma};X_{i}\right)=\left(\frac{Y_{i}\phi\left(X_{i}^\top\boldsymbol{\gamma}\right)}{\Phi\left(X_{i}^\top\boldsymbol{\gamma}\right)}-\frac{\left(1-Y_{i}\right)\phi\left(X_{i}^\top\boldsymbol{\gamma}\right)}{1-\Phi\left(X_{i}^\top\boldsymbol{\gamma}\right)}\right)X_{i}
\]
and $-\left(\frac{1}{n}\sum_{i=1}^{n}\partial s\left(\boldsymbol{\gamma};X_{i}\right)/\partial\boldsymbol{\gamma}\right)^{-1}$
is the inverse of the information matrix. $\hat{\psi}\left(X_{i}\right)$
can be constructed by plugging $\hat{\boldsymbol{\gamma}}$ into $s\left(\boldsymbol{\gamma};X_{i}\right)$
and replacing $-\left(\frac{1}{n}\sum_{i=1}^{n}\partial s\left(\boldsymbol{\gamma};X_{i}\right)/\partial\boldsymbol{\gamma}\right)^{-1}$
in (\ref{influence_probit}) with the estimated covariance matrix
of $\hat{\boldsymbol{\gamma}}$ multiplied by the sample size. Let
\[
\varsigma_{\lambda_{n},i}\left(\hat{\boldsymbol{\theta}}_{1,\tau}\right)=\frac{1\left\{ \left|Y_{i}-q_{i}\left(\hat{\boldsymbol{\theta}}_{1,\tau}\right)\right|\leq\lambda_{n}\right\} }{2\lambda_{n}}
\]
where $\lambda_{n}$ is a deterministic function of $n$ and satisfies certain conditions. We then propose using 
\[
\hat{\mathbf{H}}_{\tau}^{sp}=\left(\begin{array}{cc}
\hat{\mathbf{C}}_{\tau,11}^{sp*} & \mathbf{0}\\
\mathbf{0} & \hat{\mathbf{C}}_{\tau,22}^{sp*}
\end{array}\right)
\]
to estimate $\mathbf{H}_{\tau}^{sp}$, where
\begin{eqnarray*}
	\mathbf{\hat{C}}_{\tau,11}^{sp*} & = & \frac{1}{n}\sum_{i=1}^{n}\left\{ \tilde{K}_{i}\left(\hat{v},\hat{\gamma}\right)\varsigma_{\lambda,i}\left(\hat{\boldsymbol{\theta}}_{1,\tau}\right)\frac{1}{\tau}G_{sp}^{\prime}\left(e_{i}\left(\hat{\boldsymbol{\theta}}_{2,\tau}\right)\right)W_{i}W_{i}^\top\right\}, \\
	\mathbf{\hat{C}}_{\tau,22}^{sp*} & = & \frac{1}{n}\sum_{i=1}^{n}\tilde{K}_{i}\left(\hat{v},\hat{\gamma}\right)G^{\prime\prime}\left(e_{i}\left(\hat{\boldsymbol{\theta}}_{2,\tau}\right)\right)W_{i}W_{i}^\top.
\end{eqnarray*}
The estimator $\varsigma_{\lambda_{n},i}\left(\hat{\boldsymbol{\theta}}_{1,\tau}\right)$ is the Powell sandwich estimator \citep{Powell_1984} and is used to approximate the conditional density $f_{Y|W,T=c}(q(\boldsymbol{\theta}_{1,\tau}))$. It was also used in \citet{Powell_1984}, \citet{EM_2004}, \citet{ACF_2006} and \citet{PFC_2019}. The bandwidth function $\lambda_{n}$ can be set equal to $O(n^{-1/3})$ \citep{PFC_2019} or chosen with some specific method (see discussions in \citet{Koenker_2005}). 

\begin{theorem}
    If Assumptions 1, 2 and 4 hold, $E\left[\left\Vert WW^{T}\right\Vert |T=c\right]<\infty$, and $\lambda_{n}=o\left(1\right)$
	and $\lambda_{n}^{-1}=o\left(n^{1/2}\right)$, 
	then
	\[
	\left(\hat{\mathbf{H}}_{\tau}^{sp}\right)^{-1}\hat{\boldsymbol{\Omega}}_{\tau}^{sp}\left(\hat{\mathbf{H}}_{\tau}^{sp}\right)^{-1}\overset{p.}{\rightarrow}\left(\mathbf{H}_{\tau}^{sp}\right)^{-1}\boldsymbol{\Omega}_{\tau}^{sp}\left(\mathbf{H}_{\tau}^{sp}\right)^{-1}.
	\]
\end{theorem}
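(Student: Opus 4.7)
My plan is to establish the conclusion via the continuous mapping theorem, by separately showing
\[
\hat{\boldsymbol{\Omega}}_{\tau}^{sp}\stackrel{p}{\rightarrow}\boldsymbol{\Omega}_{\tau}^{sp} \quad\text{and}\quad \hat{\mathbf{H}}_{\tau}^{sp}\stackrel{p}{\rightarrow}\mathbf{H}_{\tau}^{sp},
\]
and then using invertibility of $\mathbf{H}_{\tau}^{sp}$ (Assumption 4.3) to obtain convergence of the sandwich form. Both the $\hat{\boldsymbol{\Omega}}_{\tau}^{sp}$ and $\hat{\mathbf{H}}_{\tau}^{sp}$ pieces will be handled by combining a uniform law of large numbers over a shrinking neighborhood of $(\boldsymbol{\theta}_{1,\tau},\boldsymbol{\theta}_{2,\tau},v_{0},\boldsymbol{\gamma}_{0})$ with the consistency results from Theorem 1 and Assumption 2.8.

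For $\hat{\boldsymbol{\Omega}}_{\tau}^{sp}=n^{-1}\sum_{i}\hat{\mathbf{J}}_{i,\tau}^{sp}\hat{\mathbf{J}}_{i,\tau}^{sp\top}$, I would first show $\hat{\mathbf{M}}_{\tau}\stackrel{p}{\to}\mathbf{M}_{\tau}$ by applying a ULLN to $\nabla_{\boldsymbol{\theta}}FZ_{\tau}^{sp}(q(\boldsymbol{\theta}_{1}),e(\boldsymbol{\theta}_{2}),Y)\nabla_{\boldsymbol{\gamma}}K(W,Z;\boldsymbol{\gamma})^\top$ over a compact neighborhood of the true parameters: the boundedness and continuity of $G_{sp}^{\prime}$, $G_{sp}^{\prime\prime}$, together with $E\|WW^\top\|<\infty$ and Assumption 3.3, provide the dominance required for the Glivenko--Cantelli type argument. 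The indicator term $1\{Y\le q(\boldsymbol{\theta}_{1})\}$ lies inside a VC class, so the empirical process is well behaved despite the discontinuity. The validity of $\hat{\psi}(X_i)$ is assumed via Assumption 4.1 and the standard properties of maximum likelihood score functions. Then, writing $\hat{\mathbf{J}}_{i,\tau}^{sp}=K_{i}(\hat{\boldsymbol{\gamma}})\nabla_{\boldsymbol{\theta}}FZ_{\tau}^{sp}(q_{i}(\hat{\boldsymbol{\theta}}_{1,\tau}),e_{i}(\hat{\boldsymbol{\theta}}_{2,\tau}),Y_{i})+\hat{\mathbf{M}}_{\tau}\hat{\psi}(X_{i})$, a further ULLN application to the outer product, combined with the consistency of all plug-in quantities, yields $\hat{\boldsymbol{\Omega}}_{\tau}^{sp}\stackrel{p}{\to}\boldsymbol{\Omega}_{\tau}^{sp}$.

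For $\hat{\mathbf{H}}_{\tau}^{sp}$, the block $\hat{\mathbf{C}}_{\tau,22}^{sp*}$ is straightforward since $G_{sp}^{\prime\prime}$ is smooth and bounded, and Assumption 2.8 together with the $[0,1]$-truncation in $\tilde{K}$ gives $\tilde{K}_{i}(\hat{v},\hat{\boldsymbol{\gamma}})\stackrel{p}{\to}\tilde{K}_{i}(v_{0},\boldsymbol{\gamma}_{0})$ uniformly; a routine ULLN then gives the convergence to $\mathbf{C}_{\tau,22}^{sp*}$. The main obstacle is $\hat{\mathbf{C}}_{\tau,11}^{sp*}$, which replaces the unobserved conditional density $f_{Y|W,T=c}(q(\boldsymbol{\theta}_{1,\tau}))$ by the Powell sandwich kernel $\varsigma_{\lambda_{n},i}(\hat{\boldsymbol{\theta}}_{1,\tau})=(2\lambda_{n})^{-1}1\{|Y_{i}-q_{i}(\hat{\boldsymbol{\theta}}_{1,\tau})|\le\lambda_{n}\}$. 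I would handle this by the classical bias--variance decomposition around the infeasible analogue $\varsigma_{\lambda_{n},i}(\boldsymbol{\theta}_{1,\tau})$: the bias $E[\varsigma_{\lambda_{n},i}(\boldsymbol{\theta}_{1,\tau})\mid W_{i}]-f_{Y|W,T=c}(q(\boldsymbol{\theta}_{1,\tau}))$ vanishes by $\lambda_{n}\to 0$ and continuity of the conditional density (guaranteed by Assumption 4.4 and the underlying smoothness used in Assumption 2.2), while the variance is $O((n\lambda_{n})^{-1})=o(1)$ under $\lambda_{n}^{-1}=o(n^{1/2})$. The effect of replacing $\boldsymbol{\theta}_{1,\tau}$ with $\hat{\boldsymbol{\theta}}_{1,\tau}$ is absorbed by noting that $\|\hat{\boldsymbol{\theta}}_{1,\tau}-\boldsymbol{\theta}_{1,\tau}\|=O_{p}(n^{-1/2})=o_{p}(\lambda_{n})$ under the bandwidth condition, so the two indicator sets have symmetric difference of negligible probability weight; here $E[\|WW^\top\||T=c]<\infty$ provides the needed dominance.

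Once both blocks converge, $\hat{\mathbf{H}}_{\tau}^{sp}\stackrel{p}{\to}\mathbf{H}_{\tau}^{sp}$, and since $\mathbf{H}_{\tau}^{sp}$ is nonsingular by Assumption 4.3, the continuous mapping theorem applied to matrix inversion and multiplication delivers $(\hat{\mathbf{H}}_{\tau}^{sp})^{-1}\hat{\boldsymbol{\Omega}}_{\tau}^{sp}(\hat{\mathbf{H}}_{\tau}^{sp})^{-1}\stackrel{p}{\to}(\mathbf{H}_{\tau}^{sp})^{-1}\boldsymbol{\Omega}_{\tau}^{sp}(\mathbf{H}_{\tau}^{sp})^{-1}$. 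The principal technical hurdle is the Powell sandwich step; everything else is a standard uniform convergence exercise enabled by the boundedness and smoothness of $G_{sp}^{\prime}$, $G_{sp}^{\prime\prime}$ and the truncation of $\tilde{K}$.
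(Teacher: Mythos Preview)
Your proposal is correct and follows essentially the same route as the paper: the paper also treats $\hat{\boldsymbol{\Omega}}_{\tau}^{sp}\to\boldsymbol{\Omega}_{\tau}^{sp}$ as routine and concentrates on $\hat{\mathbf{C}}_{\tau,11}^{sp*}$, writing out an explicit five-term telescoping decomposition $I_{1}+\cdots+I_{5}$ (replace $\tilde{K}$, replace $\hat{\boldsymbol{\theta}}_{1,\tau}$ in the Powell kernel, replace $\hat{\boldsymbol{\theta}}_{2,\tau}$ in $G_{sp}^{\prime}$, apply the LLN, then bound the Powell kernel bias) that matches your bias--variance--estimation-effect outline and hinges on the same key fact $\lambda_{n}^{-1}\|\hat{\boldsymbol{\theta}}_{1,\tau}-\boldsymbol{\theta}_{1,\tau}\|=o_{p}(1)$.
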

Proof of Theorem A.1 can be found in Appendix A.2.

\subsection{Proofs for the Main Theoretical Results}
\begin{proof}[Proof of Theorem 1]
	The proof relies on the result of Theorem 2.1 in \citet{NM_1994}. To use this theorem, at first we need to show that \[E\left[FZ_{\tau}\left(q\left(\boldsymbol{\theta}_{1}\right),e\left(\boldsymbol{\theta}_{2}\right),Y\right)|T=c\right]\]
	is uniquely minimized at $\left(\boldsymbol{\theta}_{1,\tau},\boldsymbol{\theta}_{2,\tau}\right)$. By Assumptions 2.2 and 2.5, the loss function $FZ_{\tau}\left(q,e,y\right)$ is strictly consistent for the $\tau$-quantile and conditional tail expectation
	of a random variable (Corollary 5.5 in \citet{FZ_2016}).
	Thus $E\left[FZ_{\tau}\left(q\left(\boldsymbol{\theta}_{1}\right),e\left(\boldsymbol{\theta}_{2}\right),Y\right)|T=c\right]$
	is uniquely minimized at $\left(Q_{Y|W,T=c}\left(\tau\right),CTE_{Y|W,T=c}\left(\tau\right)\right)$, which is specified by $ \left(q\left(\boldsymbol{\theta}_{1,\tau}\right),e\left(\boldsymbol{\theta}_{2,\tau}\right)\right)$.
	By Assumption 2.6, it thus follows that $E\left[FZ_{\tau}\left(q\left(\boldsymbol{\theta}_{1}\right),e\left(\boldsymbol{\theta}_{2}\right),Y\right)|T=c\right]$
	is uniquely minimized at $\left(\boldsymbol{\theta}_{1,\tau},\boldsymbol{\theta}_{2,\tau}\right)$. Next if Assumption 1 holds, it can be shown that $\left(\boldsymbol{\theta}_{1,\tau},\boldsymbol{\theta}_{2,\tau}\right)$ can also be obtained from	
	\begin{equation}
		\left(\boldsymbol{\theta}_{1,\tau},\boldsymbol{\theta}_{2,\tau}\right)  =  \arg\min_{\left(\boldsymbol{\theta}_{1},\boldsymbol{\theta}_{2}\right)\in\boldsymbol{\Theta}}E\left[\bar{K}\left(v_{0},\boldsymbol{\gamma}_{0}\right)FZ_{\tau}\left(q\left(\boldsymbol{\theta}_{1}\right),e\left(\boldsymbol{\theta}_{2}\right),Y\right)\right].\label{estimation_problem}
	\end{equation}We then show that the following uniform convergence
	holds
	\begin{eqnarray}
		\sup_{\left(\boldsymbol{\theta}_{1},\boldsymbol{\theta}_{2}\right)\in\boldsymbol{\Theta}}\left| \frac{1}{n}\sum_{i=1}^{n}\tilde{K}_{i}\left(\hat{v},\hat{\boldsymbol{\gamma}}\right)FZ_{\tau}\left(q_{i}\left(\boldsymbol{\theta}_{1}\right),e_{i}\left(\boldsymbol{\theta}_{2}\right),Y_{i}\right) \right.& &\nonumber\\ 
		\left. -E\left[\bar{K}\left(v_{0},\gamma_{0}\right)FZ_{\tau}\left(q\left(\boldsymbol{\theta}_{1}\right),e\left(\boldsymbol{\theta}_{2}\right),Y\right)\right]\right| &\stackrel{p}{\rightarrow}& 0\label{uniform_convergence}
	\end{eqnarray}
	By using the triangle inequality, we have
	\begin{eqnarray}
		\sup_{\left(\boldsymbol{\theta}_{1},\boldsymbol{\theta}_{2}\right)\in\boldsymbol{\Theta}}\left| \frac{1}{n}\sum_{i=1}^{n}\tilde{K}_{i}\left(\hat{v},\hat{\boldsymbol{\gamma}}\right)FZ_{\tau}\left(q_{i}\left(\boldsymbol{\theta}_{1}\right),e_{i}\left(\boldsymbol{\theta}_{2}\right),Y_{i}\right) \right.& &\nonumber\\ 
		\left. -E\left[\bar{K}\left(v_{0},\gamma_{0}\right)FZ_{\tau}\left(q\left(\boldsymbol{\theta}_{1}\right),e\left(\boldsymbol{\theta}_{2}\right),Y\right)\right]\right| & & \nonumber\\
		\leq\sup_{\left(\boldsymbol{\theta}_{1},\boldsymbol{\theta}_{2}\right)\in\boldsymbol{\Theta}}\left| \frac{1}{n}\sum_{i=1}^{n}\tilde{K}_{i}\left(\hat{v},\hat{\boldsymbol{\gamma}}\right)FZ_{\tau}\left(q_{i}\left(\boldsymbol{\theta}_{1}\right),e_{i}\left(\boldsymbol{\theta}_{2}\right),Y_{i}\right)\right.& &\nonumber\\
		\left.-\frac{1}{n}\sum_{i=1}^{n}\bar{K}_{i}\left(v_{0},\boldsymbol{\gamma}_{0}\right)FZ_{\tau}\left(q_{i}\left(\boldsymbol{\theta}_{1}\right),e_{i}\left(\boldsymbol{\theta}_{2}\right),Y_{i}\right)\right|& & \label{uniform_convergence1}\\
		+\sup_{\left(\boldsymbol{\theta}_{1},\boldsymbol{\theta}_{2}\right)\in\boldsymbol{\Theta}}\left| \frac{1}{n}\sum_{i=1}^{n}\bar{K}_{i}\left(v_{0},\boldsymbol{\gamma}_{0}\right)FZ_{\tau}\left(q_{i}\left(\boldsymbol{\theta}_{1}\right),e_{i}\left(\boldsymbol{\theta}_{2}\right),Y_{i}\right)\right.& &\nonumber\\
		\left.-E\left[\bar{K}\left(v_{0},\boldsymbol{\gamma}_{0}\right)FZ_{\tau}\left(q\left(\boldsymbol{\theta}_{1}\right),e\left(\boldsymbol{\theta}_{2}\right),Y\right)\right]\right|.
		\label{uniform_convergence2}
	\end{eqnarray}
	For (\ref{uniform_convergence1}), it can be shown that it is $o_{p}\left(1\right)$, since
	\begin{eqnarray*}
		&  & \sup_{\left(\boldsymbol{\theta}_{1},\boldsymbol{\theta}_{2}\right)\in\boldsymbol{\Theta}}\left| \frac{1}{n}\sum_{i=1}^{n}\left[\tilde{K}_{i}\left(\hat{v},\hat{\boldsymbol{\gamma}}\right)-\bar{K}_{i}\left(v_{0},\boldsymbol{\gamma}_{0}\right)\right]FZ_{\tau}\left(q_{i}\left(\boldsymbol{\theta}_{1}\right),e_{i}\left(\boldsymbol{\theta}_{2}\right),Y_{i}\right)\right| \\
		&  & \leq\max_{i\in\{1,...,n\}}\left|\tilde{K}_{i}\left(\hat{v},\hat{\boldsymbol{\gamma}}\right)-\bar{K}_{i}\left(v_{0},\gamma_{0}\right)\right|\times\sup_{\left(\boldsymbol{\theta}_{1},\boldsymbol{\theta}_{2}\right)\in\boldsymbol{\Theta}}\frac{1}{n}\sum_{i=1}^{n}\left|FZ_{\tau}\left(q_{i}\left(\boldsymbol{\theta}_{1}\right),e_{i}\left(\boldsymbol{\theta}_{2}\right),Y_{i}\right)\right| \\
		&  & \leq\max_{i\in\{1,...,n\}}\left|\bar{K}_{i}\left(\hat{v},\hat{\boldsymbol{\gamma}}\right)-\bar{K}_{i}\left(v_{0},\gamma_{0}\right)\right|\times\left[\frac{1}{n}\sum_{i=1}^{n}b\left(V_{i}\right)\right]\\
		& &=o_{p}\left(1\right),
	\end{eqnarray*}
	where the second inequality above follows from Assumptions 2.7, 2.8 and Lipschitz continuity of the mapping $t\rightarrow\min\{\max\{t,0\},1\}$ as well as the fact that $\bar{K}_{i}\left(v_{0},\gamma_{0}\right)\in[0,1]$ under Assumption 1 (see Lemma 3.2 of \citet{AAI_2002}) so that $\bar{K}_{i}\left(v_{0},\gamma_{0}\right)=\tilde{K}_{i}\left(v_{0},\gamma_{0}\right)$ for each $i\in\{1,...,n\}$.
	
	Next we show that (\ref{uniform_convergence2})
	is also $o_{p}\left(1\right)$. By Assumptions 2.4 and 2.5,  $FZ_{\tau}\left(q\left(\boldsymbol{\theta}_{1}\right),e\left(\boldsymbol{\theta}_{2}\right),Y\right)$
	is continuous in $\left(\boldsymbol{\theta}_{1},\boldsymbol{\theta}_{2}\right)\in\boldsymbol{\Theta}$.
	In addition, by Assumption 2.7,
\[E\left[\left| \bar{K}\left(v_{0},\boldsymbol{\gamma}_{0}\right)FZ_{\tau}\left(q\left(\boldsymbol{\theta}_{1}\right),e\left(\boldsymbol{\theta}_{2}\right),Y\right)\right|\right]\leq E\left[\bar{K}\left(v_{0},\boldsymbol{\gamma}\right)b\left(V\right)\right]\leq E\left[b\left(V\right)\right]<\infty.\] Using these results along with Assumptions 2.1, 2.3 and 2.4, we can apply Lemma 2.4
	of \citet{NM_1994} to deduce that (\ref{uniform_convergence2})
	is also $o_{p}\left(1\right)$. Combining all of the above results, it then follows from Theorem 2.1 of \citet{NM_1994} that
	$\left(\hat{\boldsymbol{\theta}}_{1,\tau},\hat{\boldsymbol{\theta}}_{2,\tau}\right)\stackrel{p.}{\rightarrow}\left(\boldsymbol{\theta}_{1,\tau},\boldsymbol{\theta}_{2,\tau}\right).$
\end{proof}

\begin{proof}[Proof of Lemma 1] Let $\mathcal{V}_{m}$ denote the support of the distribution of $(Y,W_{c})$ conditional on $W_{d}=m$. We first show that, for each $m\in\mathcal{W}_{d}$,  $\sup_{(y,w_{c})\in\mathcal{V}_{m}}\left|\hat{v}_{m}(y,w_{c})-v_{0,m}(y,w_{c})\right|=o_{p}\left(1\right)$, which relies on using the result of \citet{Newey_1997} for the power series
	approximation. To use this result, in addition to Assumptions 2.1,
	3.1 and 3.2, we also need to show that $Var\left(Z|V_{m}\right)$
	is bounded. This is evident since $Z$ is a binary variable.
	With these assumptions and above result, Theorem 4 of \citet{Newey_1997} implies that
	$\sup_{(y,w_{c})\in\mathcal{V}_{m}}\left|\hat{v}_{m}-v_{0,m}\right|=o_{p}\left(1\right)$ under the conditions that $r+1<s_{m}$, $\kappa_{m}\rightarrow\infty$ and $\kappa_{m}^{3}/n\rightarrow0$. By (26) 
	and since $m$ takes a finite number of values, we can conclude that 	$\sup_{V\in\mathcal{V}}\left|\hat{v}-v_{0}\right|=o_{p}\left(1\right).$ Now consider $\bar{K}\left(\hat{v},\hat{\boldsymbol{\gamma}}\right)$
	as an estimator for $\bar{K}\left(v_{0},\boldsymbol{\gamma}_{0}\right)$.
	To verify Assumption 2.8, notice that\begin{eqnarray}
		\sup_{V\in\mathcal{V}}\left|\bar{K}\left(V;\hat{v},\hat{\boldsymbol{\gamma}}\right)-\bar{K}\left(V;v_{0},\boldsymbol{\gamma}_{0}\right)\right| & \leq & \sup_{V\in\mathcal{V}}\left|\bar{K}\left(V;\hat{v},\hat{\boldsymbol{\gamma}}\right)-\bar{K}\left(V;v_{0},\hat{\boldsymbol{\gamma}}\right)\right|+\sup_{V\in\mathcal{V}}\left|\bar{K}\left(V;v_{0},\hat{\boldsymbol{\gamma}}\right)-\bar{K}\left(V;v_{0},\boldsymbol{\gamma}_{0}\right)\right|\nonumber \\
		& \leq & \sup_{V\in\mathcal{V}}\left|\frac{D\left(\hat{v}-v_{0}\right)}{1-\hat{\pi}}-\frac{\left(1-D\right)\left(\hat{v}-v_{0}\right)}{\hat{\pi}}\right|\nonumber \\
		&  & +\sup_{V\in\mathcal{V}}\left|D\left(1-v_{0}\right)\left(\frac{1}{1-\hat{\pi}}-\frac{1}{1-\pi_{0}}\right)\right|\nonumber \\
		&  & +\sup_{V\in\mathcal{V}}\left|\left(1-D\right)v_{0}\left(\frac{1}{\hat{\pi}}-\frac{1}{\pi_{0}}\right)\right|\nonumber \\
		& \leq & \sup_{V\in\mathcal{V}}\left|\hat{v}-v_{0}\right|\sup_{V\in\mathcal{V}}\left|\frac{D}{1-\hat{\pi}}-\frac{\left(1-D\right)}{\hat{\pi}}\right|\label{uniform_convergence_K}\\
		&  & +\sup_{V\in\mathcal{V}}\left|D\left(1-v_{0}\right)\right|\sup_{V\in\mathcal{V}}\left\Vert\frac{1}{\left(1-\bar{\pi}\right)^{2}}\nabla_{\boldsymbol{\gamma}}\bar{\pi}\right\Vert\left\Vert\hat{\boldsymbol{\gamma}}-\boldsymbol{\gamma}_{0}\right\Vert\label{uniform_convergence_K1}\\
		&  & +\sup_{V\in\mathcal{V}}\left|\left(1-D\right)v_{0}\right|\sup_{V\in\mathcal{V}}\left\Vert\frac{1}{\bar{\pi}^{2}}\nabla_{\boldsymbol{\gamma}}\bar{\pi}\right\Vert\left\Vert\hat{\boldsymbol{\gamma}}-\boldsymbol{\gamma}_{0}\right\Vert,\label{uniform_convergence_K2}
	\end{eqnarray}
	where $\hat{\pi}:=\pi(X,\hat{\boldsymbol{\gamma}})$, $\pi_{0}:=\pi(X,\boldsymbol{\gamma}_{0})$, $\bar{\pi}:=\pi\left(X,\boldsymbol{\bar{\gamma}}\right)$, and $\nabla_{\boldsymbol{\gamma}}\bar{\pi}$ is the partial derivative $\nabla_{\boldsymbol{\gamma}}\pi$ evaluated at $\boldsymbol{\gamma}=\bar{\boldsymbol{\gamma}}$, where
	$\bar{\boldsymbol{\gamma}}$ is some point between $\boldsymbol{\gamma}_{0}$
	and $\hat{\boldsymbol{\gamma}}$. Since $\sup_{V\in\mathcal{V}}\left|\hat{v}-v_{0}\right|=o_{p}\left(1\right)$
	and $\pi\left(x,\boldsymbol{\gamma}\right)$ is bounded away from 0 and 1 over the support of $X$ and for every $\boldsymbol{\gamma}$ such that  
	$\left\Vert \boldsymbol{\gamma}-\boldsymbol{\gamma}_{0}\right\Vert \leq\varepsilon$, the term (\ref{uniform_convergence_K}) is thus $o_{p}\left(1\right)$.
	Since $v_{0}=E\left[Z|V\right]$ is bounded for $V\in\mathcal{V}$
	and $D\in\left\{ 1,0\right\} $, the terms (\ref{uniform_convergence_K1}) and (\ref{uniform_convergence_K2}) are also $o_{p}\left(1\right)$ under Assumptions 3.3 and 3.4. Combining the above results, we can
	conclude that $\sup_{V\in\mathcal{V}}\left|\bar{K}\left(V;\hat{v},\hat{\boldsymbol{\gamma}}\right)-\bar{K}\left(V;v_{0},\boldsymbol{\gamma}_{0}\right)\right|=o_{p}\left(1\right)$.
\end{proof}


\begin{proof}[Proof of Theorem 2]The proof relies on using the
	result of Theorem 5.23 in \citet{Vaart_1998}. 
	First, we note that measurability (w.r.t. $\left(Y,W\right)$) and almost surely differentiability (w.r.t.
	$\boldsymbol{\theta}_{1}$ and $\boldsymbol{\theta}_{2}$) of $\bar{K}\left(\hat{v},\hat{\boldsymbol{\gamma}}\right)FZ_{\tau}\left(q\left(\boldsymbol{\theta}_{1}\right),e_{1}\left(\boldsymbol{\theta}_{2}\right),Y\right)$
	can be proved by using the fact that $FZ_{\tau}\left(q\left(\boldsymbol{\theta}_{1}\right),e_{1}\left(\boldsymbol{\theta}_{2}\right),Y\right)$
	has these properties \citep{DB_2019} and $\bar{K}\left(\hat{v},\hat{\boldsymbol{\gamma}}\right)$
	is also a measurable function of $\left(Y,W\right)$. In addition, it can be shown that $FZ_{\tau}\left(q\left(\boldsymbol{\theta}_{1}\right),e_{1}\left(\boldsymbol{\theta}_{2}\right),Y\right)$
	is locally Lipschitz continuous in $\left(\boldsymbol{\theta}_{1},\boldsymbol{\theta}_{2}\right)$ around $\left(\boldsymbol{\theta}_{1,\tau},\boldsymbol{\theta}_{2,\tau}\right)$ (see Lemma 2 in Appendix A.3). 
		%
	By Assumption 4.2 and using Lemma 2 in Appendix A.3, there exists a constant $A_{\tau}^{*}\left(V\right)$ such that
	\begin{eqnarray*}
		E\left[\left|\bar{K}\left(v_{0},\boldsymbol{\gamma}_{0}\right)+o_{p}\left(1\right)\right|^{2}\left(A_{\tau}^{*}\left(V\right)\right)^{2}\right] & \cong & E\left[\left(\bar{K}\left(v_{0},\boldsymbol{\gamma}_{0}\right)\right)^{2}\left(A_{\tau}^{*}\left(V\right)\right)^{2}\right]\\
		& \leq & E\left[\bar{K}\left(v_{0},\boldsymbol{\gamma}_{0}\right)\left(A_{\tau}^{*}\left(V\right)\right)^{2}\right]\\
		& = & E\left[\left(A_{\tau}^{*}\left(V\right)\right)^{2}|T=c\right]P\left(T=c\right)\\
		& < & \infty.
	\end{eqnarray*}
	We now show that $E\left[\bar{K}\left(v_{0},\boldsymbol{\gamma}_{0}\right)FZ_{\tau}\left(q\left(\boldsymbol{\theta}_{1}\right),e\left(\boldsymbol{\theta}_{2}\right),Y\right)\right]$
	admits a second order Taylor expansion at $\left(\boldsymbol{\theta}_{1,\tau},\boldsymbol{\theta}_{2,\tau}\right)$.
	Under Assumptions 2.4 and 2.5,  $E\left[\bar{K}\left(v_{0},\boldsymbol{\gamma}_{0}\right)FZ_{\tau}\left(q\left(\boldsymbol{\theta}_{1}\right),e\left(\boldsymbol{\theta}_{2}\right),Y\right)\right]$
	is a twice differentiable function of $\boldsymbol{\theta}:=\left(\boldsymbol{\theta}_{1},\boldsymbol{\theta}_{2}\right)$ at $\left(\boldsymbol{\theta}_{1,\tau},\boldsymbol{\theta}_{2,\tau}\right)$,
	and it can be shown that 
	\begin{eqnarray*}
		\mathbf{H}_{\tau} & := & \nabla_{\boldsymbol{\theta\theta}}E\left[\bar{K}\left(v_{0},\boldsymbol{\gamma}_{0}\right)FZ_{\tau}\left(q\left(\boldsymbol{\theta}_{1,\tau}\right),e\left(\boldsymbol{\theta}_{2,\tau}\right),Y\right)\right]\\
		& = & \nabla_{\boldsymbol{\theta\theta}}E\left[FZ_{\tau}\left(q\left(\boldsymbol{\theta}_{2,\tau}\right),e\left(\boldsymbol{\theta}_{2,\tau}\right),Y\right)|T=c\right]\times P\left(T=c\right),
	\end{eqnarray*}
	where 
	\[
	\nabla_{\boldsymbol{\theta\theta}}E\left[FZ_{\tau}\left(q\left(\boldsymbol{\theta}_{1,\tau}\right),e\left(\boldsymbol{\theta}_{2,\tau}\right),Y\right)|T=c\right]=\left(\begin{array}{cc}
	\mathbf{C}_{\tau,11} & \mathbf{0}\\
	\mathbf{0} & \mathbf{C}_{\tau,22}
	\end{array}\right),
	\]
	and 
	\begin{eqnarray*}
		\mathbf{C}_{\tau,11} & = & E\left[f_{Y|W,T=c}\left(q\left(\boldsymbol{\theta}_{1,\tau}\right)\right)\left[G_{1}^{\prime}\left(q\left(\boldsymbol{\theta}_{1,\tau}\right)\right)+\frac{1}{\tau}G_{2}^{\prime}\left(e\left(\boldsymbol{\theta}_{2,\tau}\right)\right)\right]\right.\\
		&  & \times\left.\left[\nabla_{\boldsymbol{\theta}_{1}}q\left(\boldsymbol{\theta}_{1,\tau}\right)\right]^\top\left[\nabla_{\boldsymbol{\theta}_{1}}q\left(\boldsymbol{\theta}_{1,\tau}\right)\right]|T=c\right],\\
		\mathbf{C}_{\tau,22} & = & E\left[G_{2}^{\prime\prime}\left(e\left(\boldsymbol{\theta}_{2,\tau}\right)\right)\left[\nabla_{\boldsymbol{\theta}_{2}}e\left(\boldsymbol{\theta}_{2,\tau}\right)\right]^\top\left[\nabla_{\boldsymbol{\theta}_{2}}e\left(\boldsymbol{\theta}_{2,\tau}\right)\right]|T=c\right].
	\end{eqnarray*}
	With the above results, using Theorem 5.23 in \citet{Vaart_1998},
	we can have 
	\begin{eqnarray*}
		\sqrt{n}\left(\left(\hat{\boldsymbol{\theta}}_{1,\tau},\hat{\boldsymbol{\theta}}_{2,\tau}\right)-\left(\boldsymbol{\theta}_{1,\tau},\boldsymbol{\theta}_{2,\tau}\right)\right) & = & -\mathbf{H}_{\tau}^{-1}\frac{1}{\sqrt{n}}\sum_{i=1}^{n}\bar{K}_{i}\left(\hat{v},\hat{\boldsymbol{\gamma}}\right)\nabla_{\boldsymbol{\theta}}FZ_{\tau}\left(q_{i}\left(\boldsymbol{\theta}_{1,\tau}\right),e_{i}\left(\boldsymbol{\theta}_{2,\tau}\right),Y_{i}\right)+o_{p}\left(1\right)\\
		& = & -\mathbf{H}_{\tau}^{-1}\left\{ \frac{1}{\sqrt{n}}\sum_{i=1}^{n}\bar{K}_{i}\left(\hat{v},\boldsymbol{\gamma}_{0}\right)\nabla_{\boldsymbol{\theta}}FZ_{\tau}\left(q_{i}\left(\boldsymbol{\theta}_{1,\tau}\right),e_{i}\left(\boldsymbol{\theta}_{2,\tau}\right),Y_{i}\right)\right.\\
		&  & +\left.\frac{1}{n}\sum_{i=1}^{n}\left[\nabla_{\boldsymbol{\gamma}}\bar{K}_{i}\left(\hat{v},\boldsymbol{\gamma}_{0}\right)\right]^\top\sqrt{n}\left(\hat{\boldsymbol{\gamma}}-\boldsymbol{\gamma}_{0}\right)\nabla_{\boldsymbol{\theta}}FZ_{\tau}\left(q_{i}\left(\boldsymbol{\theta}_{1,\tau}\right),e_{i}\left(\boldsymbol{\theta}_{2,\tau}\right),Y_{i}\right)\right\} \\
		&  & +o_{p}\left(1\right),
	\end{eqnarray*}
	where 
	\[
	\nabla_{\boldsymbol{\gamma}}\bar{K}_{i}\left(\hat{v},\boldsymbol{\gamma}_{0}\right)=\left[\frac{\left(1-D_{i}\right)\hat{v}_{i}}{\pi_{0i}^{2}}-\frac{D_{i}\left(1-\hat{v}_{i}\right)}{\left(1-\pi_{0i}\right)^{2}}\right]\nabla_{\boldsymbol{\gamma}}\bar{\pi}_{i},
	\]
	and $\bar{\pi}_{i}=\pi\left(X_{i},\bar{\boldsymbol{\gamma}}\right)$
	and $\bar{\boldsymbol{\gamma}}$ is some point between $\hat{\boldsymbol{\gamma}}$
	and $\boldsymbol{\gamma}_{0}$, and 
	\begin{eqnarray*}
		\nabla_{\boldsymbol{\theta}}FZ_{\tau}\left(q_{i}\left(\boldsymbol{\theta}_{1,\tau}\right),e_{i}\left(\boldsymbol{\theta}_{2,\tau}\right),Y_{i}\right) & = & \left(\begin{array}{c}
			\left(1\left\{ Y_{i}\leq q\left(\boldsymbol{\theta}_{1,\tau}\right)\right\} -\tau\right)\\
			\times\left[G_{1}^{\prime}\left(q_{i}\left(\boldsymbol{\theta}_{1,\tau}\right)\right)+\frac{1}{\tau}G_{2}^{\prime}\left(e_{i}\left(\boldsymbol{\theta}_{2,\tau}\right)\right)\right]\nabla_{\boldsymbol{\theta}_{1}}q_{i}\left(\boldsymbol{\theta}_{1,\tau}\right)\\
			G_{2}^{\prime\prime}\left(e_{i}\left(\boldsymbol{\theta}_{2,\tau}\right)\right)\left[e_{i}\left(\boldsymbol{\theta}_{2,\tau}\right)\right.\\
			\left.+\frac{1}{\tau}\max\left(q_{i}\left(\boldsymbol{\theta}_{1,\tau}\right)-Y_{i},0\right)-q_{i}\left(\boldsymbol{\theta}_{1,\tau}\right)\right]\nabla_{\boldsymbol{\theta}_{2}}e_{i}\left(\boldsymbol{\theta}_{2,\tau}\right)
		\end{array}\right)
	\end{eqnarray*}
	is the Jacobian vector. By similar arguments used in proving Lemma
	A.1 in \citet{AAI_2002}, with Assumption 4.5, it can be shown that 
	\begin{eqnarray*}
		\frac{1}{\sqrt{n}}\sum_{i=1}^{n}\bar{K}_{i}\left(\hat{v},\boldsymbol{\gamma}_{0}\right)\nabla_{\boldsymbol{\theta}}FZ_{\tau}\left(q_{i}\left(\boldsymbol{\theta}_{1,\tau}\right),e_{i}\left(\boldsymbol{\theta}_{2,\tau}\right),Y_{i}\right) & = & \frac{1}{\sqrt{n}}\sum_{i=1}^{n}K_{i}\left(\boldsymbol{\gamma}_{0}\right)\nabla_{\boldsymbol{\theta}}FZ_{\tau}\left(q_{i}\left(\boldsymbol{\theta}_{1,\tau}\right),e_{i}\left(\boldsymbol{\theta}_{2,\tau}\right),Y_{i}\right)\\
		& &+o_{p}(1),\\
		\frac{1}{n}\sum_{i=1}^{n}\left[\nabla_{\boldsymbol{\gamma}}\bar{K}_{i}\left(\hat{v},\boldsymbol{\gamma}_{0}\right)\right]^\top\sqrt{n}\left(\hat{\boldsymbol{\gamma}}-\boldsymbol{\gamma}_{0}\right)\nabla_{\boldsymbol{\theta}}FZ_{\tau}\left(q_{i}\left(\boldsymbol{\theta}_{1,\tau}\right),e_{i}\left(\boldsymbol{\theta}_{2,\tau}\right),Y_{i}\right) & = & \frac{1}{n}\sum_{i=1}^{n}\left[\nabla_{\boldsymbol{\gamma}}K_{i}\left(\boldsymbol{\gamma}_{0}\right)\right]^\top\sqrt{n}\left(\hat{\boldsymbol{\gamma}}-\boldsymbol{\gamma}_{0}\right)\\
		&&\times\nabla_{\boldsymbol{\theta}}FZ_{\tau}\left(q_{i}\left(\boldsymbol{\theta}_{1,\tau}\right),e_{i}\left(\boldsymbol{\theta}_{2,\tau}\right),Y_{i}\right)+o_{p}(1),
	\end{eqnarray*}
	where
	\begin{eqnarray*}
		K_{i}\left(\boldsymbol{\gamma}_{0}\right) & = & 1-\frac{D_{i}\left(1-Z_{i}\right)}{1-\pi_{0i}}-\frac{\left(1-D_{i}\right)Z_{i}}{\pi_{0i}},\\
		\nabla_{\boldsymbol{\gamma}}K_{i}\left(\boldsymbol{\gamma}_{0}\right) & = & \left[\frac{\left(1-D_{i}\right)Z_{i}}{\pi_{0i}^{2}}-\frac{D_{i}\left(1-Z_{i}\right)}{\left(1-\pi_{0i}\right)^{2}}\right]\nabla_{\boldsymbol{\gamma}}\bar{\pi}_{i}.
	\end{eqnarray*}
	It further can be shown that
	\[
	\frac{1}{n}\sum_{i=1}^{n}\left[\nabla_{\boldsymbol{\gamma}}K_{i}\left(\boldsymbol{\gamma}_{0}\right)\right]^\top\sqrt{n}\left(\hat{\boldsymbol{\gamma}}-\boldsymbol{\gamma}_{0}\right)\nabla_{\boldsymbol{\theta}}FZ_{\tau}\left(q_{i}\left(\boldsymbol{\theta}_{1,\tau}\right),e_{i}\left(\boldsymbol{\theta}_{2,\tau}\right),Y_{i}\right)\overset{p.}{\rightarrow}\frac{1}{\sqrt{n}}\sum_{i=1}^{n}\mathbf{M}_{\tau}\psi\left(X_{i}\right),
	\]
	where 
	\[
	\mathbf{M}_{\tau}=E\left[\nabla_{\boldsymbol{\theta}}FZ_{\tau}\left(q_{i}\left(\boldsymbol{\theta}_{1,\tau}\right),e_{i}\left(\boldsymbol{\theta}_{2,\tau}\right),Y_{i}\right)\left[\nabla_{\boldsymbol{\gamma}}K_{i}\left(\boldsymbol{\gamma}_{0}\right)\right]^\top\right].
	\]
	Therefore we can conclude that
	\begin{eqnarray*}
		\sqrt{n}\left(\left(\hat{\boldsymbol{\theta}}_{1,\tau},\hat{\boldsymbol{\theta}}_{2,\tau}\right)-\left(\boldsymbol{\theta}_{1,\tau},\boldsymbol{\theta}_{2,\tau}\right)\right)& = & -\mathbf{H}_{\tau}^{-1}\left\{ \frac{1}{\sqrt{n}}\sum_{i=1}^{n}K_{i}\left(\boldsymbol{\gamma}_{0}\right)\nabla_{\boldsymbol{\theta}}FZ_{\tau}\left(q_{i}\left(\boldsymbol{\theta}_{1,\tau}\right),e_{i}\left(\boldsymbol{\theta}_{2,\tau}\right),Y_{i}\right)+\right.\\
		&  & +\left.\frac{1}{\sqrt{n}}\sum_{i=1}^{n}\mathbf{M}_{\tau}\psi\left(X_{i}\right)\right\} +o_{p}\left(1\right),
	\end{eqnarray*}
	and 
	\[
	\sqrt{n}\left(\left(\hat{\boldsymbol{\theta}}_{1,\tau},\hat{\boldsymbol{\theta}}_{2,\tau}\right)-\left(\boldsymbol{\theta}_{1,\tau},\boldsymbol{\theta}_{2,\tau}\right)\right)\stackrel{d.}{\rightarrow}N\left(\mathbf{0},\mathbf{H}_{\tau}^{-1}\boldsymbol{\Omega}_{\tau}\mathbf{H}_{\tau}^{-1}\right)
	\]
	where $\boldsymbol{\Omega}_{\tau}=E\left[\mathbf{J}_{\tau}\mathbf{J}_{\tau}^\top\right]$
	and 
	\begin{eqnarray*}
		\mathbf{J}_{\tau} & = & K\left(\boldsymbol{\gamma}_{0}\right)\nabla_{\boldsymbol{\theta}}FZ_{\tau}\left(q\left(\boldsymbol{\theta}_{1,\tau}\right),e\left(\boldsymbol{\theta}_{2,\tau}\right),Y\right)+\mathbf{M}_{\tau}\psi\left(X\right).
	\end{eqnarray*}
\end{proof} 

\begin{proof}[Proof of Theorem 3]
	At first, we show that the empirical loss function converges to the
	true loss function uniformly over $\mathcal{T}\times\boldsymbol{\Theta}$, which is a compact metric space. We use results of \citet{Newey_1991}, which requires Assumption
	5.1, and the following conditions: (i) pointwise convergence
	between the empirical loss function and the true loss function for
	each $\left(\tau,\boldsymbol{\theta}\right)\in\mathcal{T}\times\boldsymbol{\Theta}$,
	(ii) $n^{-1}\sum_{i=1}^{n}\tilde{K}_{i}\left(\hat{\nu},\hat{\boldsymbol{\gamma}}\right)h_{i}\left(\tau,\boldsymbol{\theta}\right)$
	is stochastic equicontinuous on $\left(\tau,\boldsymbol{\theta}\right)$,
	and (iii) $E\left[\bar{K}\left(v_{0},\boldsymbol{\gamma}_{0}\right)h\left(\tau,\boldsymbol{\theta}\right)\right]$
	is equicontinuous. 
	
	For (i), we have shown in Theorem 1 that if Assumptions 2.1, 2.2 and 2.4-2.8 in Theorem 1 and Assumption 5.1 (replaces Assumption 2.3) hold, for each $\tau\in\mathcal{T},$ 
	\begin{equation}
		\sup_{\boldsymbol{\theta}\in\boldsymbol{\Theta}}\left|\frac{1}{n}\sum_{i=1}^{n}\tilde{K}_{i}\left(\hat{\nu},\hat{\boldsymbol{\gamma}}\right)h_{i}\left(\tau,\boldsymbol{\theta}\right)-E\left[\bar{K}\left(v_{0},\boldsymbol{\gamma}_{0}\right)h\left(\tau,\boldsymbol{\theta}\right)\right]\right|=o_{p}\left(1\right).\label{pc}
	\end{equation}
	Thus for each $\left(\tau,\boldsymbol{\theta}\right)\in\mathcal{T}\times\boldsymbol{\Theta}$,
	the pointwise convergence of (\ref{pc}) also holds. For (ii), at first let $\boldsymbol{\theta}^{1}:=\left(\boldsymbol{\theta}_{1}^{1},\boldsymbol{\theta}_{2}^{1}\right)$
	and $\boldsymbol{\theta}^{2}:=\left(\boldsymbol{\theta}_{1}^{2},\boldsymbol{\theta}_{2}^{2}\right)$ be two vectors of parameter values. Then with Assumptions 5.1 and
	2.4, using the result of Lemma 3 in Appendix A.4, we can have
	\begin{eqnarray*}
		\left|\frac{1}{n}\sum_{i=1}^{n}\tilde{K}_{i}\left(\hat{\nu},\hat{\boldsymbol{\gamma}}\right)h_{i}\left(\tau_{1},\boldsymbol{\theta}^{1}\right)-\frac{1}{n}\sum_{i=1}^{n}\tilde{K}_{i}\left(\hat{\nu},\hat{\boldsymbol{\gamma}}\right)h_{i}\left(\tau_{2},\boldsymbol{\theta}^{2}\right)\right| \\ \leq  \frac{1}{n}\sum_{i=1}^{n}\left|\left(\bar{K}_{i}\left(v_{0},\boldsymbol{\gamma}_{0}\right)+o_{p}\left(1\right)\right)\left(h_{i}\left(\tau_{1},\boldsymbol{\theta}^{1}\right)-h_{i}\left(\tau_{2},\boldsymbol{\theta}^{2}\right)\right)\right|\\
		\leq  \frac{1}{n}\sum_{i=1}^{n}\left|\bar{K}_{i}\left(v_{0},\boldsymbol{\gamma}_{0}\right)+o_{p}\left(1\right)\right|\left|h_{i}\left(\tau_{1},\boldsymbol{\theta}^{1}\right)-h_{i}\left(\tau_{2},\boldsymbol{\theta}^{2}\right)\right|\\
		\leq \left(\left\Vert \tau_{1}-\tau_{2}\right\Vert +\left\Vert \boldsymbol{\theta}^{1}-\boldsymbol{\theta}^{2}\right\Vert \right)\left[\frac{1}{n}\sum_{i=1}^{n}\left|\bar{K}_{i}\left(v_{0},\boldsymbol{\gamma}_{0}\right)+o_{p}\left(1\right)\right|B^{*}\left(V_{i}\right)\right],
	\end{eqnarray*}
	where $B^{*}\left(V_{i}\right)$ is the constant defined in Lemma
	3. Notice that $\left|\bar{K}_{i}\left(v_{0},\boldsymbol{\gamma}_{0}\right)+o_{p}\left(1\right)\right|B^{*}\left(V_{i}\right)\geq0$
	for every $i=1,\ldots,n$, and $B^{*}\left(V_{i}\right)$ is determined
	by a sum of $\ln2$ and constants $B_{k}\left(V\right)$, $k=1,\ldots,5$.
	With Assumption 5.3, it can be shown that
	\begin{eqnarray*}
		\frac{1}{n}\sum_{i=1}^{n}\left|\bar{K}_{i}\left(v_{0},\boldsymbol{\gamma}_{0}\right)+o_{p}\left(1\right)\right|B^{*}\left(V_{i}\right) & \leq & \frac{1}{n}\sum_{i=1}^{n}\left(\left|\bar{K}_{i}\left(v_{0},\boldsymbol{\gamma}_{0}\right)\right|+\left|o_{p}\left(1\right)\right|\right)B^{*}\left(V_{i}\right)\\
		& = & \frac{1}{n}\sum_{i=1}^{n}\left|\bar{K}_{i}\left(v_{0},\boldsymbol{\gamma}_{0}\right)\right|B^{*}\left(V_{i}\right)+\left(\frac{1}{n}\sum_{i=1}^{n}\left|o_{p}\left(1\right)\right|B^{*}\left(V_{i}\right)\right)\\
		& \stackrel{p.}{\rightarrow} & E\left[B^{*}\left(V_{i}\right)|T=c\right]P\left(T=c\right)\\
		& < & \infty.
	\end{eqnarray*}
	Therefore we can conclude that $n^{-1}\sum_{i=1}^{n}\left|\bar{K}_{i}\left(v_{0},\boldsymbol{\gamma}_{0}\right)+o_{p}\left(1\right)\right|B^{*}\left(V_{i}\right)$
	is $O_{p}\left(1\right)$ and the empirical loss $n^{-1}\sum_{i=1}^{n}\tilde{K}_{i}\left(\hat{\nu},\hat{\boldsymbol{\gamma}}\right)h_{i}\left(\tau,\boldsymbol{\theta}\right)$
	is stochastic equicontinuous on $\left(\tau,\boldsymbol{\theta}\right)\in\mathcal{T}\times\boldsymbol{\Theta}$.
	Finally, for (iii), with Assumption 5.2, $E\left[\bar{K}\left(v_{0},\boldsymbol{\gamma}_{0}\right)h\left(\tau,\boldsymbol{\theta}\right)\right]$
	is equicontinuous since its first order derivatives w.r.t. $\left(\tau,\boldsymbol{\theta}\right)\in\mathcal{T}\times\boldsymbol{\Theta}$
	is bound. Combining the above results, with Corollary 2.1 of \citet{Newey_1991},
	we can conclude that
	\[
	\sup_{\left(\tau,\boldsymbol{\theta}\right)\in\mathcal{T}\times\boldsymbol{\Theta}}\left|\frac{1}{n}\sum_{i=1}^{n}\tilde{K}_{i}\left(\hat{\nu},\hat{\boldsymbol{\gamma}}\right)h_{i}\left(\tau,\boldsymbol{\theta}\right)-E\left[\bar{K}\left(v_{0},\boldsymbol{\gamma}_{0}\right)h\left(\tau,\boldsymbol{\theta}\right)\right]\right|=o_{p}\left(1\right).
	\]
	Then uniformly over $\tau\in\mathcal{T}$, for any $\varepsilon>0$,
	with probability approaching one, we have 
	\begin{eqnarray}
		\frac{1}{n}\sum_{i=1}^{n}\tilde{K}_{i}\left(\hat{\nu},\hat{\boldsymbol{\gamma}}\right)h_{i}\left(\tau,\hat{\boldsymbol{\theta}}_{\tau}\right) & < & \frac{1}{n}\sum_{i=1}^{n}\tilde{K}_{i}\left(\hat{\nu},\hat{\boldsymbol{\gamma}}\right)h_{i}\left(\tau,\boldsymbol{\theta}_{\tau}\right)+\frac{\varepsilon}{3},\label{uc}\\
		E\left[\bar{K}\left(v_{0},\boldsymbol{\gamma}_{0}\right)h\left(\tau,\hat{\boldsymbol{\theta}}_{\tau}\right)\right] & < & \frac{1}{n}\sum_{i=1}^{n}\tilde{K}_{i}\left(\hat{\nu},\hat{\boldsymbol{\gamma}}\right)h_{i}\left(\tau,\hat{\boldsymbol{\theta}}_{\tau}\right)+\frac{\varepsilon}{3},\label{uc1}\\
		\frac{1}{n}\sum_{i=1}^{n}\tilde{K}_{i}\left(\hat{\nu},\hat{\boldsymbol{\gamma}}\right)h_{i}\left(\tau,\boldsymbol{\theta}_{\tau}\right) & < & E\left[\bar{K}\left(v_{0},\boldsymbol{\gamma}_{0}\right)h\left(\tau,\boldsymbol{\theta}_{\tau}\right)\right]+\frac{\varepsilon}{3}.\label{uc2}
	\end{eqnarray}
	The inequality (\ref{uc}) holds because $\hat{\boldsymbol{\theta}}_{\tau}$
	minimizes $n^{-1}\sum_{i=1}^{n}\bar{K}_{i}\left(\hat{\nu},\hat{\boldsymbol{\gamma}}\right)h_{i}\left(\tau,\boldsymbol{\theta}\right)$
	and the inequalities (\ref{uc1}) and (\ref{uc2}) hold because of
	(\ref{pc}). With (\ref{uc}) , (\ref{uc1}) and (\ref{uc2}), we
	can conclude that 
	\[
	E\left[\bar{K}\left(v_{0},\boldsymbol{\gamma}_{0}\right)h\left(\tau,\hat{\boldsymbol{\theta}}_{\tau}\right)\right]<E\left[\bar{K}\left(v_{0},\boldsymbol{\gamma}_{0}\right)h\left(\tau,\boldsymbol{\theta}_{\tau}\right)\right]+\varepsilon
	\]
	holds uniformly over $\tau\in\mathcal{T}$ with probability approaching
	one. Let $\mathcal{M}$ be any open set in $\boldsymbol{\Theta}$
	containing $\boldsymbol{\theta}_{\tau}:=\left(\boldsymbol{\theta}_{1,\tau},\boldsymbol{\theta}_{2,\tau}\right)$. Notice that $\mathcal{M}^{c}\cap\boldsymbol{\Theta}$ is compact. Also $E\left[\bar{K}\left(v_{0},\boldsymbol{\gamma}_{0}\right)h\left(\tau,\boldsymbol{\theta}\right)\right]$
	is continuous in $\boldsymbol{\theta}$ by Assumption 2.4 and is uniquely
	minimized at $\boldsymbol{\theta}_{\tau}$ by Assumptions 2.5 and
	2.6. Therefore
	\[
	\inf_{\boldsymbol{\theta}\in\mathcal{M}^{c}\cap\boldsymbol{\Theta}}E\left[\bar{K}\left(v_{0},\boldsymbol{\gamma}_{0}\right)h\left(\tau,\boldsymbol{\theta}\right)\right]=E\left[\bar{K}\left(v_{0},\boldsymbol{\gamma}_{0}\right)h\left(\tau,\boldsymbol{\theta}^{*}\right)\right]>E\left[\bar{K}\left(v_{0},\boldsymbol{\gamma}_{0}\right)h\left(\tau,\boldsymbol{\theta}_{\tau}\right)\right],
	\]
	for some $\boldsymbol{\theta}^{*}\in\mathcal{M}^{c}\cap\boldsymbol{\Theta}$.
	Then setting \[\varepsilon=\inf_{\boldsymbol{\theta}\in\mathcal{M}^{c}\cap\boldsymbol{\Theta}}E\left[\bar{K}\left(v_{0},\boldsymbol{\gamma}_{0}\right)h\left(\tau,\boldsymbol{\theta}\right)\right]-E\left[\bar{K}\left(v_{0},\boldsymbol{\gamma}_{0}\right)h\left(\tau,\boldsymbol{\theta}_{\tau}\right)\right],\]
	we can have 
	\[
	E\left[\bar{K}\left(v_{0},\boldsymbol{\gamma}_{0}\right)h\left(\tau,\hat{\boldsymbol{\theta}}_{\tau}\right)\right]<\inf_{\boldsymbol{\theta}\in\mathcal{M}^{c}\cap\boldsymbol{\Theta}}E\left[\bar{K}\left(v_{0},\boldsymbol{\gamma}_{0}\right)h\left(\tau,\boldsymbol{\theta}\right)\right]
	\]
	holds uniformly over $\tau\in\mathcal{T}$ with probability approaching
	one. Thus $\hat{\boldsymbol{\theta}}_{\tau}\in\mathcal{M}$ uniformly
	over $\tau\in\mathcal{T}$ with probability approaching one.
\end{proof}

\begin{proof}[Proof of Theorem A.1]
	It is straightforward to show that $n^{-1}\sum_{i=1}^{n}\hat{\mathbf{J}}_{i,\tau}^{sp}\hat{\mathbf{J}}_{i,\tau}^{sp\top}\overset{p.}{\rightarrow}\boldsymbol{\Omega}_{\tau}^{sp}$
	when the assumptions hold. We thus focus on proving $\hat{\mathbf{H}}_{\tau}^{sp}\overset{p.}{\rightarrow}\mathbf{H}_{\tau}^{sp}$.
	We show that $\mathbf{\hat{C}}_{\tau,11}^{sp*}\overset{p.}{\rightarrow}\mathbf{C}_{\tau,11}^{sp*}$
	and similar arguments can be used to prove $\mathbf{\hat{C}}_{\tau,22}^{sp*}\overset{p.}{\rightarrow}\mathbf{C}_{\tau,22}^{sp*}$.
	To see this, notice that
	\[
	\left\Vert \mathbf{\hat{C}}_{\tau,11}^{sp*}-\mathbf{C}_{\tau,11}^{sp*}\right\Vert \leq I_{1}+I_{2}+I_{3}+I_{4}+I_{5},
	\]
	where
	\begin{eqnarray*}
		I_{1} & = & \left\Vert \mathbf{\hat{C}}_{\tau,11}^{sp*}-\frac{1}{n}\sum_{i=1}^{n}\left\{ \bar{K}_{i}\left(v_{0},\gamma_{0}\right)\varsigma_{\lambda,i}\left(\hat{\boldsymbol{\theta}}{}_{1,\tau}\right)\frac{1}{\tau}G_{sp}^{\prime}\left(e\left(\hat{\boldsymbol{\theta}}_{2,\tau}\right)\right)W_{i}W_{i}^\top\right\} \right\Vert ,\\
		I_{2} & = & \left\Vert \frac{1}{n}\sum_{i=1}^{n}\left\{ \bar{K}_{i}\left(v_{0},\gamma_{0}\right)\varsigma_{\lambda_{n},i}\left(\hat{\boldsymbol{\theta}}{}_{1,\tau}\right)\frac{1}{\tau}G_{sp}^{\prime}\left(e_{i}\left(\hat{\boldsymbol{\theta}}_{2,\tau}\right)\right)W_{i}W_{i}^\top\right\} \right.\\
		&  & \left.-\frac{1}{n}\sum_{i=1}^{n}\left\{ \bar{K}_{i}\left(v_{0},\gamma_{0}\right)\varsigma_{\lambda_{n},i}\left(\boldsymbol{\theta}{}_{1,\tau}\right)\frac{1}{\tau}G_{sp}^{\prime}\left(e_{i}\left(\hat{\boldsymbol{\theta}}_{2,\tau}\right)\right)W_{i}W_{i}^\top\right\} \right\Vert ,\\
		I_{3} & = & \left\Vert \frac{1}{n}\sum_{i=1}^{n}\left\{ \bar{K}_{i}\left(v_{0},\gamma_{0}\right)\varsigma_{\lambda_{n},i}\left(\boldsymbol{\theta}{}_{1,\tau}\right)\frac{1}{\tau}G_{sp}^{\prime}\left(e_{i}\left(\hat{\boldsymbol{\theta}}_{2,\tau}\right)\right)W_{i}W_{i}^\top\right\} \right.\\
		&  & \left.-\frac{1}{n}\sum_{i=1}^{n}\left\{ \bar{K}_{i}\left(v_{0},\gamma_{0}\right)\varsigma_{\lambda_{n},i}\left(\boldsymbol{\theta}{}_{1,\tau}\right)\frac{1}{\tau}G_{sp}^{\prime}\left(e_{i}\left(\boldsymbol{\theta}_{2,\tau}\right)\right)W_{i}W_{i}^\top\right\} \right\Vert ,\\
		I_{4} & = & \left\Vert \frac{1}{n}\sum_{i=1}^{n}\left\{ \bar{K}_{i}\left(v_{0},\gamma_{0}\right)\varsigma_{\lambda_{n},i}\left(\boldsymbol{\theta}{}_{1,\tau}\right)\frac{1}{\tau}G_{sp}^{\prime}\left(e_{i}\left(\boldsymbol{\theta}_{2,\tau}\right)\right)W_{i}W_{i}^\top\right\} \right.\\
		&  & -\left.E\left[\bar{K}\left(v_{0},\gamma_{0}\right)\varsigma_{\lambda_{n}}\left(\boldsymbol{\theta}_{1,\tau}\right)\frac{1}{\tau}G_{sp}^{\prime}\left(e\left(\boldsymbol{\theta}_{2,\tau}\right)\right)WW^\top\right]\right\Vert,\\
		I_{5} & = & \left\Vert E\left[\bar{K}\left(v_{0},\gamma_{0}\right)\varsigma_{\lambda_{n}}\left(\boldsymbol{\theta}_{1,\tau}\right)\frac{1}{\tau}G_{sp}^{\prime}\left(e\left(\boldsymbol{\theta}_{2,\tau}\right)\right)WW^\top\right]-\mathbf{C}_{\tau,11}^{sp*}\right\Vert .
	\end{eqnarray*}
	For $I_{1}$, it can be shown that 
	\begin{eqnarray*}
		I_{1} & \leq & \sup\left|\tilde{K}_{i}\left(\hat{v},\hat{\gamma}\right)-\bar{K}_{i}\left(v_{0},\gamma_{0}\right)\right|\\
		&  & \times\frac{1}{n}\sum_{i=1}^{n}\left|\varsigma_{\lambda,i}\left(\hat{\boldsymbol{\theta}}{}_{1,\tau}\right)\frac{1}{\tau}G_{sp}^{\prime}\left(e\left(\hat{\boldsymbol{\theta}}_{2,\tau}\right)\right)\times W_{i}W_{i}^\top\right|\\
		& \leq & \sup\left|\bar{K}_{i}\left(\hat{v},\hat{\gamma}\right)-\bar{K}_{i}\left(v_{0},\gamma_{0}\right)\right|\times\\
		&  & \frac{1}{n}\sum_{i=1}^{n}\left|\varsigma_{\lambda,i}\left(\hat{\boldsymbol{\theta}}{}_{1,\tau}\right)\frac{1}{\tau}G_{sp}^{\prime}\left(e\left(\hat{\boldsymbol{\theta}}_{2,\tau}\right)\right)\times W_{i}W_{i}^\top\right|\\
		& = & o_{p}\left(1\right)
	\end{eqnarray*}
	Next, convergence between $\varsigma_{\lambda_{n},i}\left(\hat{\boldsymbol{\theta}}{}_{1,\tau}\right)$
	and $\varsigma_{\lambda_{n},i}\left(\boldsymbol{\theta}{}_{1,\tau}\right)$
	can be constructed by using the fact that 
	\[\frac{1}{\lambda_{n}}\left\Vert \hat{\boldsymbol{\theta}}_{1,\tau}-\boldsymbol{\theta}{}_{1,\tau}\right\Vert =o_{p}\left(1\right),
	\]
	and similar arguments in \citet{Powell_1984} and \citet{EM_2004}.
	Together with the facts that $\frac{1}{\tau}G_{sp}^{\prime}\left(e\left(\hat{\boldsymbol{\theta}}_{2,\tau}\right)\right)$,
	$\varsigma_{\lambda_{n},i}\left(.\right)$ and $E\left[\left\Vert WW^\top\right\Vert |T=c\right]$
	are bounded and $G_{sp}^{\prime}\left(e_{i}\left(\boldsymbol{\theta}_{2}\right)\right)$
	is continuous on the parameters $\boldsymbol{\theta}_{2}$, it can
	be shown that $I_{2}=o_{p}\left(1\right)$ and $I_{3}=o_{p}\left(1\right)$.
	For $I_{4}$, since the samples are i.i.d. and $\bar{K}\left(v_{0},\gamma_{0}\right)$,
	$\varsigma_{\lambda_{n},i}\left(\boldsymbol{\theta}{}_{1,\tau}\right)$,
	$\frac{1}{\tau}G_{sp}^{\prime}\left(e_{i}\left(\boldsymbol{\theta}_{2,\tau}\right)\right)$
	and $E\left[\left\Vert W_{i}W_{i}^\top\right\Vert |T=c\right]$ are
	bounded, it is easy to see that $I_{4}=o_{p}\left(1\right)$. Let $Y-q\left(\boldsymbol{\theta}_{1,\tau}\right)=\varepsilon_{\tau}^{q}$
	and then $f_{Y|W,T=c}\left(q\left(\boldsymbol{\theta}_{1,\tau}\right)\right)=f_{\varepsilon_{\tau}^{q}|W,T=c}\left(0\right)$. 
	For $I_{5}$, at first notice that as $\lambda_{n}=o\left(1\right)$,
	\begin{eqnarray*}
		\left|E\left[\varsigma_{\lambda_{n}}\left(\boldsymbol{\theta}_{1,\tau}\right)|W,T=c\right]-f_{\varepsilon_{\tau}^{q}|W,T=c}\left(0\right)\right| & = & \left|E\left[\frac{1\left\{ \left|Y_{i}-q_{i}\left(\boldsymbol{\theta}_{1,\tau}\right)\right|\leq\lambda_{n}\right\} }{2\lambda_{n}}.|W,T=co\right]-f_{\varepsilon_{\tau}^{q}|W,T=co}\left(0\right)\right|\\
		& = & \left|\frac{1}{2\lambda_{n}}\int_{-\lambda_{n}}^{\lambda_{n}}f_{\varepsilon_{\tau}^{q}|W,T=co}\left(a\right)da-f_{\varepsilon_{\tau}^{q}|W,T=co}\left(0\right)\right|\\
		& \leq & \left|\frac{1}{2\lambda_{n}}\times2\lambda_{n}f_{\varepsilon_{\tau}^{q}|W,T=co}\left(a^{*}\right)-f_{\varepsilon_{\tau}^{q}|W,T=co}\left(0\right)\right|,\\
		& = & o_{p}(1),
	\end{eqnarray*}
	where $f_{\varepsilon_{\tau}^{q}|W,T=co}\left(a^{*}\right)=\max_{a\in\left[-\lambda_{n},\lambda_{n}\right]}f_{\varepsilon_{\tau}^{q}|W,T=co}\left(a\right)$. Thus 
	\[
	\lim_{\lambda_{n}\rightarrow0}E\left[\varsigma_{\lambda_{n}}\left(\boldsymbol{\theta}_{1,\tau}\right)|W,T=c\right]=f_{Y|W,T=c}\left(q\left(\boldsymbol{\theta}_{1,\tau}\right)\right).
	\]Furthermore,
	\begin{eqnarray*}
		\lim_{\lambda_{n}\rightarrow0}E\left[\bar{K}\left(v_{0},\gamma_{0}\right)\varsigma_{\lambda_{n}}\left(\boldsymbol{\theta}_{1,\tau}\right)\frac{1}{\tau}G_{sp}^{\prime}\left(e\left(\boldsymbol{\theta}_{2,\tau}\right)\right)WW^\top\right] & = & \lim_{\lambda_{n}\rightarrow0}E\left[\varsigma_{\lambda_{n}}\left(\boldsymbol{\theta}_{1,\tau}\right)\frac{1}{\tau}G_{sp}^{\prime}\left(e\left(\boldsymbol{\theta}_{2,\tau}\right)\right)WW^\top|T=c\right]\times P\left(T=c\right)\\
		& = & E\left[\lim_{\lambda_{n}\rightarrow0}E\left[\varsigma_{\lambda_{n}}\left(\boldsymbol{\theta}_{1,\tau}\right)|W,T=c\right]\frac{1}{\tau}G_{sp}^{\prime}\left(e\left(\boldsymbol{\theta}_{2,\tau}\right)\right)\right.\\
		& &\left.\times WW^\top|T=c\right]\times P\left(T=c\right)\\
		& = & E\left[f_{Y|W,T=c}\left(q\left(\boldsymbol{\theta}_{1,\tau}\right)\right)\frac{1}{\tau}G_{sp}^{\prime}\left(e\left(\boldsymbol{\theta}_{2,\tau}\right)\right)\right.\\
		&   &\left.\times WW^\top|T=c\right]\times P\left(T=c\right)\\
		& = & E\left[\bar{K}\left(v_{0},\gamma_{0}\right)f_{Y|W,T=co}\left(q\left(\boldsymbol{\theta}_{1,\tau}\right)\right)\frac{1}{\tau}G_{sp}^{\prime}\left(e\left(\boldsymbol{\theta}_{2,\tau}\right)\right)WW^\top\right]\\
		& = & C_{\tau,11}^{sp*}.
	\end{eqnarray*}
	Thus $I_{5}=o\left(1\right)$. Combining the above results, we can
	conclude that $\mathbf{\hat{C}}_{\tau,11}^{sp*}\overset{p.}{\rightarrow}\mathbf{C}_{\tau,11}^{sp*}$.
\end{proof}

\subsection{Local Lipschitz continuity.} 
Let $\boldsymbol{\vartheta}_{\boldsymbol{\theta}_{1}}\left(\tau\right):=\left\{ \boldsymbol{\theta}_{1}:\left\Vert \boldsymbol{\theta}_{1}-\boldsymbol{\theta}_{1,\tau}\right\Vert <\varepsilon_{1}\right\} $
and $\boldsymbol{\vartheta}_{\boldsymbol{\theta}_{2}}\left(\tau\right):=\left\{ \boldsymbol{\theta}_{2}:\left\Vert \boldsymbol{\theta}_{2}-\boldsymbol{\theta}_{2,\tau}\right\Vert <\varepsilon_{2}\right\} $
be neighborhoods of $\boldsymbol{\theta}_{1,\tau}$ and
$\boldsymbol{\theta}_{2,\tau}$, and $\boldsymbol{\vartheta}_{\boldsymbol{\theta}}\left(\tau\right)=\boldsymbol{\vartheta}_{\boldsymbol{\theta}_{1}}\left(\tau\right)\cap\boldsymbol{\vartheta}_{\boldsymbol{\theta}_{2}}\left(\tau\right)$.
Let $\boldsymbol{\theta}_{1}^{1}$ and $\boldsymbol{\theta}_{1}^{2}$
be two different vectors of parameter values in $\boldsymbol{\vartheta}_{\boldsymbol{\theta}_{1}}\left(\tau\right)$
and $\boldsymbol{\theta}_{2}^{1}$ and $\boldsymbol{\theta}_{2}^{2}$
be two different vectors of parameter values in $\boldsymbol{\vartheta}_{\boldsymbol{\theta}_{2}}\left(\tau\right)$.
Let $\boldsymbol{\theta}^{1}:=\left(\boldsymbol{\theta}_{1}^{1},\boldsymbol{\theta}_{2}^{1}\right)$
and $\boldsymbol{\theta}^{2}:=\left(\boldsymbol{\theta}_{1}^{2},\boldsymbol{\theta}_{2}^{2}\right)$. 
Let
\begin{eqnarray*}
	A_{1,\tau}\left(V\right) & := & \sup_{\boldsymbol{\theta}_{1}\in\boldsymbol{\vartheta}_{\boldsymbol{\theta}_{1}}\left(\tau\right)}\left\Vert G_{1}^{\prime}\left(q\left(\boldsymbol{\theta}_{1}\right)\right)\nabla_{\boldsymbol{\theta}_{1}}q\left(\boldsymbol{\theta}_{1}\right)\right\Vert,\\
	A_{2,\tau}\left(V\right) & := & \sup_{\boldsymbol{\theta}\in\boldsymbol{\vartheta}_{\boldsymbol{\theta}}\left(\tau\right)}\left\Vert G_{2}^{\prime\prime}\left(e\left(\boldsymbol{\theta}_{2}\right)\right)q\left(\boldsymbol{\theta}_{1}\right)\nabla_{\boldsymbol{\theta}_{2}}e\left(\boldsymbol{\theta}_{2}\right)\right\Vert,\\
	A_{3,\tau}\left(V\right) & := & \sup_{\boldsymbol{\theta}\in\boldsymbol{\vartheta}_{\boldsymbol{\theta}}\left(\tau\right)}\left\Vert G_{2}^{\prime}\left(e\left(\boldsymbol{\theta}_{2}\right)\right)\nabla_{\boldsymbol{\theta}_{1}}q\left(\boldsymbol{\theta}_{1}\right)\right\Vert,\\
	A_{4,\tau}\left(V\right) & := & \sup_{\boldsymbol{\theta}_{2}\in\boldsymbol{\vartheta}_{\boldsymbol{\theta}_{2}}\left(\tau\right)}\left\Vert YG_{2}^{\prime\prime}\left(e\left(\boldsymbol{\theta}_{2}\right)\right)\nabla_{\boldsymbol{\theta}_{2}}e\left(\boldsymbol{\theta}_{2}\right)\right\Vert,\\
	A_{5,\tau}\left(V\right) & := & \sup_{\boldsymbol{\theta}\in\boldsymbol{\vartheta}_{\boldsymbol{\theta}}\left(\tau\right)}\left\Vert G_{2}^{\prime\prime}\left(e\left(\boldsymbol{\theta}_{2}\right)\right)\left(e\left(\boldsymbol{\theta}_{2}\right)-q\left(\boldsymbol{\theta}_{1}\right)\right)\nabla_{\boldsymbol{\theta}_{2}}e\left(\boldsymbol{\theta}_{2}\right)\right\Vert,
\end{eqnarray*}
where $V=\left(Y,W\right)$.

\begin{lemma}[Local Lipschitz continuity] 
	If Assumptions 2.4 and 2.5 hold, for all $\left(\boldsymbol{\theta}_{1}^{1},\boldsymbol{\theta}_{1}^{2}\right)\in\boldsymbol{\vartheta}_{\boldsymbol{\theta}_{1}}\left(\tau\right)$
	and $\left(\boldsymbol{\theta}_{2}^{1},\boldsymbol{\theta}_{2}^{2}\right)\in\boldsymbol{\vartheta}_{\boldsymbol{\theta}_{2}}\left(\tau\right)$, given $V=\left(Y,W\right)$, there exists a constant $A_{\tau}^{*}\left(V\right)$ such that
	\[
	\left|FZ_{\tau}\left(q\left(\boldsymbol{\theta}_{1}^{1}\right),e_{1}\left(\boldsymbol{\theta}_{2}^{1}\right),Y\right)-FZ_{\tau}\left(q\left(\boldsymbol{\theta}_{1}^{2}\right),e_{1}\left(\boldsymbol{\theta}_{2}^{2}\right),Y\right)\right|\leq A_{\tau}^{*}\left(V\right)\left\Vert \boldsymbol{\theta}^{1}-\boldsymbol{\theta}^{2}\right\Vert.
	\]
\end{lemma}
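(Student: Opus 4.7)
The plan is to split the FZ loss into parts that depend only on $q(\boldsymbol{\theta}_{1})$ versus the parts that also depend on $e(\boldsymbol{\theta}_{2})$ through $G_{2}$, compute their weak gradients, and apply the fundamental theorem of calculus along the line segment joining $\boldsymbol{\theta}^{1}$ to $\boldsymbol{\theta}^{2}$. Concretely, write
\[
FZ_{\tau}(q,e,y)=T_{1}(q,y)+T_{2}(q,e,y)+\eta(y),
\]
where $T_{1}(q,y):=1\{y\leq q\}[G_{1}(q)-G_{1}(y)]-\tau G_{1}(q)$ and $T_{2}(q,e,y):=G_{2}^{\prime}(e)[e+LQ_{\tau}(q,y)]-G_{2}(e)$, so that $\eta(y)$ cancels in the difference.

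The key observation is that although $1\{y\leq q\}$ jumps at $q=y$, $T_{1}(\cdot,y)$ is absolutely continuous in $q$ because the factor $G_{1}(q)-G_{1}(y)$ vanishes exactly at the point where the indicator jumps; its weak derivative equals $[1\{y\leq q\}-\tau]G_{1}^{\prime}(q)$, whose modulus is dominated by $G_{1}^{\prime}(q)$ since $|1\{y\leq q\}-\tau|\leq 1$. Similarly, $T_{2}$ is a.e.\ differentiable with $\partial_{e}T_{2}=G_{2}^{\prime\prime}(e)[e-q+\tau^{-1}\max(q-y,0)]$ and $\partial_{q}T_{2}=G_{2}^{\prime}(e)[\tau^{-1}1\{q>y\}-1]$; using $\max(q-y,0)\leq|q|+|y|$ and $|\tau^{-1}1\{q>y\}-1|\leq\tau^{-1}$ for $\tau\in(0,1)$, the modulus of $\partial_{e}T_{2}$ is dominated by $G_{2}^{\prime\prime}(e)\bigl(|e-q|+\tau^{-1}|q|+\tau^{-1}|y|\bigr)$ and that of $\partial_{q}T_{2}$ by $\tau^{-1}G_{2}^{\prime}(e)$.

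Applying the fundamental theorem of calculus along the segment from $\boldsymbol{\theta}^{2}$ to $\boldsymbol{\theta}^{1}$, combined with the chain rule through $\nabla_{\boldsymbol{\theta}_{1}}q$ and $\nabla_{\boldsymbol{\theta}_{2}}e$, the directional-derivative norms at every point of $\boldsymbol{\vartheta}_{\boldsymbol{\theta}}(\tau)$ split into five pieces equal to $\|G_{1}^{\prime}(q)\nabla_{\boldsymbol{\theta}_{1}}q\|$, $\tau^{-1}\|G_{2}^{\prime}(e)\nabla_{\boldsymbol{\theta}_{1}}q\|$, $\|G_{2}^{\prime\prime}(e)(e-q)\nabla_{\boldsymbol{\theta}_{2}}e\|$, $\tau^{-1}\|G_{2}^{\prime\prime}(e)q\nabla_{\boldsymbol{\theta}_{2}}e\|$ and $\tau^{-1}\|YG_{2}^{\prime\prime}(e)\nabla_{\boldsymbol{\theta}_{2}}e\|$. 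Taking the supremum over $\boldsymbol{\vartheta}_{\boldsymbol{\theta}}(\tau)$ turns these into $A_{1,\tau}(V)$, $\tau^{-1}A_{3,\tau}(V)$, $A_{5,\tau}(V)$, $\tau^{-1}A_{2,\tau}(V)$ and $\tau^{-1}A_{4,\tau}(V)$ respectively, so setting
\[
A_{\tau}^{*}(V):=A_{1,\tau}(V)+\tau^{-1}\bigl[A_{2,\tau}(V)+A_{3,\tau}(V)+A_{4,\tau}(V)\bigr]+A_{5,\tau}(V)
\]
and invoking the triangle inequality completes the argument. The main obstacle is justifying Lipschitz behavior in the presence of the indicator discontinuities: for $T_{1}$ this is resolved by the vanishing-factor cancellation noted above, while for $\partial_{q}T_{2}$ the jump is uniformly bounded by $\tau^{-1}G_{2}^{\prime}(e)$, so $T_{2}(\cdot,e,y)$ is still Lipschitz with the stated envelope.
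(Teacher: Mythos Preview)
Your proof is correct but proceeds differently from the paper's. The paper avoids any calculus-along-a-segment argument and instead rewrites $1\{y\leq q\}[G_1(q)-G_1(y)]=\tfrac12\bigl[|G_1(q)-G_1(y)|+G_1(q)-G_1(y)\bigr]$ and $\max(q-y,0)=\tfrac12\bigl[|q-y|+q-y\bigr]$, then performs an explicit three-way case split on whether $Y$ lies below, between, or above $q_1,q_2$ (assuming $q_1\geq q_2$ without loss of generality). In each case the indicator terms simplify to smooth expressions, and the resulting differences are bounded term by term via the mean value theorem, producing the same building blocks $A_{1,\tau},\dots,A_{5,\tau}$.

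Your approach replaces this case analysis by the single observation that $T_1(\cdot,y)$ is absolutely continuous (the jump of $1\{y\leq q\}$ is neutralised by the vanishing factor $G_1(q)-G_1(y)$) and $T_2$ is locally Lipschitz in $(q,e)$, so the fundamental theorem of calculus applies along the convex segment in $\boldsymbol{\vartheta}_{\boldsymbol{\theta}}(\tau)$. This is shorter and conceptually cleaner; the paper's route is more elementary (it never invokes weak derivatives or absolute continuity) but pays for that with the three-case bookkeeping. Your final constant $A_\tau^*(V)=A_{1,\tau}+\tau^{-1}[A_{2,\tau}+A_{3,\tau}+A_{4,\tau}]+A_{5,\tau}$ differs slightly from the one the paper obtains by summing its Case~1 bounds, but since the lemma only asserts existence of some $A_\tau^*(V)$ built from the $A_{k,\tau}$, this is immaterial. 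One small point worth making explicit in your write-up: the chain rule through $\nabla_{\boldsymbol{\theta}_1}q$ and $\nabla_{\boldsymbol{\theta}_2}e$ presumes these gradients exist, which is implicit in the very definition of the $A_{k,\tau}$ (and is used equally in the paper's mean-value arguments), so no additional hypothesis is being smuggled in.
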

\begin{proof}
	To ease the notation, we let $q_{1}:=q\left(\boldsymbol{\theta}_{1}^{1}\right)$,
	$q_{2}:=q\left(\boldsymbol{\theta}_{1}^{2}\right)$, $e_{1}:=e\left(\boldsymbol{\theta}_{2}^{1}\right)$,
	and $e_{2}:=q\left(\boldsymbol{\theta}_{2}^{2}\right)$. We rewrite $FZ_{\tau}\left(q,e,y\right)$ as
	\begin{eqnarray*}
		FZ_{\tau}\left(q,e,y\right) & = & 0.5\left[\left|G_{1}\left(q\right)-G_{1}\left(y\right)\right|+G_{1}\left(q\right)-G_{1}\left(y\right)\right]-\tau G_{1}\left(q\right)\\
		&  & +\frac{1}{\tau}G_{2}^{\prime}\left(e\right)\left\{ 0.5\left[\left|q-y\right|+q-y\right]+\tau\left(e-q\right)\right\} \\
		&  & -G_{2}\left(e\right)+\eta\left(y\right).
	\end{eqnarray*}
	Then 
	\begin{eqnarray*}
		FZ_{\tau}\left(q_{1},e_{1},Y\right)-FZ_{\tau}\left(q_{2},e_{2},Y\right) & = & 0.5\left[\left|G_{1}\left(q_{1}\right)-G_{1}\left(Y\right)\right|+G_{1}\left(q_{1}\right)-\left|G_{1}\left(q_{2}\right)-G_{1}\left(Y\right)\right|-G_{1}\left(q_{2}\right)\right]\nonumber \\
		&  & -\tau\left(G_{1}\left(q_{1}\right)-G_{1}\left(q_{2}\right)\right)\nonumber \\
		&  & +\frac{1}{2\tau}\left(G_{2}^{\prime}\left(e_{1}\right) \left[\left|q_{1}-Y\right|+q_{1}\right]-G_{2}^{\prime}\left(e_{2}\right) \left[\left|q_{2}-Y\right|+q_{2}\right] \right)\nonumber \\
		&  & -\frac{Y}{2\tau}\left[G_{2}^{\prime}\left(e_{1}\right)-G_{2}^{\prime}\left(e_{2}\right)\right]\nonumber \\
		&  & +\left[G_{2}^{\prime}\left(e_{1}\right)\left(e_{1}-q_{1}\right)-G_{2}^{\prime}\left(e_{2}\right)\left(e_{2}-q_{2}\right)\right.\nonumber\\& &
		\left.-\left(G_{2}\left(e_{1}\right)-G_{2}\left(e_{2}\right)\right)\right]\label{difference}
	\end{eqnarray*}
	We next prove that the absolute difference $\left|FZ_{\tau}\left(q_{1},e_{1},Y\right)-FZ_{\tau}\left(q_{2},e_{2},Y\right)\right|$
	is bounded by some functions. Without loss of generality, we assume
	$q_{1}\geq q_{2}$. The results for $q_{1}<q_{2}$ can be proved in
	a similar manner. We divide the proof into three cases.\\
	Cases 1. Suppose $q_{1}\geq q_{2}>Y$. Since $G_{1}\left(.\right)$
	is increasing, 
	\begin{eqnarray*}
		FZ_{\tau}\left(q_{1},e_{1},Y\right)-FZ_{\tau}\left(q_{2},e_{2},Y\right) & = & \left(1-\tau\right)\left[G_{1}\left(q_{1}\right)-G_{1}\left(q_{2}\right)\right]\\
		&  & +\frac{1}{\tau}\left[G_{2}^{\prime}\left(e_{1}\right)q_{1}-G_{2}^{\prime}\left(e_{2}\right)q_{2}\right]\\
		&  & -\frac{Y}{\tau}\left[G_{2}^{\prime}\left(e_{1}\right)-G_{2}^{\prime}\left(e_{2}\right)\right]\\
		&  & +\left[G_{2}^{\prime}\left(e_{1}\right)\left(e_{1}-q_{1}\right)-G_{2}^{\prime}\left(e_{2}\right)\left(e_{2}-q_{2}\right)\right.\\& &
		\left.-\left(G_{2}\left(e_{1}\right)-G_{2}\left(e_{2}\right)\right)\right].
	\end{eqnarray*}And
	\begin{eqnarray}
		\left|FZ_{\tau}\left(q_{1},e_{1},Y\right)-FZ_{\tau}\left(q_{2},e_{2},Y\right)\right| & \leq & \left|G_{1}\left(q_{1}\right)-G_{1}\left(q_{2}\right)\right|\nonumber \\
		&  & +\frac{1}{\tau}\left|G_{2}^{\prime}\left(e_{1}\right)q_{1}-G_{2}^{\prime}\left(e_{2}\right)q_{2}\right|\nonumber \\
		&  & +\frac{1}{\tau}\left|Y\left[G_{2}^{\prime}\left(e_{1}\right)-G_{2}^{\prime}\left(e_{2}\right)\right]\right|\label{inequality} \\
		&  & +\left|G_{2}^{\prime}\left(e_{1}\right)\left(e_{1}-q_{1}\right)-G_{2}^{\prime}\left(e_{2}\right)\left(e_{2}-q_{2}\right)\right.\nonumber\\& &
		\left.-\left(G_{2}\left(e_{1}\right)-G_{2}\left(e_{2}\right)\right)\right|,\nonumber
	\end{eqnarray}
	Case 2. Suppose $q_{1}\geq Y\geq q_{2}$. Since $G_{1}\left(.\right)$
	is increasing, 
	\begin{eqnarray*}
		FZ_{\tau}\left(q_{1},e_{1},Y\right)-FZ_{\tau}\left(q_{2},e_{2},Y\right) & = & G_{1}\left(q_{1}\right)-G_{1}\left(Y\right)\\
		&  & -\tau\left(G_{1}\left(q_{1}\right)-G_{1}\left(q_{2}\right)\right)\\
		&  & +\frac{1}{\tau}\left[G_{2}^{\prime}\left(e_{1}\right)q_{1}-G_{2}^{\prime}\left(e_{1}\right)Y\right]\\
		&  & +\left[G_{2}^{\prime}\left(e_{1}\right)\left(e_{1}-q_{1}\right)-G_{2}^{\prime}\left(e_{2}\right)\left(e_{2}-q_{2}\right)\right.\\& &
		\left.-\left(G_{2}\left(e_{1}\right)-G_{2}\left(e_{2}\right)\right)\right].
	\end{eqnarray*}
	By $G_{2}^{\prime}\left(e_{1}\right)\geq0$, $G_{2}^{\prime}\left(e_{1}\right)Y\geq G_{2}^{\prime}\left(e_{1}\right)q_{2}$
	and $G_{2}^{\prime}\left(e_{1}\right)q_{1}-G_{2}^{\prime}\left(e_{1}\right)Y\leq G_{2}^{\prime}\left(e_{1}\right)q_{1}-G_{2}^{\prime}\left(e_{1}\right)q_{2}$.
	Again since $G_{1}\left(q_{1}\right)$, $G_{1}\left(Y\right)\geq G_{1}\left(q_{2}\right)$
	and $G_{1}\left(q_{1}\right)-G_{1}\left(Y\right)\leq G_{1}\left(q_{1}\right)-G_{1}\left(q_{2}\right)$,
	\begin{eqnarray}
		\left|FZ_{\tau}\left(q_{1},e_{1},Y\right)-FZ_{\tau}\left(q_{2},e_{2},Y\right)\right| & \leq & \left|G_{1}\left(q_{1}\right)-G_{1}\left(q_{2}\right)\right|\nonumber \\
		&  & +\frac{1}{\tau}\left|G_{2}^{\prime}\left(e_{1}\right)q_{1}-G_{2}^{\prime}\left(e_{2}\right)q_{2}\right|\label{inequality1} \\
		&  & +\left|G_{2}^{\prime}\left(e_{1}\right)\left(e_{1}-q_{1}\right)-G_{2}^{\prime}\left(e_{2}\right)\left(e_{2}-q_{2}\right)\right.\nonumber\\& &
		\left.-\left(G_{2}\left(e_{1}\right)-G_{2}\left(e_{2}\right)\right)\right|.\nonumber
	\end{eqnarray}
	Case 3. Suppose $Y>q_{1}\geq q_{2}$. Since $G_{1}\left(.\right)$ is increasing, 
	\begin{eqnarray*}
		FZ_{\tau}\left(q_{1},e_{1},Y\right)-FZ_{\tau}\left(q_{2},e_{2},Y\right) & = & -\tau\left(G_{1}\left(q_{1}\right)-G_{1}\left(q_{2}\right)\right)\\
		&  & +\left[G_{2}^{\prime}\left(e_{1}\right)\left(e_{1}-q_{1}\right)-G_{2}^{\prime}\left(e_{2}\right)\left(e_{2}-q_{2}\right)\right.\\& &
		\left.-\left(G_{2}\left(e_{1}\right)-G_{2}\left(e_{2}\right)\right)\right].
	\end{eqnarray*}
	And 
	\begin{eqnarray}
		\left|FZ_{\tau}\left(q_{1},e_{1},Y\right)-FZ_{\tau}\left(q_{2},e_{2},Y\right)\right| & \leq & \tau\left|G_{1}\left(q_{1}\right)-G_{1}\left(q_{2}\right)\right| \nonumber\\
		&  & +\left|G_{2}^{\prime}\left(e_{1}\right)\left(e_{1}-q_{1}\right)-G_{2}^{\prime}\left(e_{2}\right)\left(e_{2}-q_{2}\right)\right.\label{inequality2}\\& &
		\left.-\left(G_{2}\left(e_{1}\right)-G_{2}\left(e_{2}\right)\right)\right|.\nonumber
	\end{eqnarray}
	For case 1, if Assumptions 2.4 and 2.5 hold, given $\left(Y,W\right)$
	and $\tau\in\left(0,1\right)$, we can have the following results
	for the terms in the right hand side of (\ref{inequality}):
	\[
	\left|G_{1}\left(q_{1}\right)-G_{1}\left(q_{2}\right)\right|\leq A_{1,\tau}\left(V\right)\left\Vert \boldsymbol{\theta}_{1}^{1}-\boldsymbol{\theta}_{1}^{2}\right\Vert \leq A_{1,\tau}\left(V\right)\left\Vert \boldsymbol{\theta}^{1}-\boldsymbol{\theta}^{2}\right\Vert ,
	\]
	\[
	\frac{1}{\tau}\left|G_{2}^{\prime}\left(e_{1}\right)q_{1}-G_{2}^{\prime}\left(e_{2}\right)q_{2}\right| \leq \frac{1}{\tau}\left[A_{2,\tau}\left(V\right)+A_{3,\tau}\left(V\right)\right]\left\Vert \boldsymbol{\theta}^{1}-\boldsymbol{\theta}^{2}\right\Vert ,
	\]
	\[
	\frac{1}{\tau}\left|Y\left[G_{2}^{\prime}\left(e_{1}\right)-G_{2}^{\prime}\left(e_{2}\right)\right]\right| \leq A_{4,\tau}\left(V\right)\left\Vert \boldsymbol{\theta}_{2}^{1}-\boldsymbol{\theta}_{2}^{2}\right\Vert \leq A_{4,\tau}\left(V\right)\left\Vert \boldsymbol{\theta}^{1}-\boldsymbol{\theta}^{2}\right\Vert ,
	\]
	\[
	\left|G_{2}^{\prime}\left(e_{1}\right)\left(e_{1}-q_{1}\right)-G_{2}^{\prime}\left(e_{2}\right)\left(e_{2}-q_{2}\right)-\left(G_{2}\left(e_{1}\right)-G_{2}\left(e_{2}\right)\right)\right|\leq 
	\left[A_{2,\tau}\left(V\right)+A_{5,\tau}\left(V\right)\right]\left\Vert \boldsymbol{\theta}^{1}-\boldsymbol{\theta}^{2}\right\Vert. \]
	Summing the above inequalities together, we can conclude that given
	$\left(Y,W\right)$, $FZ_{\tau}\left(q\left(\boldsymbol{\theta}_{1}\right),e_{1}\left(\boldsymbol{\theta}_{2}\right),Y\right)$ is locally Lipschitz continuous in $\boldsymbol{\theta}$
	around $\boldsymbol{\theta}_{\tau}$. Using the above results for case 1, we can also prove that if Assumptions
	2.4 and 2.5 hold, $FZ_{\tau}\left(q\left(\boldsymbol{\theta}_{1}\right),e_{1}\left(\boldsymbol{\theta}_{2}\right),Y\right)$
	is also locally Lipschitz continuous in $\boldsymbol{\theta}$ around $\boldsymbol{\theta}_{\tau}$
	for cases 2 and 3.
\end{proof}

\subsection{Global Lipschitz continuity}
Recall that
\begin{eqnarray*}
	FZ_{\tau}^{sp}\left(q\left(\boldsymbol{\theta}_{1}\right),e\left(\boldsymbol{\theta}_{2}\right),y\right) & = & \frac{\exp\left(e\left(\boldsymbol{\theta}_{2}\right)\right)}{1+\exp\left(e\left(\boldsymbol{\theta}_{2}\right)\right)}\left[e\left(\boldsymbol{\theta}_{2}\right)+\frac{1}{\tau}\max\left(q\left(\boldsymbol{\theta}_{1}\right)-y,0\right)-q\left(\boldsymbol{\theta}_{1}\right)\right]\\
	&  & +\ln\left(1+\exp\left(y\right)\right)-\ln\left(1+\exp\left(e\left(\boldsymbol{\theta}_{2}\right)\right)\right),
\end{eqnarray*}
and \[h\left(\tau,\boldsymbol{\theta}\right):=\tau\left[ FZ_{\tau}^{sp}\left(q\left(\boldsymbol{\theta}_{1}\right),e\left(\boldsymbol{\theta}_{2}\right),Y\right)-\ln\left(1+\exp\left(Y\right)\right)\right].\] Let $\mathcal{T}:=\left(0,1\right)$ and
\begin{eqnarray*}
	B_{1}\left(V\right) & := & \sup_{\boldsymbol{\theta}_{1}\in\boldsymbol{\Theta}_{1}}\left|q\left(\boldsymbol{\theta}_{1}\right)\right|,\\
	B_{2}\left(V\right) & := & \sup_{\boldsymbol{\theta}\in\boldsymbol{\Theta}}\left\Vert q\left(\boldsymbol{\theta}_{1}\right)\nabla_{\boldsymbol{\theta}_{2}}e\left(\boldsymbol{\theta}_{2}\right)\right\Vert,\\
	B_{3}\left(V\right) & := & \sup_{\boldsymbol{\theta}\in\boldsymbol{\Theta}_{1}}\left\Vert\nabla_{\boldsymbol{\theta}_{1}}q\left(\boldsymbol{\theta}_{1}\right)\right\Vert,\\
	B_{4}\left(V\right) & := & \sup_{\boldsymbol{\theta}_{2}\in\boldsymbol{\Theta}_{2}}\left\Vert Y\nabla_{\boldsymbol{\theta}_{2}}e\left(\boldsymbol{\theta}_{2}\right)\right\Vert,\\
	B_{5}\left(V\right) & := & \sup_{\boldsymbol{\theta}\in\boldsymbol{\Theta}}\left\Vert\left(e\left(\boldsymbol{\theta}_{2}\right)-q\left(\boldsymbol{\theta}_{1}\right)\right)\nabla_{\boldsymbol{\theta}_{2}}e\left(\boldsymbol{\theta}_{2}\right)\right\Vert.
\end{eqnarray*}
Let $\boldsymbol{\theta}^{1}:=\left(\boldsymbol{\theta}_{1}^{1},\boldsymbol{\theta}_{2}^{1}\right)$
and $\boldsymbol{\theta}^{2}:=\left(\boldsymbol{\theta}_{1}^{2},\boldsymbol{\theta}_{2}^{2}\right)$ be two vectors of parameter values.
\begin{lemma}[Global Lipschitz continuity]
	If Assumption 2.4 holds, for
	all $\left(\tau_{1},\tau_{2}\right)\in\mathcal{T}$, $\left(\boldsymbol{\theta}_{1}^{1},\boldsymbol{\theta}_{1}^{2}\right)\in\boldsymbol{\Theta}_{1}$
	and $\left(\boldsymbol{\theta}_{2}^{1},\boldsymbol{\theta}_{2}^{2}\right)\in\boldsymbol{\Theta}_{2}$,
	given $V=\left(Y,W\right)$, there exists a constant $B_{\tau}^{*}\left(V\right)$
	such that
	\[
	\left|h\left(\tau_{1},\boldsymbol{\theta}^{1}\right)-h\left(\tau_{2},\boldsymbol{\theta}^{2}\right)\right|\leq B_{\tau}^{*}\left(V\right)\left(\left\Vert \tau_{1}-\tau_{2}\right\Vert +\left\Vert \boldsymbol{\theta}^{1}-\boldsymbol{\theta}^{2}\right\Vert \right).
	\]
\end{lemma}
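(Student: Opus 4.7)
The plan is to rewrite $h(\tau,\boldsymbol{\theta})$ in a form that isolates the single source of non-smoothness, to bound the increments in $\tau$, $\boldsymbol{\theta}_{1}$ and $\boldsymbol{\theta}_{2}$ separately via the mean value inequality, and to collect the partial Lipschitz constants into a single $B_{\tau}^{*}(V)$. To begin, I would use $\tau LQ_{\tau}(q,Y)=\max(q-Y,0)-\tau q$ to rewrite
\[
h(\tau,\boldsymbol{\theta})=\tau G_{sp}'(e(\boldsymbol{\theta}_{2}))\bigl[e(\boldsymbol{\theta}_{2})-q(\boldsymbol{\theta}_{1})\bigr]+G_{sp}'(e(\boldsymbol{\theta}_{2}))\max(q(\boldsymbol{\theta}_{1})-Y,0)-\tau G_{sp}(e(\boldsymbol{\theta}_{2})),
\]
with $G_{sp}(t)=\ln(1+\exp(t))$, and decompose
\[
h(\tau_{1},\boldsymbol{\theta}^{1})-h(\tau_{2},\boldsymbol{\theta}^{2})=\bigl[h(\tau_{1},\boldsymbol{\theta}^{1})-h(\tau_{1},\boldsymbol{\theta}^{2})\bigr]+\bigl[h(\tau_{1},\boldsymbol{\theta}^{2})-h(\tau_{2},\boldsymbol{\theta}^{2})\bigr].
\]

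For the second bracket, $h$ is smooth in $\tau$ with $\partial_{\tau}h=G_{sp}'(e)(e-q)-G_{sp}(e)$. The key analytical step here is the uniform bound $|G_{sp}'(e)e-G_{sp}(e)|\leq\ln 2$ for every $e\in\mathbb{R}$, which follows because $\tfrac{d}{de}[G_{sp}'(e)e-G_{sp}(e)]=G_{sp}''(e)e$ vanishes only at $e=0$, making $e=0$ the global minimum with value $-\ln 2$, while both tails tend to $0$ as $e\to\pm\infty$. Combined with $G_{sp}'(e)\in(0,1)$, this gives $|\partial_{\tau}h|\leq\ln 2+|q(\boldsymbol{\theta}_{1})|\leq\ln 2+B_{1}(V)$, so the $\tau$-piece is dominated by $(\ln 2+B_{1}(V))\left\Vert\tau_{1}-\tau_{2}\right\Vert$.

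For the $\boldsymbol{\theta}$-piece I would handle $\boldsymbol{\theta}_{1}$ and $\boldsymbol{\theta}_{2}$ separately via a mean-value / Clarke-subgradient argument. Since $\boldsymbol{\theta}_{1}$ enters only through $q$, using $|\max(a,0)-\max(b,0)|\leq|a-b|$ together with Assumption 2.4 gives
\[
\left\Vert\nabla_{\boldsymbol{\theta}_{1}}h\right\Vert=G_{sp}'(e)\,\bigl|1\{q>Y\}-\tau\bigr|\,\left\Vert\nabla_{\boldsymbol{\theta}_{1}}q\right\Vert\leq B_{3}(V),
\]
with the identity interpreted in the Clarke sense at the single kink $q=Y$. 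Differentiating in $\boldsymbol{\theta}_{2}$ and exploiting the cancellation between the $\tau G_{sp}'(e)\nabla_{\boldsymbol{\theta}_{2}}e$ term coming from $\tau G_{sp}'(e)e$ and the $-\tau G_{sp}'(e)\nabla_{\boldsymbol{\theta}_{2}}e$ coming from $-\tau G_{sp}(e)$ produces
\[
\nabla_{\boldsymbol{\theta}_{2}}h=G_{sp}''(e)\bigl[\tau(e-q)+\max(q-Y,0)\bigr]\nabla_{\boldsymbol{\theta}_{2}}e,
\]
so invoking $G_{sp}''(e)\leq 1/4$, $\tau\leq 1$, and $\max(q-Y,0)\leq|q|+|Y|$ yields $\left\Vert\nabla_{\boldsymbol{\theta}_{2}}h\right\Vert\leq\tfrac{1}{4}[B_{5}(V)+B_{2}(V)+B_{4}(V)]$.

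Assembling the three partial bounds via the mean-value inequality gives the claim with $B_{\tau}^{*}(V)=\ln 2+\sum_{k=1}^{5}B_{k}(V)$ up to absorbable constants. I expect the main technical obstacle to be the cancellation-driven simplification of $\nabla_{\boldsymbol{\theta}_{2}}h$: without the exact cancellation of the two $\pm\tau G_{sp}'(e)\nabla_{\boldsymbol{\theta}_{2}}e$ terms the bound would pick up $\sup\left\Vert e\nabla_{\boldsymbol{\theta}_{2}}e\right\Vert$, which is not among the $B_{k}$. Closely related is the uniform bound $|G_{sp}'(e)e-G_{sp}(e)|\leq\ln 2$ used for the $\tau$-piece; both reflect the fact that the softplus specification lets individually unbounded quantities cancel in exactly the right places, so that only the pre-specified $B_{k}(V)$ appear in the final constant.
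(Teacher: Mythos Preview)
Your argument is correct, and it takes a genuinely different route from the paper's own proof. The paper writes out $h(\tau_{1},\boldsymbol{\theta}^{1})-h(\tau_{2},\boldsymbol{\theta}^{2})$ in one piece and then performs a case analysis on the relative ordering of $q_{1}=q(\boldsymbol{\theta}_{1}^{1})$, $q_{2}=q(\boldsymbol{\theta}_{1}^{2})$ and $Y$ (three cases, assuming without loss of generality $q_{1}\geq q_{2}$) to deal with the kink in $\max(q-Y,0)$; within each case it applies the mean value theorem in $(q,e,\tau)$ jointly. Your decomposition into a $\tau$-piece and a $\boldsymbol{\theta}$-piece, together with the $1$-Lipschitz property of $t\mapsto\max(t,0)$ (equivalently, the Clarke-subgradient bound), eliminates the case split entirely and makes the cancellation $\tau G_{sp}'(e)\nabla_{\boldsymbol{\theta}_{2}}e-\tau G_{sp}'(e)\nabla_{\boldsymbol{\theta}_{2}}e=0$ explicit. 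Both proofs hinge on the same uniform inequality $|G_{sp}'(e)e-G_{sp}(e)|\leq\ln 2$ and land on the same form of the constant, namely $\ln 2$ plus a combination of $B_{1}(V),\ldots,B_{5}(V)$. Your route is shorter and exhibits more transparently why exactly these five quantities suffice; the paper's case-by-case treatment is more mechanical but spells out where the kink sits in each configuration.
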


\begin{proof}
	We use similar notations and strategy as in proving Lemma 2. At first
	note that $h\left(\tau,\boldsymbol{\theta}\right)$ can be rewritten as
	\begin{eqnarray*}
		h\left(\tau,\boldsymbol{\theta}\right) & = & \frac{\exp\left(e\left(\boldsymbol{\theta}_{2}\right)\right)}{1+\exp\left(e\left(\boldsymbol{\theta}_{2}\right)\right)}\left\{ 0.5\left[\left|q\left(\boldsymbol{\theta}_{1}\right)-Y\right|+q\left(\boldsymbol{\theta}_{1}\right)-Y\right]+\tau\left(e\left(\boldsymbol{\theta}_{2}\right)-q\left(\boldsymbol{\theta}_{1}\right)\right)\right\} \\
		&  & -\tau\ln\left(1+\exp\left(e\left(\boldsymbol{\theta}_{2}\right)\right)\right).
	\end{eqnarray*}
	Then 
	\begin{eqnarray*}
		h\left(\tau_{1},\boldsymbol{\theta}^{1}\right)-h\left(\tau_{2},\boldsymbol{\theta}^{2}\right) & = & \frac{0.5\exp\left(e_{1}\right)}{1+\exp\left(e_{1}\right)}\left[\left|q_{1}-Y\right|+q_{1}-Y\right]-\frac{0.5\exp\left(e_{2}\right)}{1+\exp\left(e_{2}\right)}\left[\left|q_{2}-Y\right|+q_{2}-Y\right]\\
		&  & +\left.\left[\frac{\tau_{1}\exp\left(e_{1}\right)}{1+\exp\left(e_{1}\right)}\left(e_{1}-q_{1}\right)-\frac{\tau_{2}\exp\left(e_{2}\right)}{1+\exp\left(e_{2}\right)}\left(e_{2}-q_{2}\right)\right.\right.\\
		&  & \left.-\left(\tau_{1}\ln\left(1+\exp\left(e_{1}\right)\right)-\tau_{2}\ln\left(1+\exp\left(e_{2}\right)\right)\right)\right]
	\end{eqnarray*}
	We assume $q_{1}\geq q_{2}$ and $\tau_{1}\geq\tau_{2}$. The results
	for $q_{1}<q_{2}$ and $\tau_{1}<\tau_{2}$ can be proved in a similar
	manner. We separate the proof into three cases.\\
	Cases 1. Suppose $q_{1}\geq q_{2}>Y$. It can be shown that
	\begin{eqnarray*}
		\left|h\left(\tau_{1},\boldsymbol{\theta}^{1}\right)-h\left(\tau_{2},\boldsymbol{\theta}^{2}\right)\right| & \leq & \left|\frac{\exp\left(e_{1}\right)}{1+\exp\left(e_{1}\right)}q_{1}-\frac{\exp\left(e_{2}\right)}{1+\exp\left(e_{2}\right)}q_{2}\right|\\
		&  & +\left|Y\frac{\exp\left(e_{1}\right)}{1+\exp\left(e_{1}\right)}-Y\frac{\exp\left(e_{2}\right)}{1+\exp\left(e_{2}\right)}\right|\\
		&  & +\left|\frac{\tau_{1}\exp\left(e_{1}\right)}{1+\exp\left(e_{1}\right)}\left(e_{1}-q_{1}\right)-\frac{\tau_{2}\exp\left(e_{2}\right)}{1+\exp\left(e_{2}\right)}\left(e_{2}-q_{2}\right)\right.\\
		&  & \left.-\left(\tau_{1}\ln\left(1+\exp\left(e_{1}\right)\right)-\tau_{2}\ln\left(1+\exp\left(e_{2}\right)\right)\right)\right|
	\end{eqnarray*}
	Case 2. Suppose $q_{1}\geq Y\geq q_{2}$. It can be shown that
	\begin{eqnarray*}
		\left|h\left(\tau_{1},\boldsymbol{\theta}^{1}\right)-h\left(\tau_{2},\boldsymbol{\theta}^{2}\right)\right| & \leq & \left|\frac{\exp\left(e_{1}\right)}{1+\exp\left(e_{1}\right)}q_{1}-\frac{\exp\left(e_{2}\right)}{1+\exp\left(e_{2}\right)}q_{2}\right|\\
		&  & +\left|\frac{\tau_{1}\exp\left(e_{1}\right)}{1+\exp\left(e_{1}\right)}\left(e_{1}-q_{1}\right)-\frac{\tau_{2}\exp\left(e_{2}\right)}{1+\exp\left(e_{2}\right)}\left(e_{2}-q_{2}\right)\right.\\
		&  & \left.-\left(\tau_{1}\ln\left(1+\exp\left(e_{1}\right)\right)-\tau_{2}\ln\left(1+\exp\left(e_{2}\right)\right)\right)\right|
	\end{eqnarray*}
	Case 3. Suppose $Y>q_{1}\geq q_{2}$. It can be shown that
	\begin{eqnarray*}
		\left|h\left(\tau_{1},\boldsymbol{\theta}^{1}\right)-h\left(\tau_{2},\boldsymbol{\theta}^{2}\right)\right| & \leq & +\left|\frac{\tau_{1}\exp\left(e_{1}\right)}{1+\exp\left(e_{1}\right)}\left(e_{1}-q_{1}\right)-\frac{\tau_{2}\exp\left(e_{2}\right)}{1+\exp\left(e_{2}\right)}\left(e_{2}-q_{2}\right)\right.\\
		&  & \left.-\left(\tau_{1}\ln\left(1+\exp\left(e_{1}\right)\right)-\tau_{2}\ln\left(1+\exp\left(e_{2}\right)\right)\right)\right|
	\end{eqnarray*}
	Let $\left(\bar{\tau},\bar{q},\bar{e}\right)$ be some middle point
	between $\left(\tau_{1},q_{1},e_{1}\right)$ and $\left(\tau_{2},q_{2},e_{2}\right)$.
	Using the mean value theorem, it is straightforward to show that
	\begin{eqnarray}
		\left|\frac{\exp\left(e_{1}\right)}{1+\exp\left(e_{1}\right)}q_{1}-\frac{\exp\left(e_{2}\right)}{1+\exp\left(e_{2}\right)}q_{2}\right| & \leq & \left|\frac{\exp\left(\bar{e}\right)}{1+\exp\left(\bar{e}\right)}\right|\left|q_{1}-q_{2}\right|\nonumber \\
		&  & +\left|\frac{\exp\left(\bar{e}\right)}{1+\exp\left(\bar{e}\right)}\left(1-\frac{\exp\left(\bar{e}\right)}{1+\exp\left(\bar{e}\right)}\right)\right|\left|\bar{q}\left(e_{1}-e_{2}\right)\right|\nonumber \\
		& \leq & \left|q_{1}-q_{2}\right|+\left|\bar{q}\left(e_{1}-e_{2}\right)\right|,\label{inequality5}
	\end{eqnarray}
	\begin{eqnarray}
		\left|Y\frac{\exp\left(e_{1}\right)}{1+\exp\left(e_{1}\right)}-Y\frac{\exp\left(e_{2}\right)}{1+\exp\left(e_{2}\right)}\right| & \leq & \left|\frac{\exp\left(\bar{e}\right)}{1+\exp\left(\bar{e}\right)}\left(1-\frac{\exp\left(\bar{e}\right)}{1+\exp\left(\bar{e}\right)}\right)\right|\left|Y\left(e_{1}-e_{2}\right)\right|\nonumber \\
		& \leq & \left|Y\left(e_{1}-e_{2}\right)\right|,\label{inequality6}
	\end{eqnarray}
	and
	\begin{eqnarray}
		\left|\frac{\tau_{1}\exp\left(e_{1}\right)}{1+\exp\left(e_{1}\right)}\left(e_{1}-q_{1}\right)-\frac{\tau_{2}\exp\left(e_{2}\right)}{1+\exp\left(e_{2}\right)}\left(e_{2}-q_{2}\right)\right.\nonumber\\
		\left.-\left(\tau_{1}\ln\left(1+\exp\left(e_{1}\right)\right)-\tau_{2}\ln\left(1+\exp\left(e_{2}\right)\right)\right)\right| & \leq & \left|\frac{\exp\left(\bar{e}\right)}{1+\exp\left(\bar{e}\right)}\bar{e}-\ln\left(1+\exp\left(\bar{e}\right)\right)\right|\left|\tau_{1}-\tau_{2}\right|\nonumber\\
		&  & +\left|\frac{\exp\left(\bar{e}\right)\bar{q}}{1+\exp\left(\bar{e}\right)}\right|\left|\tau_{1}-\tau_{2}\right|+\left|\frac{\bar{\tau}\exp\left(\bar{e}\right)}{1+\exp\left(\bar{e}\right)}\right|\left|q_{1}-q_{2}\right|\nonumber\\
		&  & +\left|\frac{\bar{\tau}\exp\left(\bar{e}\right)}{1+\exp\left(\bar{e}\right)}\left(1-\frac{\exp\left(\bar{e}\right)}{1+\exp\left(\bar{e}\right)}\right)\right|\nonumber \\
		&  & \times\left|\left(\bar{e}-\bar{q}\right)\left(e_{1}-e_{2}\right)\right|\nonumber\\
		& \leq & \left(\ln2+\left|\bar{q}\right|\right)\left|\tau_{1}-\tau_{2}\right|+\left|q_{1}-q_{2}\right|\label{inequality7}\\
		&  & +\left|\left(\bar{e}-\bar{q}\right)\left(e_{1}-e_{2}\right)\right|\nonumber
	\end{eqnarray}
	To see why (\ref{inequality7}) holds, let $\varpi\left(x\right)=\exp\left(x\right)\left(1+\exp\left(x\right)\right)^{-1}x-\ln\left(1+\exp\left(x\right)\right)$.
	It can be shown that $\varpi\left(x\right)$ is monotonically decreasing
	for $x<0$ and monotonically increasing for $x\geq0$ and $\varpi\left(0\right)=-\ln2$,
	$\lim_{x\rightarrow\infty}\varpi\left(x\right)=0$, and $\lim_{x\rightarrow-\infty}\varpi\left(x\right)=0$.
	Therefore we can conclude that 
	\[
	\left|\varpi\left(x\right)\right|\leq\ln2
	\]
	for all $x\in\mathbb{R}$. If Assumption 2.4 holds, it can be shown for (\ref{inequality5}), 
	\begin{eqnarray*}
		\left|q_{1}-q_{2}\right|+\left|\bar{q}\left(e_{1}-e_{2}\right)\right| & \leq & \left|\nabla_{\boldsymbol{\theta}_{1}}q\left(\bar{\boldsymbol{\theta}}_{1}\right)\right|\left\Vert\boldsymbol{\theta}_{1}^{1}-\boldsymbol{\theta}_{1}^{2}\right\Vert+\left|q\left(\boldsymbol{\theta}_{1}\right)\nabla_{\boldsymbol{\theta}_{2}}e\left(\boldsymbol{\theta}_{2}\right)\right|\left\Vert\boldsymbol{\theta}_{2}^{1}-\boldsymbol{\theta}_{2}^{2}\right\Vert\\
		& \leq & B_{3}\left(V\right)\left\Vert\boldsymbol{\theta}_{1}^{1}-\boldsymbol{\theta}_{1}^{2}\right\Vert+B_{2}\left(V\right)\left\Vert\boldsymbol{\theta}_{2}^{1}-\boldsymbol{\theta}_{2}^{2}\right\Vert,
	\end{eqnarray*}
	for (\ref{inequality6}),
	\begin{eqnarray*}
		\left|Y\left(e_{1}-e_{2}\right)\right| & \leq & B_{4}\left(V\right)\left\Vert\boldsymbol{\theta}_{2}^{1}-\boldsymbol{\theta}_{2}^{2}\right\Vert,
	\end{eqnarray*}
	and for (\ref{inequality7}), 
	\begin{eqnarray*}
		\left(\ln2+\left|\bar{q}\right|\right)\left|\tau_{1}-\tau_{2}\right|+\left|q_{1}-q_{2}\right|+\left|\left(\bar{e}-\bar{q}\right)\left(e_{1}-e_{2}\right)\right| & \leq & \left(\ln2+B_{1}\left(V\right)\right)\left|\tau_{1}-\tau_{2}\right|+B_{3}\left(V\right)\left\Vert\boldsymbol{\theta}_{1}^{1}-\boldsymbol{\theta}_{1}^{2}\right\Vert\\
		&  & +B_{5}\left(V\right)\left\Vert\boldsymbol{\theta}_{2}^{1}-\boldsymbol{\theta}_{2}^{2}\right\Vert.
	\end{eqnarray*}
	Summing the above inequalities together, given $V=\left(Y,W\right)$,
	we can conclude that \[\left|h\left(\tau_{1},\boldsymbol{\theta}^{1}\right)-h\left(\tau_{2},\boldsymbol{\theta}^{2}\right)\right|\leq B^{*}\left(V\right)\left(\left\Vert \tau_{1}-\tau_{2}\right\Vert +\left\Vert \boldsymbol{\theta}^{1}-\boldsymbol{\theta}^{2}\right\Vert \right),\]
	where $B^{*}\left(V\right)$ is a constant and is determined by a
	sum of $\ln2$ and constants $B_{i}\left(V\right)$, $i=1,\ldots,5$. 
\end{proof}

\subsection{Constructing the simultaneous confidence bands (scb) with the bootstrap}
We use the models of linear in parameters of (6) and (7) 
as an example to illustrate the bootstrap procedures for constructing the simultaneous confidence bands. Let $\boldsymbol{\theta}_{\tau}=\left(\alpha_{1,\tau},\boldsymbol{\beta}_{1,\tau}^\top,\alpha_{2,\tau},\boldsymbol{\beta}_{2,\tau}^\top\right)^\top$ be a vector for parameters at the $\tau$-quantile. The bootstrap procedures are summarized as follows.
{\begin{framed}
\begin{enumerate}
	\item Draw a bootstrap sample of size $n$: $W_{1}^{*},\ldots,W_{n}^{*}$
	with replacement, where
	\[
	W_{i}^{*}=\left(Y_{i}^{*},D_{i}^{*},X_{i}^{*\top},Z_{i}^{*}\right)^\top,
	\]
	$i=1,\ldots,n$, is a vector for the $i$th random draw sample. 
	\item Re-estimate the weight $\bar{K}$ with the bootstrap sample.
	Let $\hat{\tilde{K}}_{i}^{*}$, $i=1,\ldots,n$ denote the bootstrap
	estimated truncated weight evaluated with $\left(Y_{i}^{*},D_{i}^{*},X_{i}^{*}\right)$. 
	\item With the bootstrap estimated truncated weight $\hat{\bar{K}}_{i}^{*}$,
	obtain the bootstrap estimated parameters \[\hat{\boldsymbol{\theta}}_{\tau,b}^{*}=\left(\hat{\alpha}_{1,\tau,b}^{*},\hat{\boldsymbol{\beta}}_{1,\tau,b}^{*\top},\hat{\alpha}_{2,\tau,b}^{*},\hat{\boldsymbol{\beta}}_{2,\tau,b}^{*\top}\right)^\top\]
	for $\tau\in\left(0,1\right)$.
	\item Repeat procedures 1 to 3 $B$ times to obtain $B$ bootstrap estimated
	parameters $\hat{\boldsymbol{\theta}}_{\tau,1}^{*},\ldots,\hat{\boldsymbol{\theta}}_{\tau,B}^{*}$.
\end{enumerate}
\end{framed}
}
With the bootstrap estimated parameters, we use $\hat{\alpha}_{2,\tau}$ as an example to illustrate the procedures to construct the scb as follows.
{\begin{framed}
\begin{enumerate}
	\item Calculate the rescaled bootstrap quantile spread \citep{CFM_2013}
	\begin{equation}
		\hat{s}_{\hat{\alpha}_{2,\tau}}^{*}=\frac{|Q_{\hat{\alpha}_{2,\tau,b}^{*}}(g_1)-Q_{\hat{\alpha}_{2,\tau,b}^{*}}(g_2)|}{|\Phi^{-1}(g_1)-\Phi^{-1}(g_2)|},
	\end{equation} as an estimate of the bootstrap standard deviation of $\hat{\alpha}_{2,\tau}$, with the $B$
	bootstrap estimates $\left(\hat{\alpha}_{2,\tau,1}^{*},\ldots,\hat{\alpha}_{2,\tau,B}^{*}\right)$, where  $Q_{\hat{\alpha}_{2,\tau,b}^{*}}(g)$ is the $g$th quantile of $\sqrt{n}\left(\hat{\alpha}_{2,\tau,b}^{*}-\hat{\alpha}_{2,\tau}\right)$ conditional on the data, and $\Phi^{-1}(g)$ is the $g$th quantile of $N(0,1)$, and $0<g_2<g_1<1$.
	\item Calculate the bootstrap $t$ statistic of $\hat{\alpha}_{2,\tau}$:
	\[\hat{t}_{\hat{\alpha}_{2,\tau},b}^{*}=\frac{\sqrt{n}\left(\hat{\alpha}_{2,\tau,b}^{*}-\hat{\alpha}_{2,\tau}\right)}{\hat{s}^{*}_{\hat{\alpha}_{2,\tau}}},\] $b=1,\ldots,B$. 
	\item For each $b=1,\ldots,B$, calculate the maximal absolute bootstrap
	$t$ statistic for $\tau\in\left(0,1\right)$: 
	\[\hat{t}_{\hat{\alpha}_{2},b}^{*}=\sup_{\tau\in\left(0,1\right)}\left|\hat{t}_{\hat{\alpha}_{2,\tau},b}\right|.\]
	\item Let $\hat{t}_{\hat{\alpha}_{2}}^{*(1-g)}$ be the $\left(1-g\right)$th
	sample quantile of $\hat{t}_{\hat{\alpha}_{2},1}^{*},\ldots,\hat{t}_{\hat{\alpha}_{2},B}^{*}$.
	The $1-g$ scb of $\hat{\alpha}_{2,\tau}$ is 
	\begin{equation}
		\left[\hat{\alpha}_{2,\tau}-\hat{t}_{\hat{\alpha}_{2}}^{*(1-g)}\frac{\hat{s}_{\hat{\alpha}_{2,\tau}}^{*}}{\sqrt{n}},\hat{\alpha}_{2,\tau}+\hat{t}_{\hat{\alpha}_{2}}^{*(1-g)}\frac{\hat{s}_{\hat{\alpha}_{2,\tau}}^{*}}{\sqrt{n}}\right].\label{scb}
	\end{equation}
    \item As for constructing the non-standardized version of scb \citep{HL_2021}, we can just let $\hat{s}_{\hat{\alpha}_{2,\tau}}^{*}=1$ in $\hat{t}_{\hat{\alpha}_{2,\tau},b}^{*}$ and (\ref{scb}) and following the same procedures. The $1-g$ scb of $\hat{\alpha}_{2,\tau}$ now becomes
    \begin{equation}
    	\left[\hat{\alpha}_{2,\tau}-\frac{\hat{d}_{\hat{\alpha}_{2}}^{*(1-g)}}{\sqrt{n}},\hat{\alpha}_{2,\tau}+\frac{\hat{d}_{\hat{\alpha}_{2}}^{*(1-g)}}{\sqrt{n}}\right],\label{scb1}
    \end{equation}where $\hat{d}_{\hat{\alpha}_{2}}^{*(1-g)}$ is the ($1-g$)th sample quantile of 
    $\hat{d}_{\hat{\alpha}_{2},b}^{*}=\sup_{\tau\in\left(0,1\right)}\left|\sqrt{n}\left(\hat{\alpha}_{2,\tau,b}^{*}-\hat{\alpha}_{2,\tau}\right)\right|$, $b=1,\ldots,B$.
\end{enumerate}
\end{framed}
}

\subsection{Empirical results when $D$ is assumed to be exogenous}
Figure \ref{figure1} shows the estimation results of the CTATE and QTE for the case of no adjustment for endogeneity, which is equivalent to setting the estimated weight $\tilde{K}_{i}=1$. Again, the CTATE estimates increase monotonically with the quantile levels, while the QTE estimates demonstrate some fluctuations. Compared with the case of adjustment for endogeneity, the estimates shown here overall have higher values. The pcb's evaluated with the analytic standard errors in Appendix A.1 and bootstrap also look very similar. For both adult men and women, the scb's reveal that the CTATE estimates are statistically significantly positive over quantile levels ranging from 0.25 to 0.9. However, depending on the scb, the results for positivity of the QTE estimates are somehow different, especially for adult men. For adult women, FOSD is supported by the evidence shown here. As for adult men, the evidence for SOSD of participating in the JTPA is much stronger than that for FOSD. Figure \ref{figure2} shows the estimation results of the IQATE, the corresponding pcb's (implemented with the analytic standard errors and the bootstrap) and the LAVG-QTE for the earnings of adult men and women at different quantile level intervals. For both cases of adult men and women, the IQATE estimates are statistically significantly positive over the eight quantile level intervals. Comparing the IQATE estimates and LAVG-QTE, they are not very different when the quantile level is above 0.5, but at quantile levels lower than 0.5, they rather show some mild differences.

\subsection{The Lorenz curve effect}
Conditional tail expectations of potential outcomes $CTE_{Y_{d}|X}\left(\tau\right)$ are also relevant for deriving the Lorenz curves of the corresponding distributions. Suppose that the potential outcome $Y_{d}$ has a non-zero mean conditional on $X$. The Lorenz curve, an frequently used measure for the degree of income or wealth inequality, is defined as a ratio of the partial mean to the overall mean of $Y_{d}$ :\[
LO\left(\tau,d\right) := \frac{\tau CTE_{Y_{d}|X}(\tau)}{E\left[Y_{d}|X\right]} = \frac{\tau CTE_{Y_{d}|X}(\tau)}{\lim_{\tau\rightarrow 1} \tau CTE_{Y_{d}|X}(\tau)}.
\] The Lorenz effect \citep{CFM_2013} is then defined as \[LO\left(\tau,1\right)-LO\left(\tau,0\right),\]which is useful for measuring how the degree of inequality of an outcome of interest in a population changes across two treatment regimes. The Lorenz effect for compliers can be calculated empirically using the estimated parameters of the following model:
\begin{equation}
	CTE_{Y|D,X,T=c}(\tau )=\alpha_{2,\tau}D+X^\top\boldsymbol{\beta}_{2,\tau}
	\nonumber
\end{equation}using the proposed method in this paper.

\subsection{Discussion of assumptions in the context of our empirical application}
In this section, we discuss the regularity assumptions made for the asymptotic results of our study in the context of the empirical application in Section 5. Recall that in our empirical application, the outcome variable $Y$ is the individual's 30-month earnings. The treatment variable $D$ is a binary indicator of enrollment in JTPA services. The instrumental variable $Z$ is a binary variable for being offered such services. The exogenous covariates $X$ are all discrete variables that take finite values. Throughout our empirical study, we estimate the parameters in specifications (6) and (7) to estimate the QTE and CTATE for a group of compliers. 
We set $G_{1}\left(t\right)=0$, $G_{2}\left(t\right)=\eta\left(t\right)=\ln\left(1+\exp\left(t\right)\right)$ in (14) and thus use the function $FZ_{\tau}^{sp}\left(q,e,y\right)$, which is defined in (16), as the specification for the FZ loss in the estimation.   

Assumption 2.1 for i.i.d. observations is standard in cross-sectional empirical studies. This assumption is also maintained in \citet{AAI_2002} and \citet{CH_2008} for evaluating the JTPA programs using the same dataset. For Assumption 2.2, the earnings variable is generally viewed as a continuous random variable. Here, we assume that it remains continuously distributed given $W=(D,X)$ and conditional on the group of compliers. Namely, Assumption 2.2 is assumed to hold in the data. The compactness of the parameter space in Assumption 2.3 is often required for proving consistency as the estimated loss function is non-smooth and non-convex. This assumption holds automatically in practical implementation of our algorithm in Section 2.4, because the domains in the optimization problems of that algorithm are always constrained by the precision of real numbers admitted in the numerical computation. Assumption 2.4 holds immediately under the linear specifications (6) and (7). Assumption 2.5, which guarantees the uniqueness of the solution $(Q_{Y}(\tau),CTE_{Y}(\tau))$ to the minimization problem (13), is satisfied in our empirical application using the FZ loss specification $FZ_{\tau}^{sp}\left(q,e,y\right)$. Assumption 2.6 is a rank condition for the regressors in (6) and (7). It holds when $Var(W|T=c)$ is nonsingular. Assumption 2.7 is a dominance condition, which ensures that the population objective function in (27) is continuous in $(\boldsymbol{\theta}_{1},\boldsymbol{\theta}_{2})$. As the parameter space $\boldsymbol{\Theta}$ is compact and $W$ takes only finite values, this assumption also holds for the FZ loss $FZ_{\tau}^{sp}\left(q,e,y\right)$. Assumption 2.8 concerns uniform consistency of the estimated weight $\bar{K}(V;\hat{\boldsymbol{\gamma}},\hat{v})$, which relies on the consistency of the estimators of $\boldsymbol{\gamma}$ and $v$. 

In our empirical application, these exogenous covariates are all discrete variables, and as shown in Section 3, this is allowed in our method. Finally, in the data, there were still individuals who received the JTPA services but did not obtain the assignment. However, as pointed out by \cite{AAI_2002}, the proportion of such violations relative to the entire sample is very low (less than 2\%); 
therefore this has a negligible impact on our estimation.

We next discuss the key assumptions underpinning the main asymptotic results of our proposed method, and compare them with those established by \citet{CH_2006}. We aim to highlight the similarities that are pivotal to the robustness and reliability of the two approaches. One fundamental assumption common to both our method and that of \citet{CH_2006} is the requirement for data to be independently and identically distributed. This assumption is standard in econometric analysis. In both methods, there is an underlying condition that the parameter space should be compact. This is essential for proving the consistency of our estimators and those of \citet{CH_2006}, as the estimating objective functions in both approaches are non-convex. 
Additionally, the absolute continuity of the conditional distribution of $Y$ should hold. Chernozhukov and Hansen's approach also aligns with this requirement, thereby underscoring its importance in econometric analysis. However, for boundedness of the density function of $Y$ conditional on $W$, our proposed method and Chernozhukov and Hansen's approach have different requirements. The former requires that this conditional density should be bounded away from zero, while the latter requires that it should be bounded from above. Finally, both our method and that of \citet{CH_2006} are flexible in the forms of regression equations. This flexibility allows for a more nuanced and adaptable approach to modeling.

In conclusion, while our proposed method introduces innovative elements, it also shares fundamental assumptions with the well-regarded work of \citet{CH_2006}, thereby ensuring a solid theoretical foundation and practical applicability.

\subsection{Comparisons with the integrated-QTE based estimator on simulations and empirical application}
In this section we compare the performance of our proposed method with that of a trimmed version of the integrated-QTE based estimator for estimating the CTATE for compliers. Recall the integrated-QTE based estimator introduced in Section 2.5.
\begin{equation}
	\widehat{\text{IntQ}}^{\star}(\tau) = \frac{1}{\tau}\int_{0}^{\tau}\hat{\alpha}_{1,u}du
	\label{IntQ}
\end{equation} for estimating the CTATE at quantile level $\tau\in (0,1)$, where $\hat{\alpha}_{1,u}$ is an estimate of $\alpha_{1,u}$, the QTE at quantile level $u$ with some method. Let $\text{IntQ}^{\star}(\tau) = \tau^{-1}\int_{0}^{\tau}\alpha_{1,u}du$ denote the CTATE expressed in the integral form of the QTE. We have discussed pros and cons of (\ref{IntQ}) in Section 2.5.

In practice, we use a trimmed version of (\ref{IntQ}) to circumvent issues with estimation at extreme quantiles, such as
\begin{equation}
	\widehat{\text{IntQ}}(\tau)=\frac{1}{\tau}\int_{\underline{\tau}}^{\tau}\hat{\alpha}_{1,u}du,
	\label{trimmed_IntQ}
\end{equation}where $\underline{\tau}$ is a small constant and $0<\underline{\tau}<\tau$. It is easy to see that (\ref{trimmed_IntQ}) will incur a truncation bias for the estimation of the CTATE ($\text{IntQ}^{\star}(\tau)$) at quantile level $\tau$, but one may hope that this bias will not be too large when setting the trimming constant $\underline{\tau}$ to be a very small value. However, as we show in the following simulations, even setting $\underline{\tau}$ as small as 0.005, the resulting bias is still substantial. 

Let $\text{IntQ}(\tau)=\tau^{-1}\int_{\underline{\tau}}^{\tau}\alpha_{1,u}du$ and $\text{\underline{IntQ}}(\tau)=\tau^{-1}\int_{0}^{\underline{\tau}}\alpha_{1,u}du$. Then $\text{IntQ}^{\star}(\tau)=\text{IntQ}(\tau)+\text{\underline{IntQ}}(\tau)$. Thus, for estimating the CTATE, the bias of (\ref{trimmed_IntQ}) can be written as
\begin{equation}
	E\left[\widehat{\text{IntQ}}(\tau)-\text{IntQ}^{\star}(\tau)\right] = E\left[\widehat{\text{IntQ}}(\tau)-\text{IntQ}(\tau)\right] -\text{\underline{IntQ}}(\tau),
	\label{bias_IntQ}
\end{equation}which indicates that the bias of (\ref{trimmed_IntQ}) stems from that of $\widehat{\text{IntQ}}(\tau)$ on estimating the population value $\text{IntQ}(\tau)$ and a remainder term $-\text{\underline{IntQ}}(\tau)$. Ideally if the term $E\left[\widehat{\text{IntQ}}(\tau)-\text{IntQ}(\tau)\right]$is negligible, a small $-\text{\underline{IntQ}}(\tau)$ will make the whole bias not too severe. However, $-\text{\underline{IntQ}}(\tau)$ could sometimes be substantial, even when $\underline{\tau}$ is set to be small. For example, in the benchmark case when $\alpha_{1,u}$ is the $u$th quantile of a standard normal random variable, at $\tau=0.1$, setting $\underline{\tau} = 0.01 (0.005)$ will make $-\text{\underline{IntQ}}(\tau)$ roughly equal to 15.2\% (8.2\%) of $|\text{IntQ}^{\star}(\tau)|$.

In the following we conduct simulations using the same data generation process as in Section 4, to compare the performances of the proposed estimator using the FZ loss ($\hat{\alpha}_{2,\tau}$ from solving (17) in the main text) and the trimmed integrated-QTE based estimator $\widehat{\text{IntQ}}(\tau)$ on estimating the CTATE for compliers. We use the weighted quantile regression (WQR) proposed by \cite{AAI_2002} for estimating $\alpha_{1,u}$. We also use the same estimated weights to implement the WQR and our FZ loss based CTATE estimators. To practically compute (\ref{trimmed_IntQ}), we use its discrete analog (denoted by $\tilde{\alpha}_{2,\tau}$):
\begin{equation}
	\tilde{\alpha}_{2,\tau} = \frac{1}{\tau}\sum_{j=1}^{k}\hat{\alpha}_{1,u_{j}}\Delta u_{j}=\frac{\sum_{j=1}^{k}\Delta u_{j}\hat{\alpha}_{1,u_{j}}}{\sum_{j=1}^{k}\Delta u_{j}},
\end{equation}where $\Delta u_{j} = u_{j}-u_{j-1}$ is the grid point space for a grid of points $(u_{0} ,u_{1},u_{2},\ldots,u_{k})$ with $u_{0}:=0$, $u_{1}:= \underline{\tau}$ and $u_{k}:=\tau$. 
We then set $\underline{\tau}=0.01$ for sample size $n = 500$ and 3,000 (denoted by IntQ), and $\underline{\tau}=0.005$ for $n = 3,000$ only (denoted by IntQ'). The grid point space $\Delta u_{j}$ for numerical integration is 0.005. Notice that with these settings ($\underline{\tau}=0.01$ and 0.005), in our data generation process, the remainder term $-\text{\underline{IntQ}}(\tau)$ is positive. We show the biases, variances and MSEs of these estimators over quantile levels ranging from 0.1 to 0.9, under different settings for the weight estimations (M1 to M4, see Section 4) in Figures \ref{figure3} to \ref{figure6}. 

Comparing to the CTATE estimator using the FZ loss, $\hat{\alpha}_{2,\tau}$, it can be seen that the discrete analog of the trimmed integrated-QTE based estimator $\tilde{\alpha}_{2,\tau}$ (denoted by IntQ and IntQ' in the figures) generally has a substantial positive estimation bias over all the quantile levels. Such a positive bias could be largely due to the positive remainder term $-\underline{\text{IntQ}}(\tau)$ in our data generation process. 
This bias declines as the quantile level increases, reflecting that omitting the estimation of $-\underline{\text{IntQ}}(\tau)$ has less impact on estimating the CTATE as the quantile level increases. In our cases, including $\hat{\alpha}_{1,u}$ at the extreme lower quantile level $u=0.005$ (IntQ') can mitigate the positive bias, but it also seems to result in a slightly higher variance. When $n=500$, $\tilde{\alpha}_{2,\tau}$ overall has a smaller variance than $\hat{\alpha}_{2,\tau}$. 
However, as the sample size increases, the difference in variances largely disappears. In some cases, due to a lower variance, $\tilde{\alpha}_{2,\tau}$ can also result in a lower MSE than $\hat{\alpha}_{2,\tau}$, but similar to the variance case, the difference of the MSEs also becomes negligible as the sample size increases.

In Figure \ref{figure7}, we compare estimates of the CTATE for compliers from using the discrete analog of the trimmed integrated-QTE based estimator ($\tilde{\alpha}_{2,\tau}$) and those from using the FZ loss ($\hat{\alpha}_{2,\tau}$) in our empirical application with the JTPA data. Here we set $\underline{\tau}=0.05$, as we find $\hat{\alpha}_{1,u_{j}}$ values are zero for $u_{j}<0.05$ in the WQR estimations. From the figure, the CTATE estimates from using $\tilde{\alpha}_{2,\tau}$ has a lower value than those from using the FZ loss over all the quantile levels. This suggests that $\tilde{\alpha}_{2,\tau}$ seems to result in a downward bias here.

\subsection{Choices on $G_{1}(t)$ and $G_{2}(t)$}
In Section 2.4 of the main text, we briefly discuss our specifications on $G_{1}(t)=0$ and $G_{2}(t)=\eta(t)=\ln(1+\exp(t))$ (softplus function). Here we give a more detailed analysis. In the FZ loss used in our paper, we specify $G_{1}(t)=0$, $G_{2}(t)=\eta(t)=\ln(1+\exp(t))$ (softplus function), which satisfies the conditions required for delivering a consistent FZ loss function for both quantile and conditional tail expectation (CTE) of a random variable listed in Corollary 5.5 in \cite{FZ_2016} (or Section 2.3 in this paper). Unlike the squared loss for the mean or check loss for the quantile, there is no natural choice of specification for the FZ loss for both quantile and CTE. However, some guidelines for the choices have been provided in previous literature \citep{FZ_2016,DB_2019}, and our specifications follow these guidelines. 
We discuss them as follows. 

One suggested guideline is that the specifications should make the FZ loss positively homogeneous of some order $b\in \mathbb{R}$:
\[
FZ_{\tau}(q(c\boldsymbol{\theta}_{1}),e(c\boldsymbol{\theta}_{2}),cY)=c^{b}FZ_{\tau}(q(\boldsymbol{\theta}_{1}),e(\boldsymbol{\theta}_{2}),Y)
\]for all $c>0$. This property is important since the ordering of the losses should be unaffected by changing the unit of data measurement. For the FZ loss function, previous studies have found that $G_{1}(t)=0$ is a good candidate for ensuring that the FZ loss satisfies the positively homogeneous property, and that $G_{1}(t)=0$ is also the most frequently used specification in practice \citep{FZ_2016,DB_2019,PFC_2019}. Notice that in the FZ loss in (14), setting $G_{1}(q)=G_{1}(y)=0$ does not completely remove the term $1\{y\leq q\}$, an important component in the check loss for estimating the quantile function. In fact, the check loss is implicitly kept in $LQ_{\tau}(q,y)$ in (14) and therefore estimating the quantile function is still viable through the specified FZ loss ($FZ_{\tau}^{sp}$).  

Another guideline for the specification of the FZ loss concerns moment conditions required for deriving the relevant asymptotic results (e.g., assumptions for Lemma A.1 and A.2) and estimating the asymptotic covariance matrix of the estimators, in which the choice of $G_{2}^{\prime}(e)$ and $G_{2}^{\prime\prime}(e)$ should satisfy these required conditions in a parsimonious and least restrictive fashion. 
This leads us to the choice of $G_{2}(e)=\ln(1+\exp(e))$. Then both $G_{2}^{\prime}(e)=\exp(e)/(1+\exp(e))$ and $G_{2}^{\prime\prime}(e) = G_{2}^{\prime}(e)(1-G_{2}^{\prime}(e))$ are bounded for all $e\in \mathbb{R}$ and the required moment conditions can be easily satisfied. 

In practice, there is another important reason for setting $G_{2}(e)=\ln(1+\exp(e))$: it does not restrict the sign of the estimated CTE, 
which is crucial for applications in microeconometrics. In the literature on estimating structural models for forecasting the expected shortfall (ES, the CTE of an asset's return) using the FZ loss, one common choice of $G_{2}(e)$ is $-\ln(-e)$, which satisfies the requirement for positively homogeneous FZ loss \citep{PFC_2019, Taylor_2019, MT_2020, CYY_2022}. However, the domain of $G_{2}(e)=-\ln(-e)$ is restricted to the negative real line ($e\in \mathbb{R}^{-}$). This is reasonable for the ES, since the ES generally is negative. However, this is a serious limitation on applications in microeconometrics. 
In addition, when $G_{2}(e)=-\ln(-e)$, $G_{2}^{\prime}(e)=-e^{-1}$ and $G_{2}^{\prime}(e)=e^{-2}$ are not bounded for all $e\in \mathbb{R}^{-}$. Therefore in this situation, to derive the relevant asymptotic results, our assumptions may not be applicable and new assumptions on the properties of $G_{2}^{\prime}(e)$ and $G_{2}^{\prime\prime}(e)$ may be needed. 

For $\eta(.)$, because it is a function for outcome $Y$ only and does not involve any parameter for the estimation, setting it in an arbitrary (but bounded) manner will not have any effect on the estimates of the parameters. However, as shown in \cite{DB_2019}, by setting $\eta(y)=\tau G_{1}(y)+G_{2}(y)$, the FZ loss can be guaranteed to be a non-negative function. In our case, as we set $G_{1}(y)=0$, $\eta(y)=G_{2}(y)=\ln(1+\exp(y))$ meets this requirement. 
The property of non-negativity for a loss function is useful for evaluating model fit. For example, as the FZ loss is nonnegative, a pseudo-$R^{2}$ for evaluating model fit may be defined as \citep{DB_2019}:
\[
R^{2}_{\tau}=1-\frac{\sum_{i=1}^{n}FZ_{\tau}(q_{i}(\hat{\boldsymbol{\theta}}_{1,\tau}),e_{i}(\hat{\boldsymbol{\theta}}_{2,\tau}),Y_{i})}{\sum_{i=1}^{n}FZ_{\tau}(q_{i}(\tilde{\boldsymbol{\theta}}_{1,\tau}),e_{i}(\tilde{\boldsymbol{\theta}}_{2,\tau}),Y_{i})},
\]where $(\hat{\boldsymbol{\theta}}_{1,\tau},\hat{\boldsymbol{\theta}}_{2,\tau})$ are the parameter estimates of the full model and $(\tilde{\boldsymbol{\theta}}_{1,\tau},\tilde{\boldsymbol{\theta}}_{2,\tau})$ are the parameter estimates of the model with an intercept term only. 

In summary, our specifications for $G_{1}(t)$, $G_{2}(t)$ and $\eta(t)$ satisfy the requirements for delivering a valid consistent FZ loss function for both the quantile and CTE of a random variable. From a practical perspective, our specifications alleviate the impact of changing the units of data measurement on parameter estimates, and do not restrict the sign of the estimated CTE. Due to bounded $G_{2}^{\prime}(t)$ and $G_{2}^{\prime}(t)$ in our specification, the relevant asymptotic results can also be established. Finally, our specified FZ loss is nonnegative, which is helpful for building measures to evaluate model fit.   
\subsubsection{Sensitivity of alternative $G_{1}(t)$ and $G_{2}(t)$}
There exist other specifications for $G_{1}(t)$ and $G_{2}(t)$. For example, $G_{1}(t)=t$ (identity function) and $G_{2}(t)=\exp(t)$ (exponential function) were used in \cite{FZ_2016} and \cite{DB_2019}. These alternative specifications satisfy the requirements in Corollary 5.5 in \cite{FZ_2016}, but do not completely satisfy other aforementioned conditions for making the FZ loss behave well: $G_{1}(t)=t$ cannot ensure that the FZ loss satisfies positive homogeneity \citep{DB_2019}, and $G_{2}(t)=\exp(t)$ is not a bounded function. However, we find that these alternative specifications have a small impact on the resulting estimates in our empirical application (at least they appear visually similar to those obtained using $G_{1}(t)=0$ and $G_{2}(t)=\ln(1+\exp(t))$). 

For implementation, our proposed iterative scheme in Section 2.4 can be applied to these alternative specifications, with a suitable modification on the weight $\hat{\xi}^{(k)}_{i}$ in the weighted quantile regression estimation. The computational complexity and burden of the iterative scheme under the alternative specifications are approximately the same as before. 

Table \ref{table0} lists the combinations of the specifications for $G_{1}(t)$ and $G_{2}(t)$ for different FZ losses, and their corresponding $\hat{\xi}^{(k)}_{i}$. These includes the proposed $FZ_{\tau}^{sp}$, and the other three alternatives: $FZ_{\tau}^{id-sp}$, $FZ_{\tau}^{exp}$ and $FZ_{\tau}^{id-exp}$. In Tables \ref{table1} to \ref{table3}, we report the maximum, mean and minimum values of the differences between the estimates from using the FZ loss with the alternative specifications 
and those from using $FZ_{\tau}^{sp}$, over quantile levels from 0.1 to 0.9. It can be seen that setting $G_{2}(t)=\exp(t)$ in the FZ loss seems to result in a larger deviation. However, compared with the estimates, the magnitude of the deviation is relatively small. In Figures \ref{figure8} to \ref{figure10} we graphically compare the resulting estimates. Overall, the estimates generated from using the alternative FZ losses are visually very similar to those generated from using $FZ_{\tau}^{sp}$, and we can conclude that the estimates are not sensitive to different specifications on the FZ losses.

\subsection{Comparisons with the weighted quantile regression in \cite{AAI_2002} on simulations and empirical application}
In this section we compare the estimator of QTE from using the FZ loss with the QTE estimator in \cite{AAI_2002} through simulations and the empirical applications of the JTPA. We use the same data generation process as in Section 4, and show the biases, variances and MSEs of the QTE estimators using the weighted quantile regression (WQR) in \cite{AAI_2002} (denoted by AAI) and that using the FZ loss (denoted by FZ), under different settings for the weight estimations (M1 to M4, see Section 4). Figures \ref{figure11} to \ref{figure14} present the simulation results when $\rho = 0$ (no endogeneity) and $\rho = 0.5$ (endogeneity), each with sample sizes for the simulation $n = 500$ and 3,000. From the figures, when the sample size is large ($n = 3,000$), the simulation results for the WQR estimator are very similar to those for the estimator using the FZ loss (as discussed in Section 4). But when the sample size is small ($n = 500$), the two estimators show some differences. 

As the figures show, the QTE estimator using the WQR tends to have a lower variance (shown in the middle row of each figure) than that using the FZ loss when the sample size is small (500). This is expected, since in the FZ loss estimation, additional parameters (those of the CTE function) need to be estimated. Consequently the variance of each parameter estimation tends to be higher, given that the sample size is fixed. However, this discrepancy in variances vanishes as the sample increases, as shown in the cases of $n = 3,000$.

Except for the case when $\rho = 0.5$ and no endogeneity adjustment (M3), the magnitude of biases generated by the two estimators is relatively much smaller than that of the corresponding variances. The variance dominates in the MSE calculation in these cases, and the WQR estimator therefore has a lower MSE (shown in the bottom row of each figure) than the estimator using the FZ loss. When $\rho = 0$ and the sample size is small, the estimator using the FZ loss seems to have a more downward bias than the WQR estimator, but the discrepancy of biases becomes less obvious as the sample size becomes large. As for $\rho = 0.5$, there is an endogeneity issue, and the same as in Section 4, the case of no endogeneity adjustment (M3) results in a much higher bias and MSE than the other three methods (either using the weight to adjust the endogeneity or using only samples of compliers for the estimation), although it still results in a lower variance (all samples are used in the estimations and there is no estimated weight). The high MSE of M3 is mostly due to the high bias from a lack of adjustment for endogeneity.

In summary, from the simulations, we find that the WQR estimator in \cite{AAI_2002} generates lower variance and MSE than the estimator using the FZ loss to estimate the conditional QTE for compliers in a small sample size. This is because we need to estimate the additional parameter of the CTE function when using the FZ loss. However, in the large sample, the differences in variance and MSE becomes negligible for the two estimators. 

We then use the WQR of \cite{AAI_2002} to re-estimate the conditional QTE for compliers with JTPA data. Figure \ref{figure15} shows graphical comparisons and Table \ref{table4} shows the minimum, mean and maximum values of the differences between these WQR estimates (denoted by AAI) and those obtained using the FZ loss (denoted by FZ), over quantile levels ranging from 0.1 to 0.9. The weight $\tilde{K}_{i}$ is the same for both estimations. The results are quite similar. Notice that we use the same functional form (linear in parameters) as in \cite{AAI_2002} for modelling the conditional quantile of the potential outcome for compliers, and both estimators are asymptotically equivalent (consistent for estimating $\boldsymbol{\theta}_{1}$). In addition, as suggested in Section 2.4, $\hat{\boldsymbol{\theta}}^{(1)}_{1}$ used for the first iteration step in the proposed iterative scheme (for estimation using the FZ loss), was obtained from the WQR estimates of \cite{AAI_2002}. 
Finally, large sample sizes (5,102 for adult men and 6,102 for adult women) are also helpful on stabilizing the performances of the estimations. These are the possible reasons why the differences between the two estimates are very small. 

We emphasize that in this paper, our main focus is on estimating the CTATE for compliers, which can not be achieved by using only the WQR of \cite{AAI_2002}. However, this does not imply that estimating the conditional quantile function is unimportant. A valid estimation of the conditional quantile function perhaps is the most crucial for a valid estimation of the CTE function using the FZ loss, since the two are jointly estimated, and their estimation qualities are closely related. 
\clearpage
\begin{table}
	\centering
	\caption{Specifications on $G_{1}(t)$ and $G_{2}(t)$ for different FZ losses.}
	\begin{tabular}{ccccc}
		\hline 
		& $FZ_{\tau}^{sp}$ & $FZ_{\tau}^{id-sp}$ & $FZ_{\tau}^{exp}$ & $FZ_{\tau}^{id-exp}$\tabularnewline
		\hline 
		$G_{1}(t)$ & 0 & $t$ & 0 & $t$\tabularnewline
		$G_{2}(t)$ & $\ln\left(1+\exp\left(t\right)\right)$ & $\ln\left(1+\exp\left(t\right)\right)$ & $\exp\left(t\right)$ & $\exp\left(t\right)$\tabularnewline
		$G_{2}^{\prime}(t)$ & $\frac{\exp\left(t\right)}{1+\exp\left(t\right)}$ & $\frac{\exp\left(t\right)}{1+\exp\left(t\right)}$ & $\exp\left(t\right)$ & $\exp\left(t\right)$\tabularnewline
		$\tilde{\xi}_{i}^{(k)}$ & $\tilde{K}_{i}\frac{\exp[e_{i}(\hat{\boldsymbol{\theta}}_{2}^{(k)})]}{1+\exp[e_{i}(\hat{\boldsymbol{\theta}}_{2}^{(k)})]}$ & $\left(1+\frac{1}{\tau}\right)\tilde{K}_{i}\frac{\exp[e_{i}(\hat{\boldsymbol{\theta}}_{2}^{(k)})]}{1+\exp[e_{i}(\hat{\boldsymbol{\theta}}_{2}^{(k)})]}$ & $\tilde{K}_{i}\exp[e_{i}(\hat{\boldsymbol{\theta}}_{2}^{(k)})]$ & $\left(1+\frac{1}{\tau}\right)\tilde{K}_{i}\exp[e_{i}(\hat{\boldsymbol{\theta}}_{2}^{(k)})]$\tabularnewline
		\hline 
	\end{tabular}
	\label{table0}
\end{table}
\begin{table}
	\centering
	\caption{
		Minimum, mean and maximum values of differences between the CTATE and QTE estimates for compliers when using $FZ_{\tau}^{id-sp}$ and $FZ_{\tau}^{sp}$, over quantile levels ranging from 0.1 to 0.9. The unit is thousands USD.}
\begin{tabular}{cccccccc}
	\hline 
	&  & Men &  &  &  & Women & \tabularnewline
	\cline{2-4} \cline{3-4} \cline{4-4} \cline{6-8} \cline{7-8} \cline{8-8} 
	& Min. & Mean & Max. &  & Min. & Mean & Max.\tabularnewline
	\hline 
	Endo. Treat. &  &  &  &  &  &  & \tabularnewline
	CTATE & -0.0508 & -0.0020 & 0.0168 &  & -0.0157 & -5e-04 & 0.0015\tabularnewline
	QTE & -0.0713 & -0.0026 & 0.0092 &  & -0.0118 & 0.0000 & 0.0131\tabularnewline
	No. Endo. Treat. &  &  &  &  &  &  & \tabularnewline
	CTATE & -0.0460 & -8e-04 & 0.0424 &  & -0.0165 & -3e-04 & 0.0060\tabularnewline
	QTE & -0.0098 & 1e-04 & 0.0137 &  & -0.0115 & 2e-04 & 0.0177\tabularnewline
	\hline 
\end{tabular}
\label{table1}
\end{table}


\begin{table}
\centering
\caption{
	Minimum, mean and maximum values of differences between the CTATE and QTE estimates for compliers when using $FZ_{\tau}^{exp}$ and $FZ_{\tau}^{sp}$, over quantile levels ranging from 0.1 to 0.9. The unit is thousands USD.
}

\begin{tabular}{cccccccc}
\hline 
&  & Men &  &  &  & Women & \tabularnewline
\cline{2-4} \cline{3-4} \cline{4-4} \cline{6-8} \cline{7-8} \cline{8-8} 
& Min. & Mean & Max. &  & Min. & Mean & Max.\tabularnewline
\hline 
Endo. Treat. &  &  &  &  &  &  & \tabularnewline
CTATE & -0.0098 & 0.0152 & 0.0535 &  & -0.0065 & 0.0020 & 0.0315\tabularnewline
QTE & -0.0062 & 0.0056 & 0.1196 &  & -0.0014 & 0.0047 & 0.0140\tabularnewline
No. Endo. Treat. &  &  &  &  &  &  & \tabularnewline
CTATE & -4e-04 & 0.0290 & 0.0537 &  & -5e-04 & 0.0049 & 0.0209\tabularnewline
QTE & -0.0564 & 0.0000 & 0.0740 &  & -0.0213 & 0.0000 & 0.0191\tabularnewline
\hline 
\end{tabular}
\label{table2}
\end{table}

\begin{table}
\centering
\caption{
Minimum, mean and maximum values of differences between the CTATE and QTE estimates for compliers when using $FZ_{\tau}^{id-exp}$ and $FZ_{\tau}^{sp}$, over quantile levels 0.1 to 0.9. The unit is thousands USD.
}

\begin{tabular}{cccccccc}
\hline 
&  & Men &  &  &  & Women & \tabularnewline
\cline{2-4} \cline{3-4} \cline{4-4} \cline{6-8} \cline{7-8} \cline{8-8} 
& Min. & Mean & Max. &  & Min. & Mean & Max.\tabularnewline
\hline 
Endo. Treat. &  &  &  &  &  &  & \tabularnewline
CTATE & -0.0108 & 0.0167 & 0.0585 &  & -0.0014 & 0.0044 & 0.0155\tabularnewline
QTE & -0.0150 & 0.0022 & 0.0699 &  & -0.0065 & 3e-04 & 0.0087\tabularnewline
No. Endo. Treat. &  &  &  &  &  &  & \tabularnewline
CTATE & -5e-04 & 0.0303 & 0.0565 &  & -5e-04 & 0.0046 & 0.0209\tabularnewline
QTE & -0.0076 & 4e-04 & 0.0320 &  & -0.0018 & 3e-04 & 0.0123\tabularnewline
\hline 
\end{tabular}
\label{table3}
\end{table}

\begin{table}
	\centering
	\caption{Minimum, mean and maximum values of differences between the QTE estimates for compliers when using the weighted quantile regression (WQR) of \cite{AAI_2002} and when using the FZ loss, over quantile levels 0.1 to 0.9. The unit is thousands USD.}
	\begin{tabular}{cccc}
		\hline 
		& Min. & Mean & Max.\tabularnewline
		\hline 
		Adult Men &  &  & \tabularnewline
		Endo Treat. & -0.0720 & -0.0061 & 0.0247\tabularnewline
		No Endo. Treat. & -0.0161 & -3.0e-04 & 0.0158\tabularnewline
		&  &  & \tabularnewline
		Adult Women &  &  & \tabularnewline
		Endo Treat. & -0.0246 & 1.0e-04 & 0.0240\tabularnewline
		No Endo. Treat. & -0.0115 & -1.0e-04 & 0.0177\tabularnewline
		\hline 
	\end{tabular}
	\label{table4}
\end{table}
\begin{figure}[ht]
	\captionsetup[subfigure]{justification=centering}
\begin{subfigure}[b]{0.44\textwidth}
			\includegraphics[width=\textwidth]{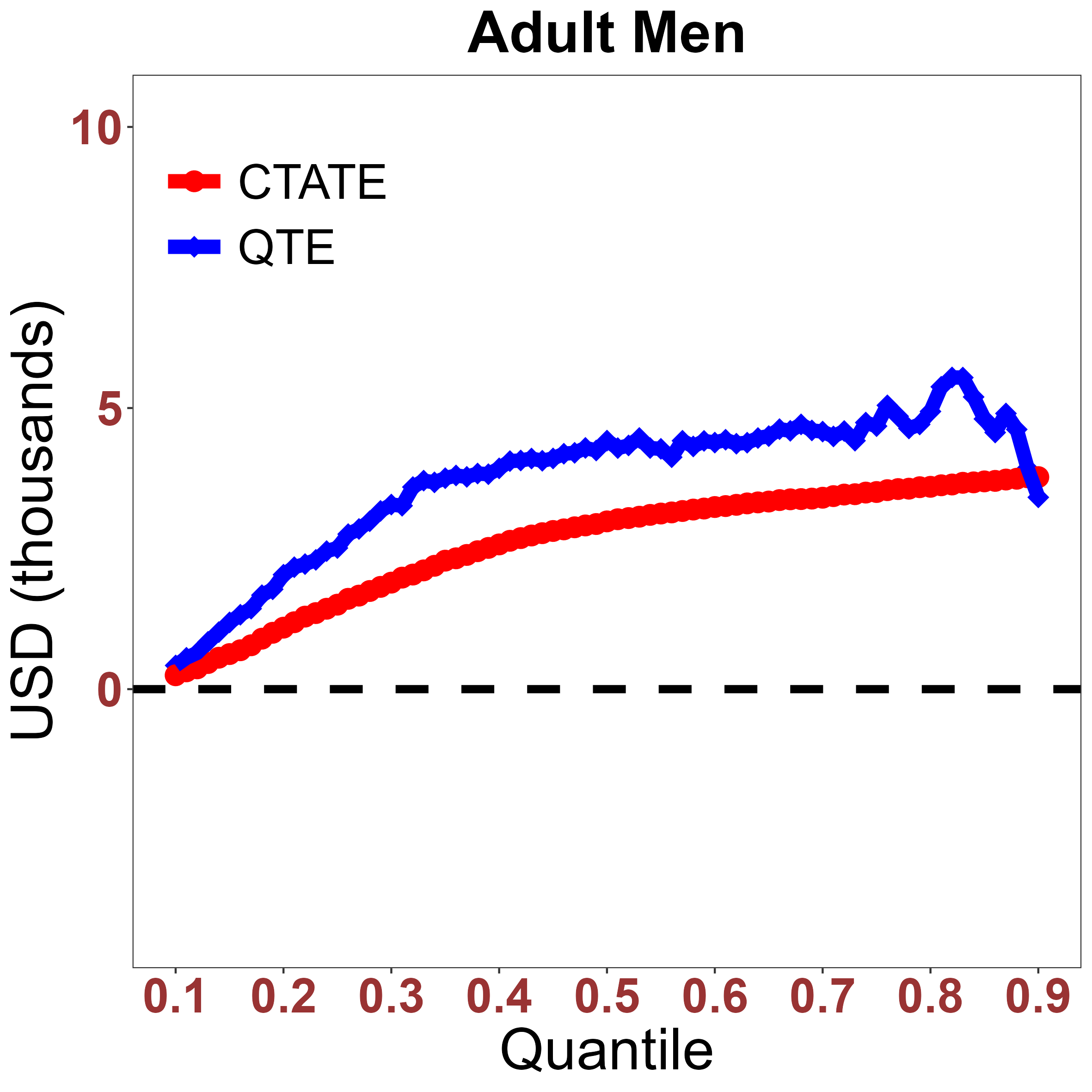}
\end{subfigure}			
 \hfill
 \begin{subfigure}[b]{0.44\textwidth}	
			\includegraphics[width=\textwidth]{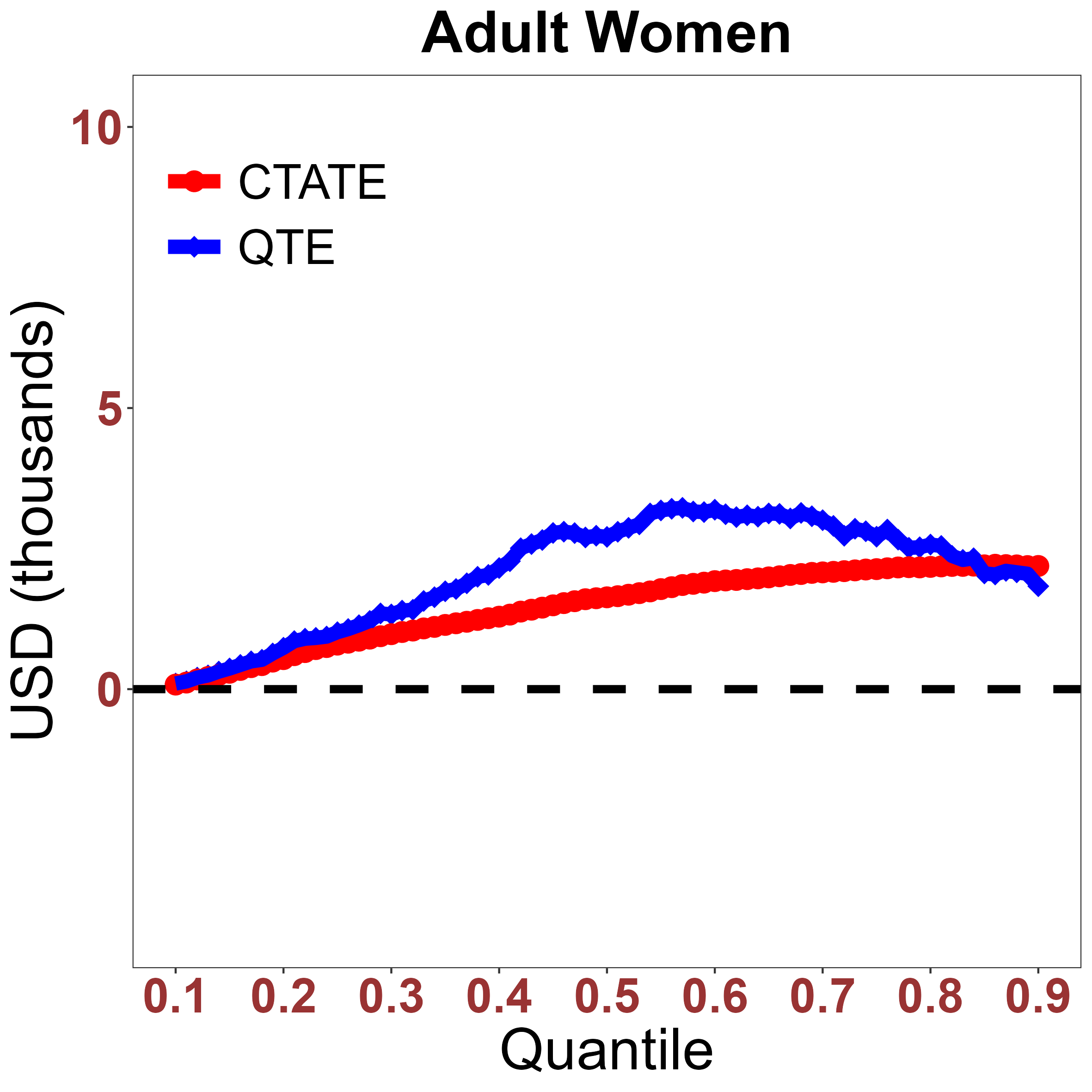}
\end{subfigure}\\			
				
\begin{subfigure}[b]{0.44\textwidth}
		\includegraphics[width=\linewidth]{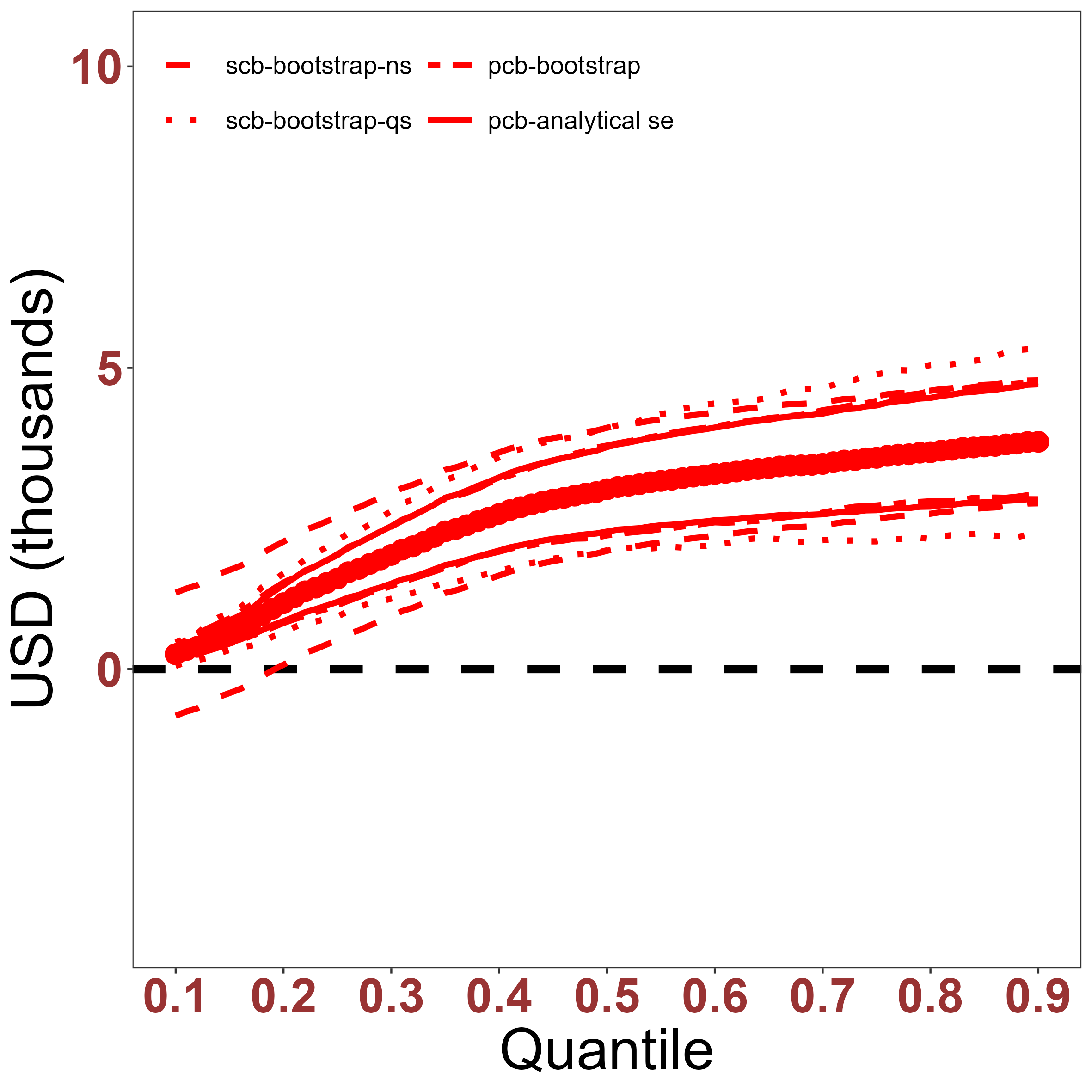}
	\end{subfigure}			
 \hfill
	\begin{subfigure}[b]{0.44\textwidth}	

		\includegraphics[width=\linewidth]{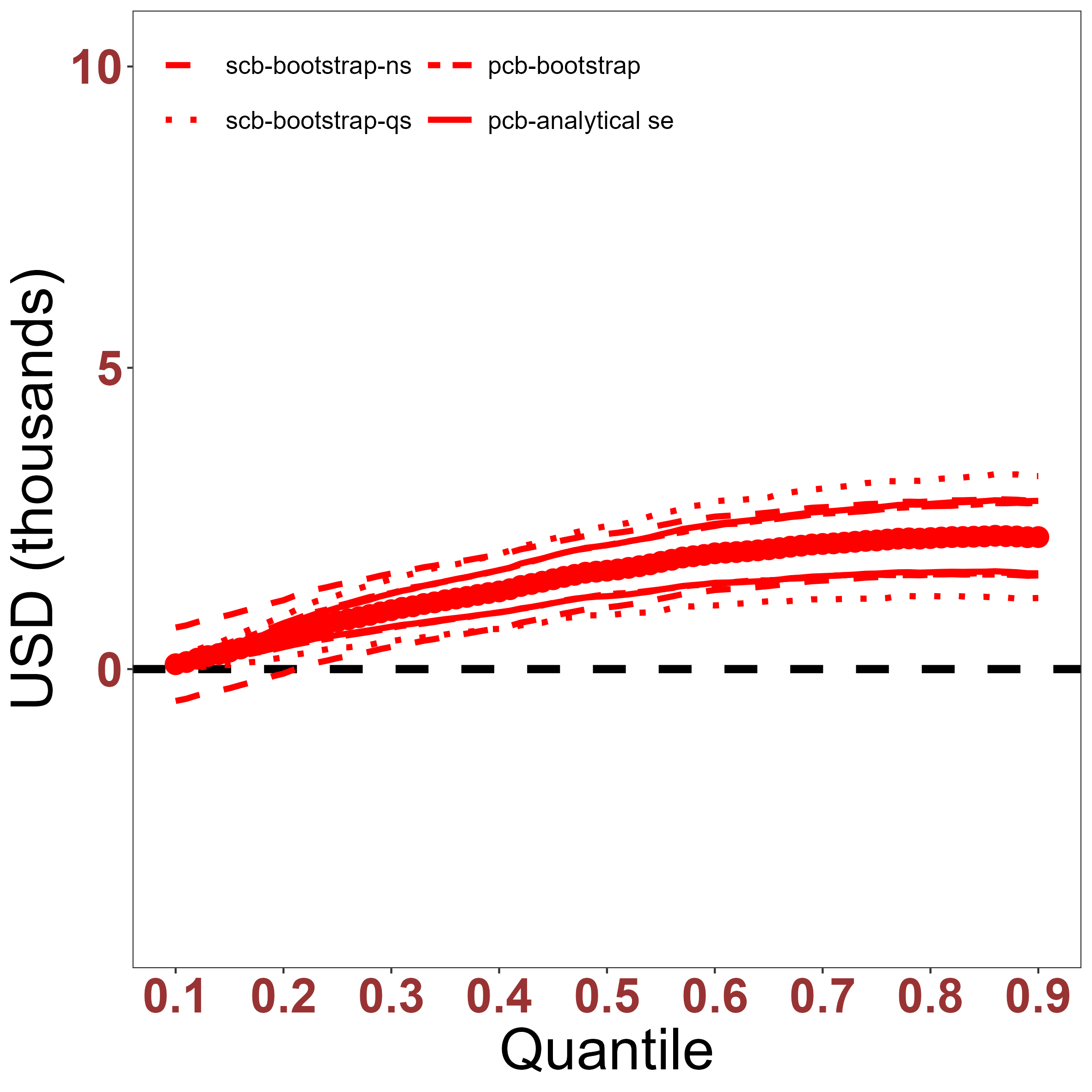}
	\end{subfigure}	\\	
	
	\begin{subfigure}[b]{0.44\textwidth}
		
		\includegraphics[width=\textwidth]{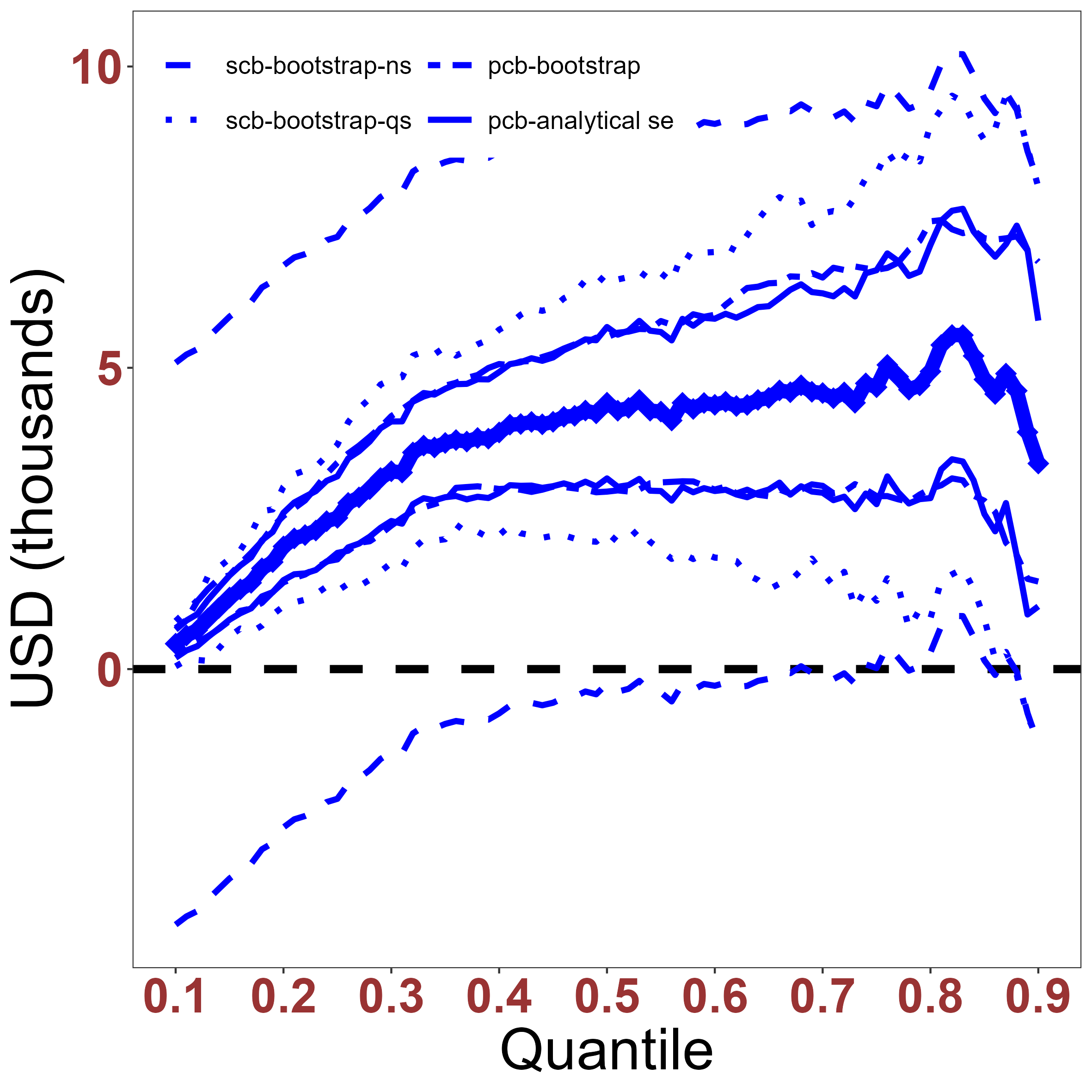}
	\end{subfigure}			
	\hfill
	\begin{subfigure}[b]{0.44\textwidth}	
		
		\includegraphics[width=\linewidth]{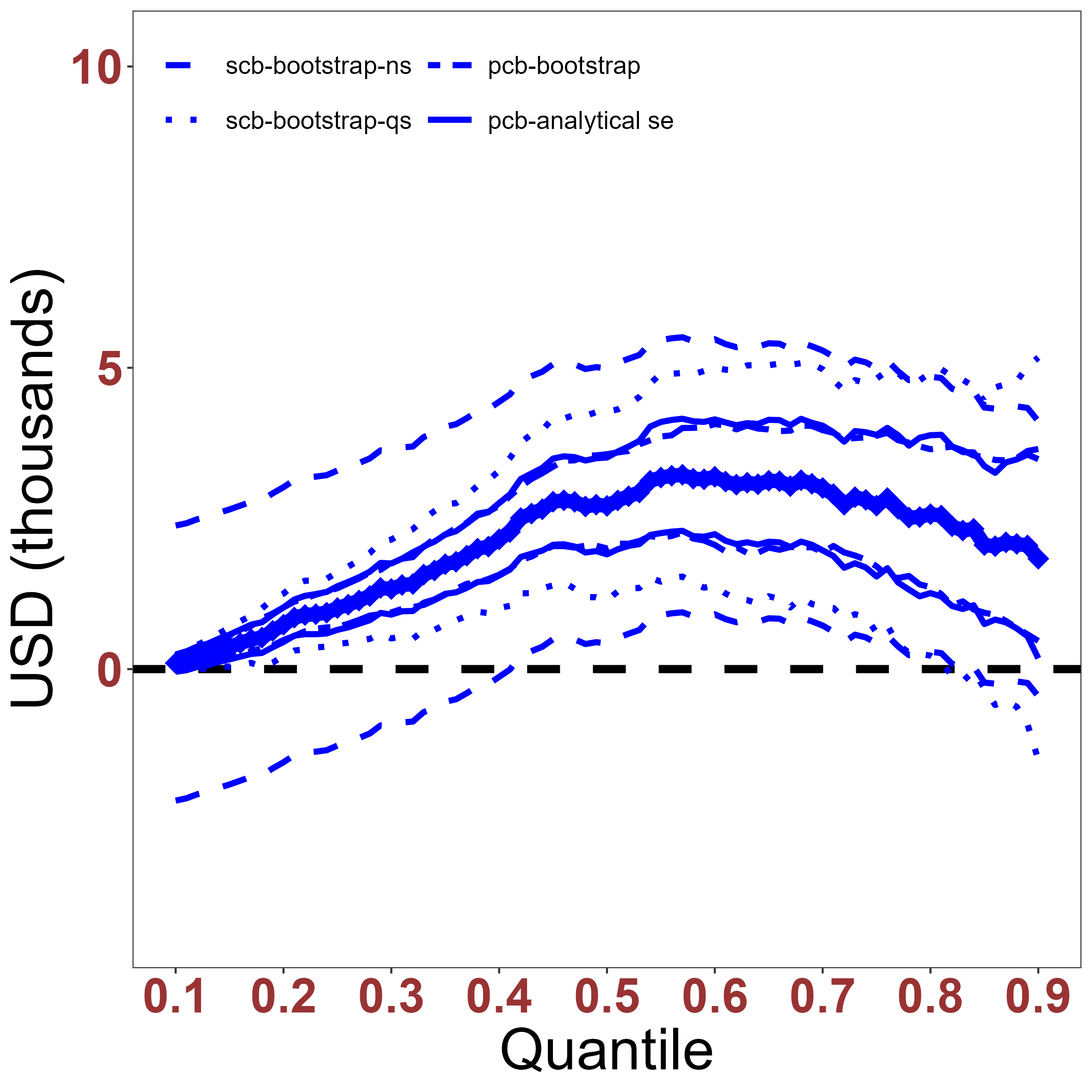}
	\end{subfigure}
		
	\caption{CTATE and QTE estimates of adult men's and women's earnings for compliers and the 95\% pointwise and simultaneous confidence bands when the treatment is exogenous. 
		}
	\label{figure1}
\end{figure}

\begin{figure}[ht]
	\captionsetup[subfigure]{justification=centering}
	\begin{subfigure}[b]{0.49\textwidth}
		\includegraphics[width=\textwidth]{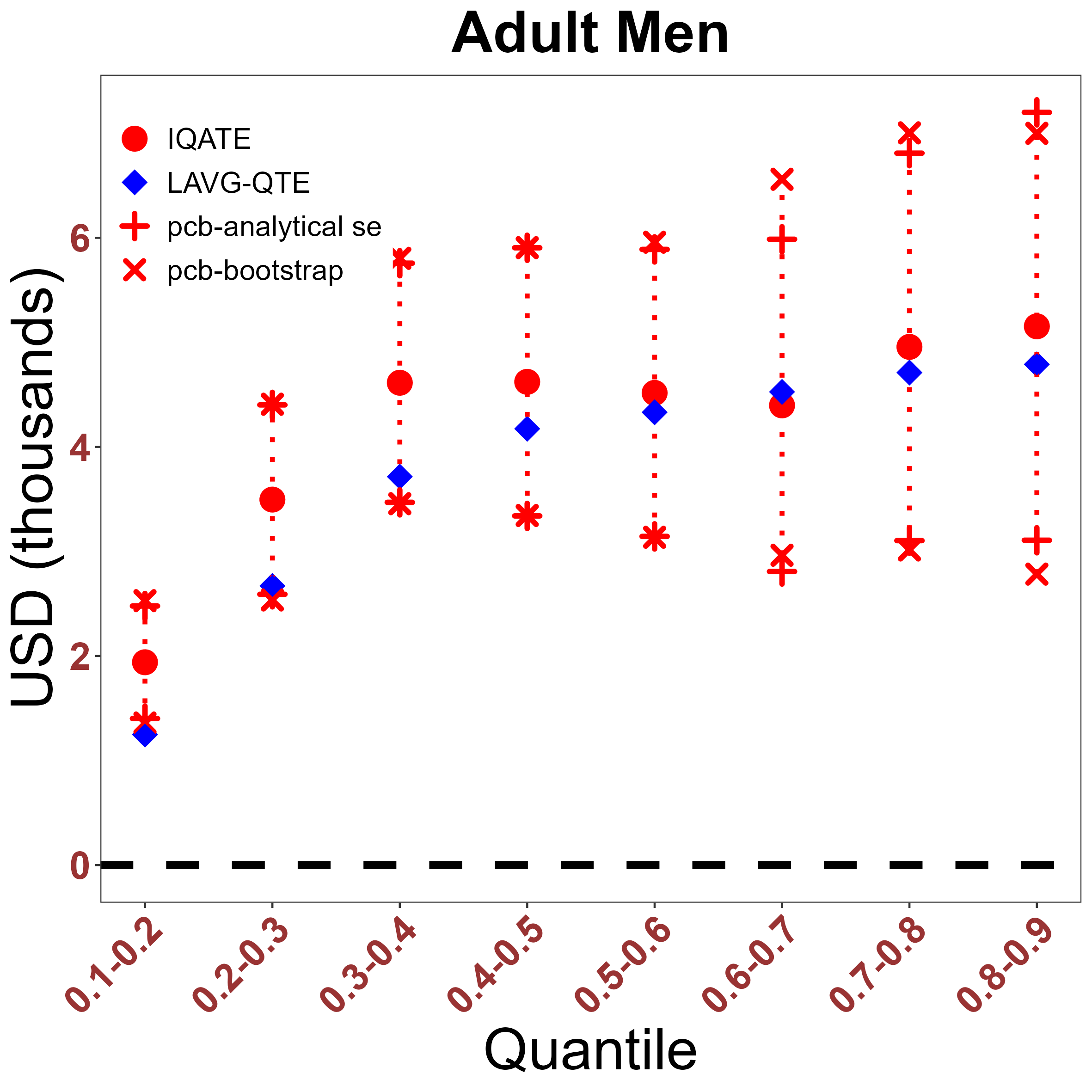}
	\end{subfigure}			
	\hfill
	\begin{subfigure}[b]{0.49\textwidth}	
		\includegraphics[width=\textwidth]{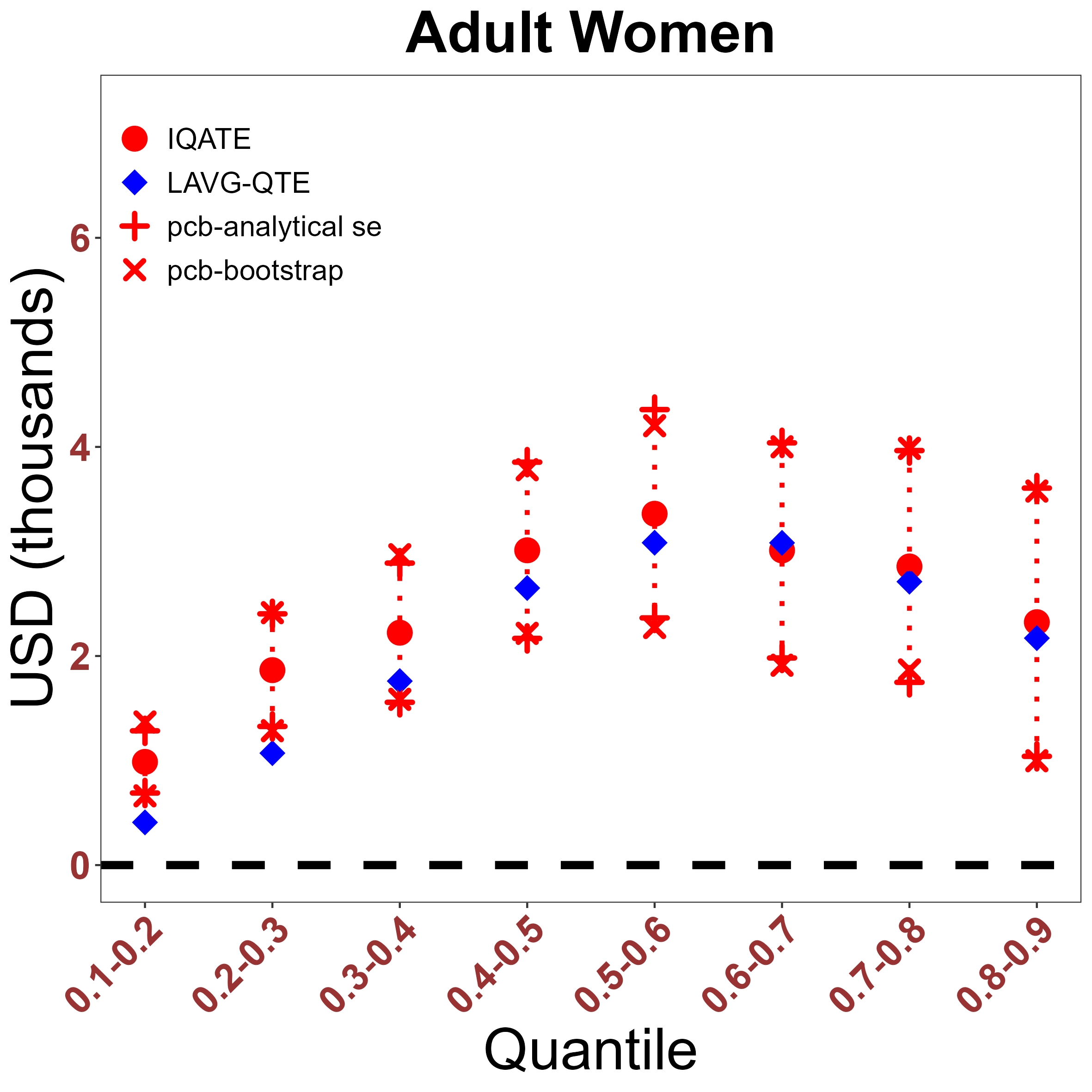}
	\end{subfigure}
	\caption{IQATE estimates of adult men's and women's earnings for compliers and the 95\% pointwise confidence bands when the treatment is exogenous.
		}
	\label{figure2}
\end{figure}
\clearpage
\begin{figure}[!htb]
	\centering
	\mbox{
		\includegraphics[width = 3.9cm, height = 3.2cm]{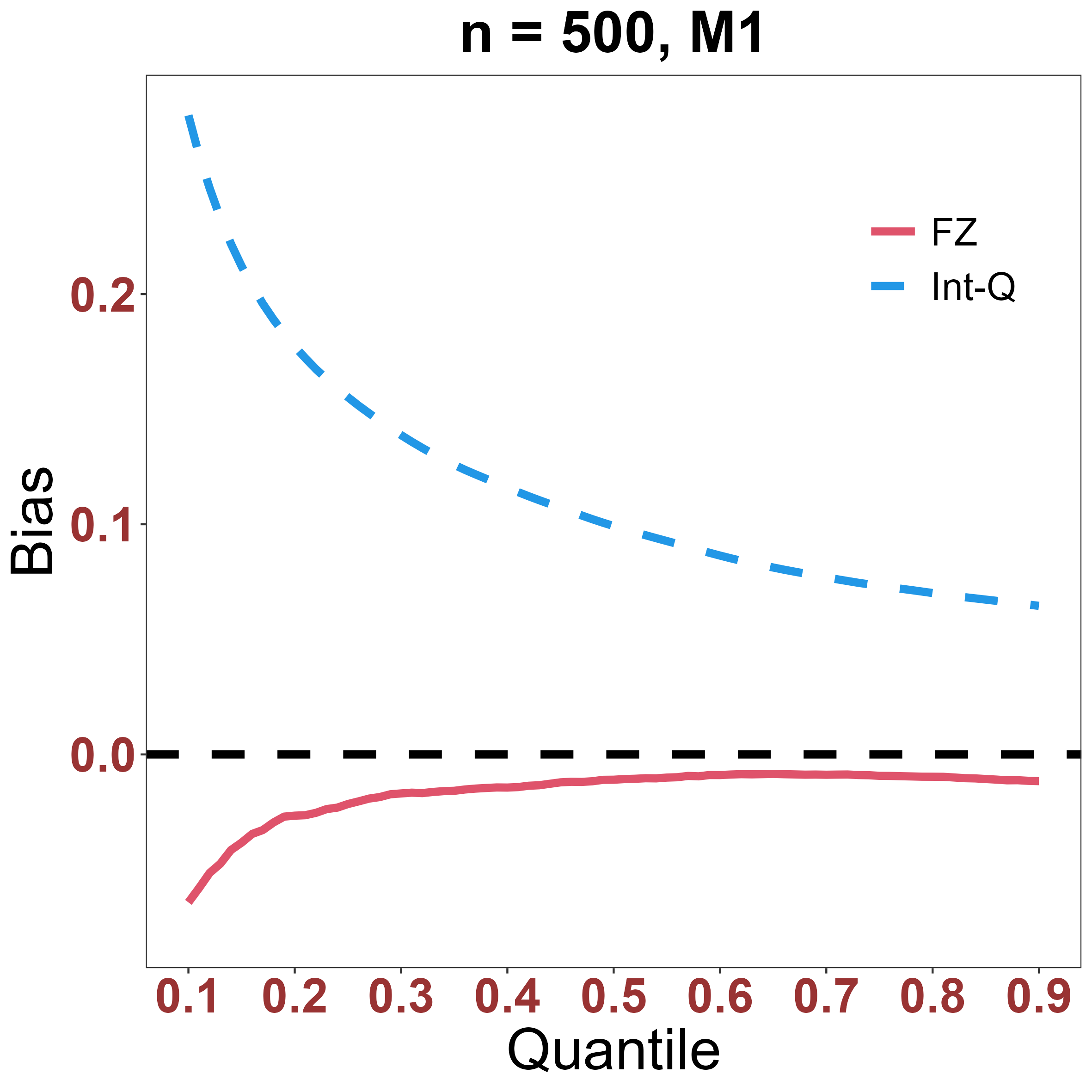}
		\includegraphics[width = 3.9cm, height = 3.2cm]{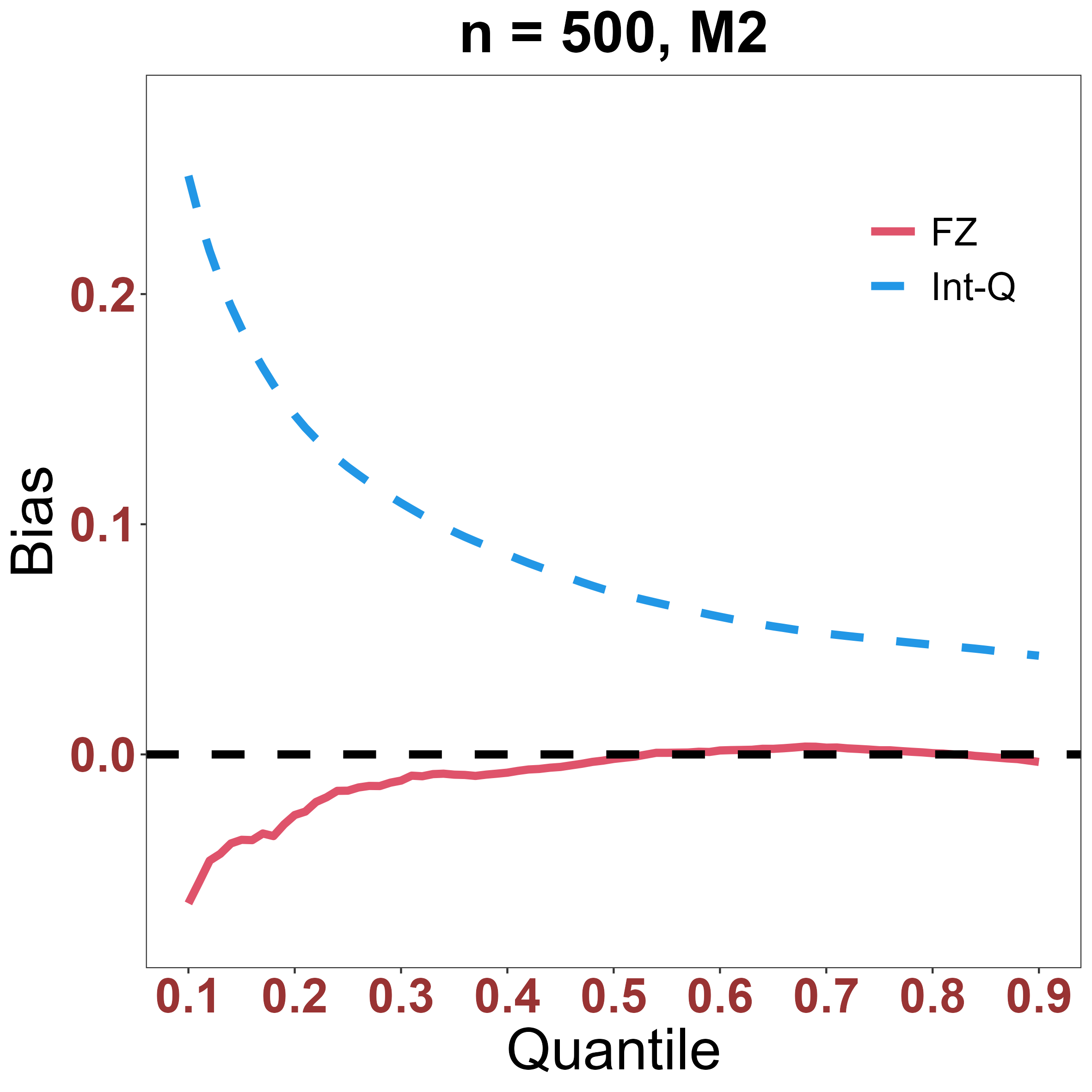}
		\includegraphics[width = 3.9cm, height = 3.2cm]{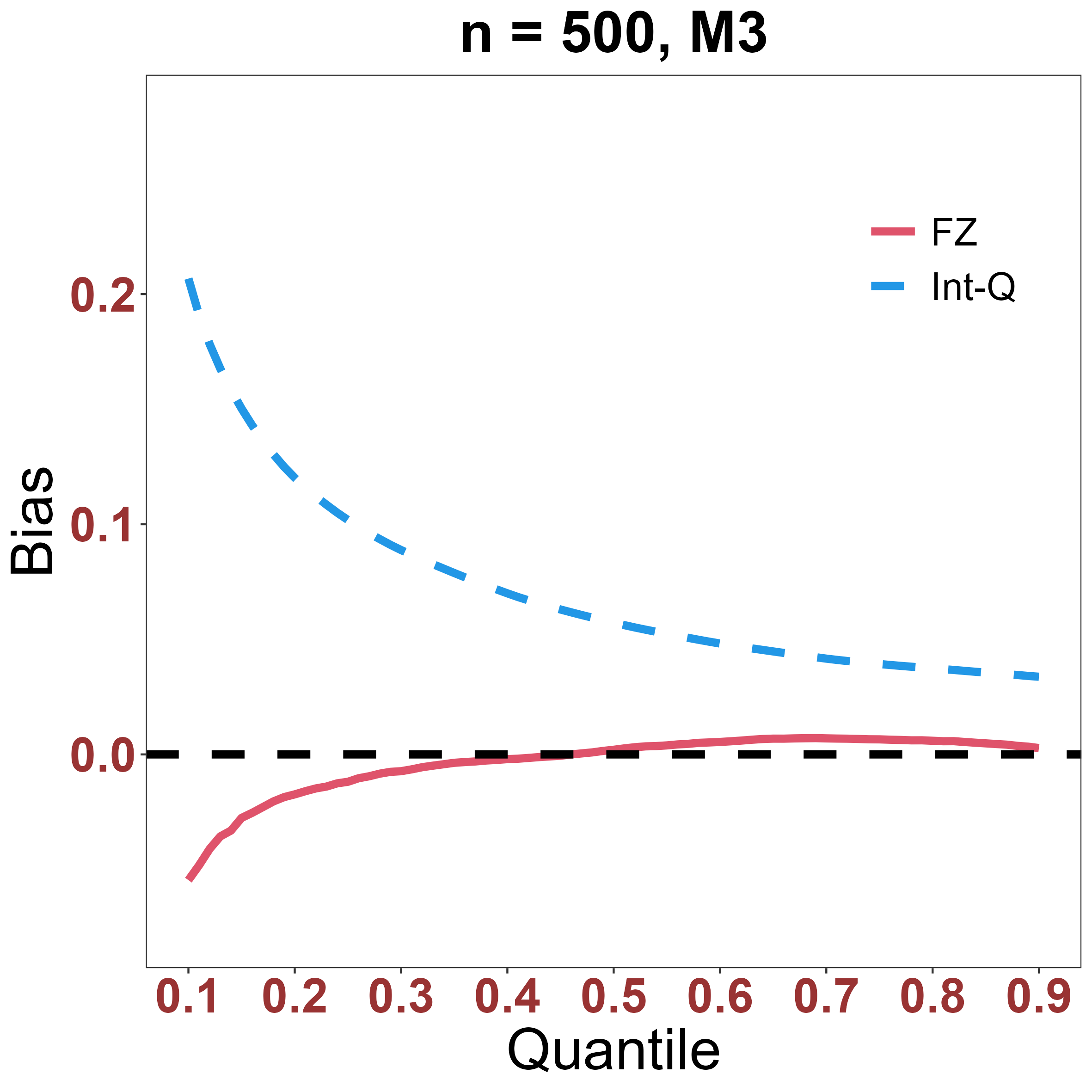}	
		\includegraphics[width = 3.9cm, height = 3.2cm]{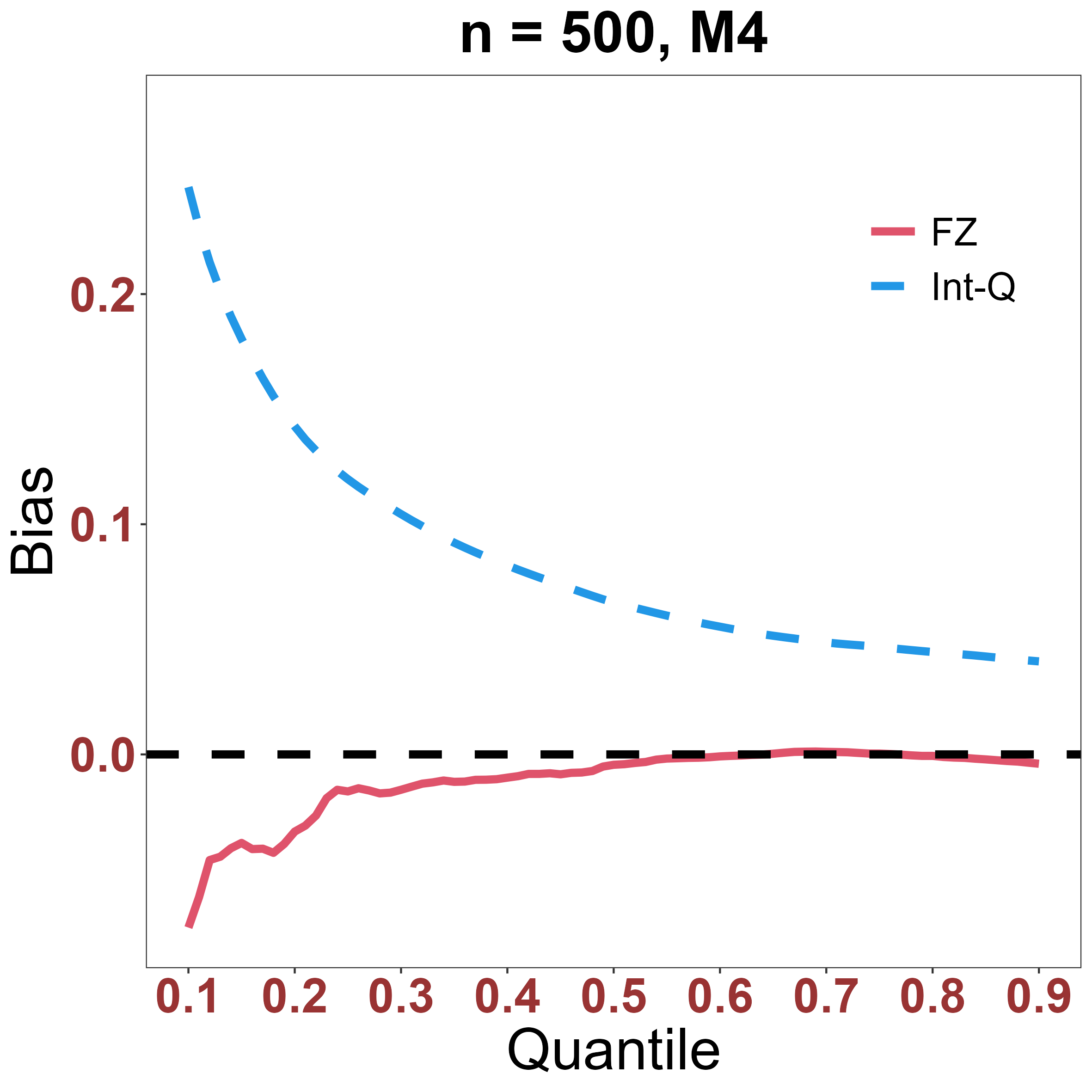}
	}	
	\mbox{	\includegraphics[width = 3.9cm, height = 3.2cm]{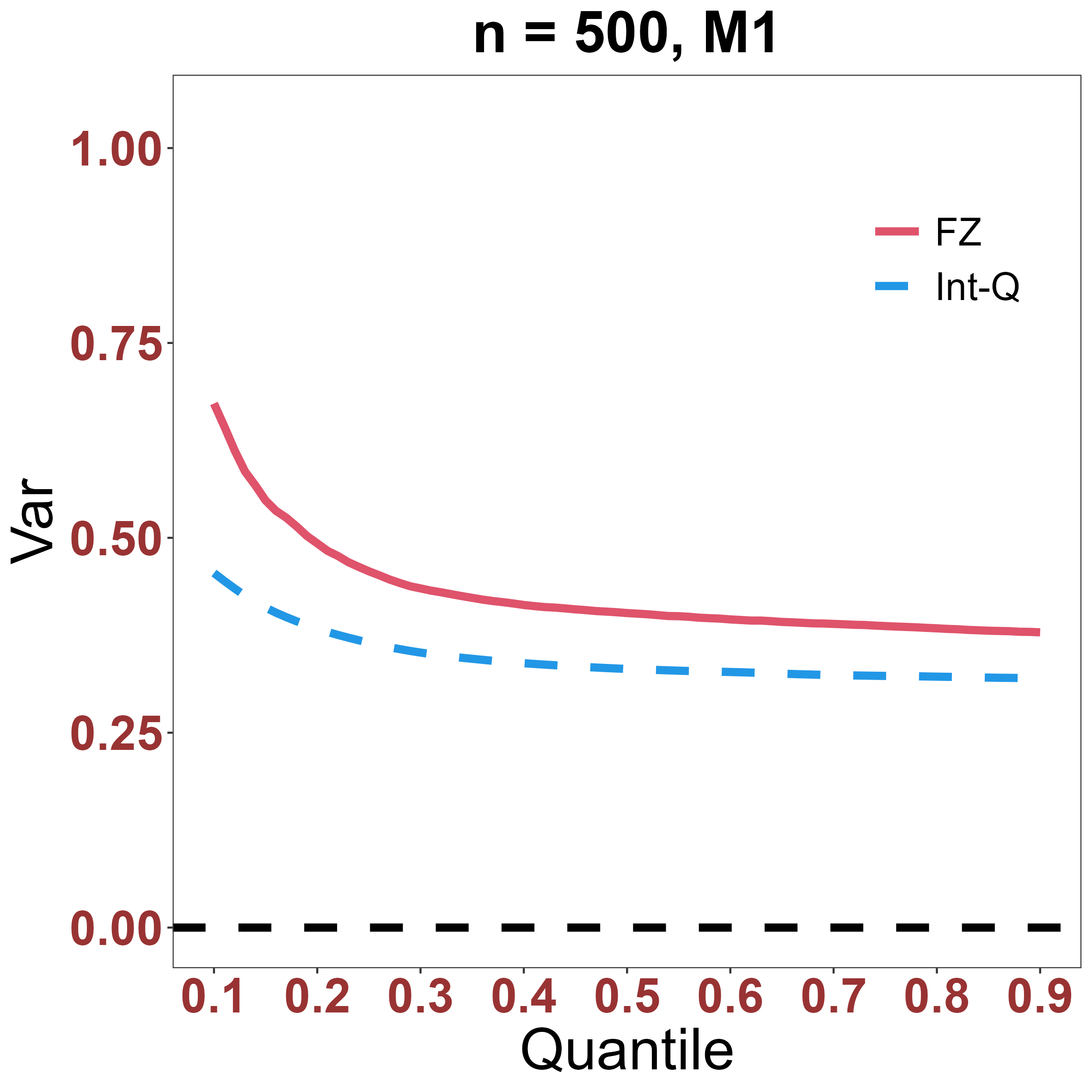}
		\includegraphics[width = 3.9cm, height = 3.2cm]{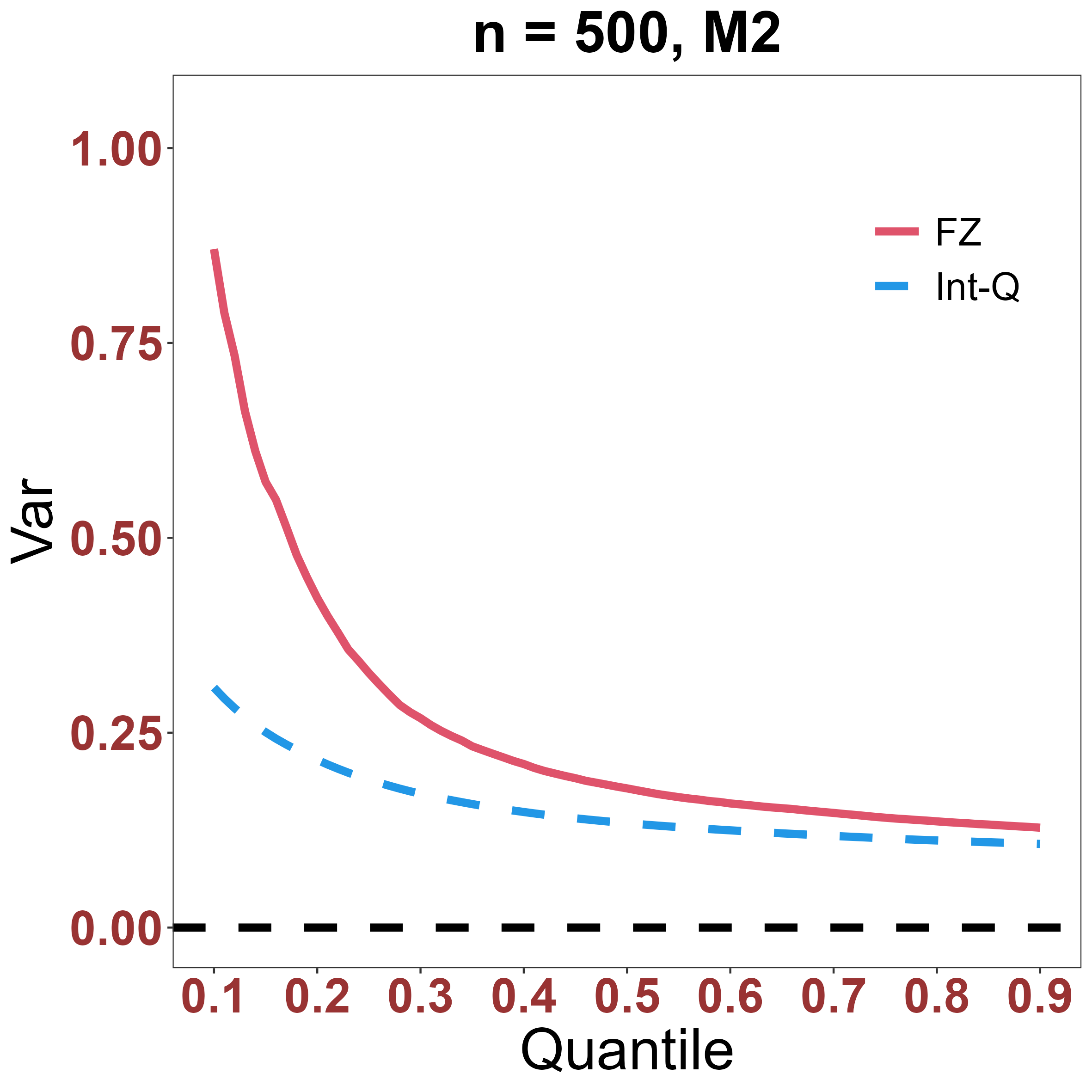}
		\includegraphics[width = 3.9cm, height = 3.2cm]{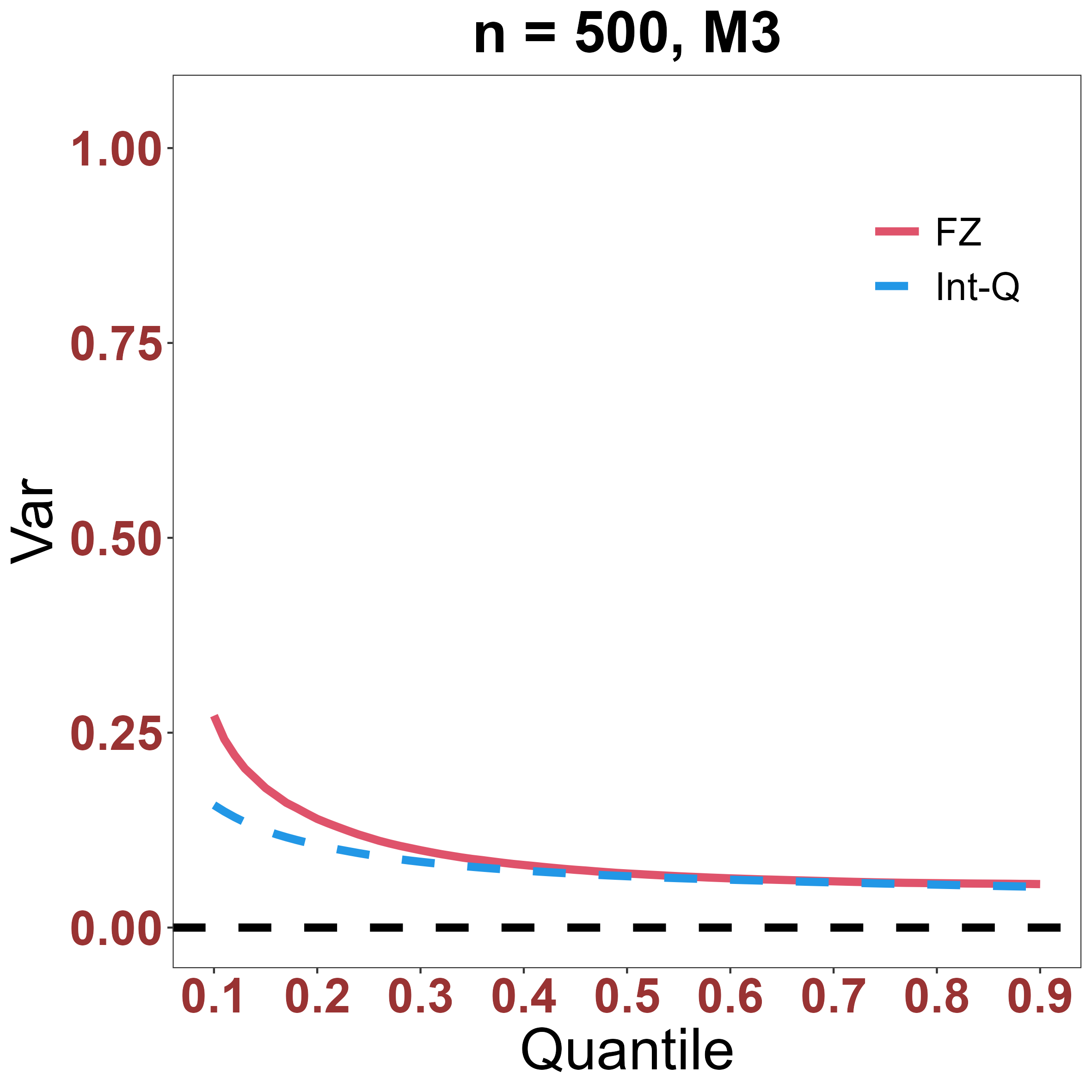}	
		\includegraphics[width = 3.9cm, height = 3.2cm]{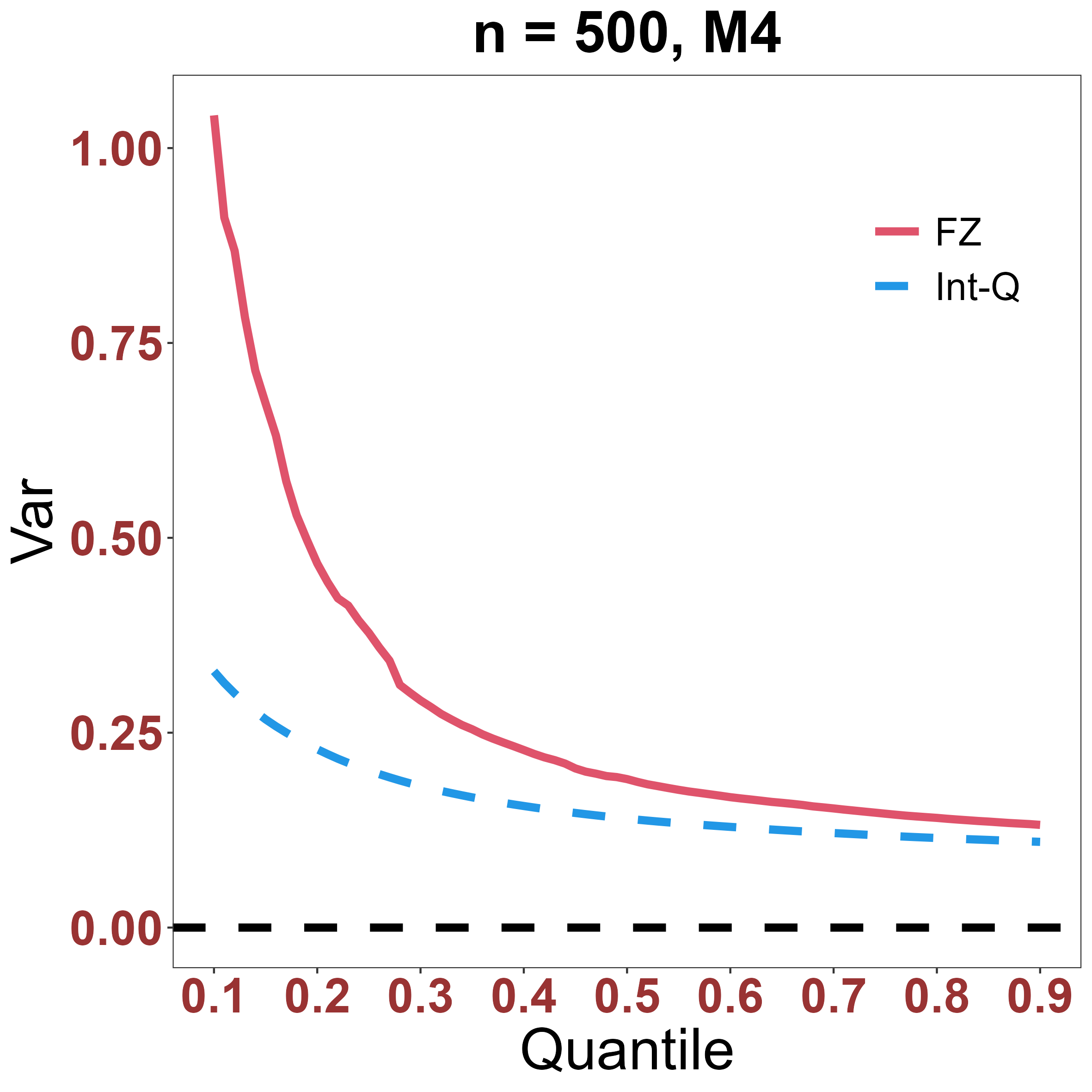}
	}	
	\mbox{	\includegraphics[width = 3.9cm, height = 3.2cm]{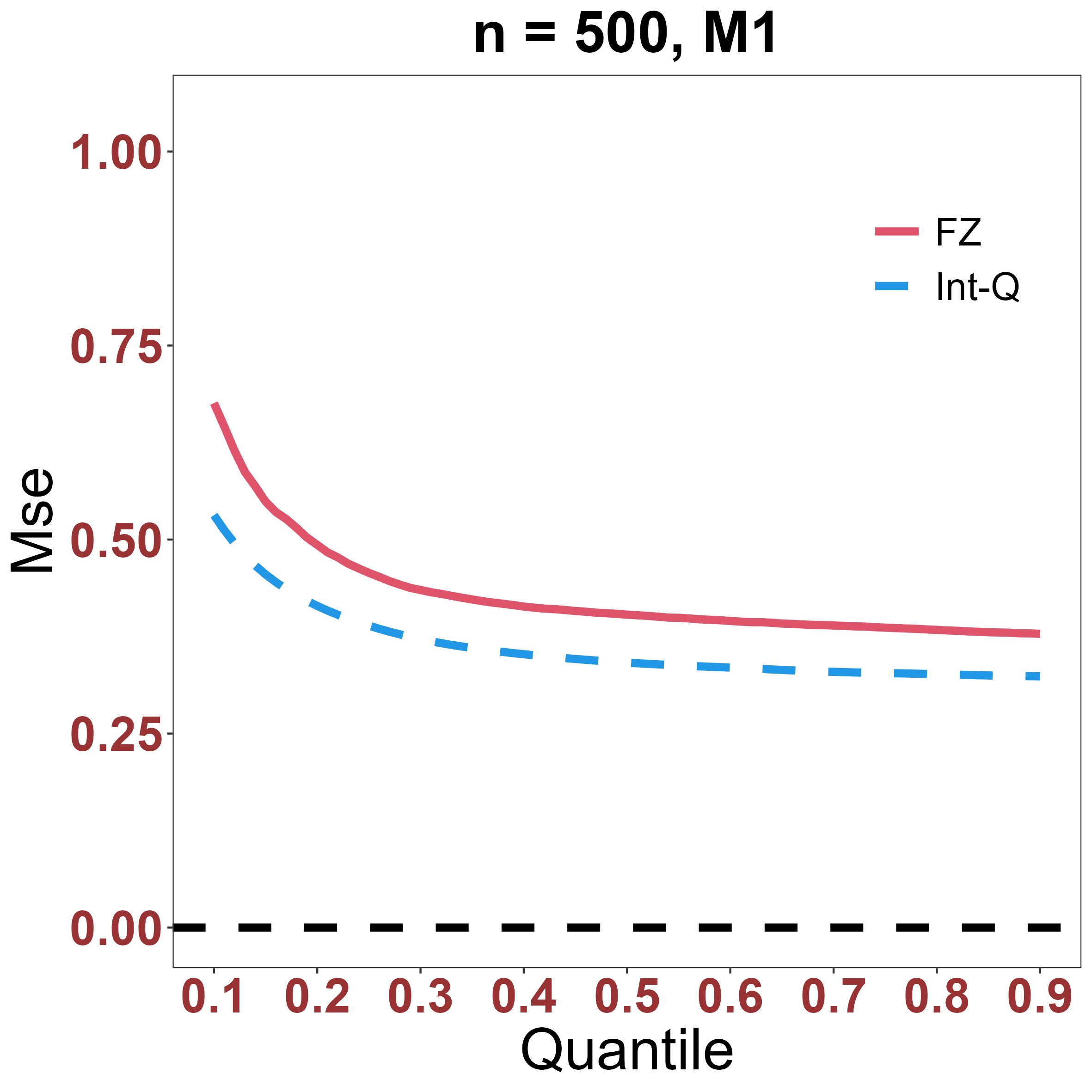}
		\includegraphics[width = 3.9cm, height = 3.2cm]{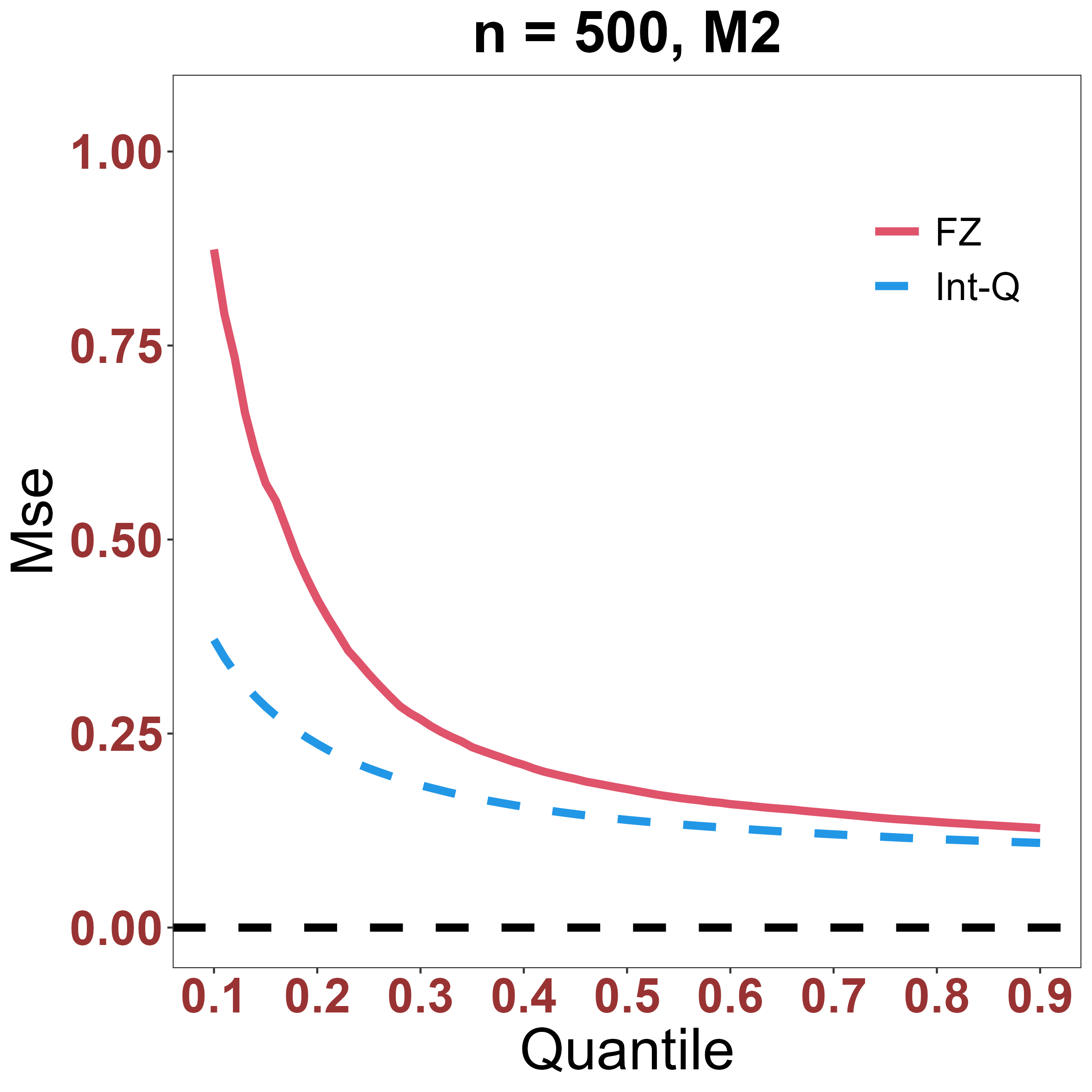}
		\includegraphics[width = 3.9cm, height = 3.2cm]{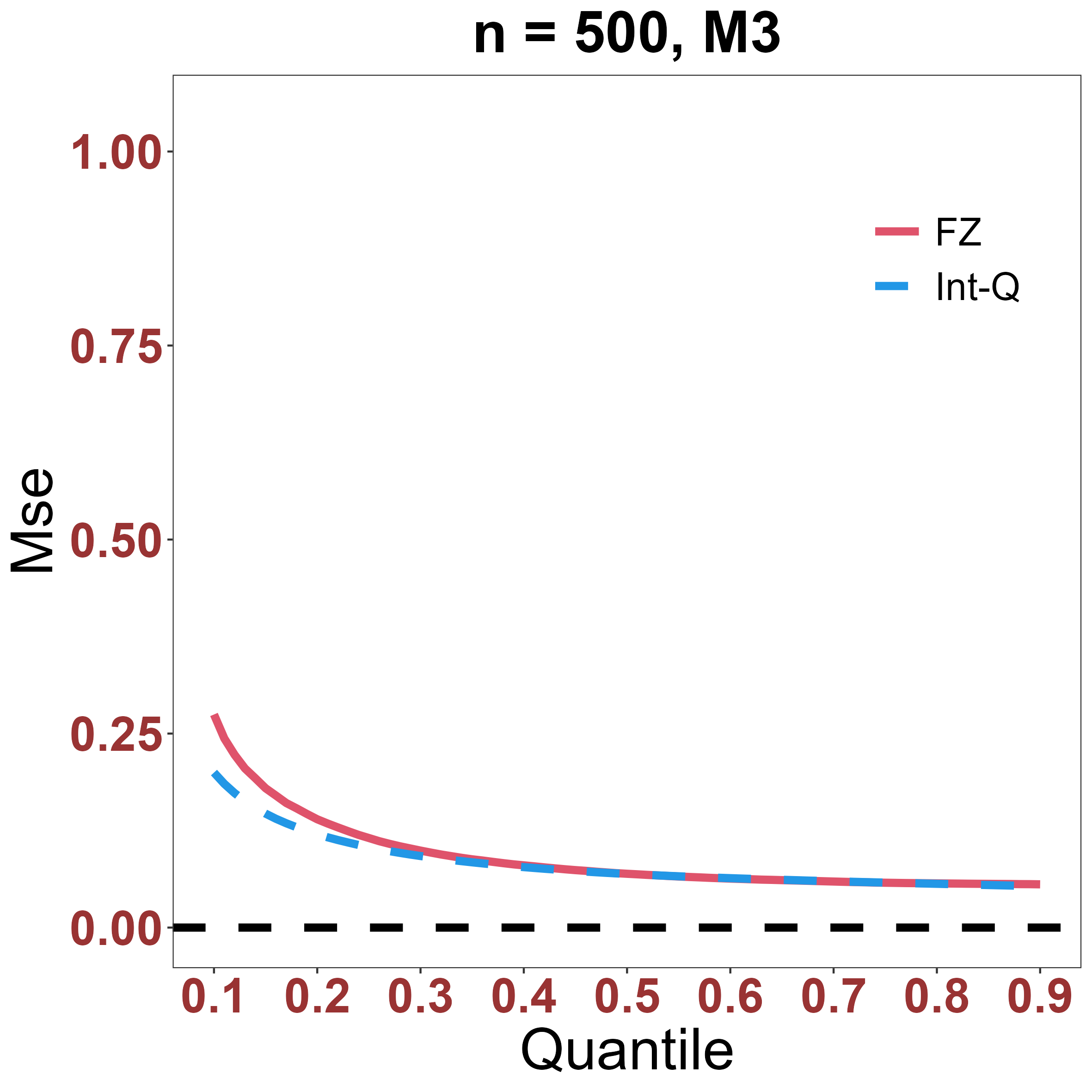}	
		\includegraphics[width = 3.9cm, height = 3.2cm]{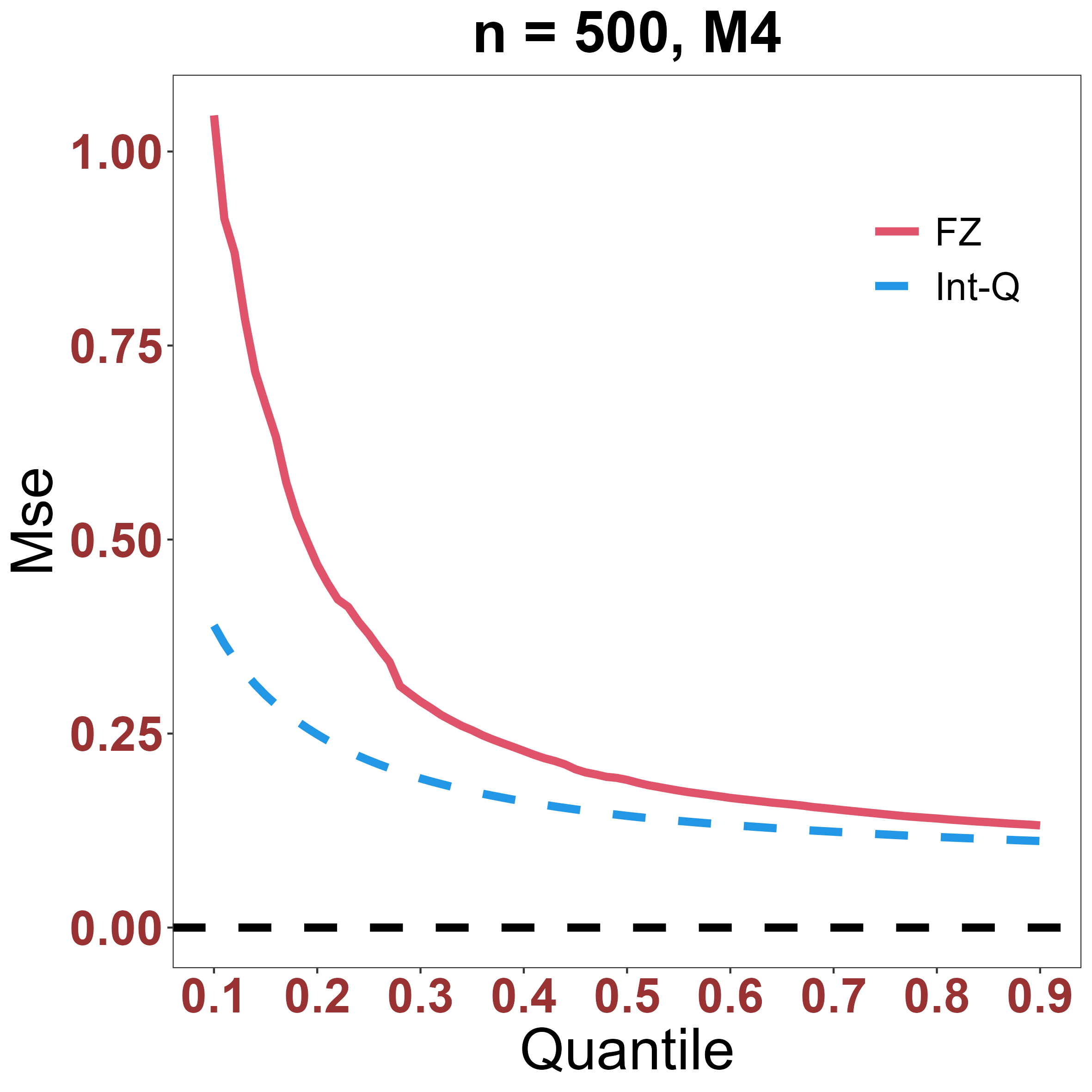}	
	}
	\caption{Bias, variance and MSE of the CTATE estimator using the FZ loss (FZ) and that using the trimmed integrated-QTE based estimators (IntQ and IntQ') when $\rho=0$ and $n=500$.}
	\label{figure3}
\end{figure}

\begin{figure}[!htb]
	\centering
	\mbox{
		\includegraphics[width = 3.9cm, height = 3.2cm]{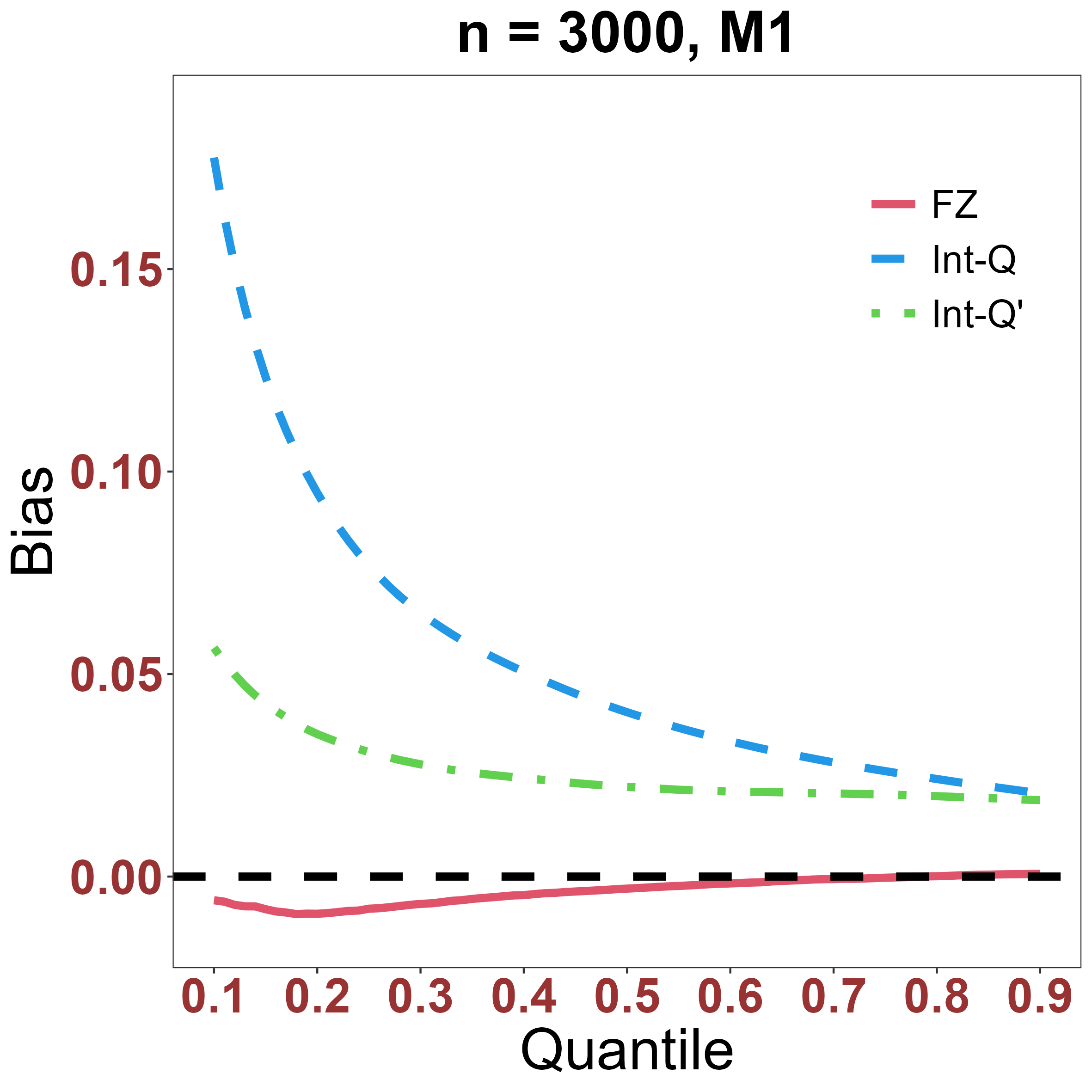}
		\includegraphics[width = 3.9cm, height = 3.2cm]{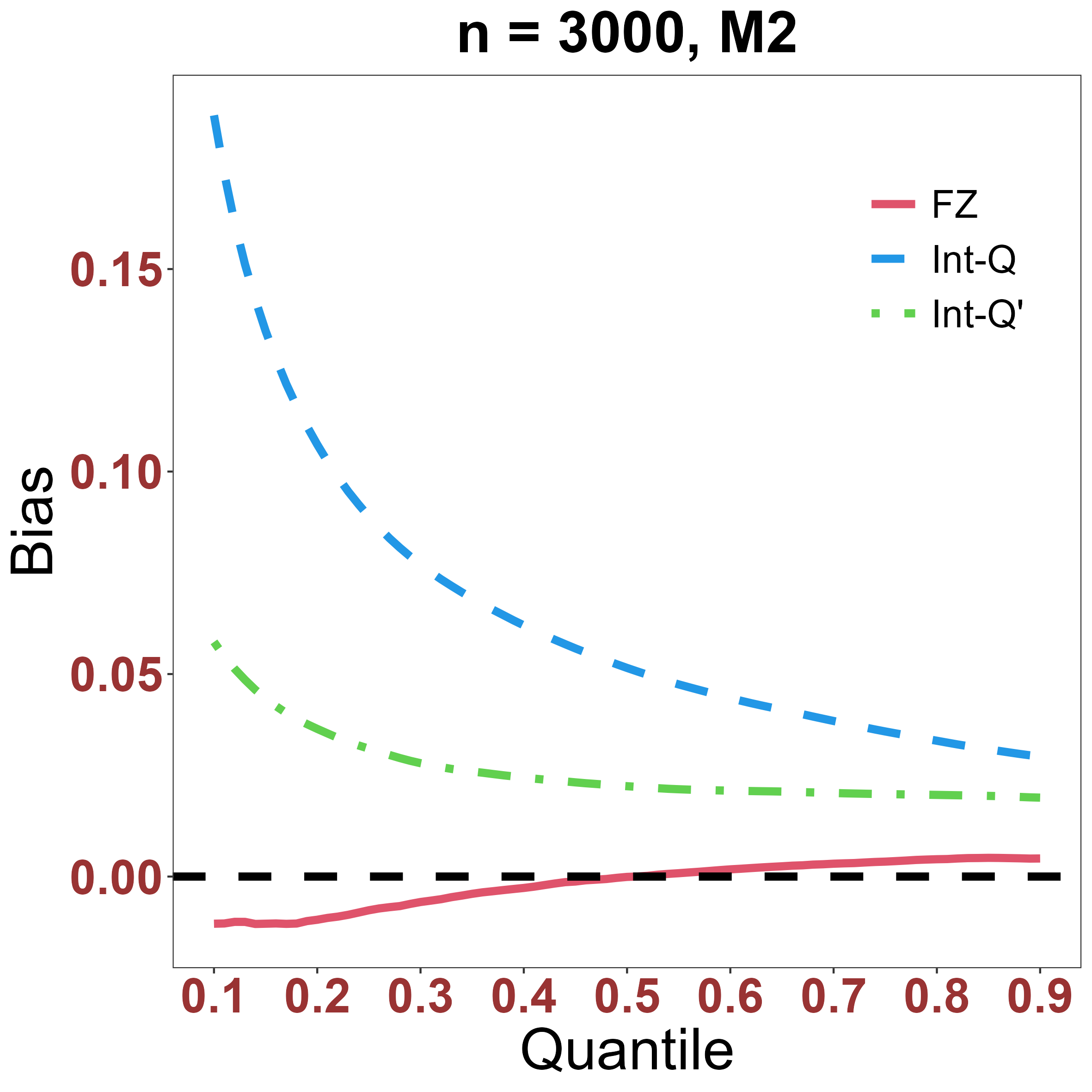}
		\includegraphics[width = 3.9cm, height = 3.2cm]{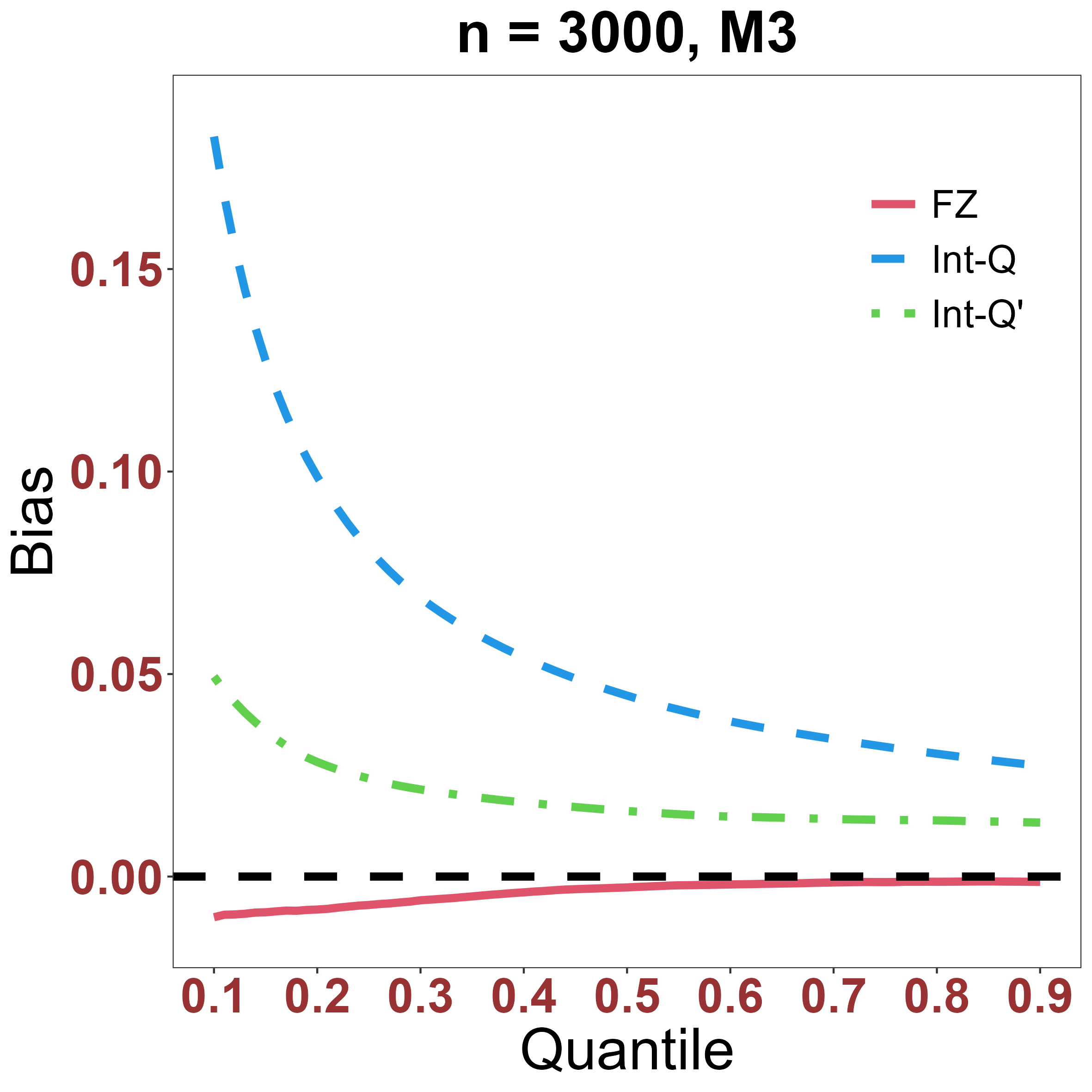}	
		\includegraphics[width = 3.9cm, height = 3.2cm]{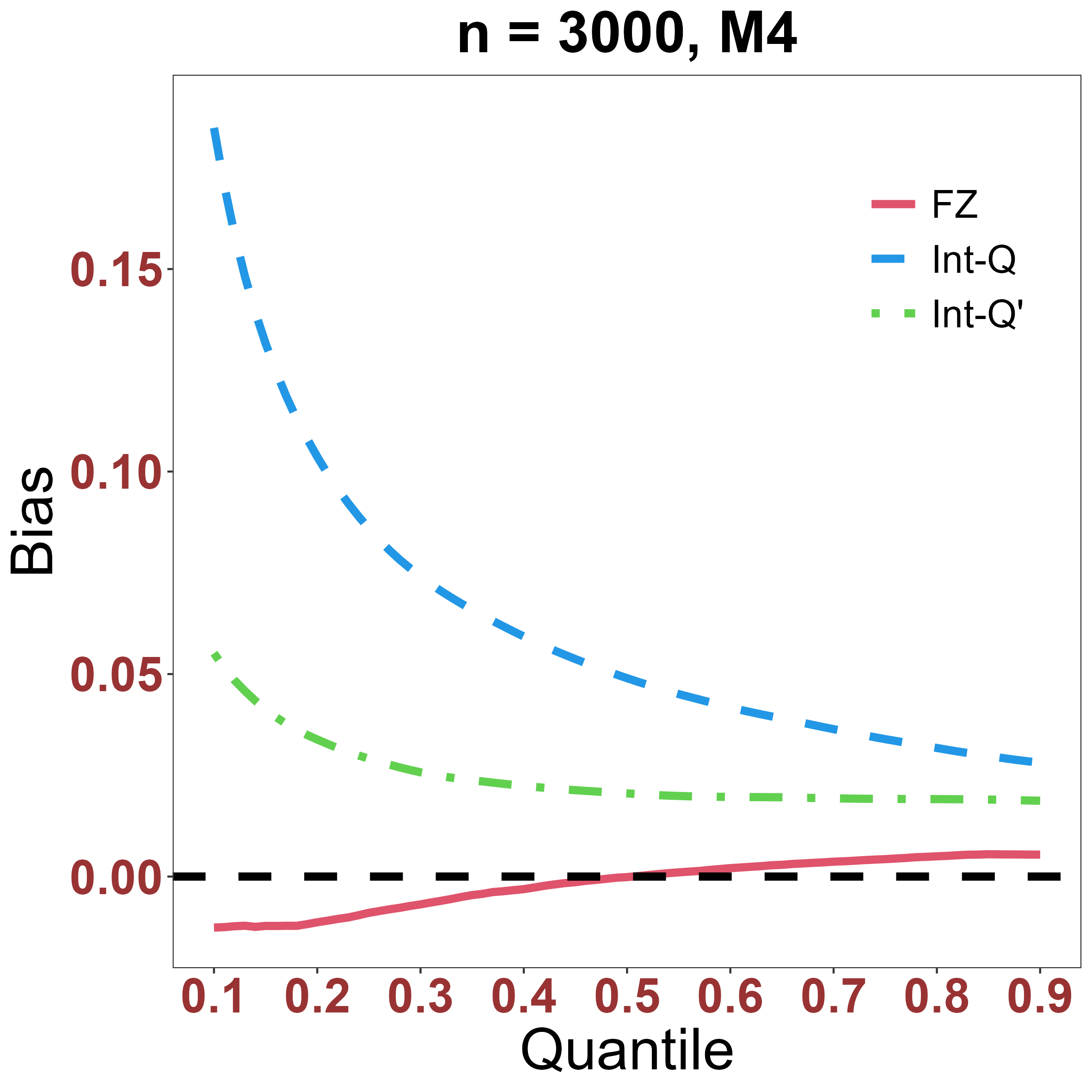}
	}	
	\mbox{
		\includegraphics[width = 3.9cm, height = 3.2cm]{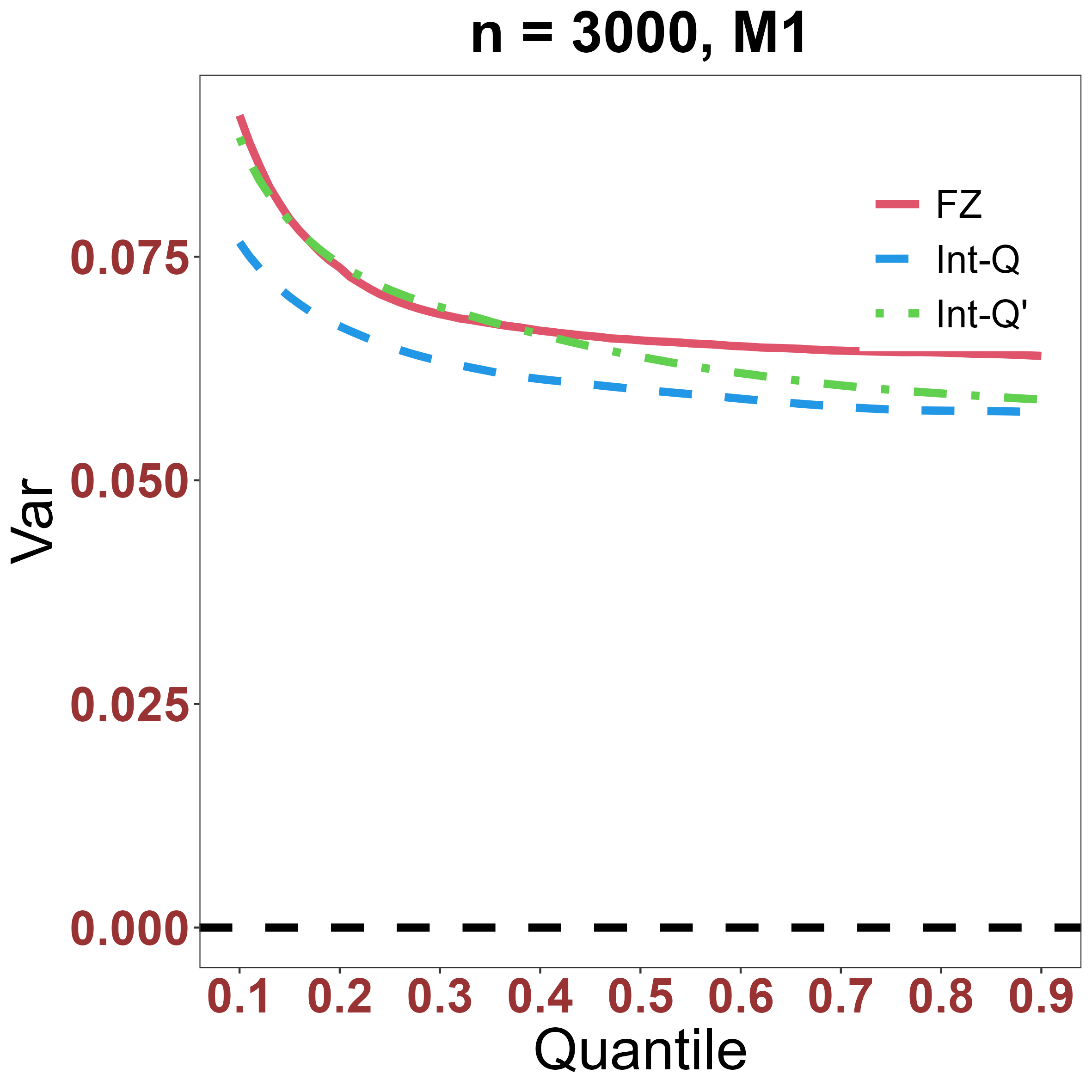}
		\includegraphics[width = 3.9cm, height = 3.2cm]{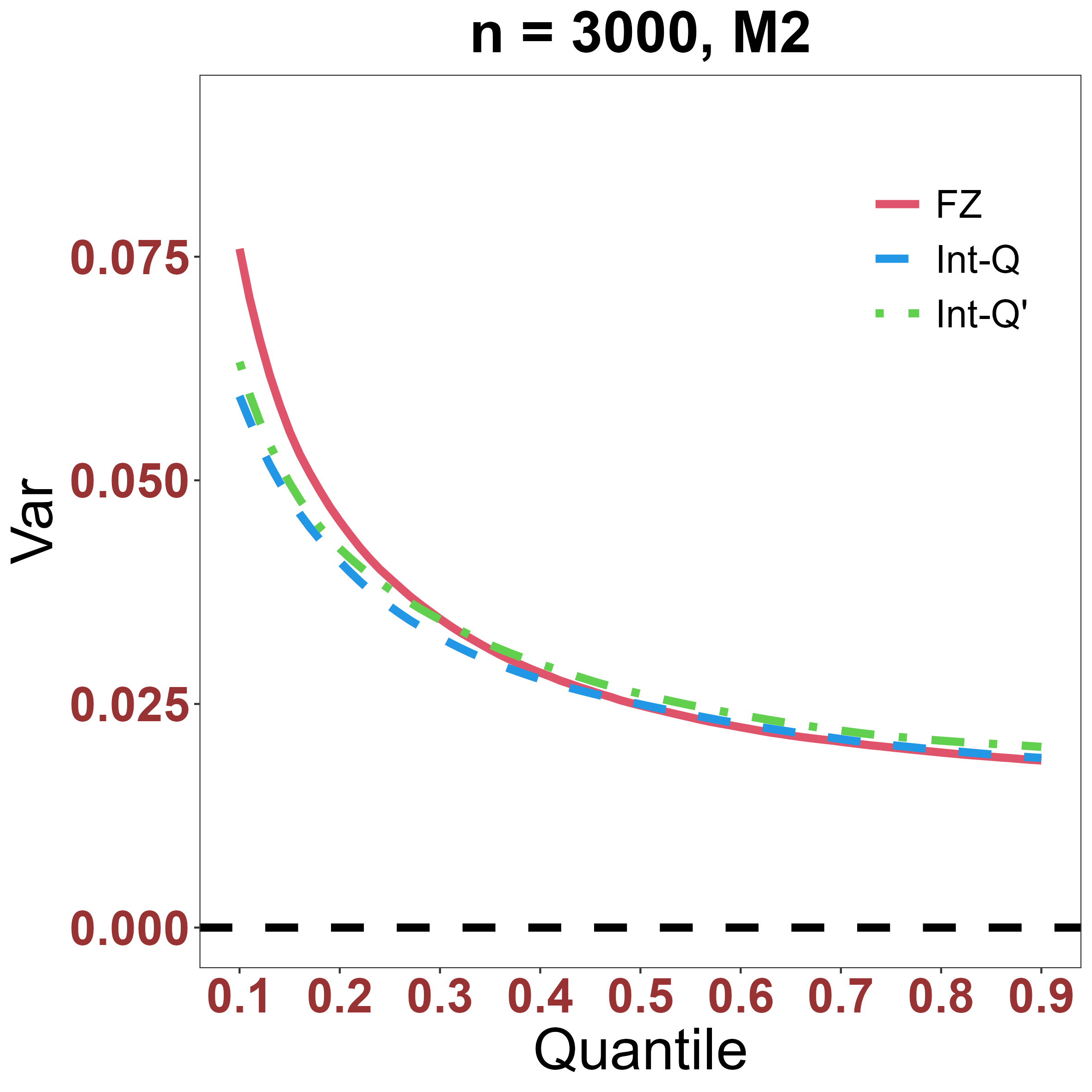}
		\includegraphics[width = 3.9cm, height = 3.2cm]{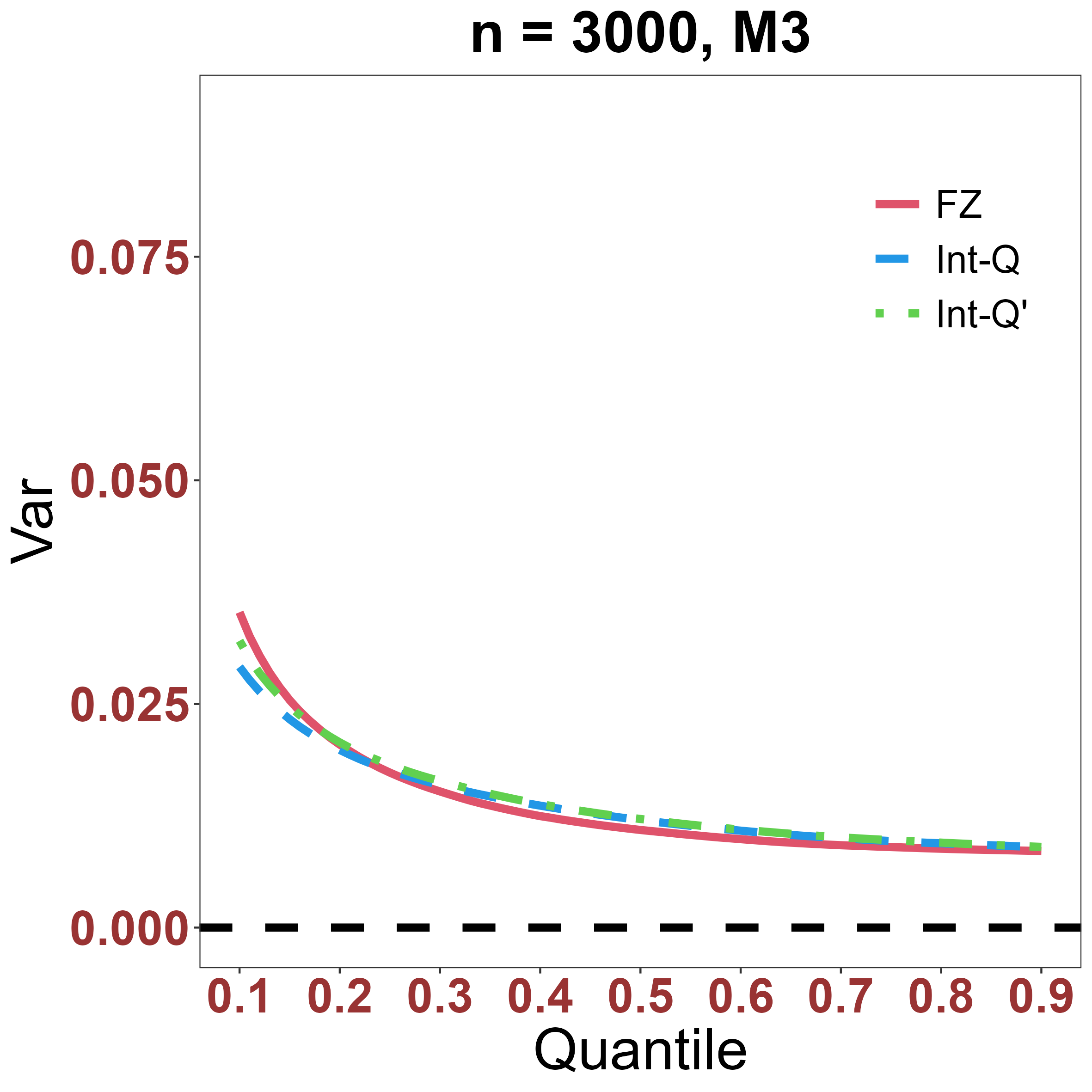}	
		\includegraphics[width = 3.9cm, height = 3.2cm]{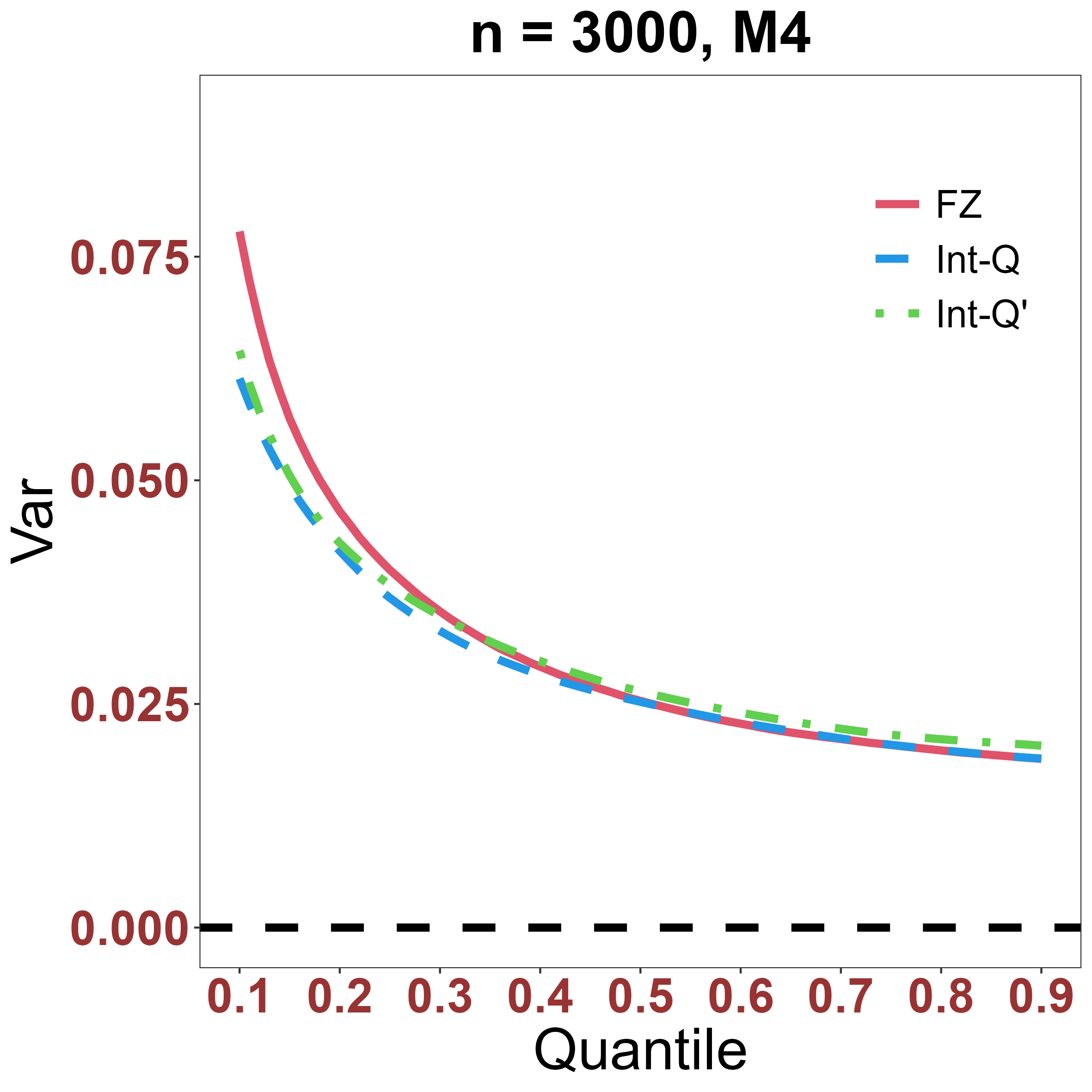}
	}	
	\mbox{
		\includegraphics[width = 3.9cm, height = 3.2cm]{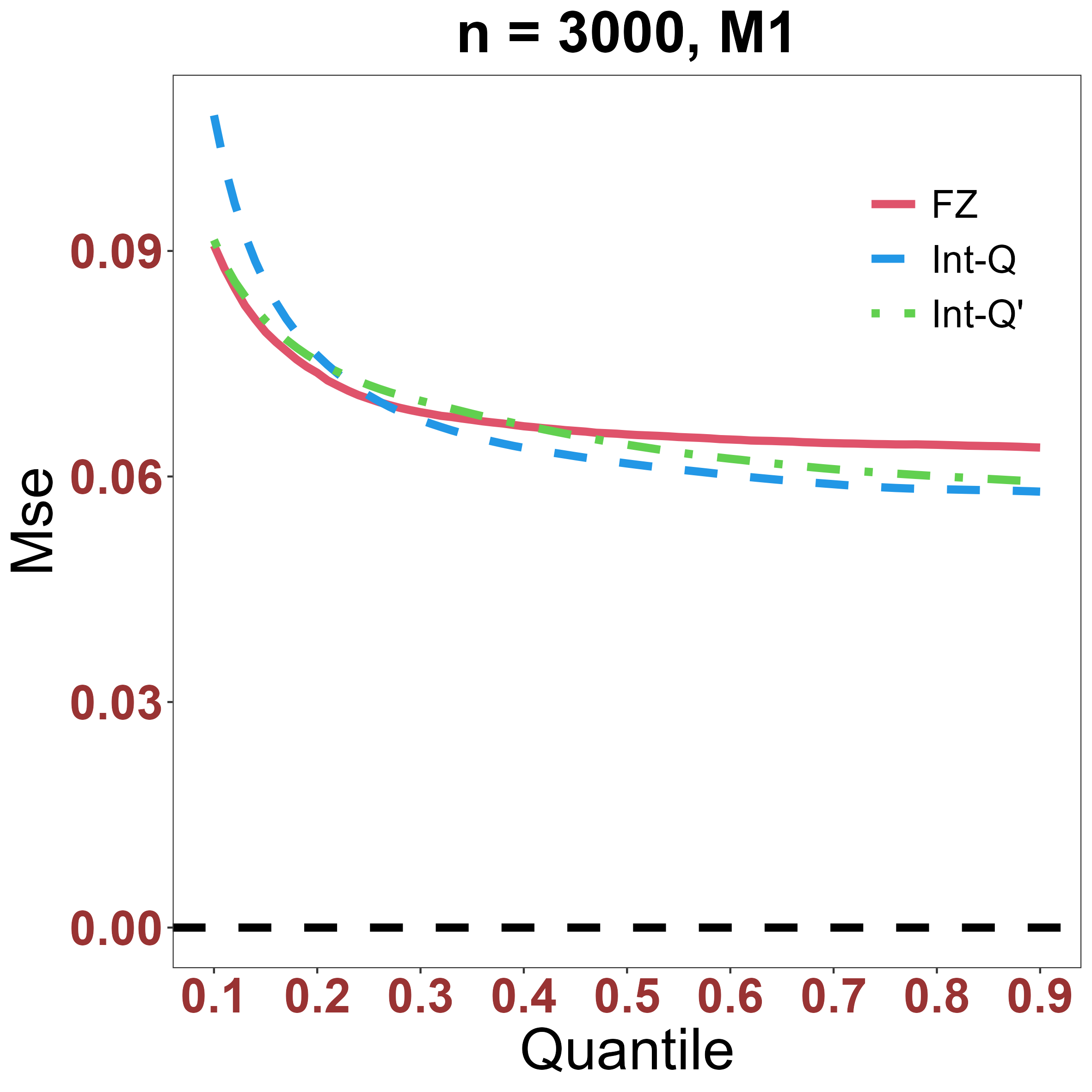}
		\includegraphics[width = 3.9cm, height = 3.2cm]{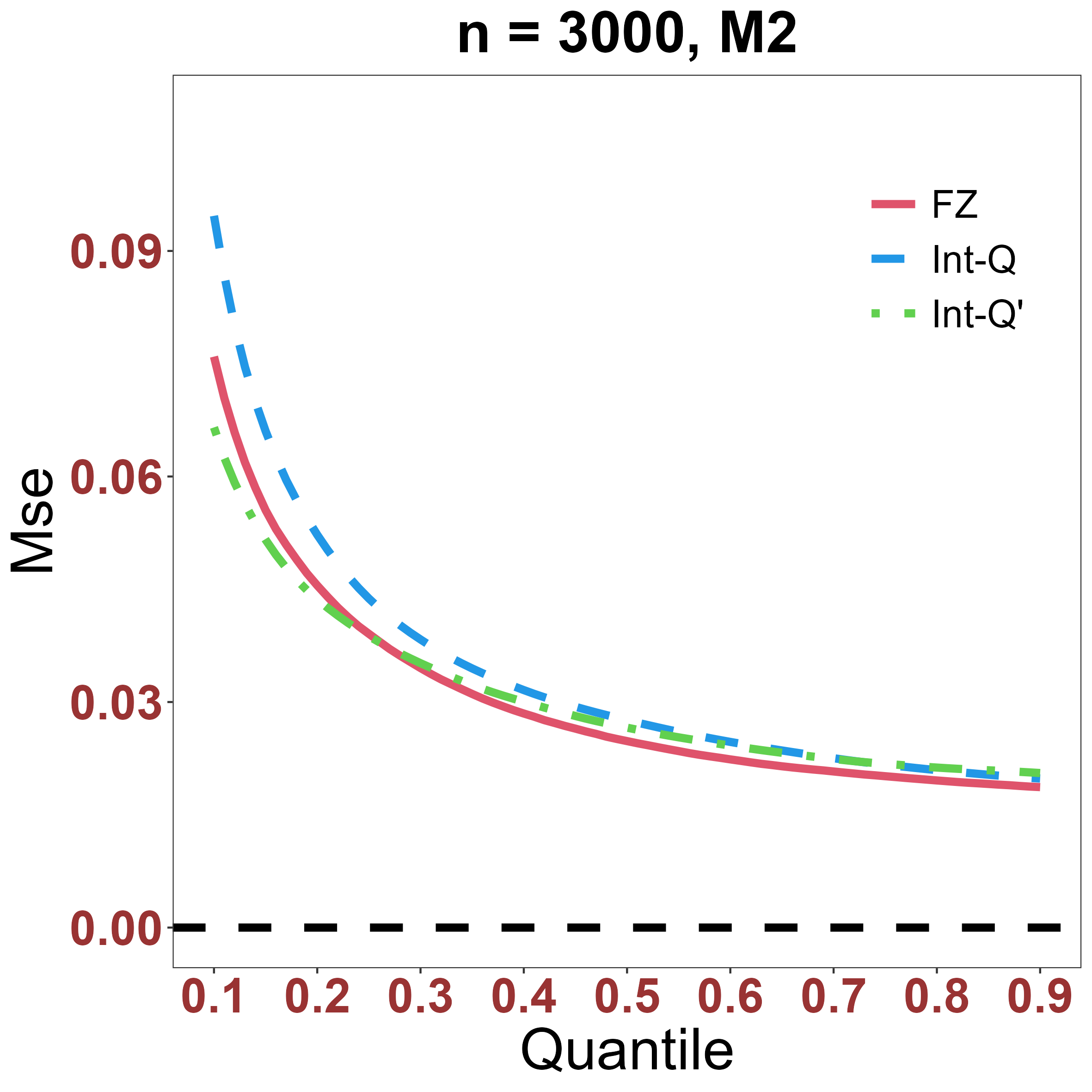}
		\includegraphics[width = 3.9cm, height = 3.2cm]{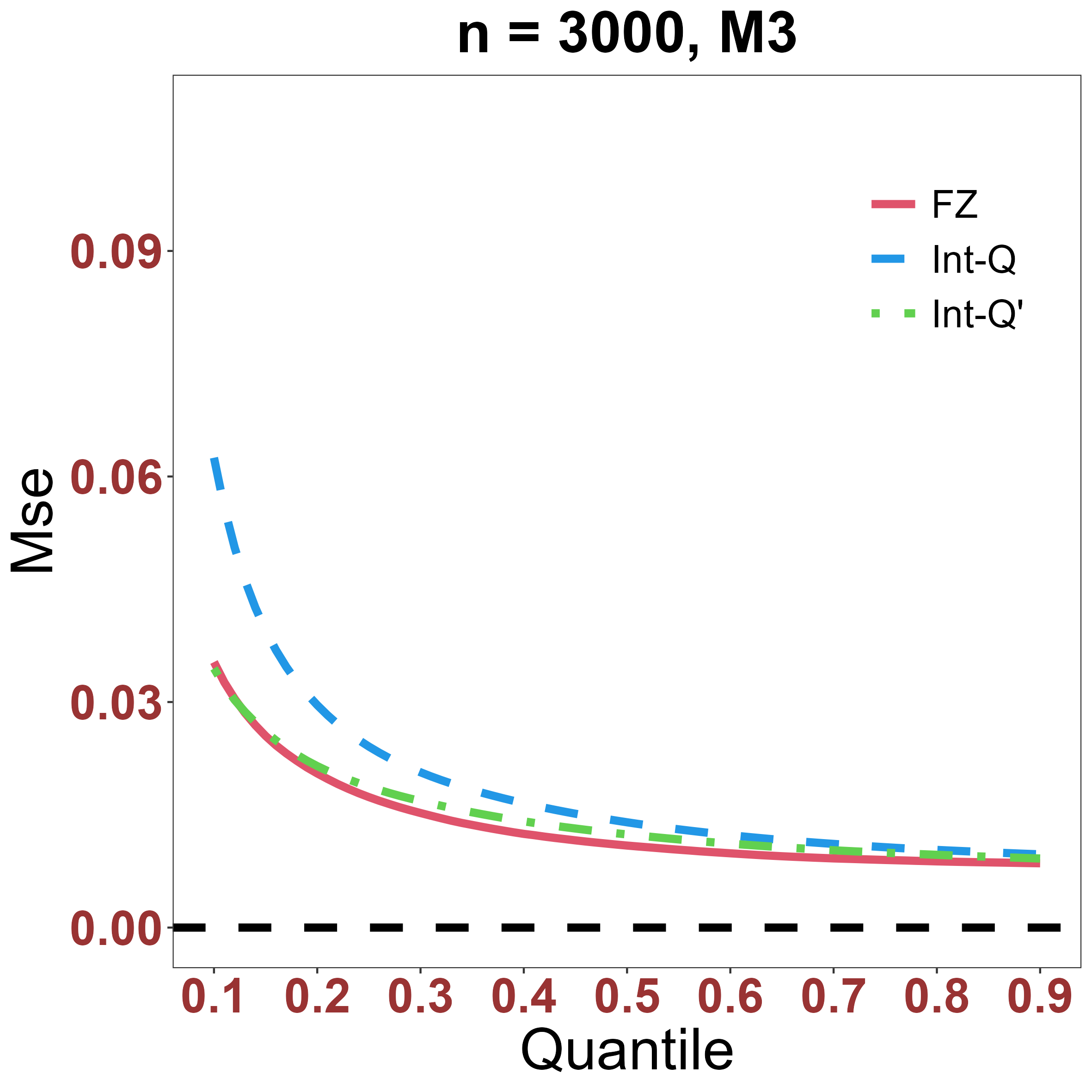}	
		\includegraphics[width = 3.9cm, height = 3.2cm]{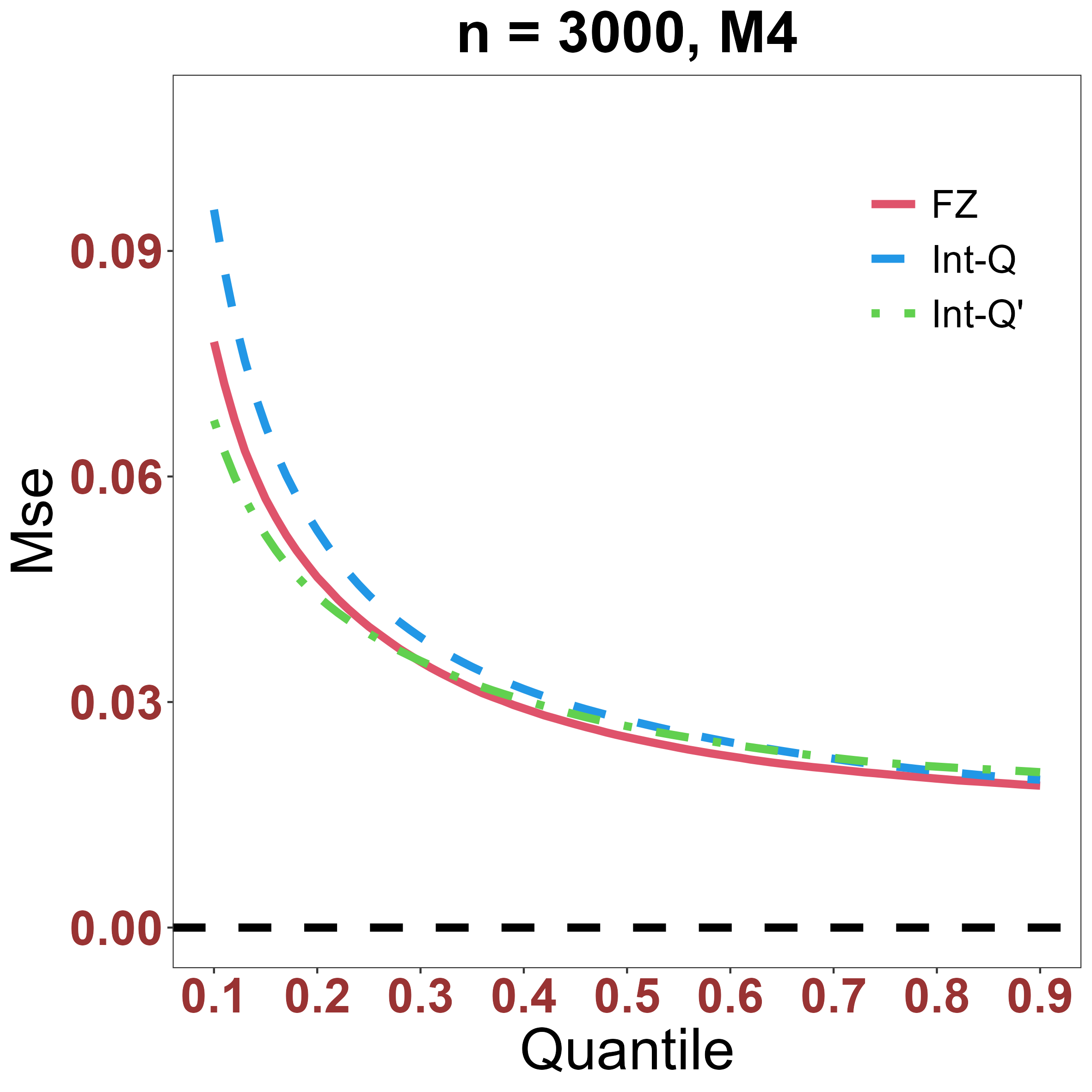}
	}	
	\caption{Bias, variance and MSE of the CTATE estimator using the FZ loss (FZ) and that using the trimmed integrated-QTE based estimators (IntQ and IntQ') when $\rho=0$ and $n=3,000$.}
	\label{figure4}
\end{figure}

\begin{figure}[!htb]
	\centering
	\mbox{
		\includegraphics[width = 3.9cm, height = 3.2cm]{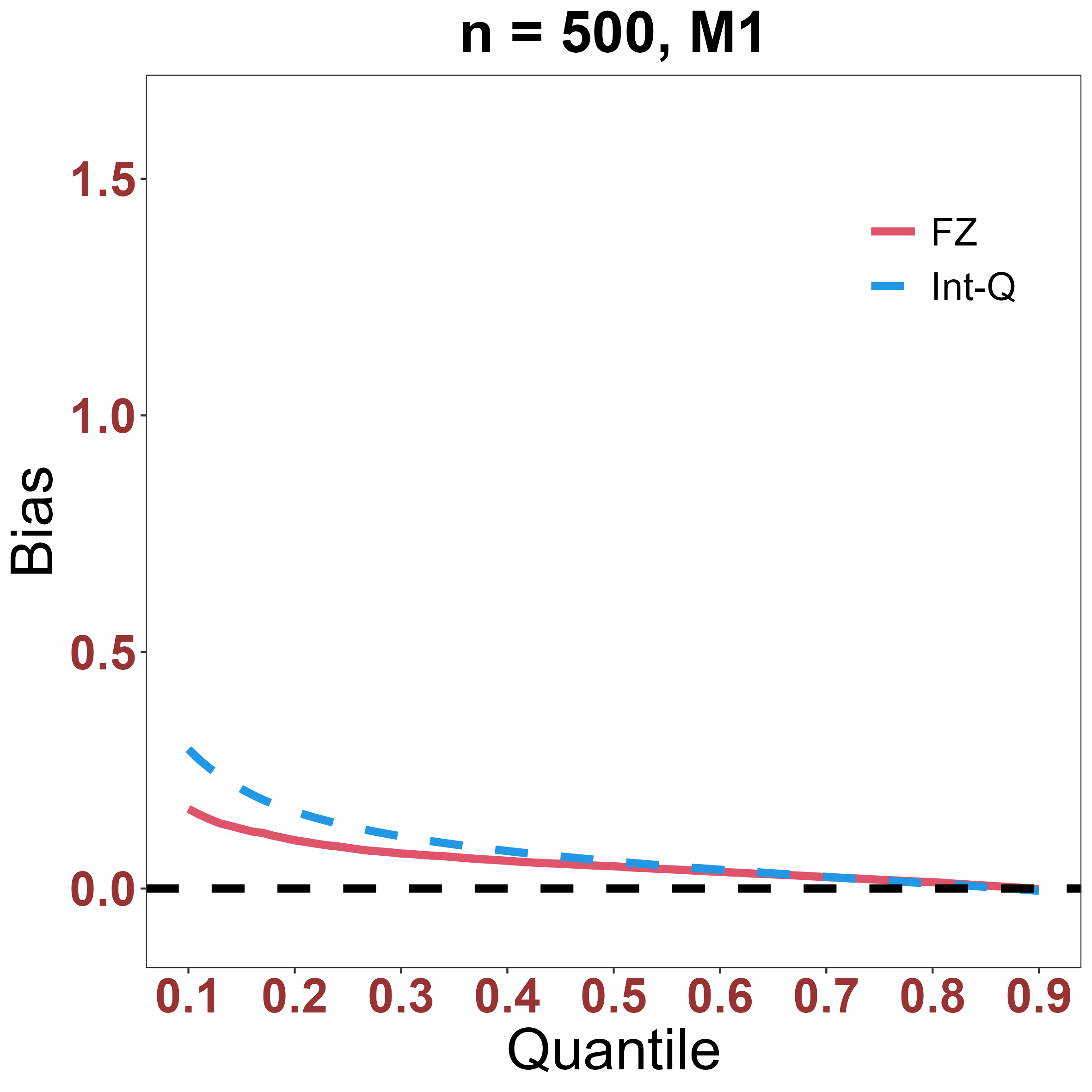}
		\includegraphics[width = 3.9cm, height = 3.2cm]{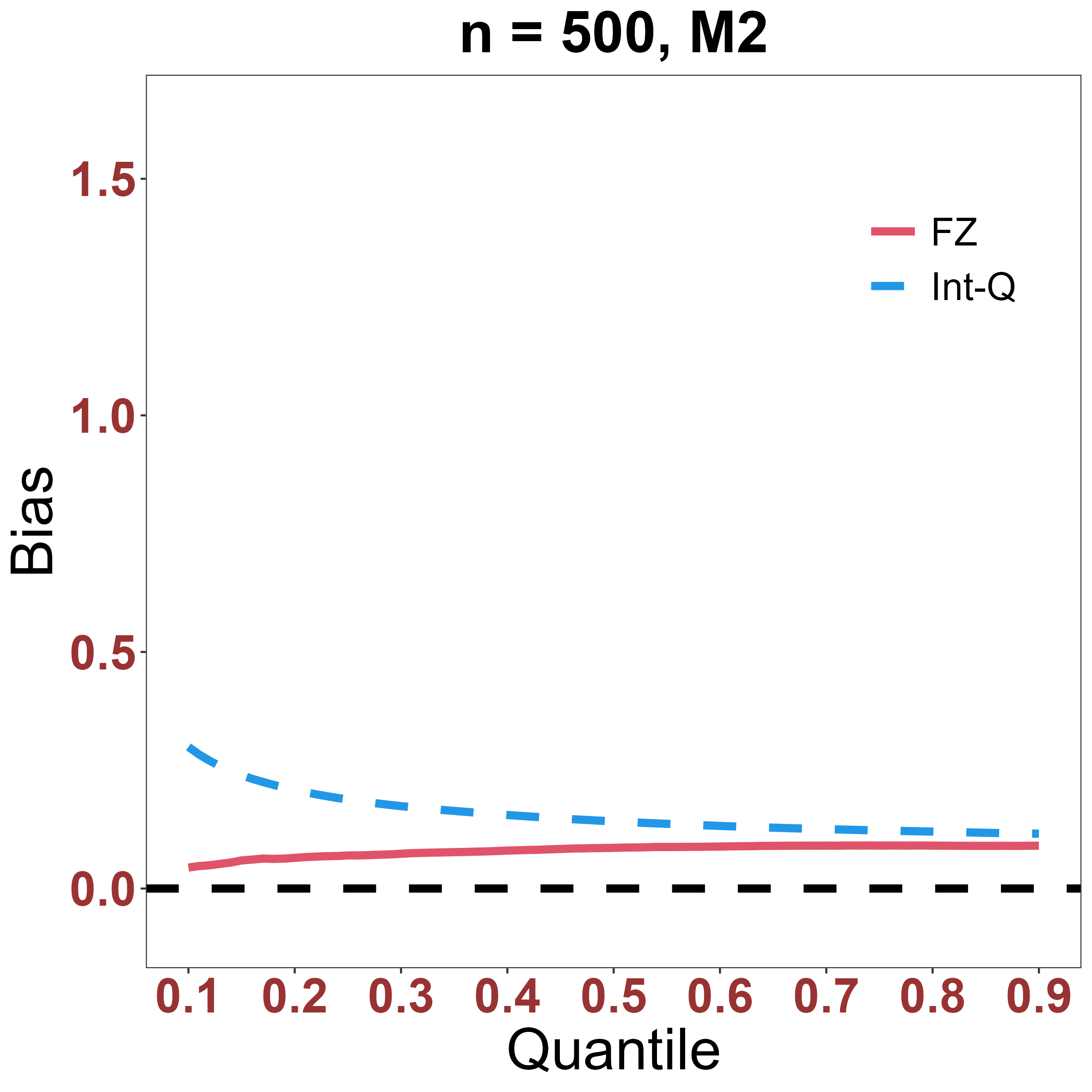}
		\includegraphics[width = 3.9cm, height = 3.2cm]{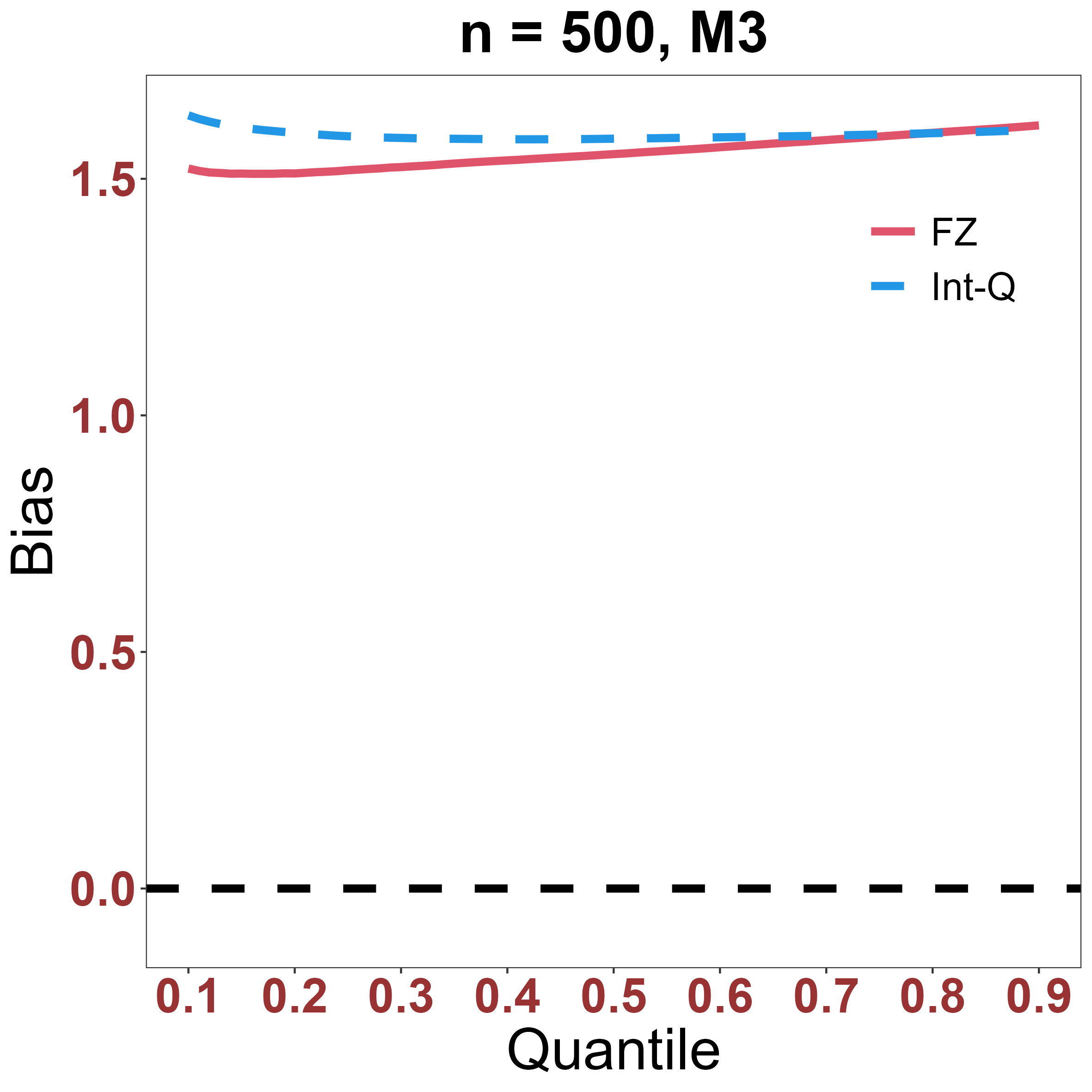}	
		\includegraphics[width = 3.9cm, height = 3.2cm]{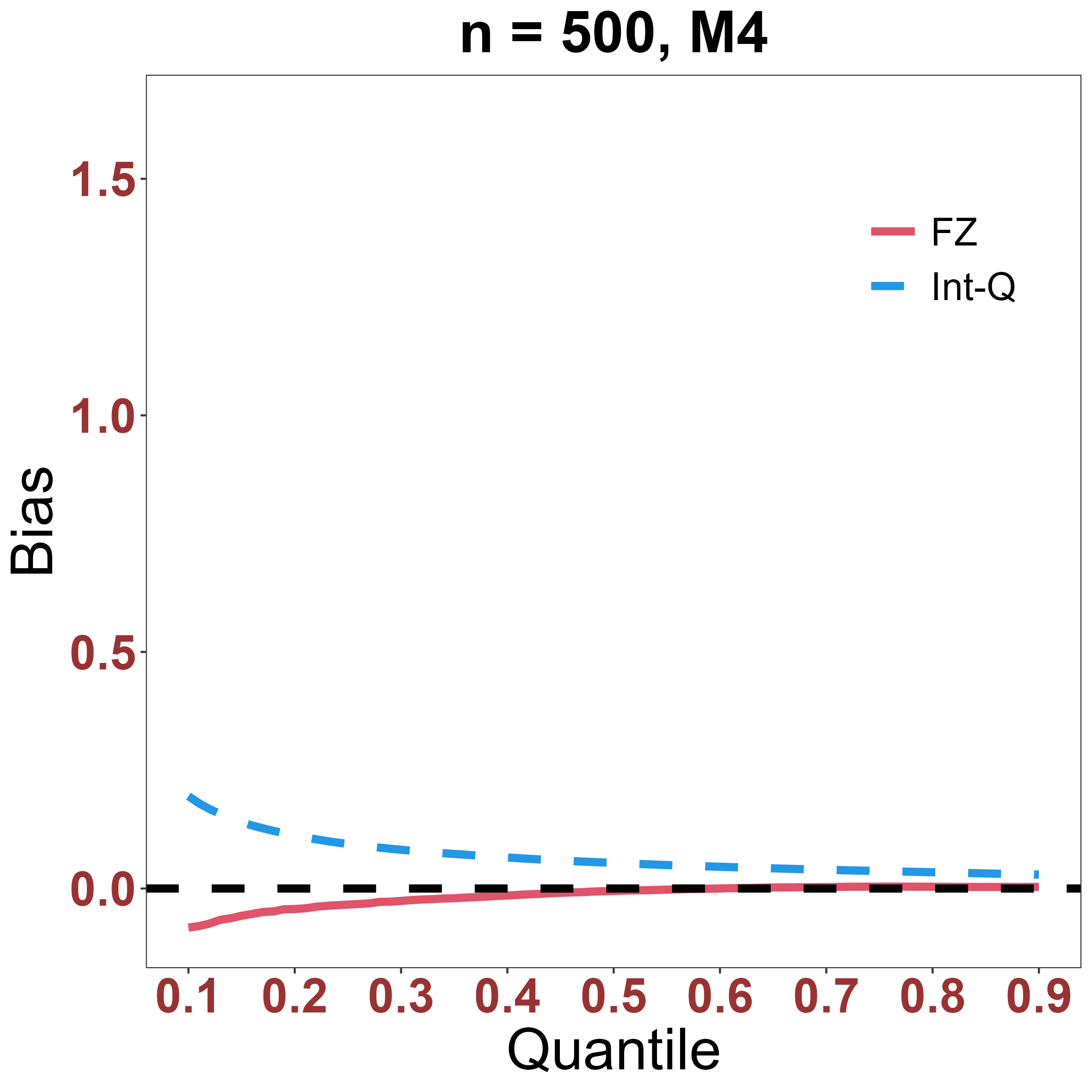}
	}	
	\mbox{	\includegraphics[width = 3.9cm, height = 3.2cm]{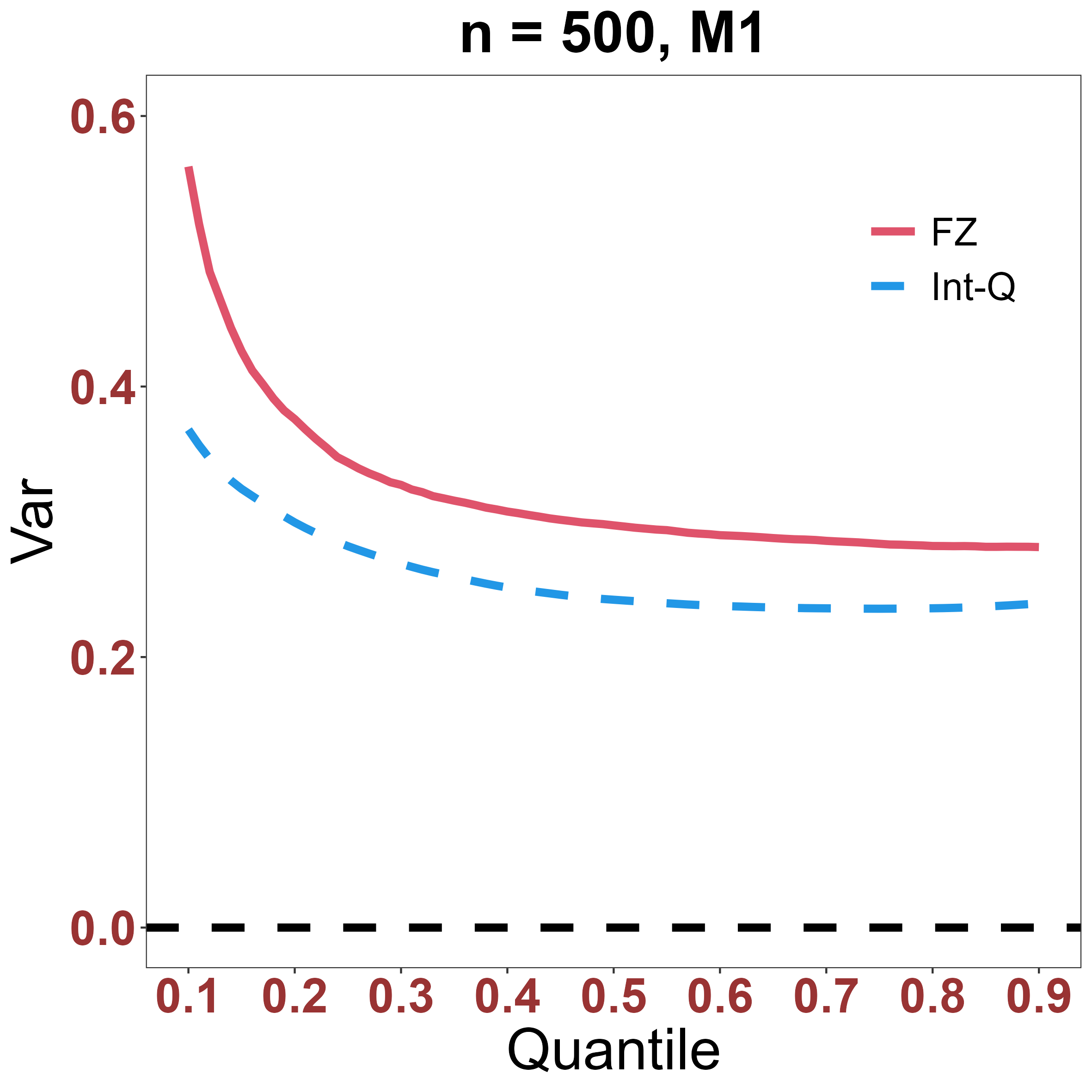}
		\includegraphics[width = 3.9cm, height = 3.2cm]{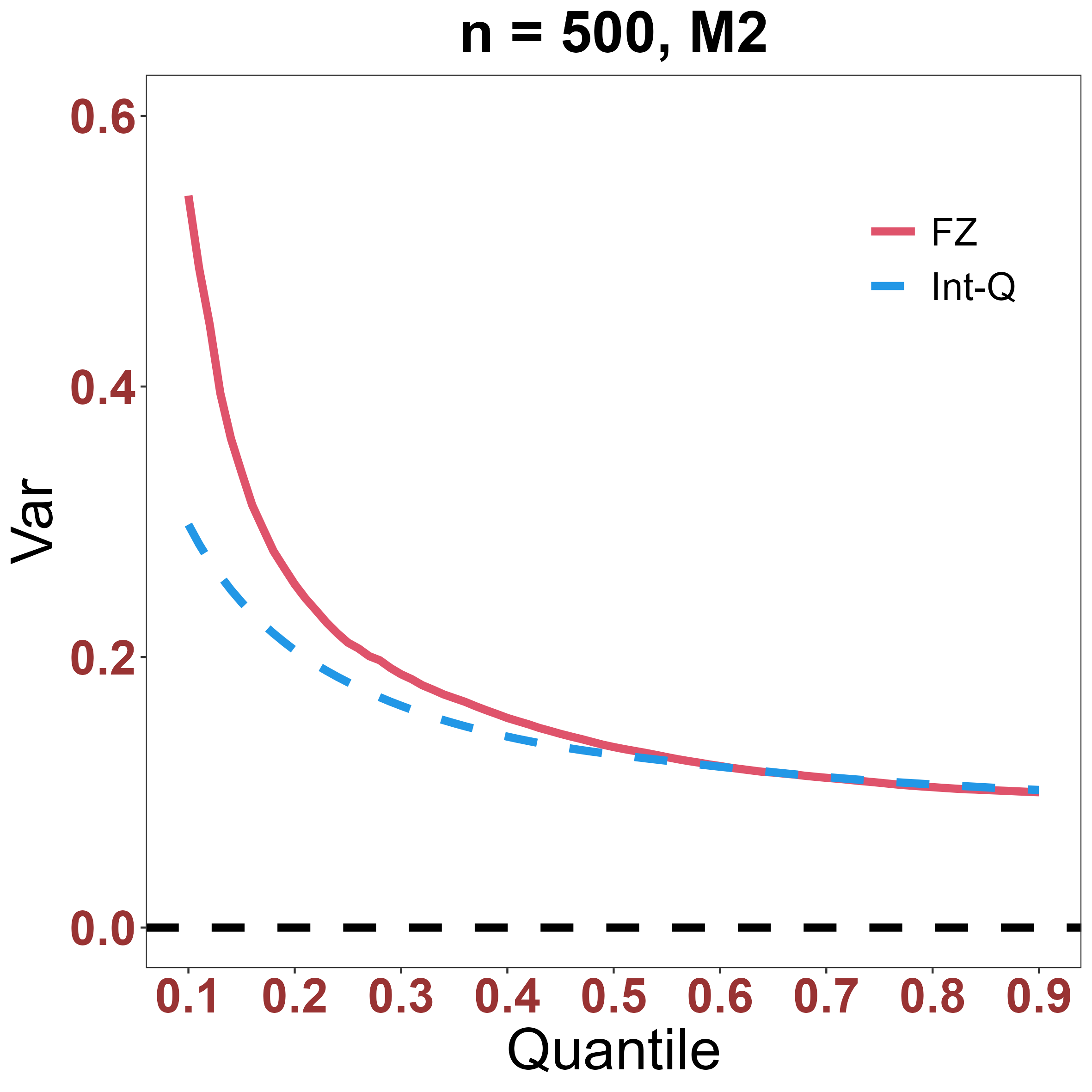}
		\includegraphics[width = 3.9cm, height = 3.2cm]{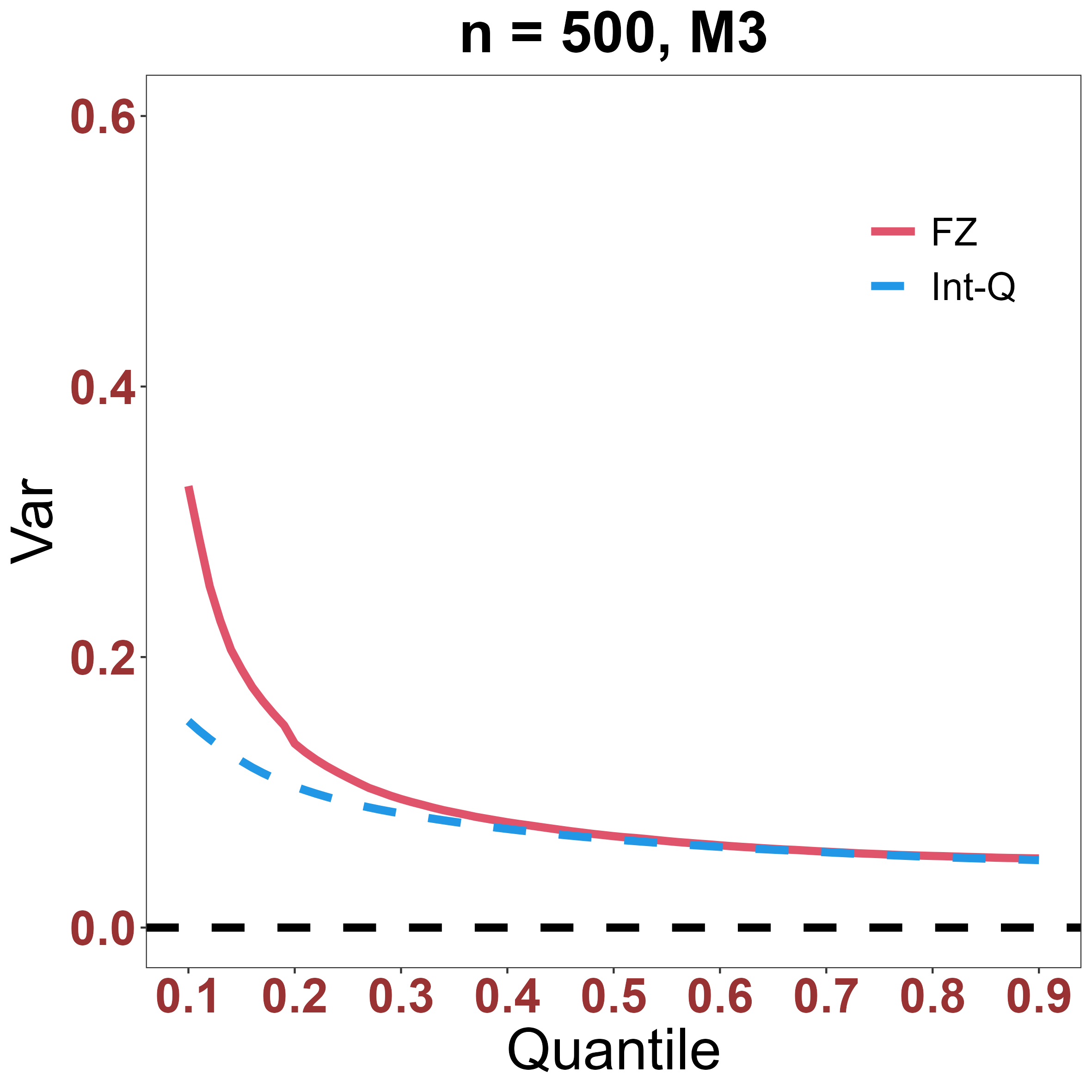}	
		\includegraphics[width = 3.9cm, height = 3.2cm]{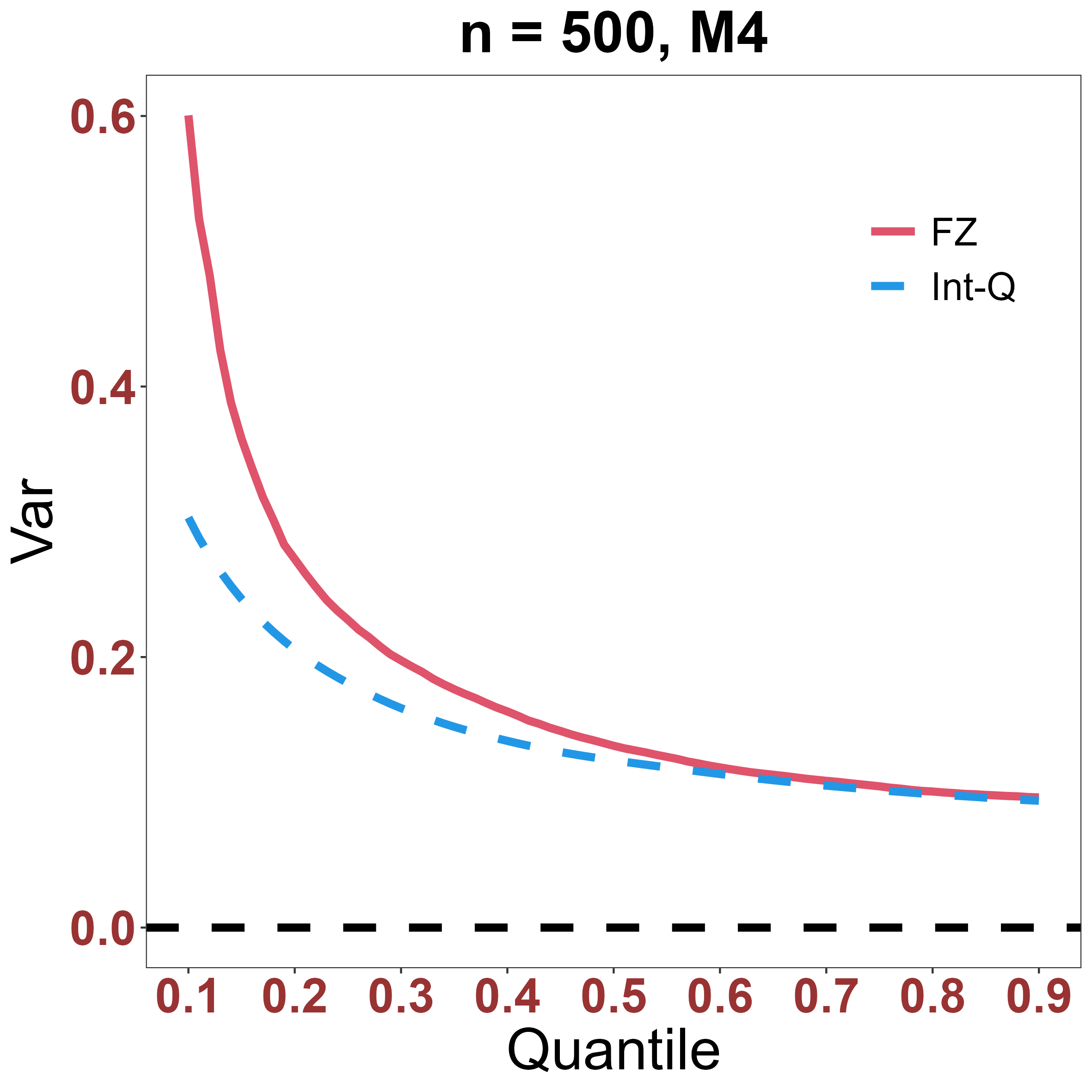}
	}	
	\mbox{	\includegraphics[width = 3.9cm, height = 3.2cm]{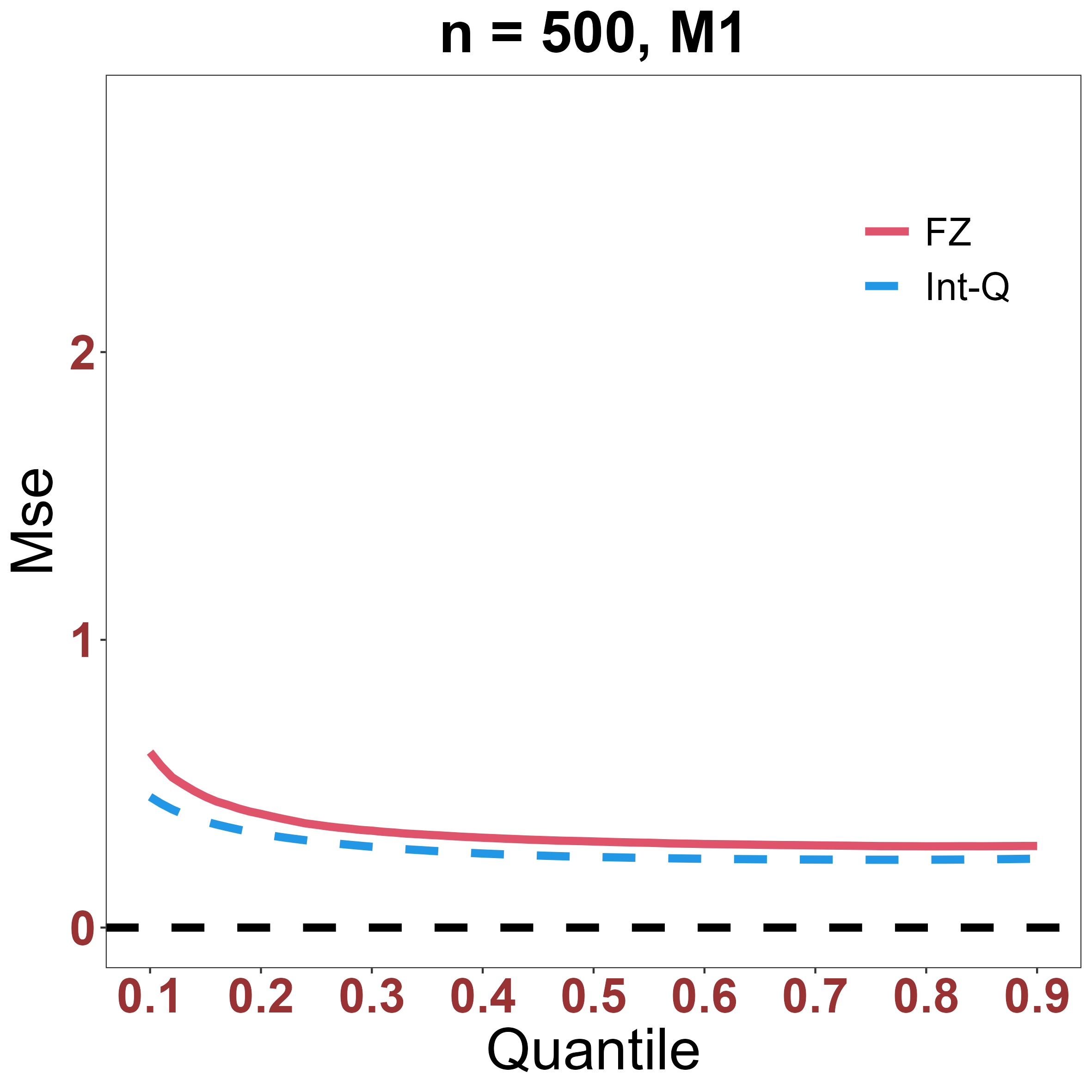}
		\includegraphics[width = 3.9cm, height = 3.2cm]{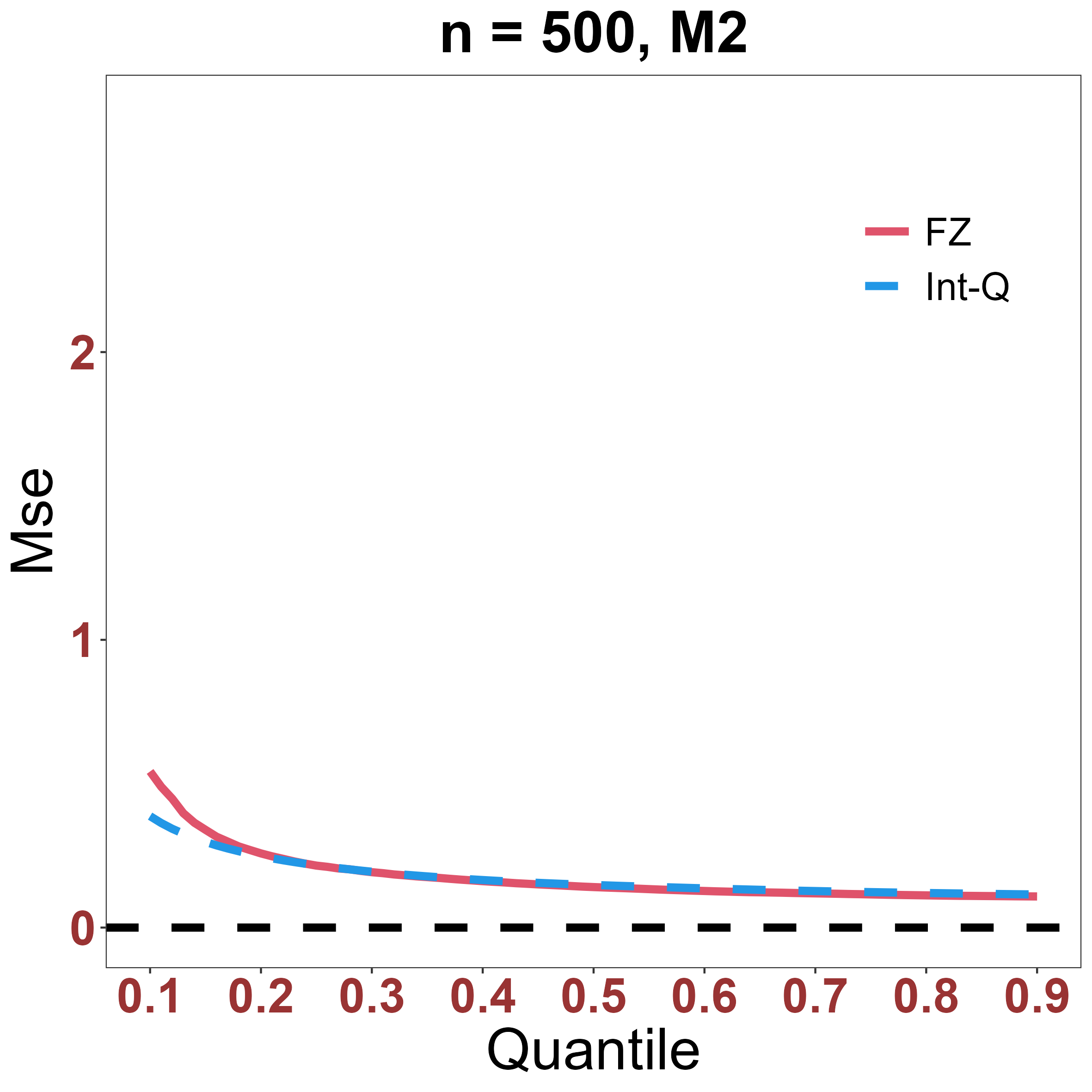}
		\includegraphics[width = 3.9cm, height = 3.2cm]{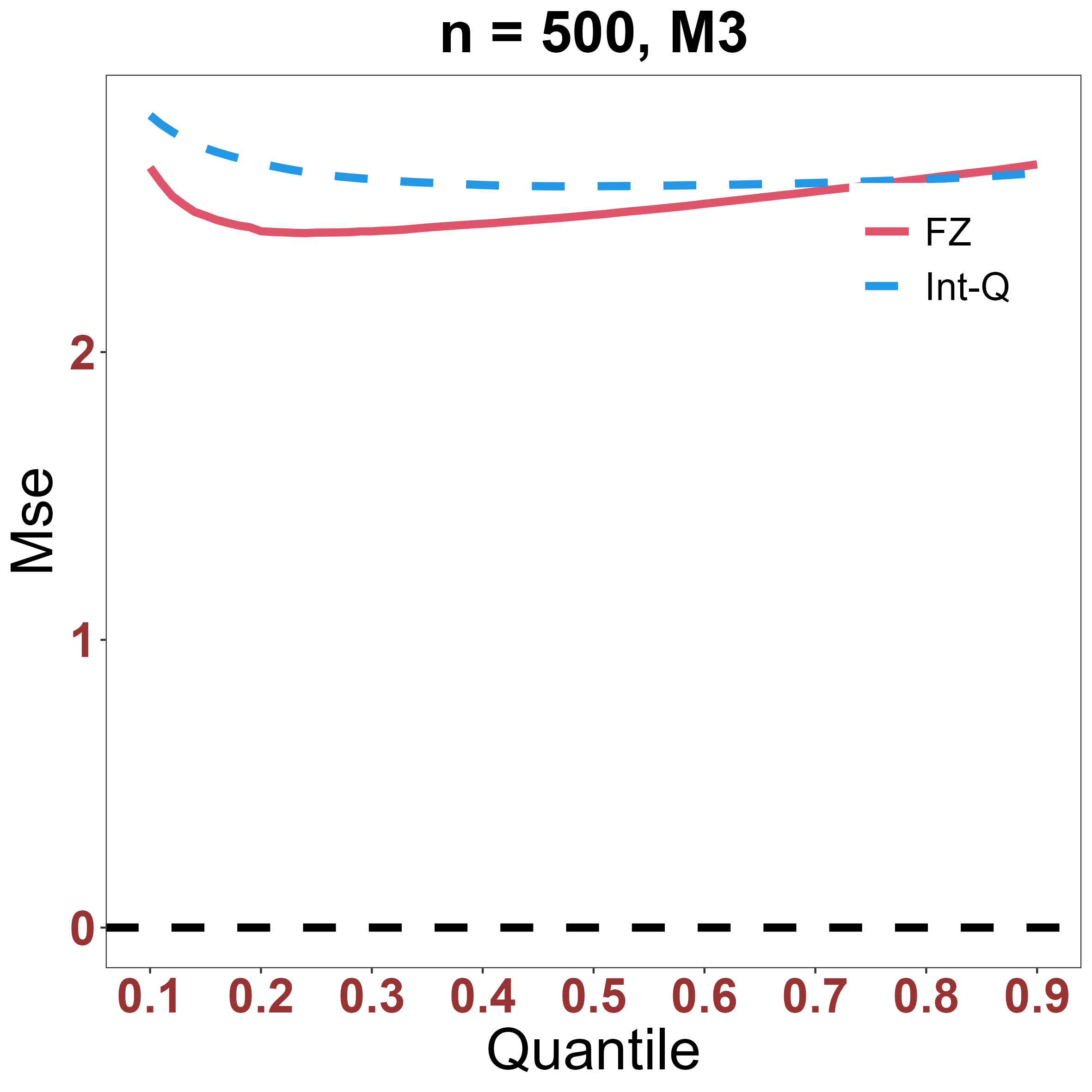}	
		\includegraphics[width = 3.9cm, height = 3.2cm]{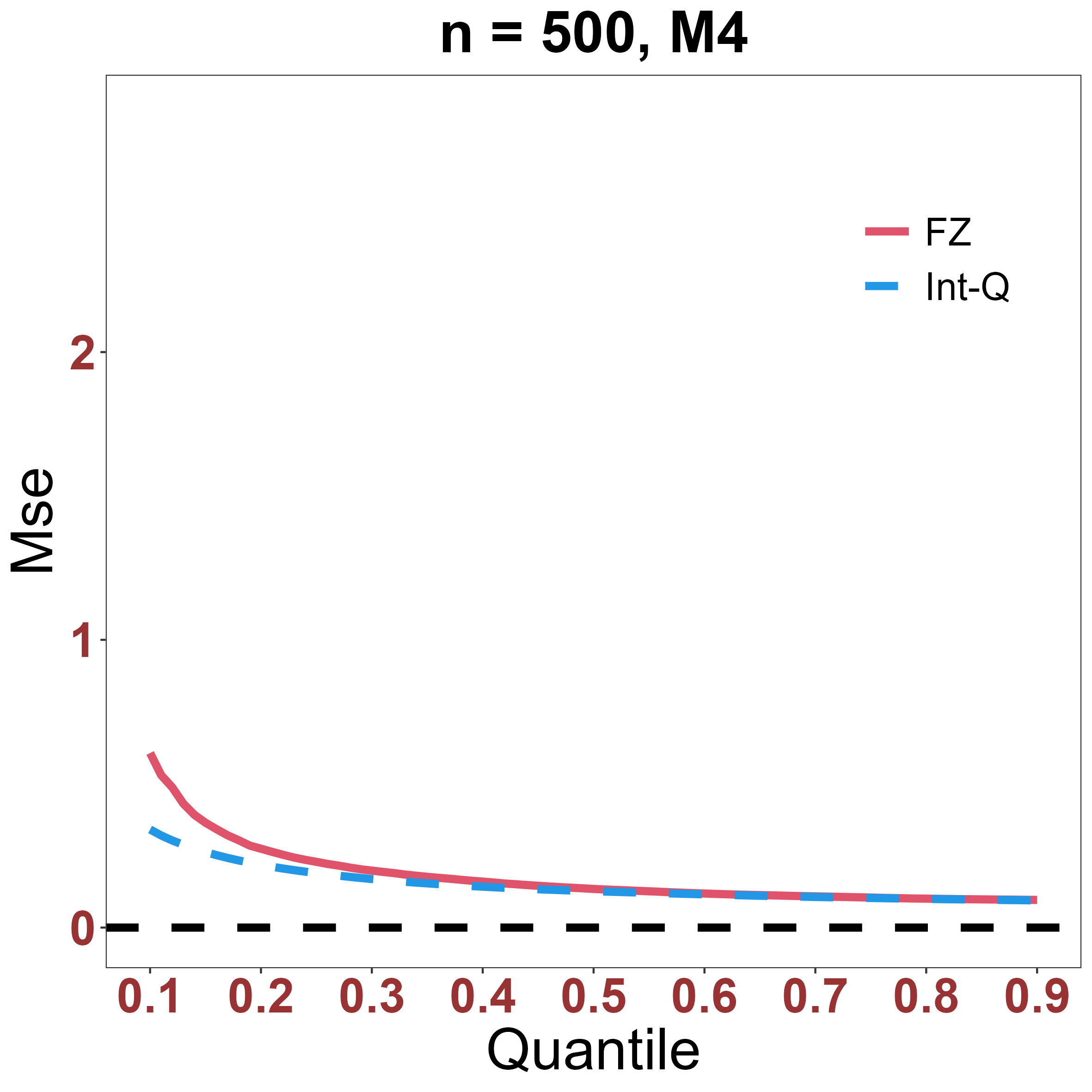}	
	}
	\caption{Bias, variance and MSE of the CTATE estimator using the FZ loss (FZ) and that using the trimmed integrated-QTE based estimators (IntQ and IntQ') when $\rho=0.5$ and $n=500$.}
	\label{figure5}
\end{figure}

\begin{figure}[!htb]
	\centering
	\mbox{
		\includegraphics[width = 3.9cm, height = 3.2cm]{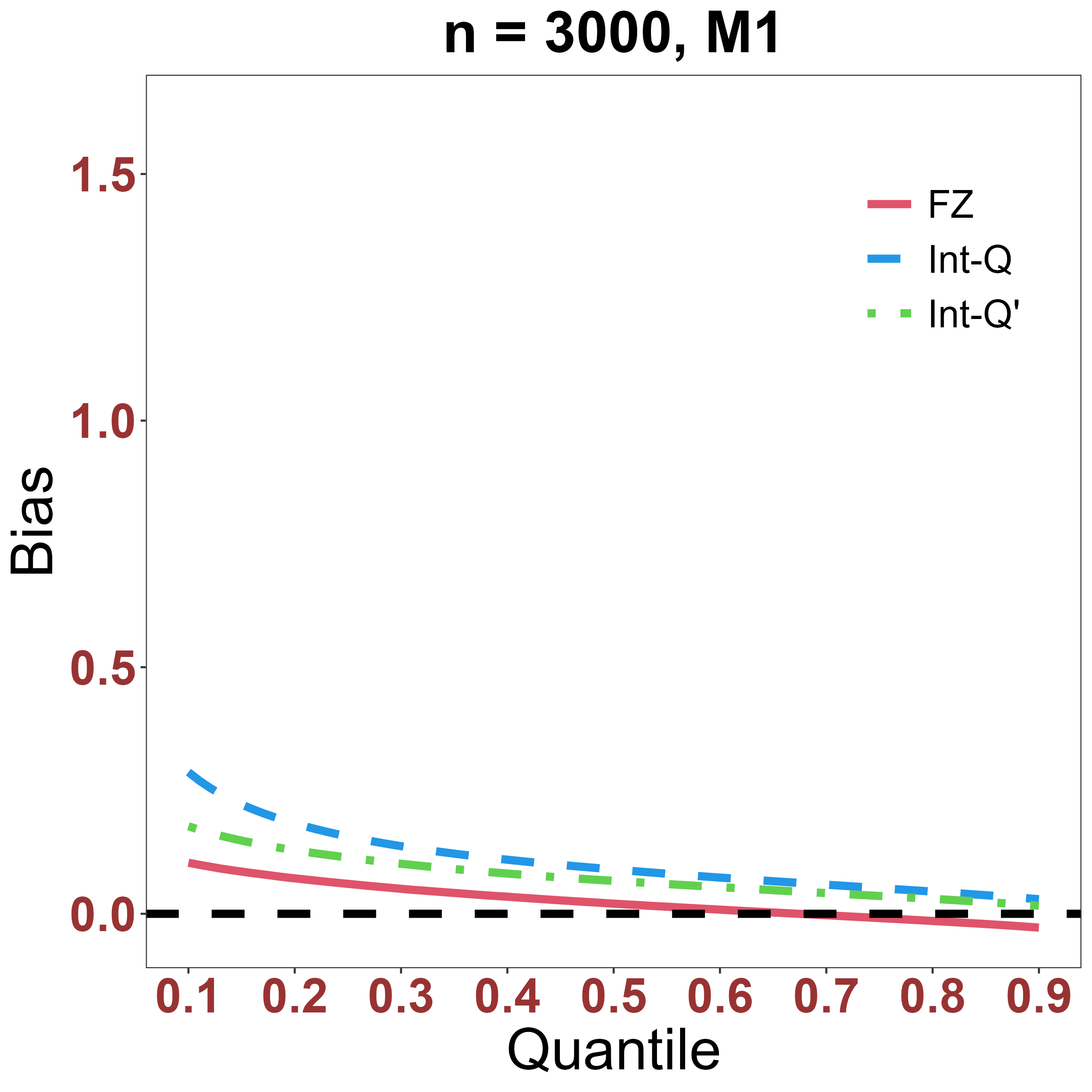}
		\includegraphics[width = 3.9cm, height = 3.2cm]{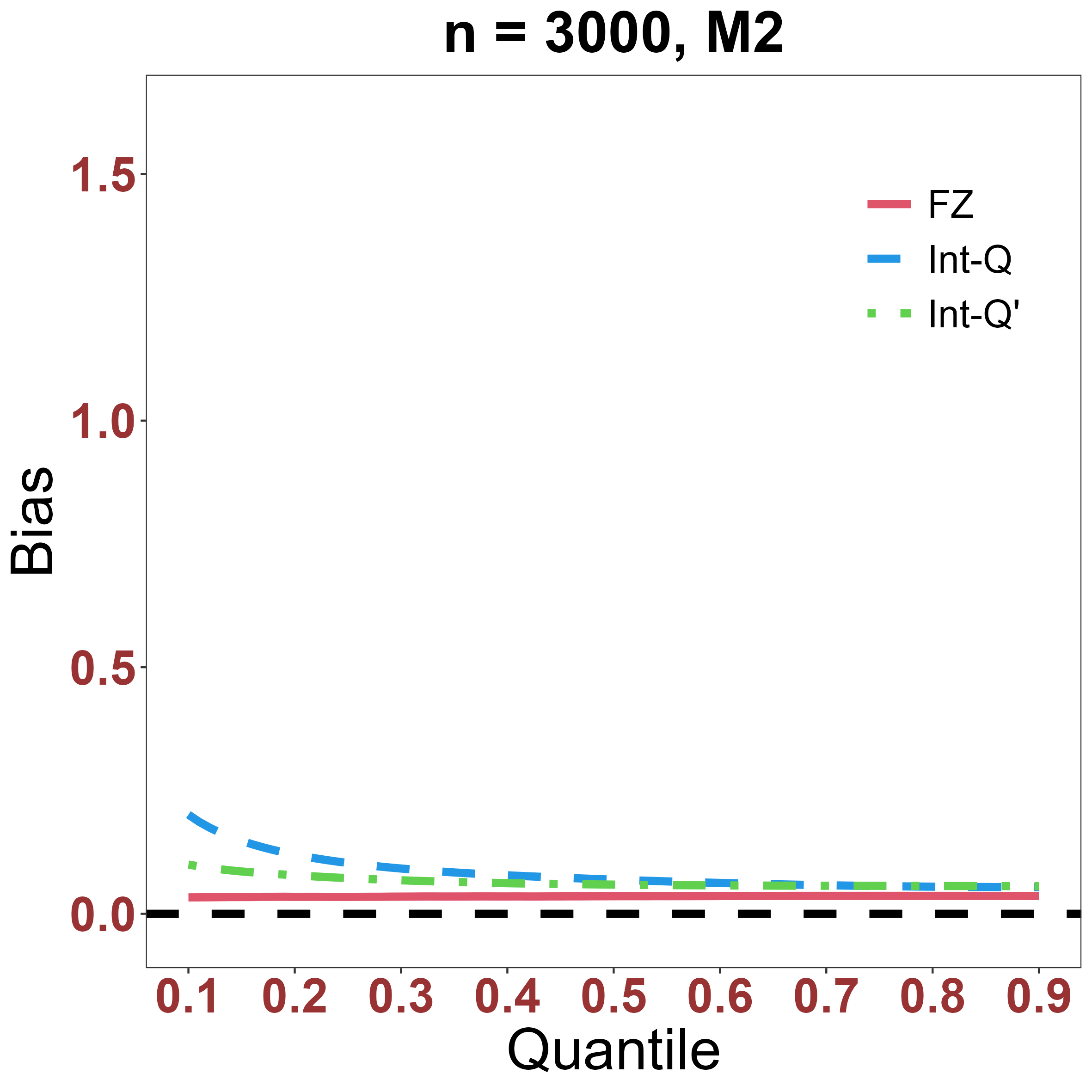}
		\includegraphics[width = 3.9cm, height = 3.2cm]{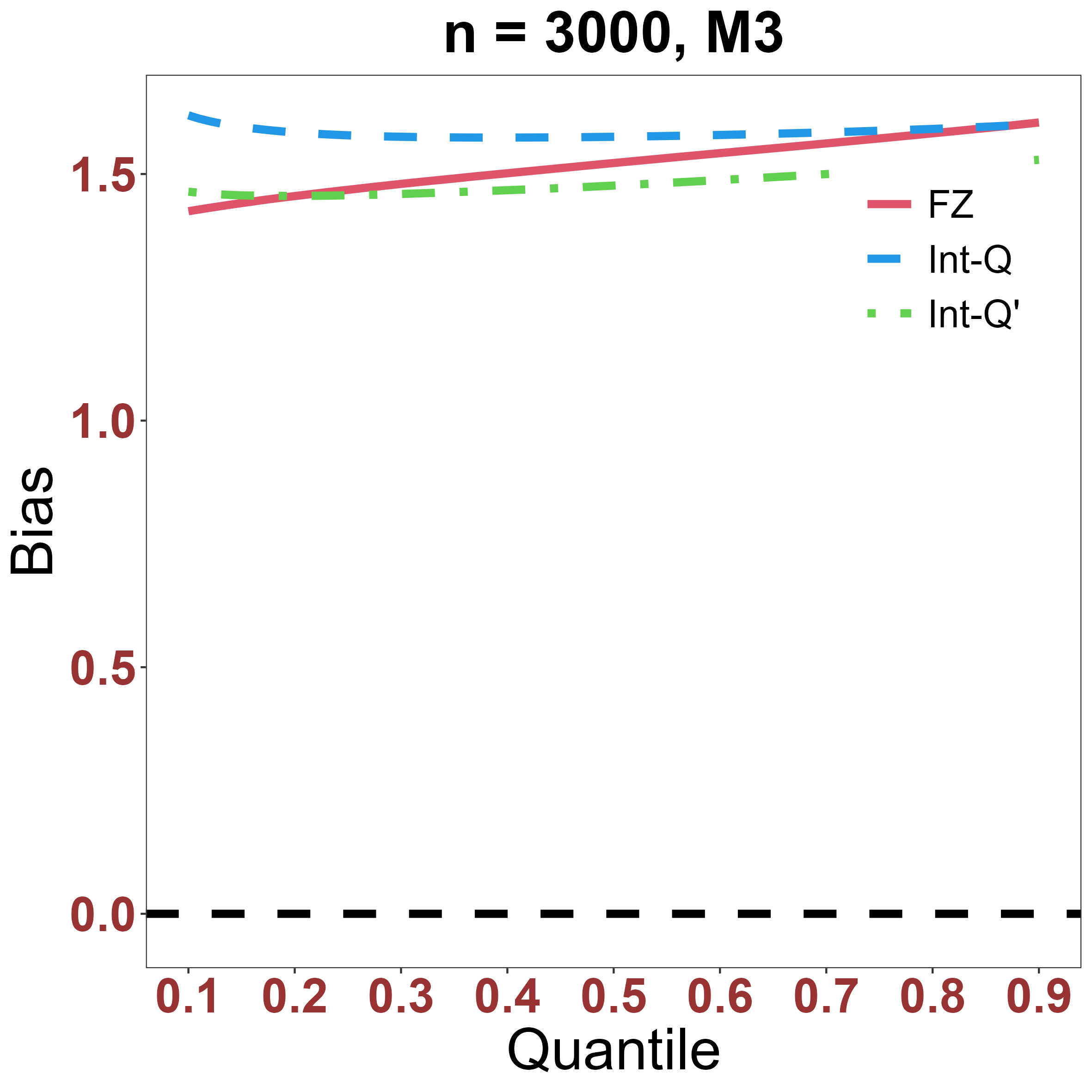}	
		\includegraphics[width = 3.9cm, height = 3.2cm]{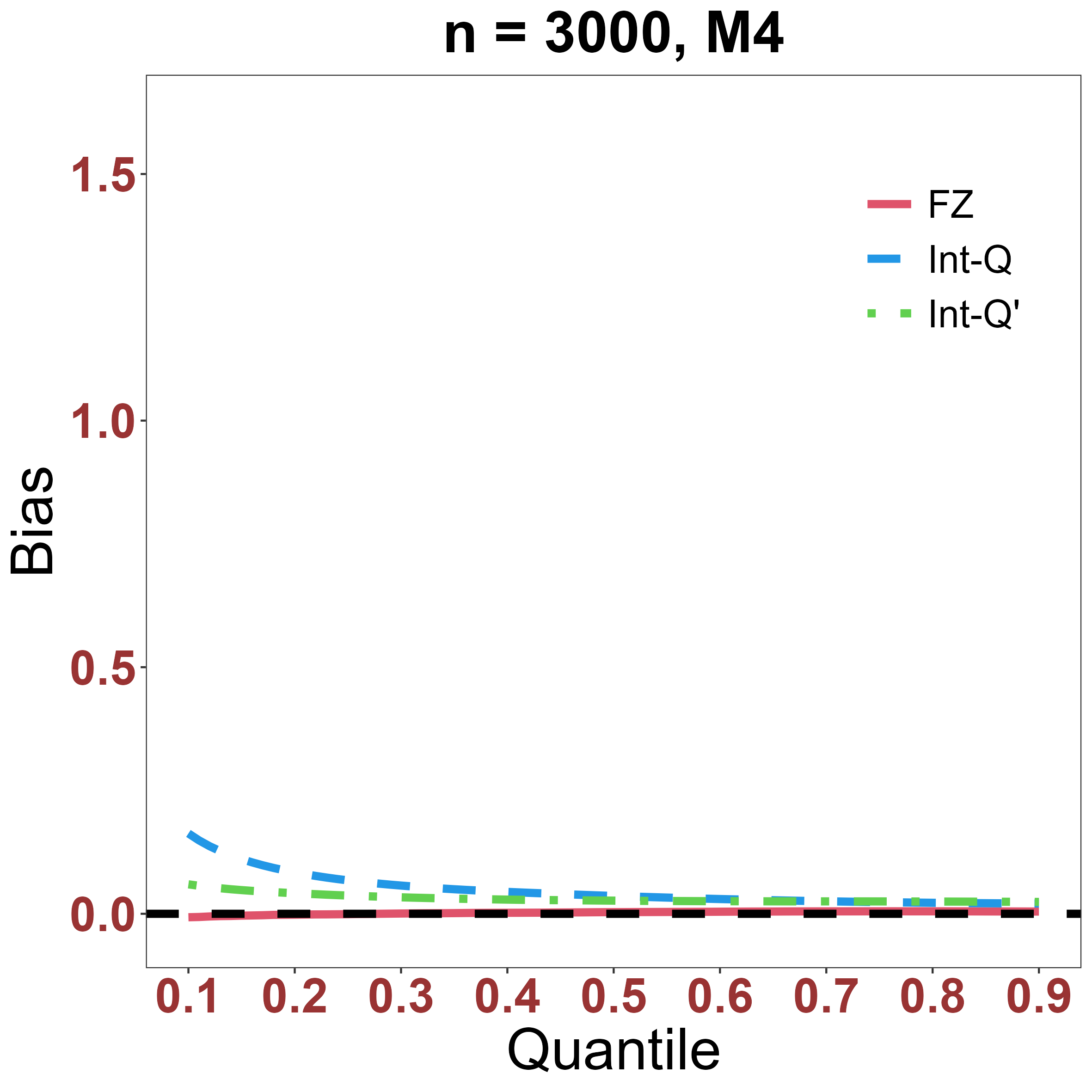}
	}	
	\mbox{
		\includegraphics[width = 3.9cm, height = 3.2cm]{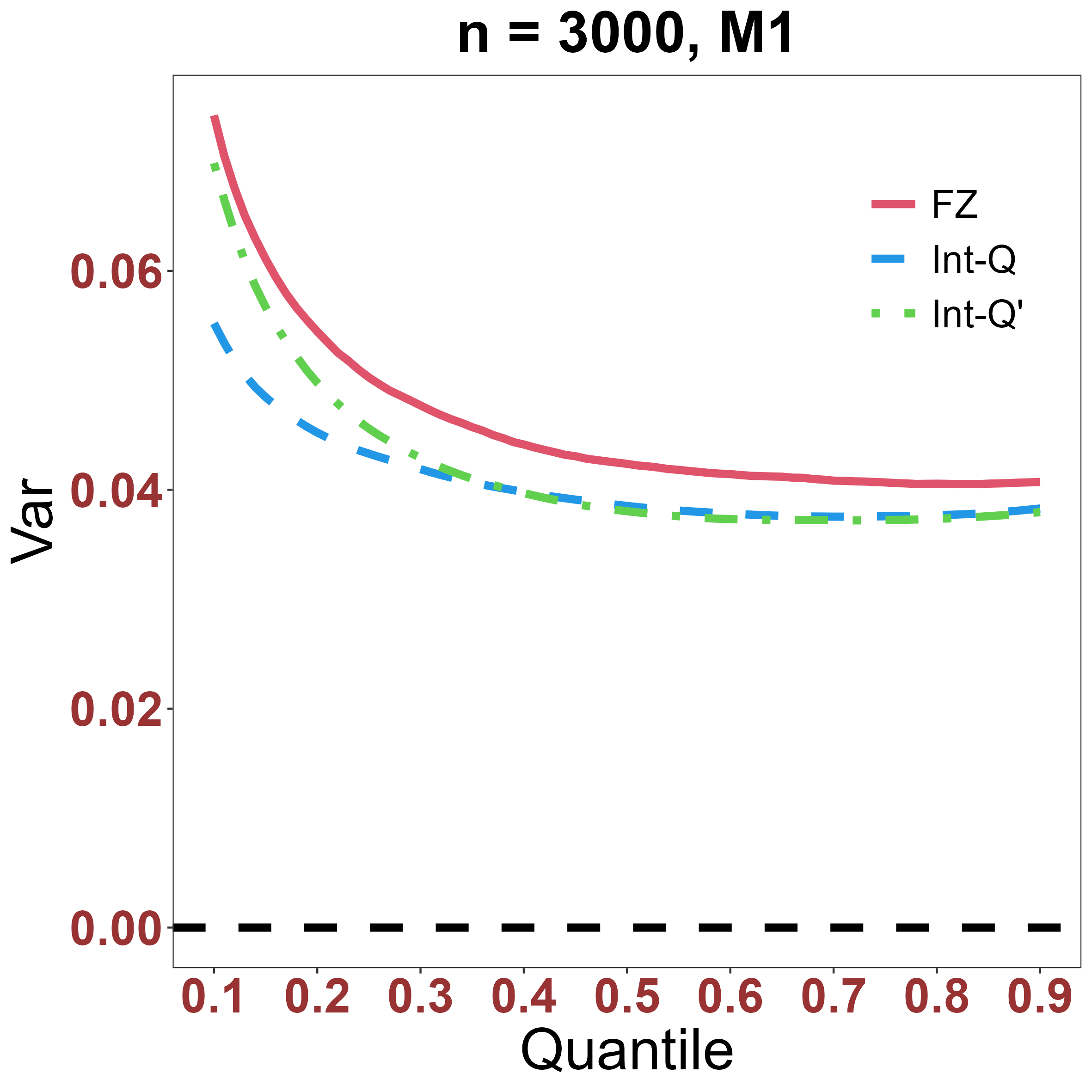}
		\includegraphics[width = 3.9cm, height = 3.2cm]{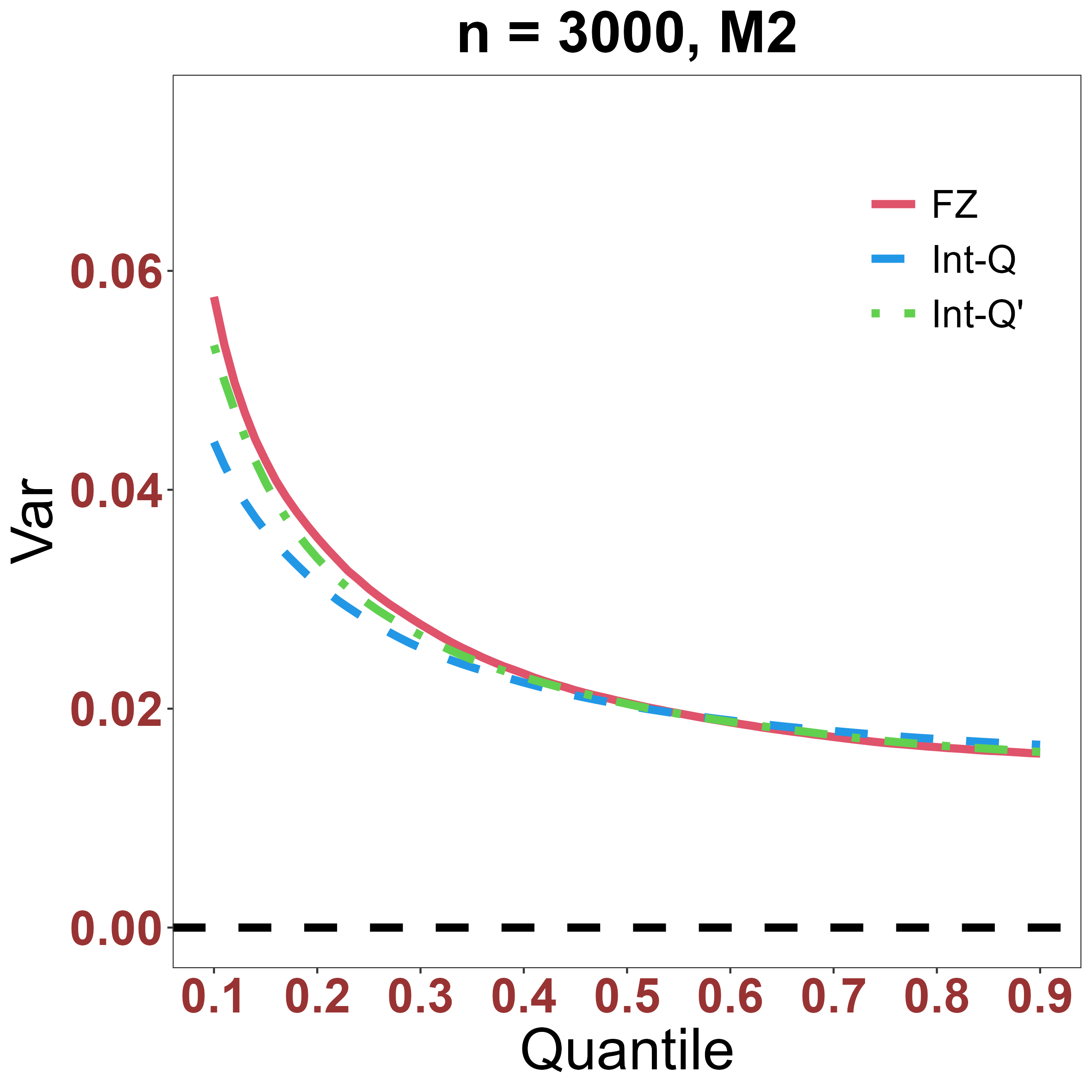}
		\includegraphics[width = 3.9cm, height = 3.2cm]{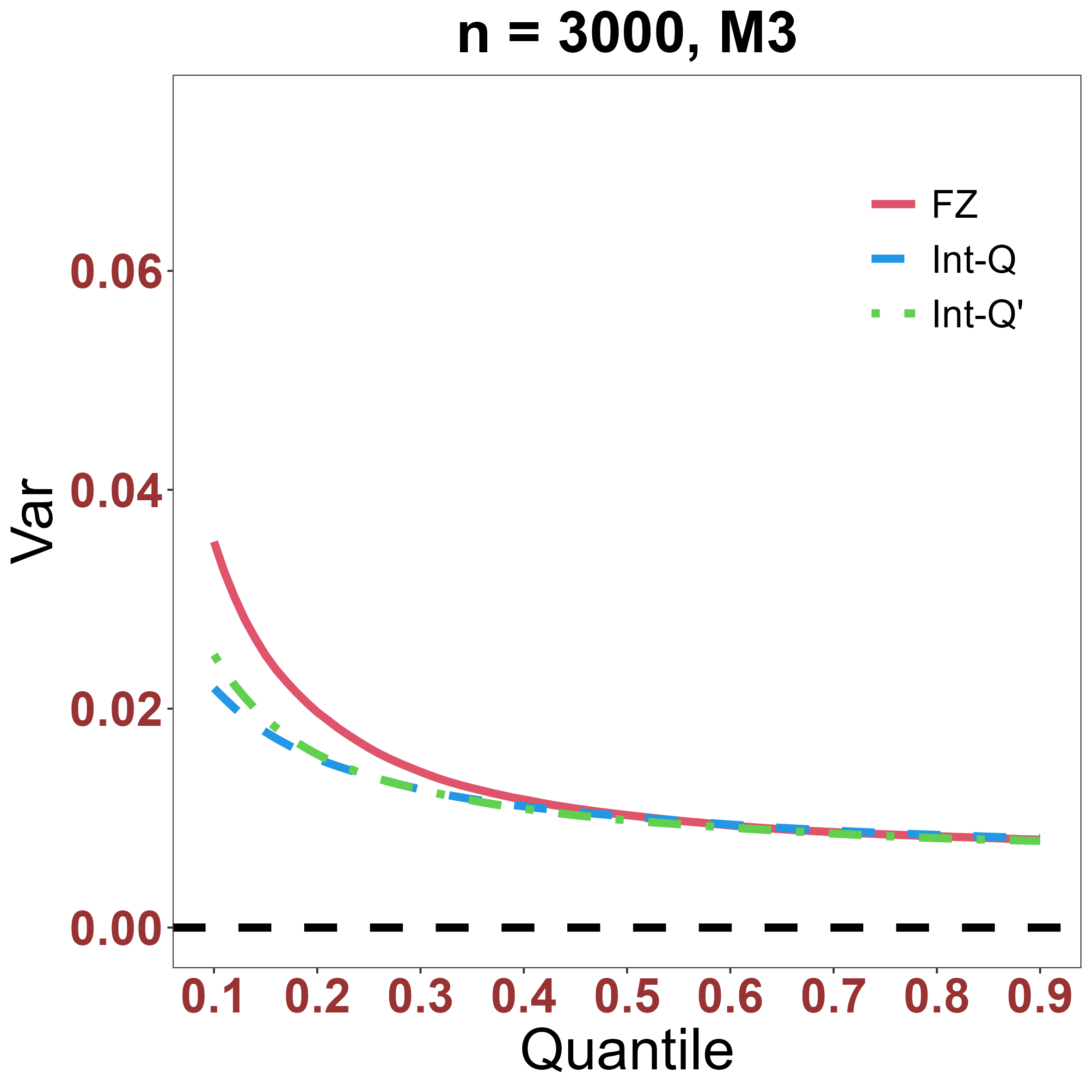}	
		\includegraphics[width = 3.9cm, height = 3.2cm]{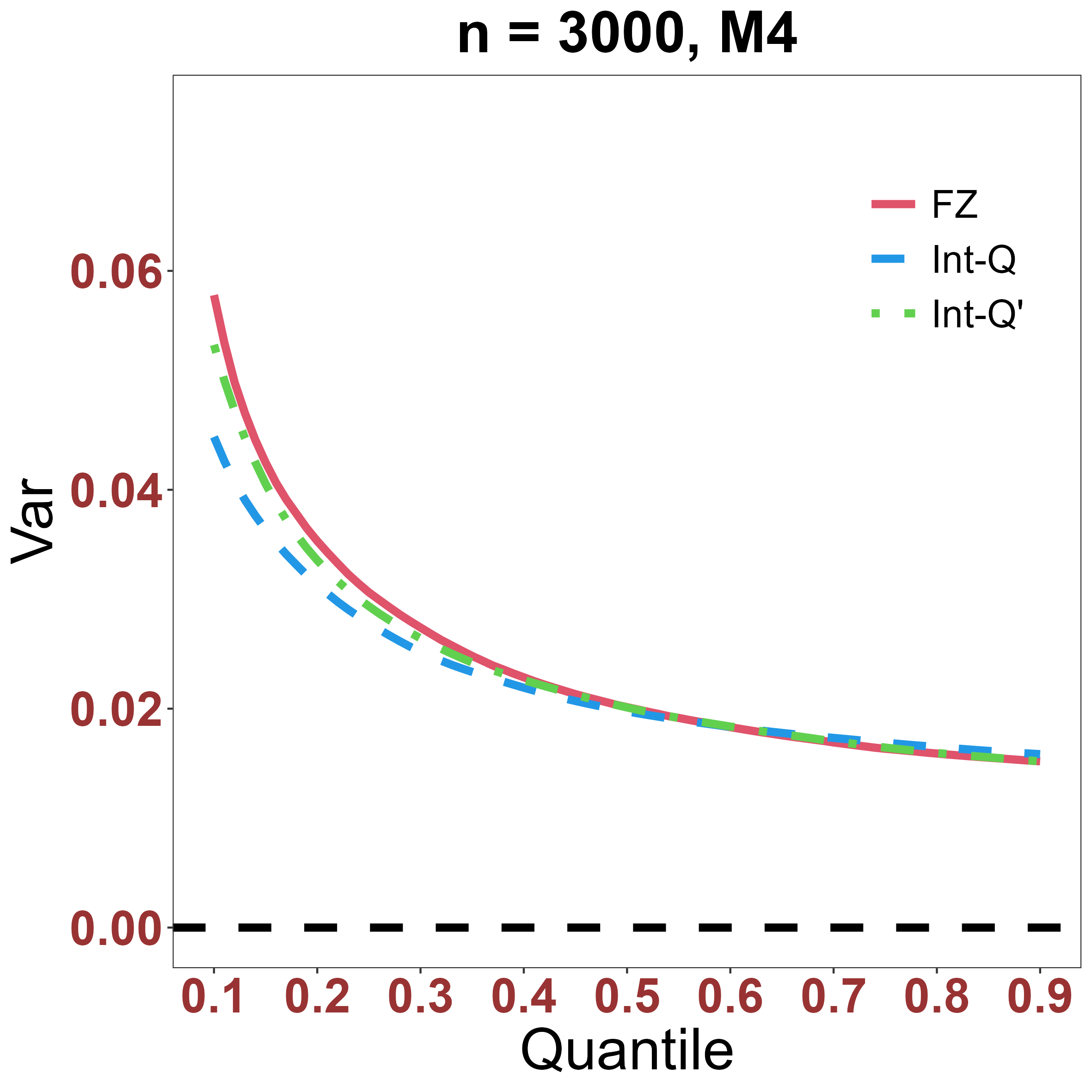}
	}	
	\mbox{
		\includegraphics[width = 3.9cm, height = 3.2cm]{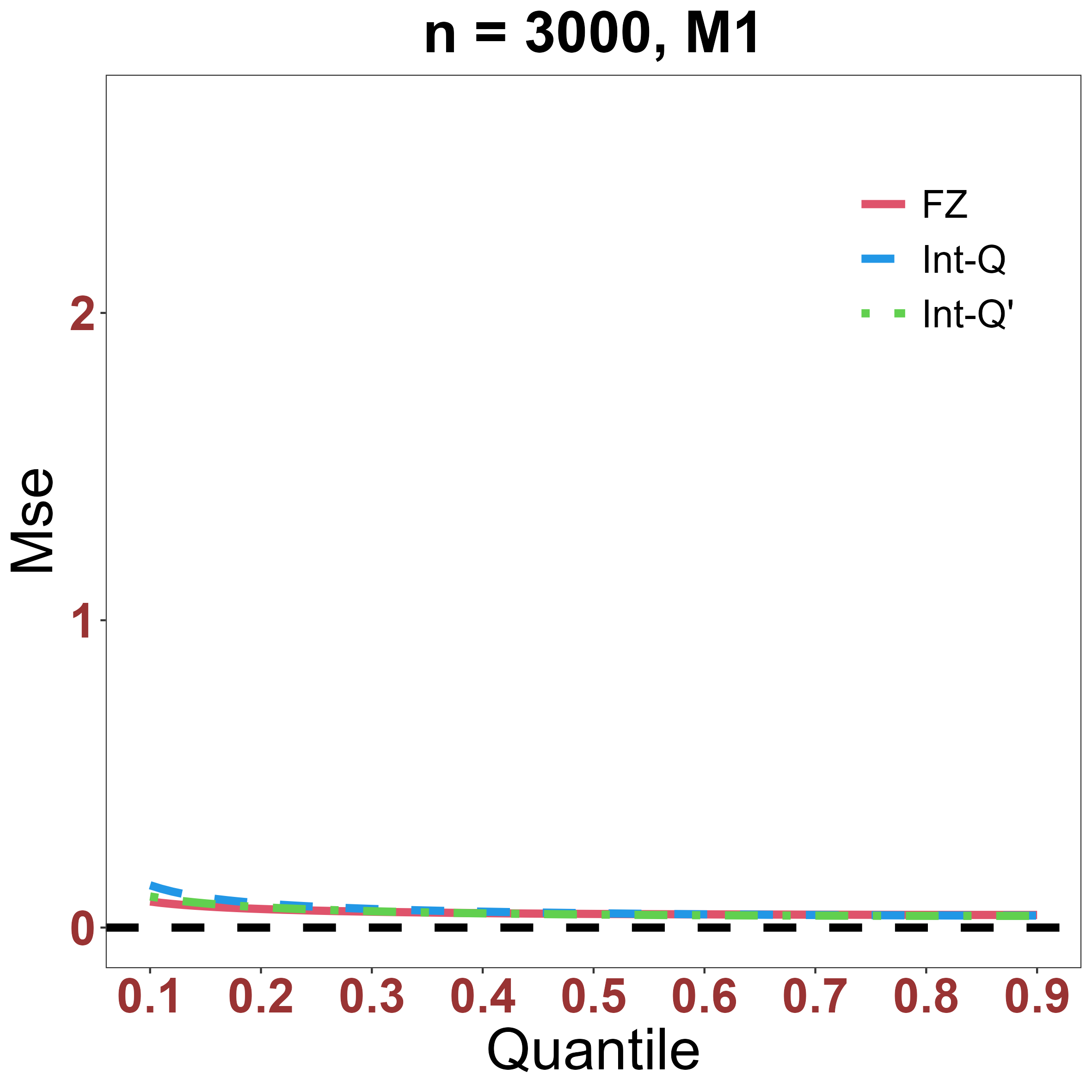}
		\includegraphics[width = 3.9cm, height = 3.2cm]{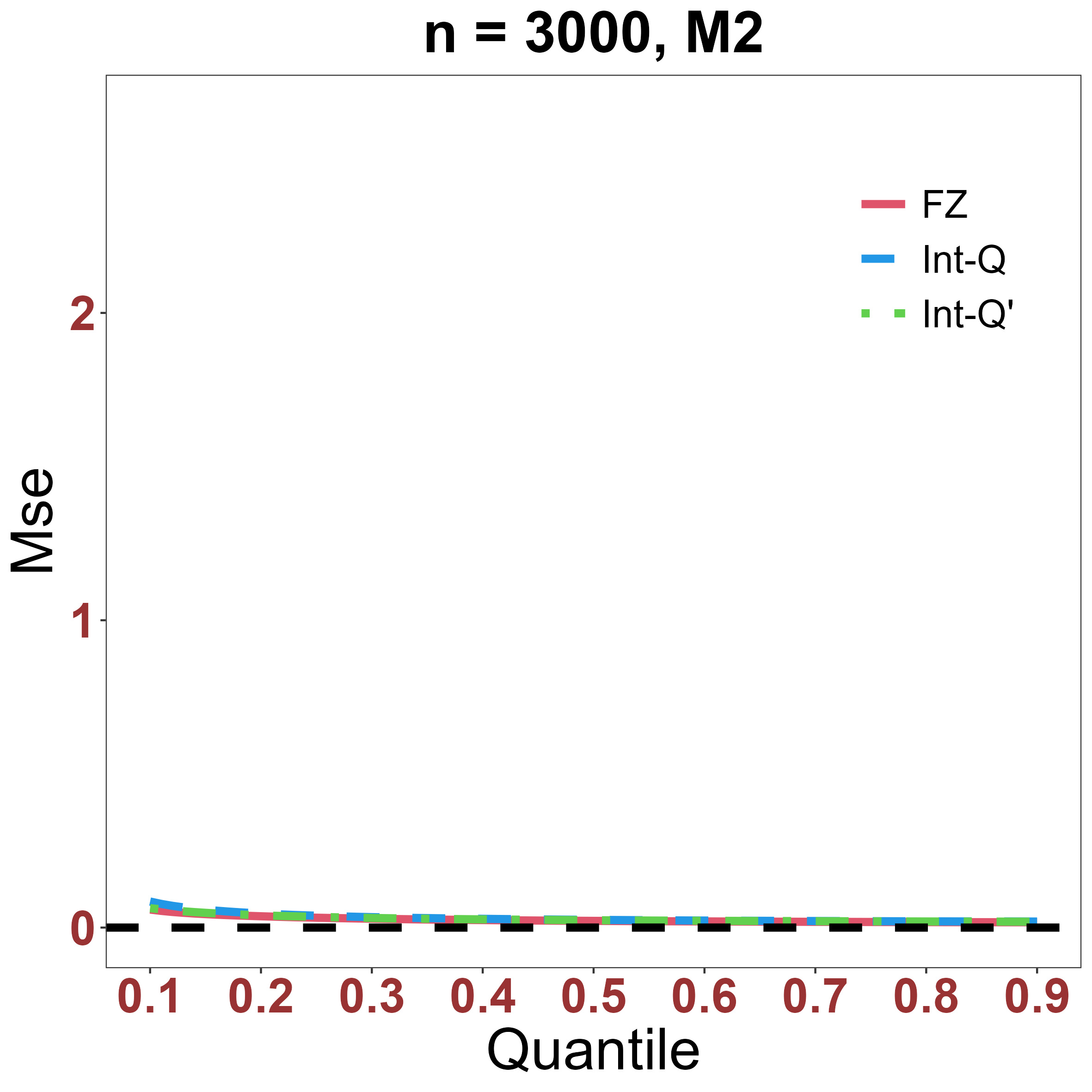}
		\includegraphics[width = 3.9cm, height = 3.2cm]{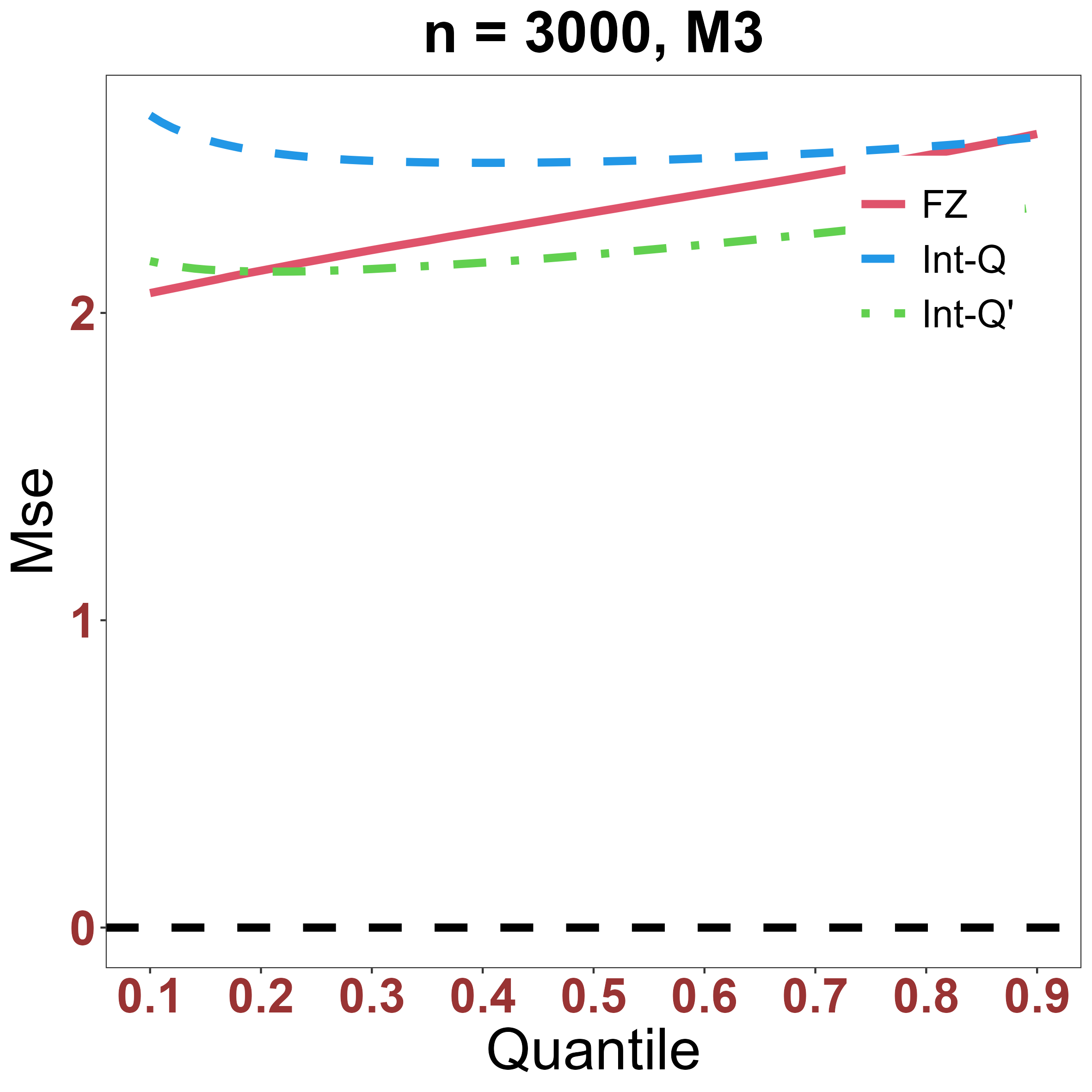}	
		\includegraphics[width = 3.9cm, height = 3.2cm]{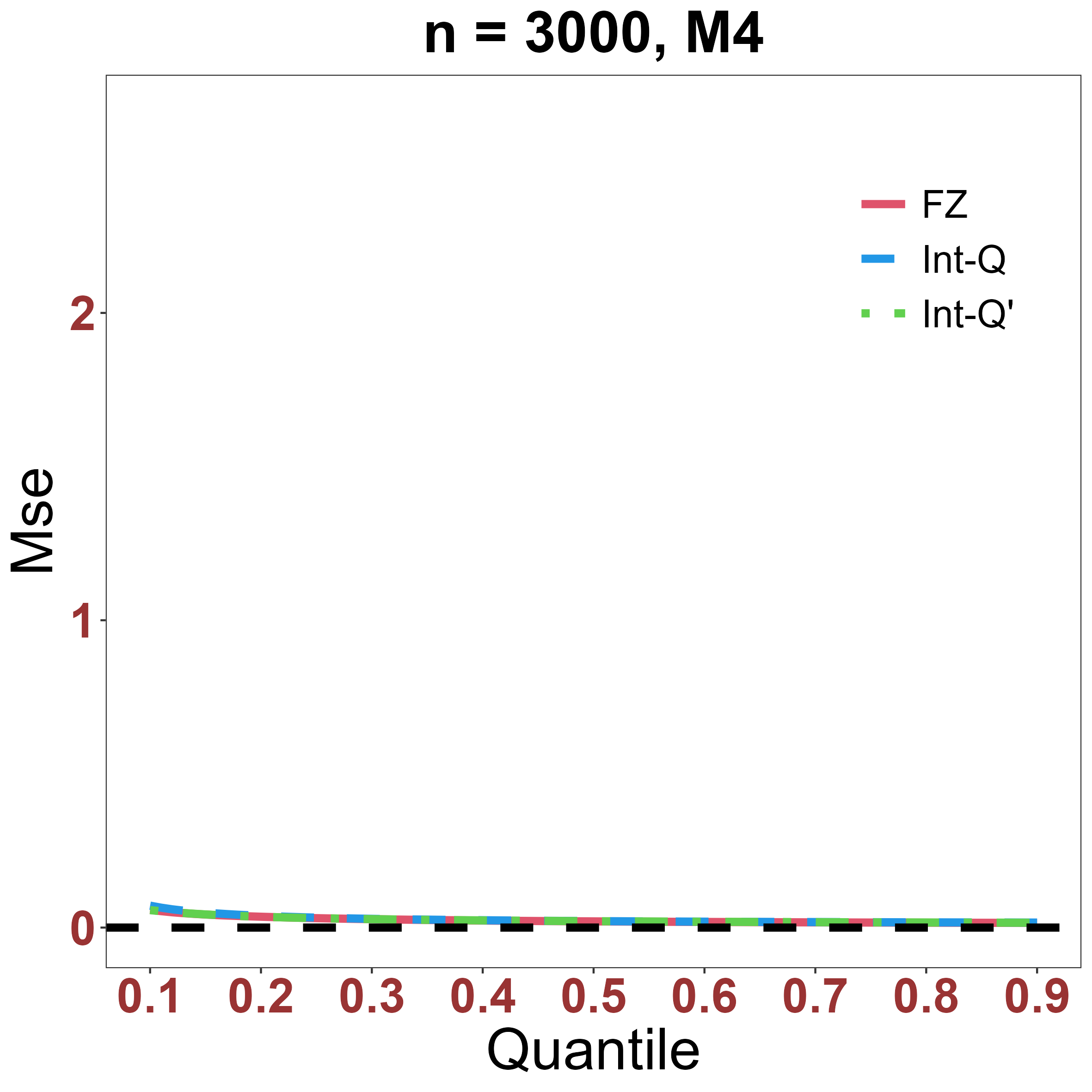}
	}	
	\caption{Bias, variance and MSE of the CTATE estimator using the FZ loss (FZ) and that using the trimmed integrated-QTE based estimators (IntQ and IntQ') when $\rho=0.5$ and $n=3,000$.}
	\label{figure6}
\end{figure}

\begin{figure}[ht]
	\centering
	\mbox{
		\includegraphics[height=8cm,width=8cm]{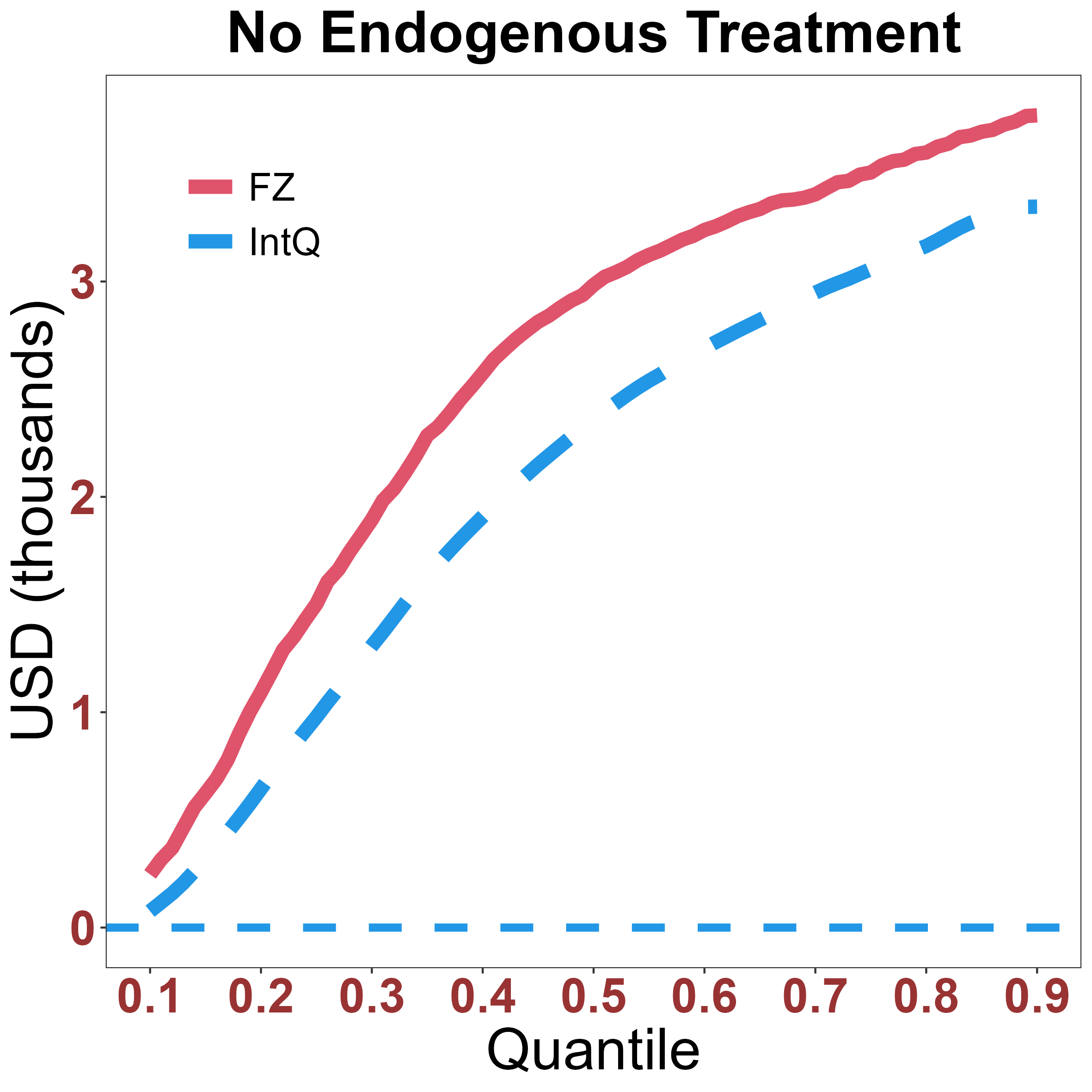}
		\includegraphics[height=8cm,width=8cm]{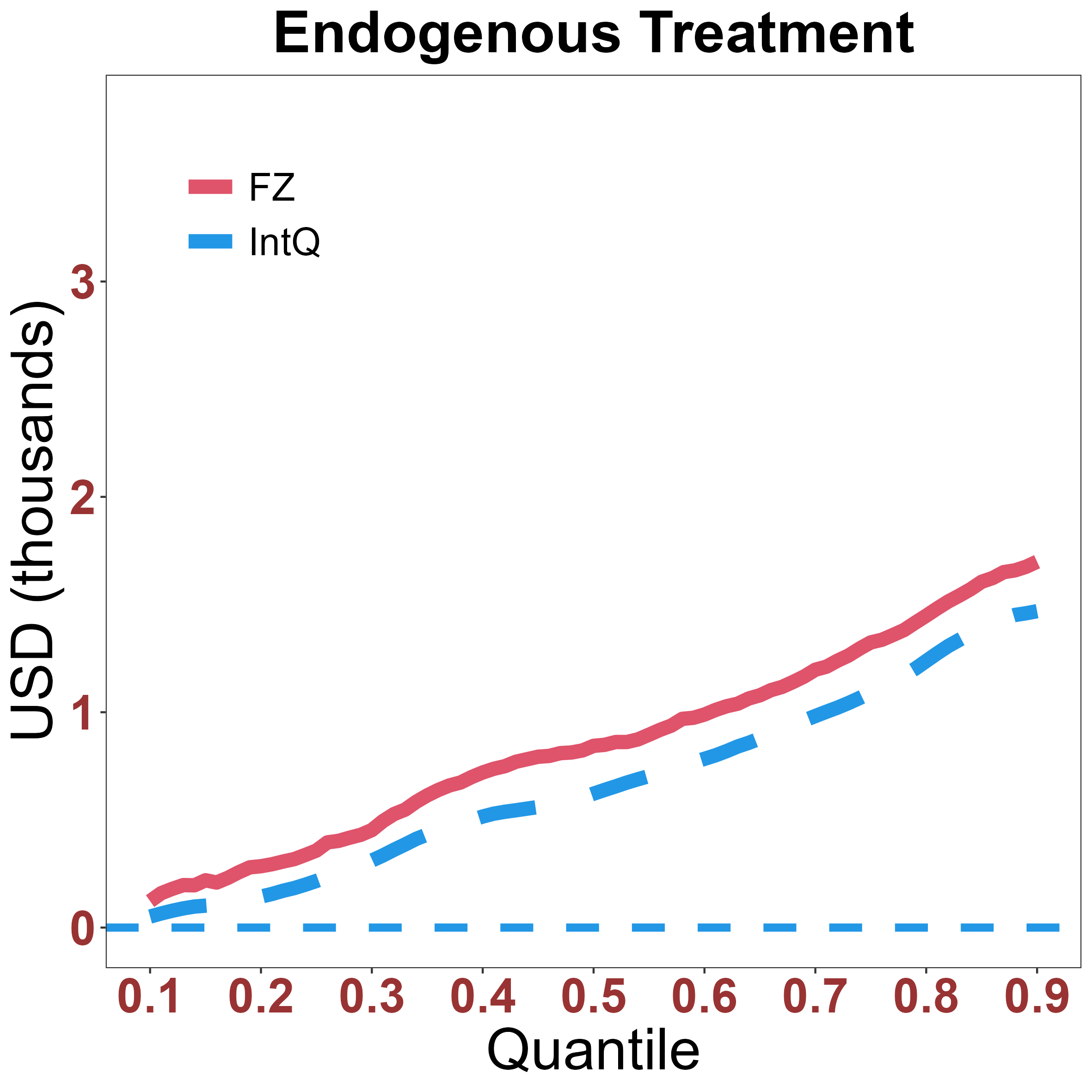}
	}	
	\mbox{	
		\includegraphics[height=8cm,width=8cm]{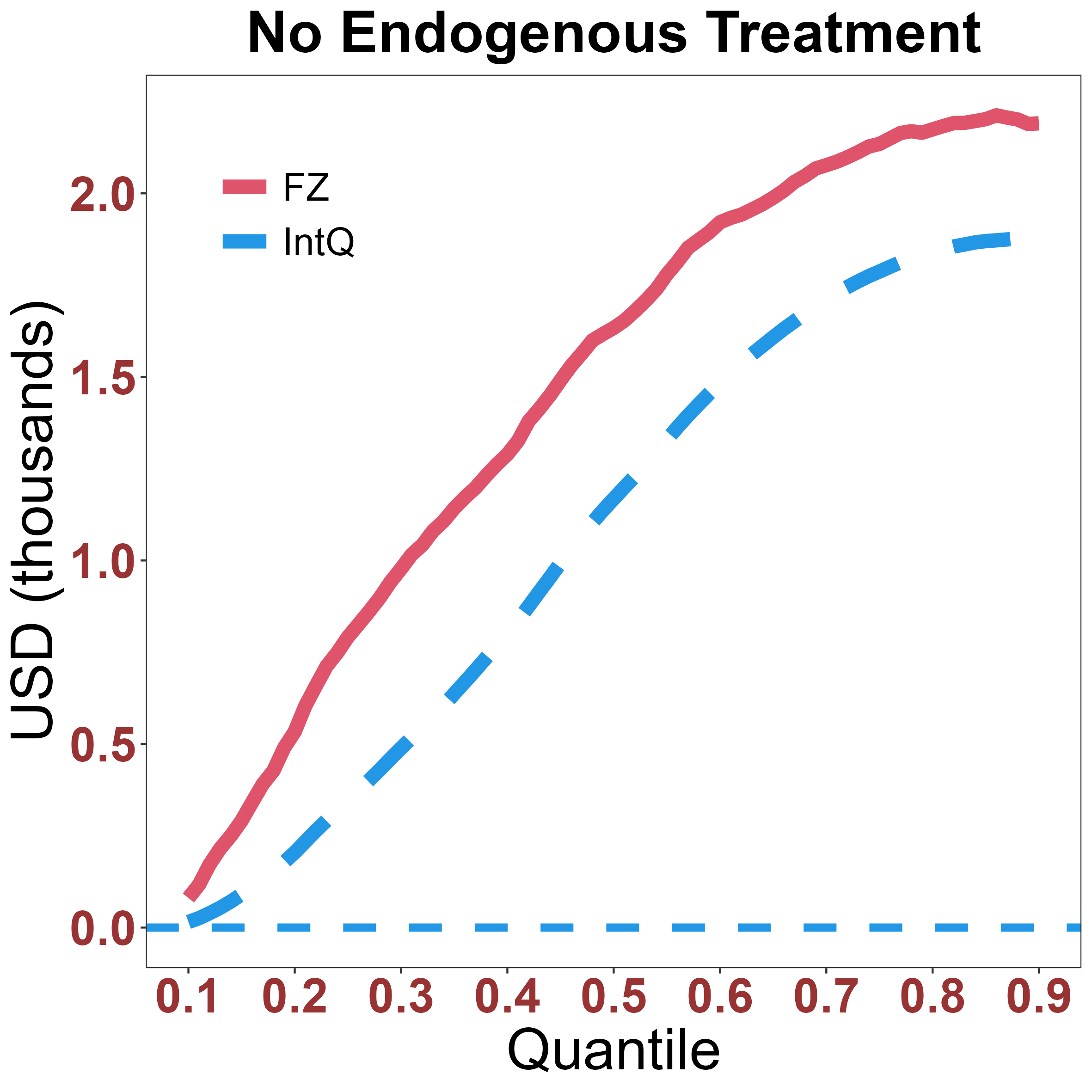}
		\includegraphics[height=8cm,width=8cm]{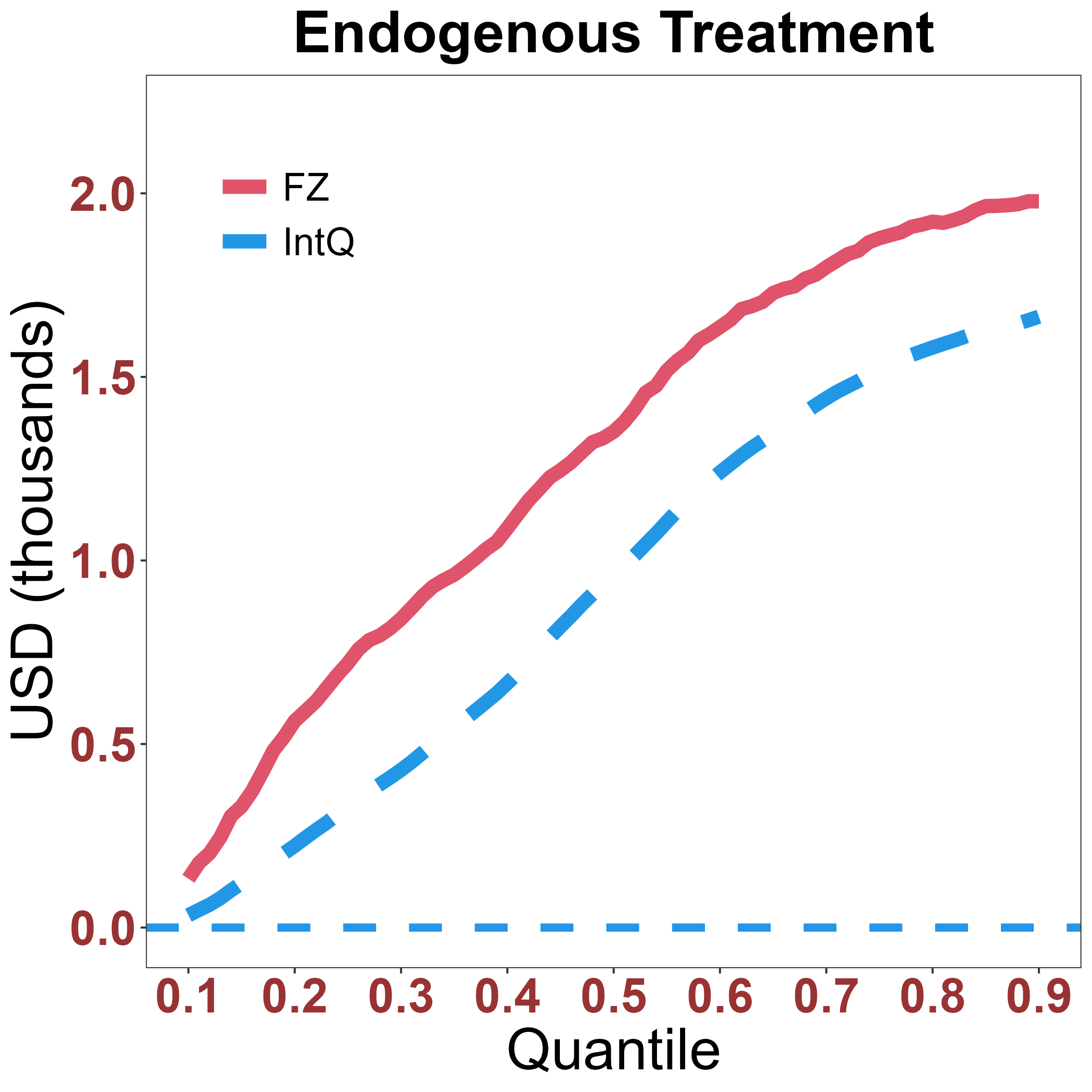}
	}
	\caption{Comparisons of CTATE estimates for compliers from using the FZ loss and from using the trimmed integrated-QTE based estimator (IntQ). Upper panel: Adult men's earnings. Lower panel: Adult women's earnings.}
	\label{figure7}
\end{figure}

\begin{figure}[ht]
	\centering
	\mbox{
		\includegraphics[height=8cm,width=8cm]{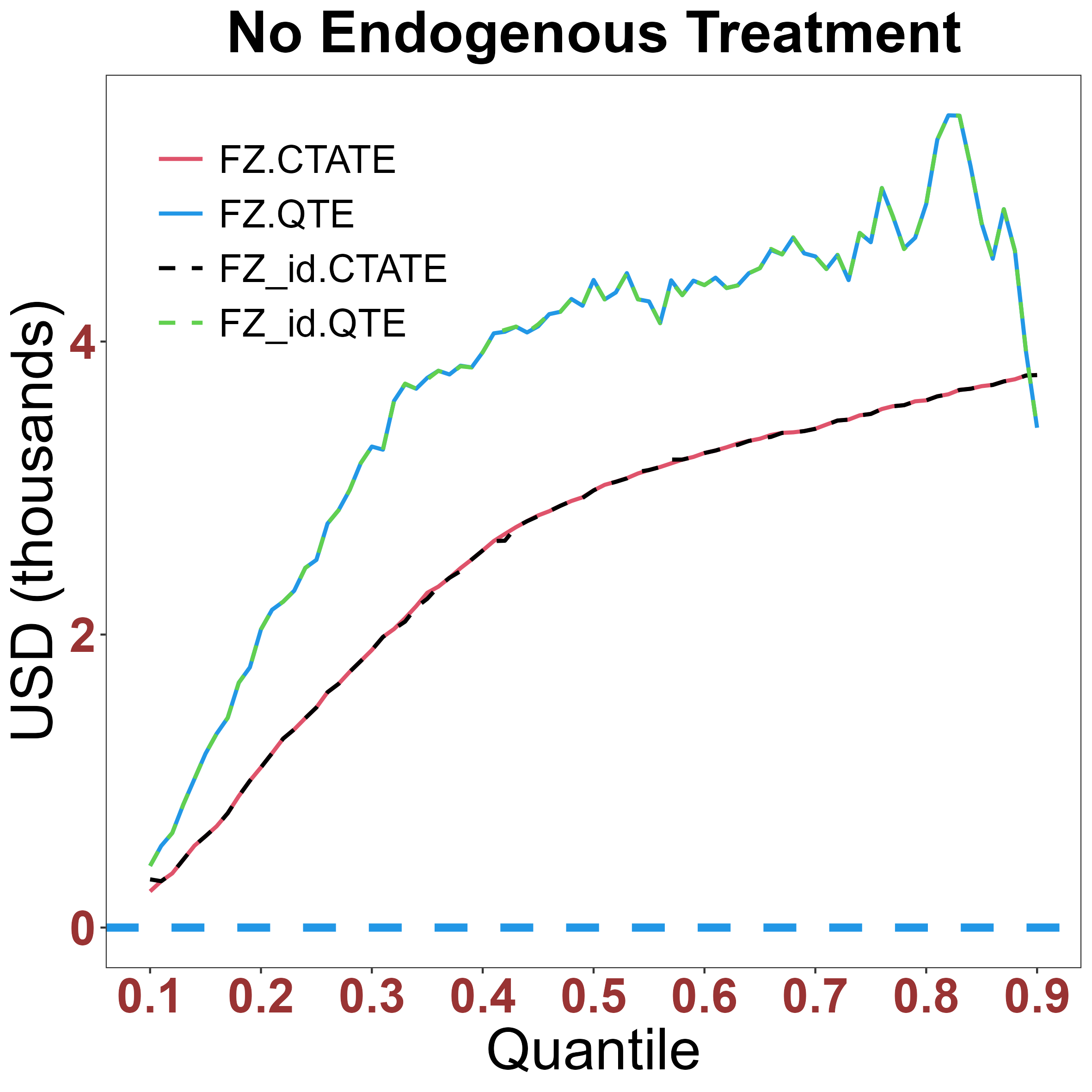}
		\includegraphics[height=8cm,width=8cm]{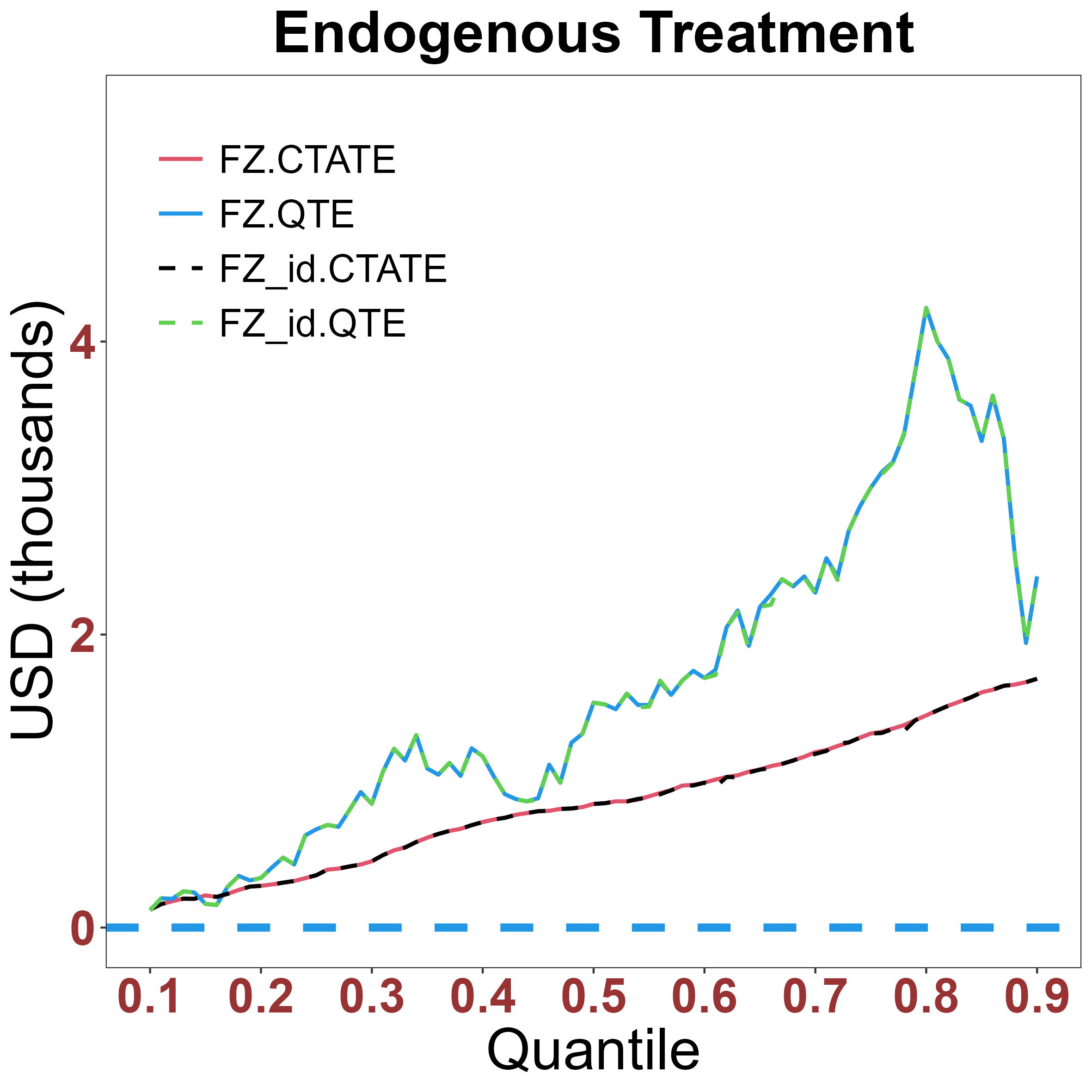}
	}	
	\mbox{	
		\includegraphics[height=8cm,width=8cm]{figure/m_id_nw.png}
		\includegraphics[height=8cm,width=8cm]{figure/m_id.png}
	}
	\caption{Comparisons of the CTATE and QTE estimates for compliers from using the FZ loss with $G_{1}(t)=t$ (FZ\_id) and with $G_{1}(t)=0$ (FZ). The function $G_{2}(t)=\ln(1+\exp(t))$. Upper panel: Adult men's earnings. Lower panel: Adult women's earnings.}
	\label{figure8}
\end{figure}

\begin{figure}[ht]
	\centering
	\mbox{
		\includegraphics[height=8cm,width=8cm]{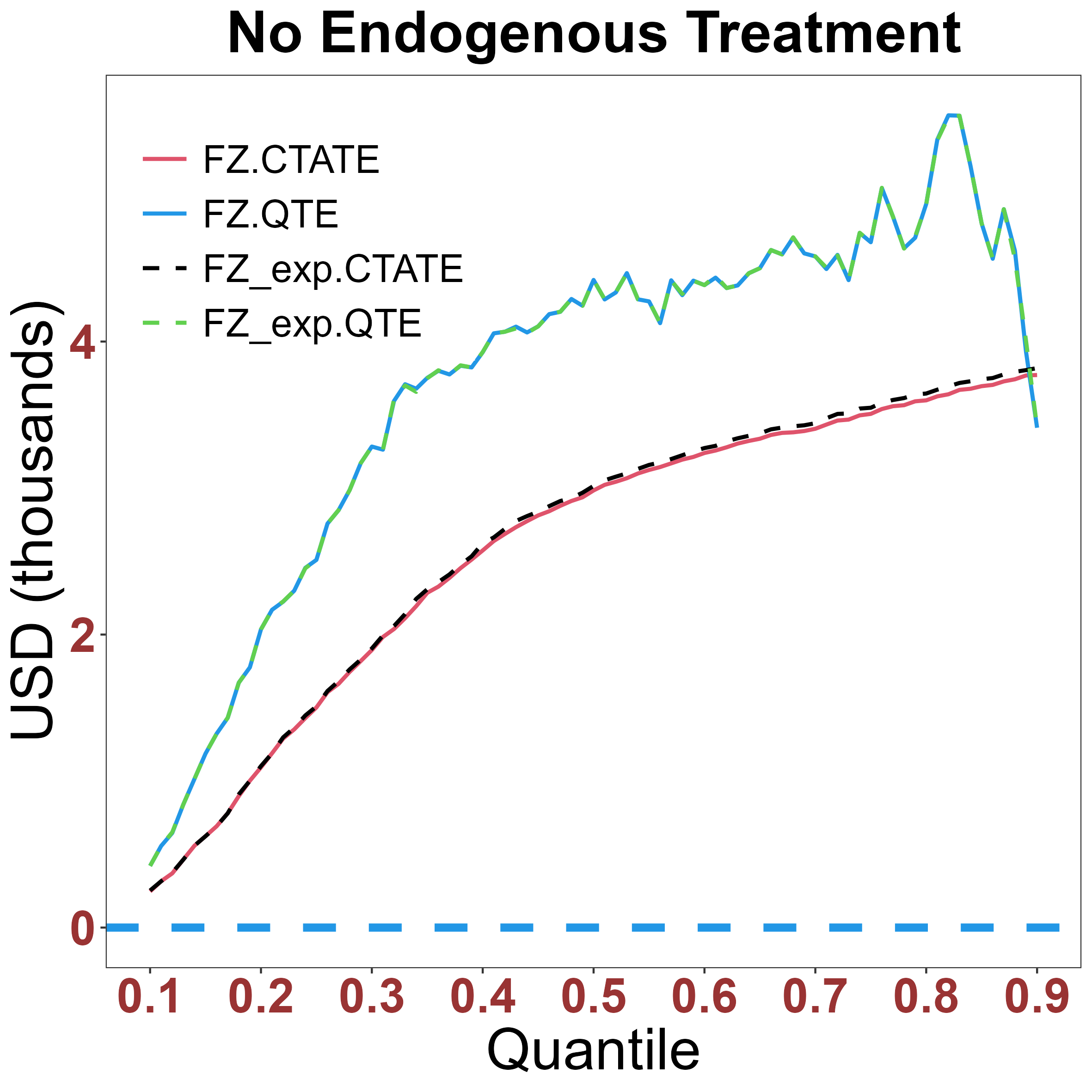}
		\includegraphics[height=8cm,width=8cm]{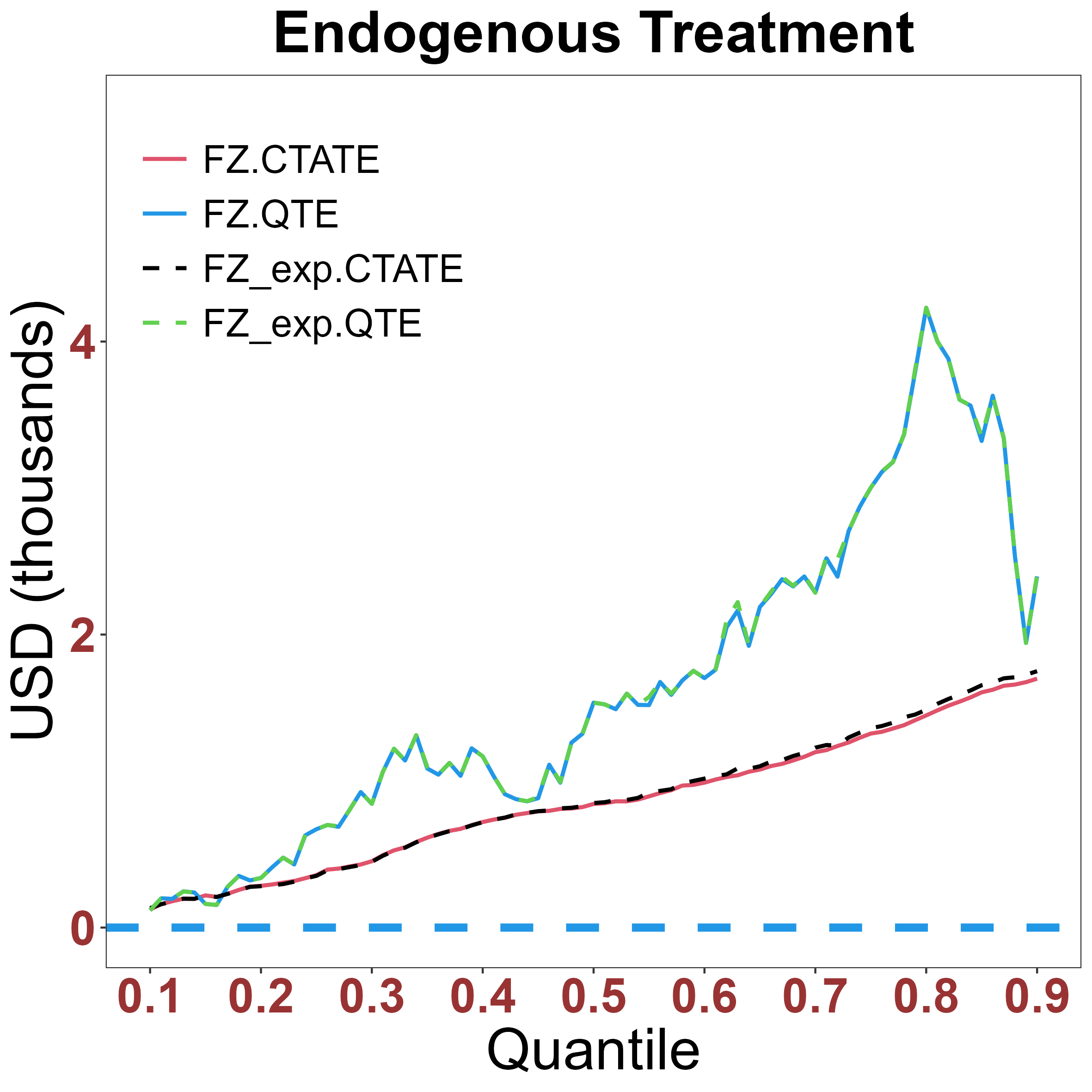}
	}	
	\mbox{	
		\includegraphics[height=8cm,width=8cm]{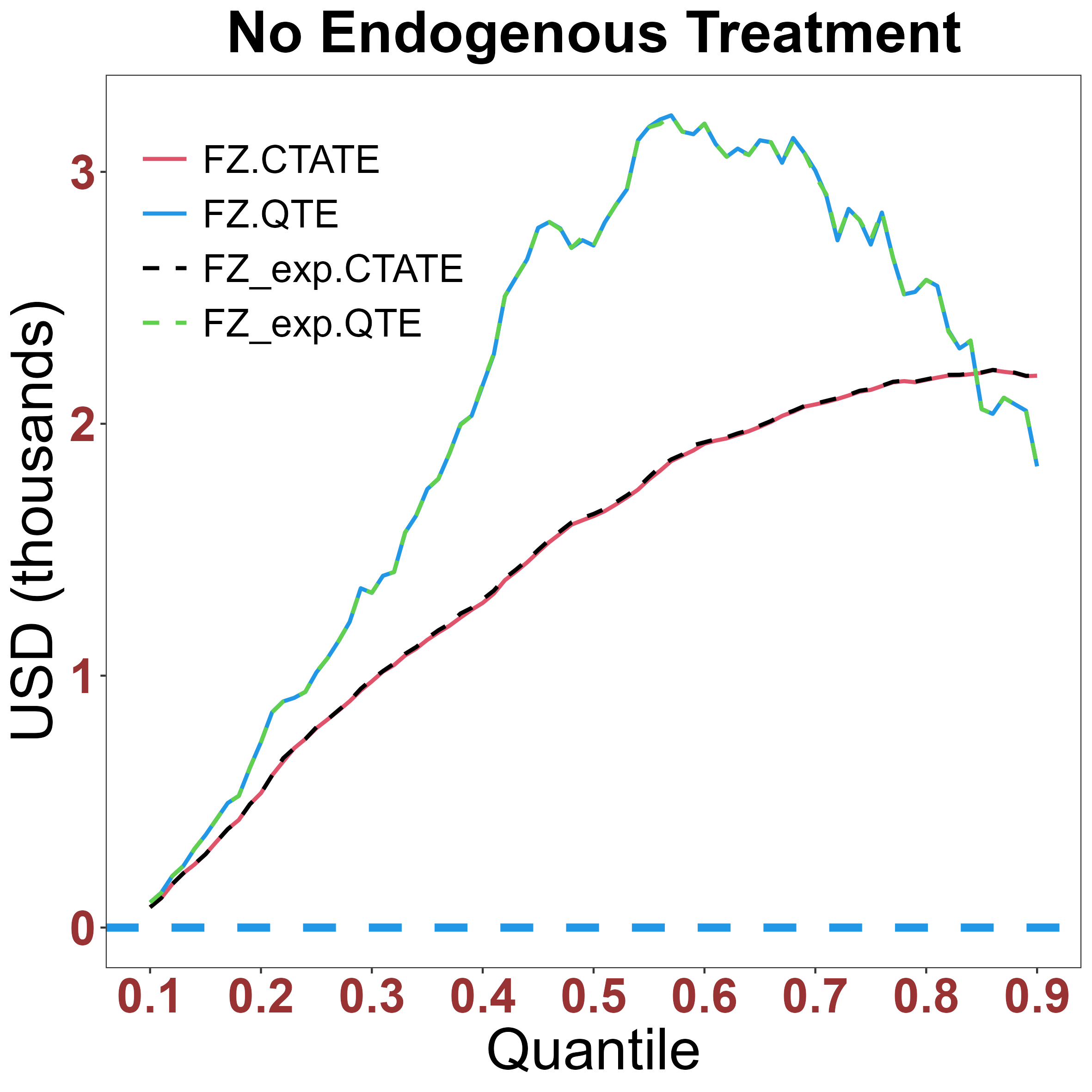}
		\includegraphics[height=8cm,width=8cm]{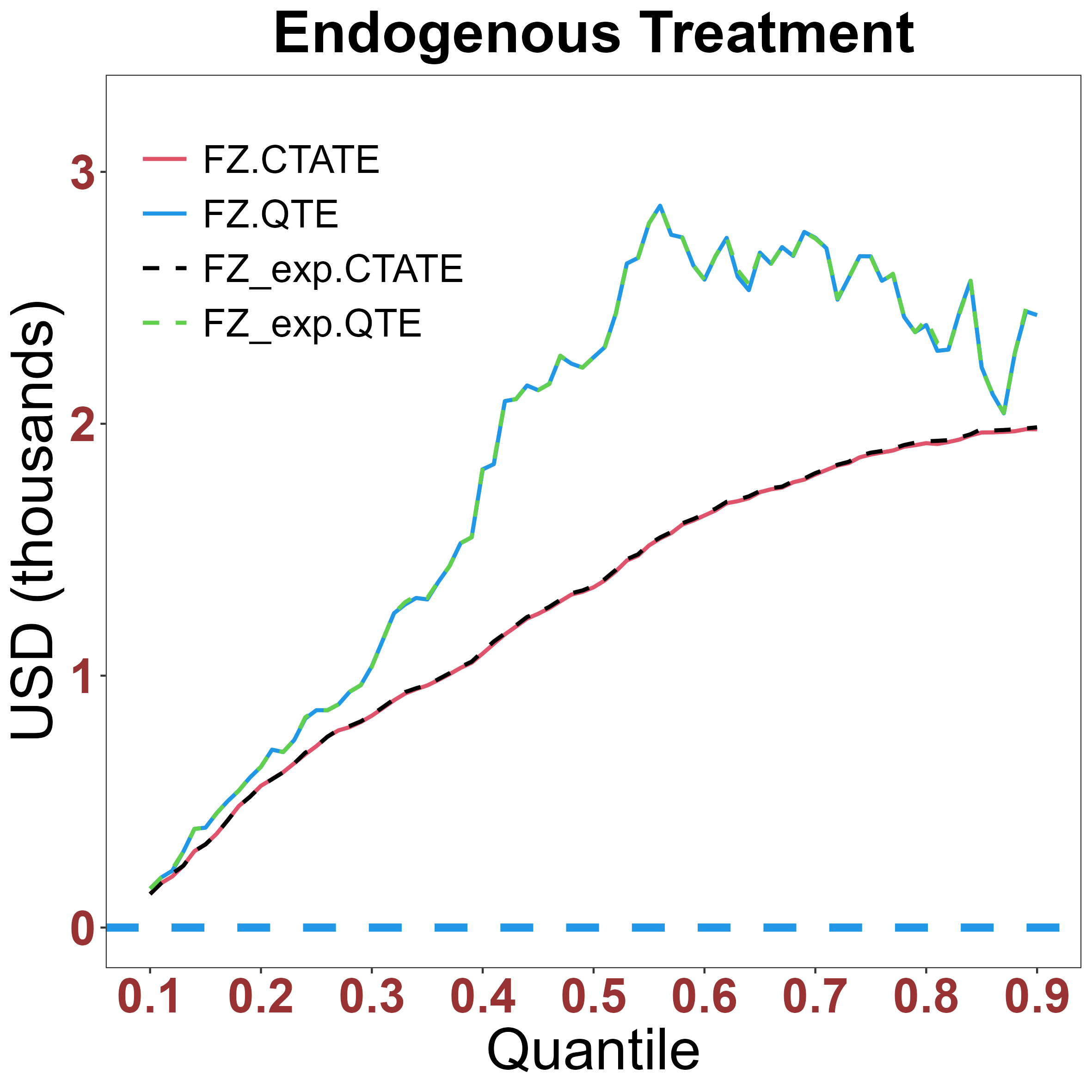}
	}
	\caption{Comparisons of CTATE and QTE estimates for compliers from using the FZ loss with $G_{2}(t)=\exp(t)$ (FZ\_exp) and with $G_{2}(t)=\ln(1+\exp(t))$ (FZ). The function $G_{1}(t)=0$. Upper panel: Adult men's earnings. Lower panel: Adult women's earnings.}
	\label{figure9}
\end{figure}

\begin{figure}[ht]
	\centering
	\mbox{
		\includegraphics[height=8cm,width=8cm]{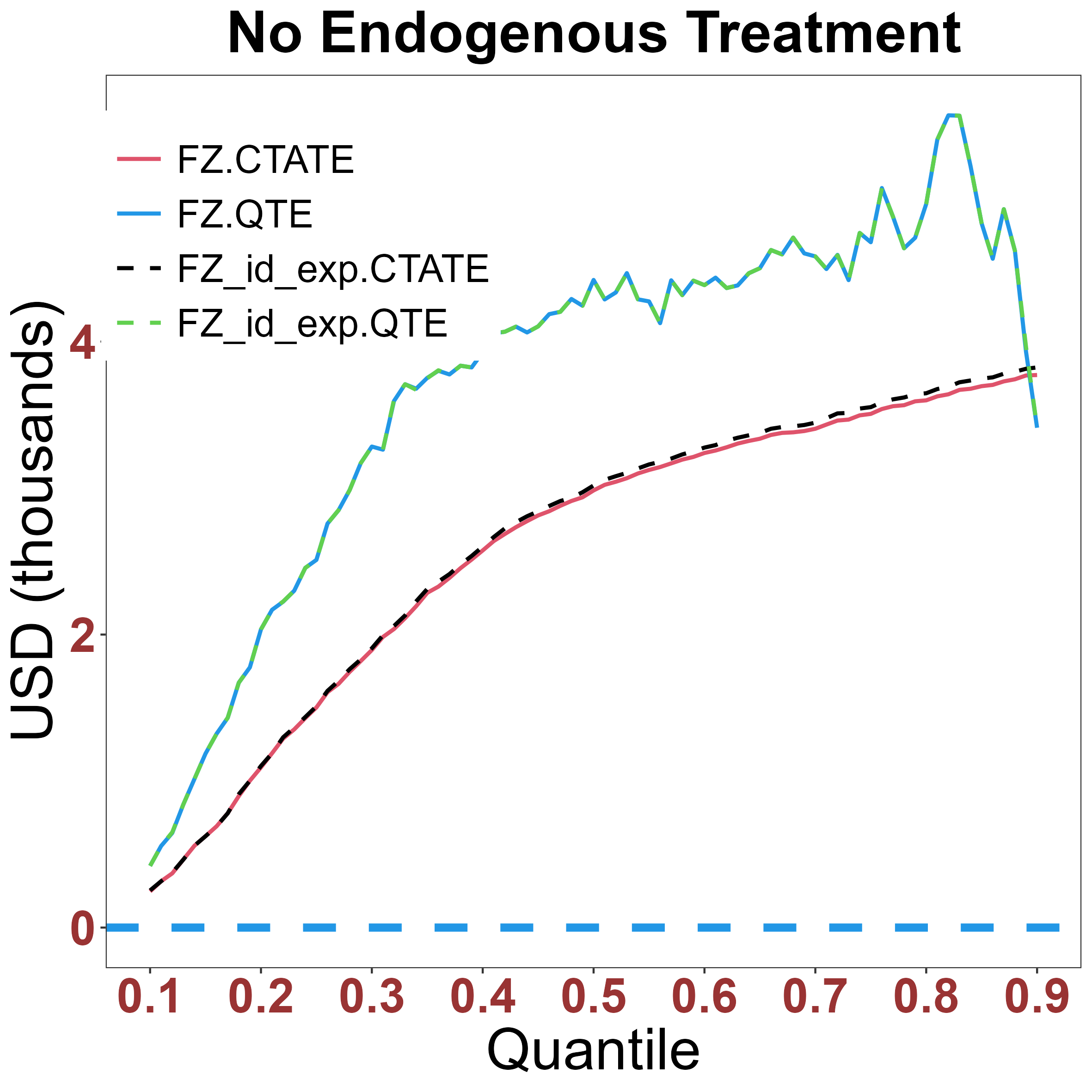}
		\includegraphics[height=8cm,width=8cm]{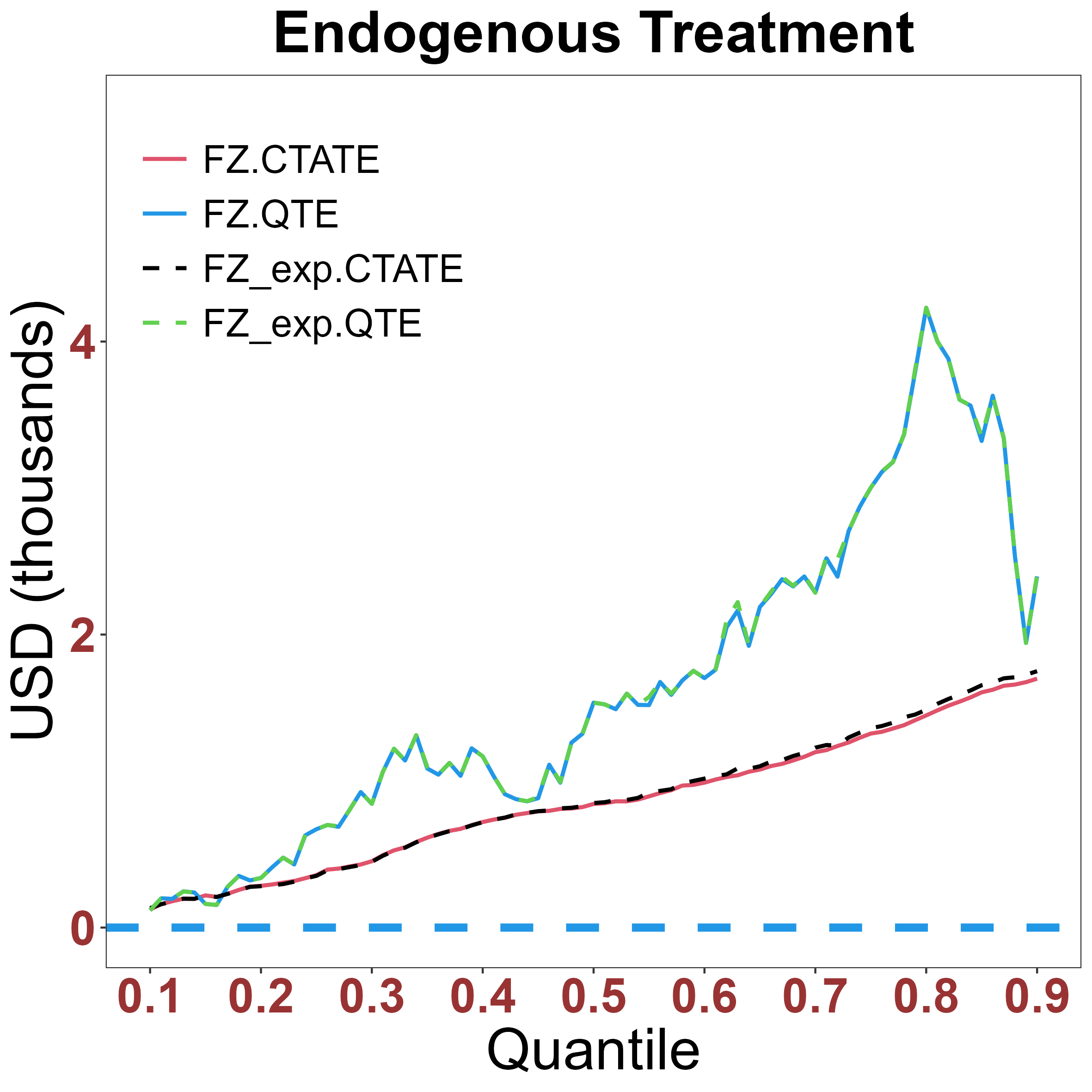}
	}	
	\mbox{	
		\includegraphics[height=8cm,width=8cm]{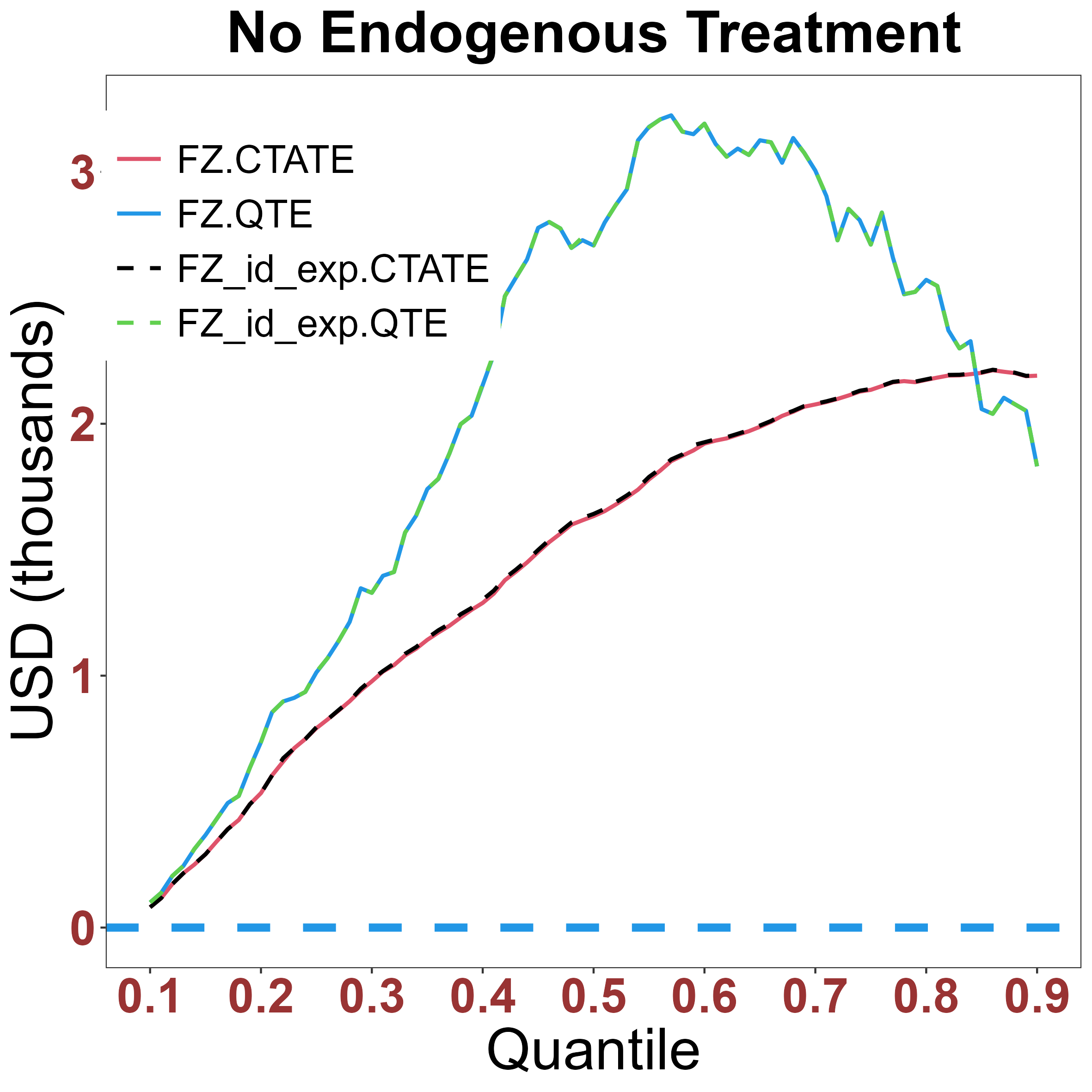}
		\includegraphics[height=8cm,width=8cm]{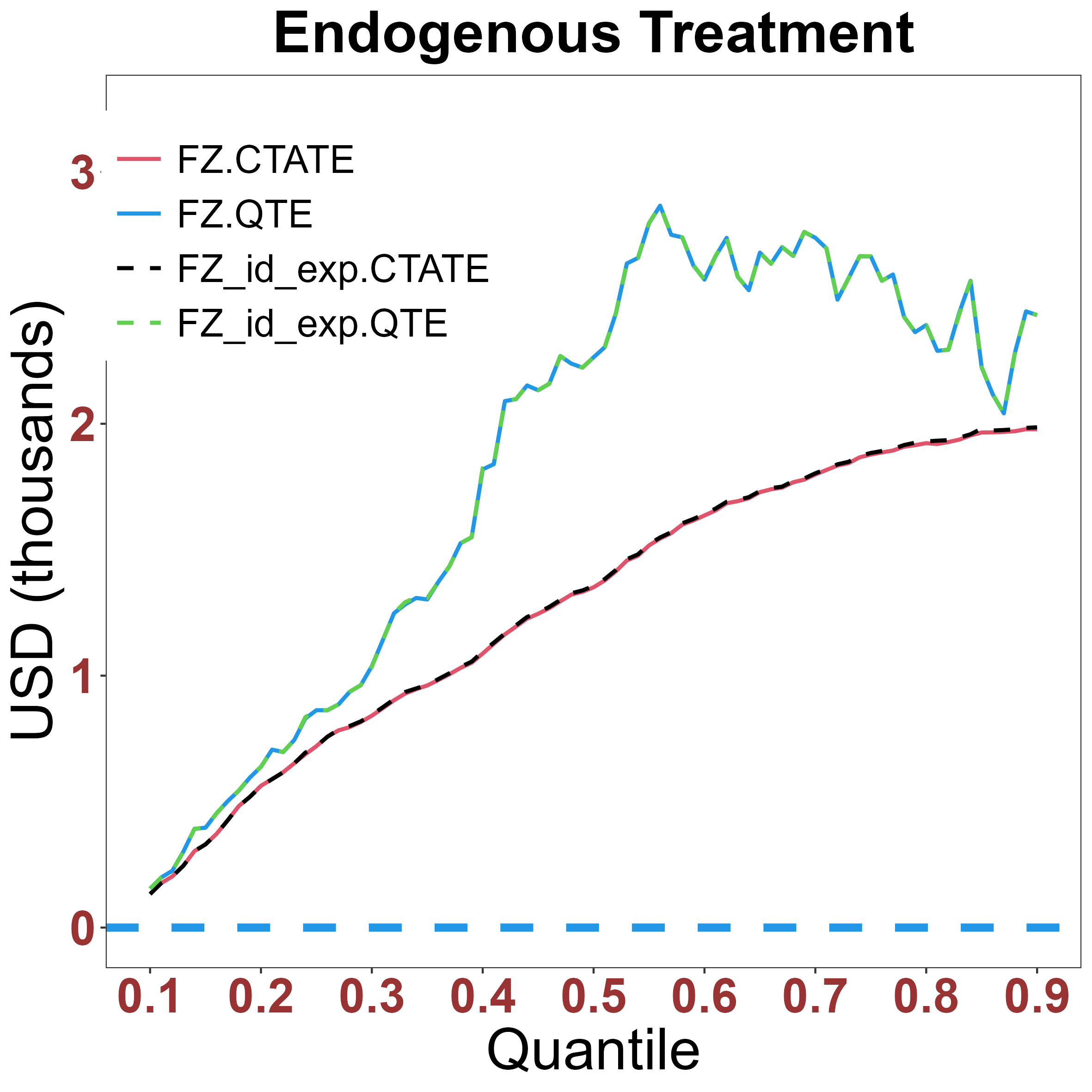}
	}
	\caption{Comparisons of CTATE and QTE estimates for compliers from using the FZ loss with $G_{1}(t)=t$ and $G_{2}(t)=\exp(t)$ (FZ\_id\_exp) and with $G_{1}(t)=0$ and $G_{2}(t)=\ln(1+\exp(t))$ (FZ). Upper panel: Adult men's earnings. Lower panel: Adult women's earnings.}
	\label{figure10}
\end{figure}

\begin{figure}[!htb]
	\centering
	\mbox{
		\includegraphics[width = 3.9cm, height = 3.2cm]{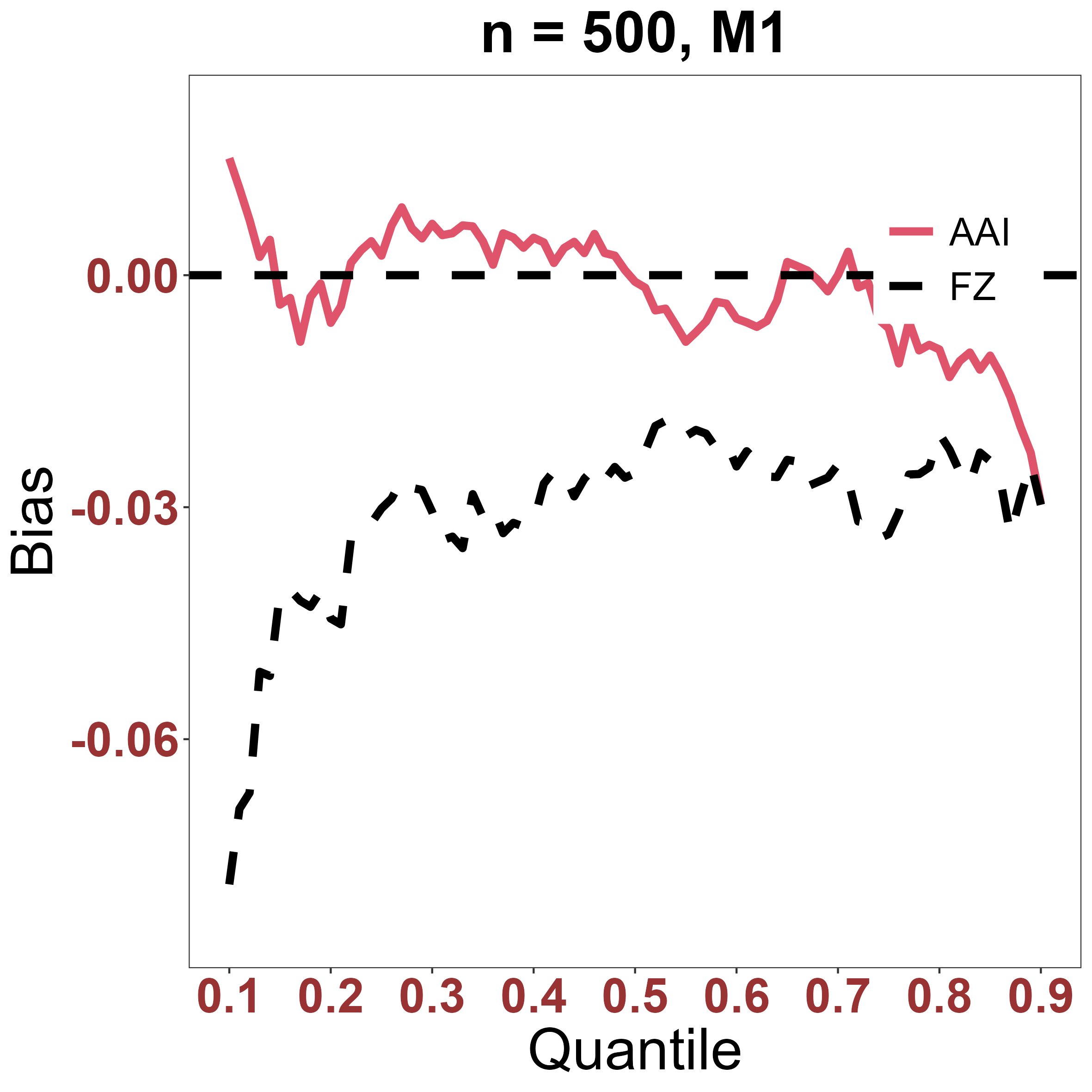}
		\includegraphics[width = 3.9cm, height = 3.2cm]{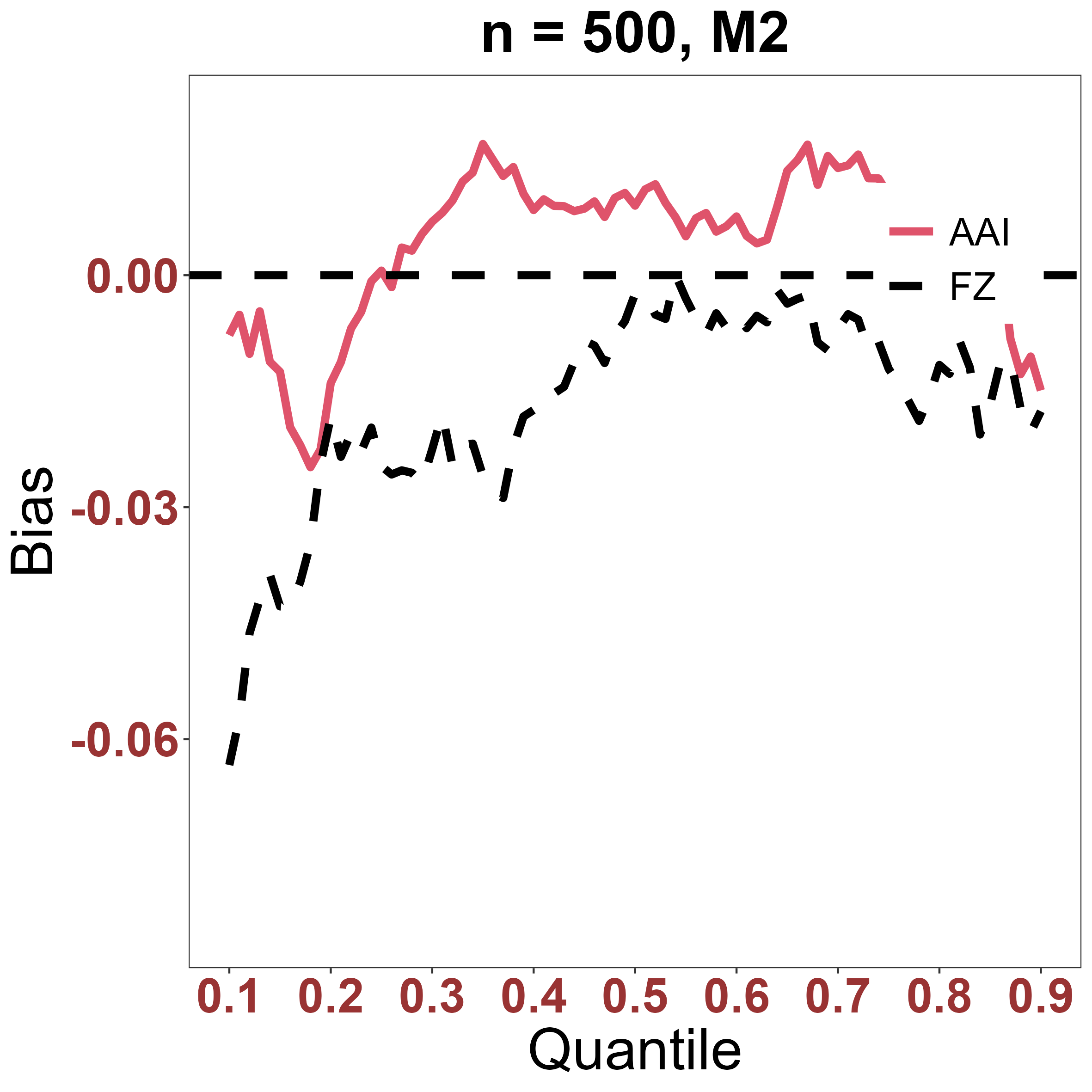}
		\includegraphics[width = 3.9cm, height = 3.2cm]{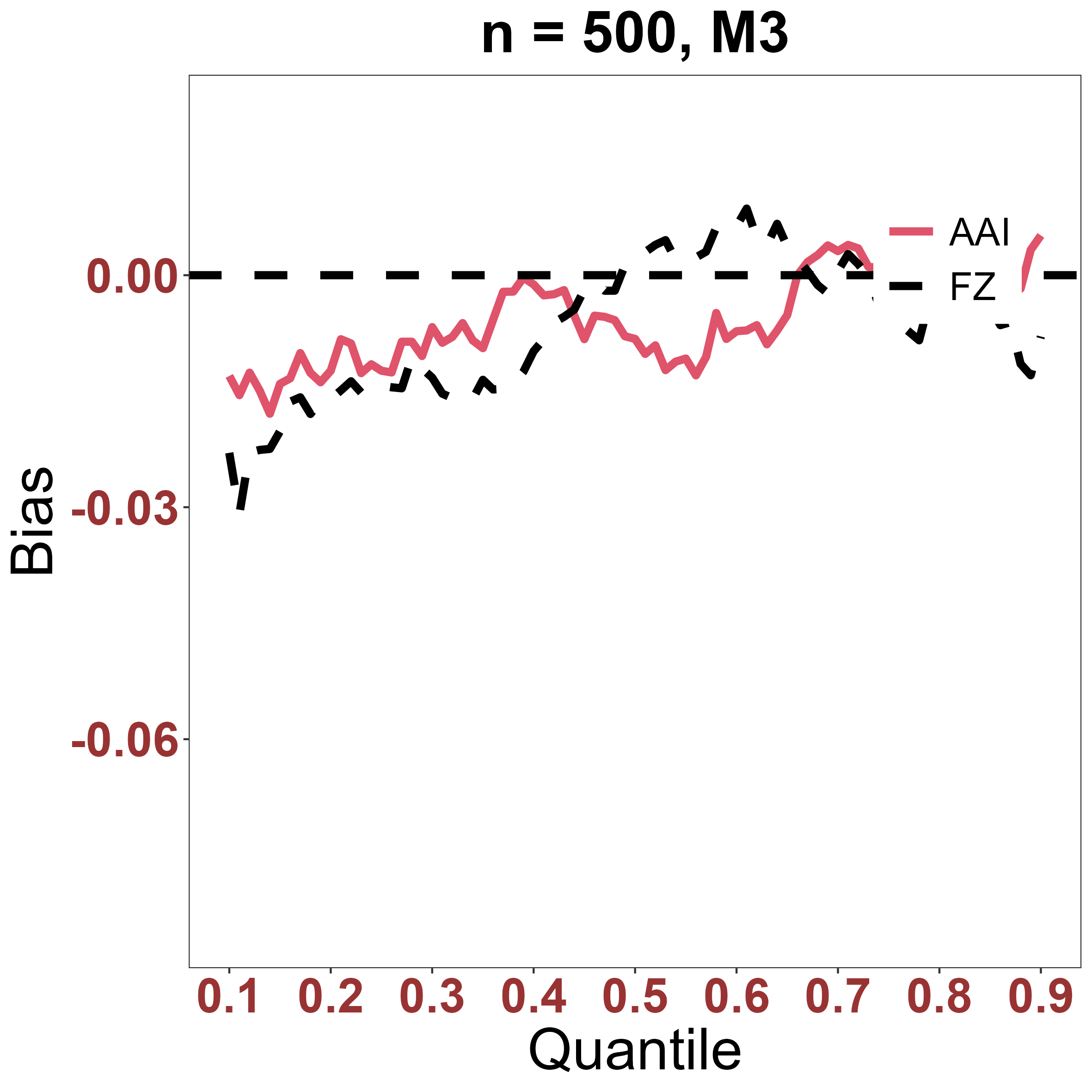}	
		\includegraphics[width = 3.9cm, height = 3.2cm]{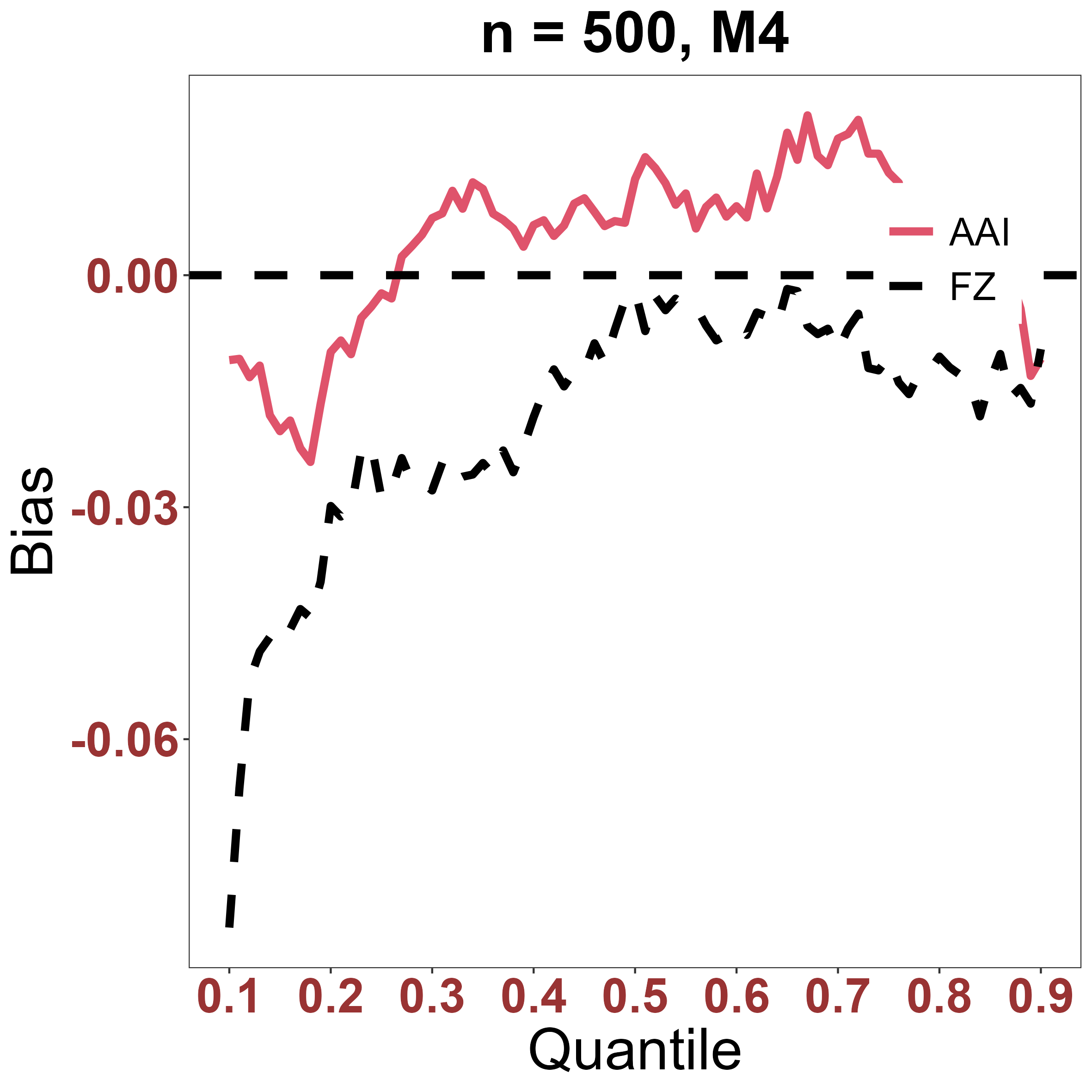}
	}	
	\mbox{	\includegraphics[width = 3.9cm, height = 3.2cm]{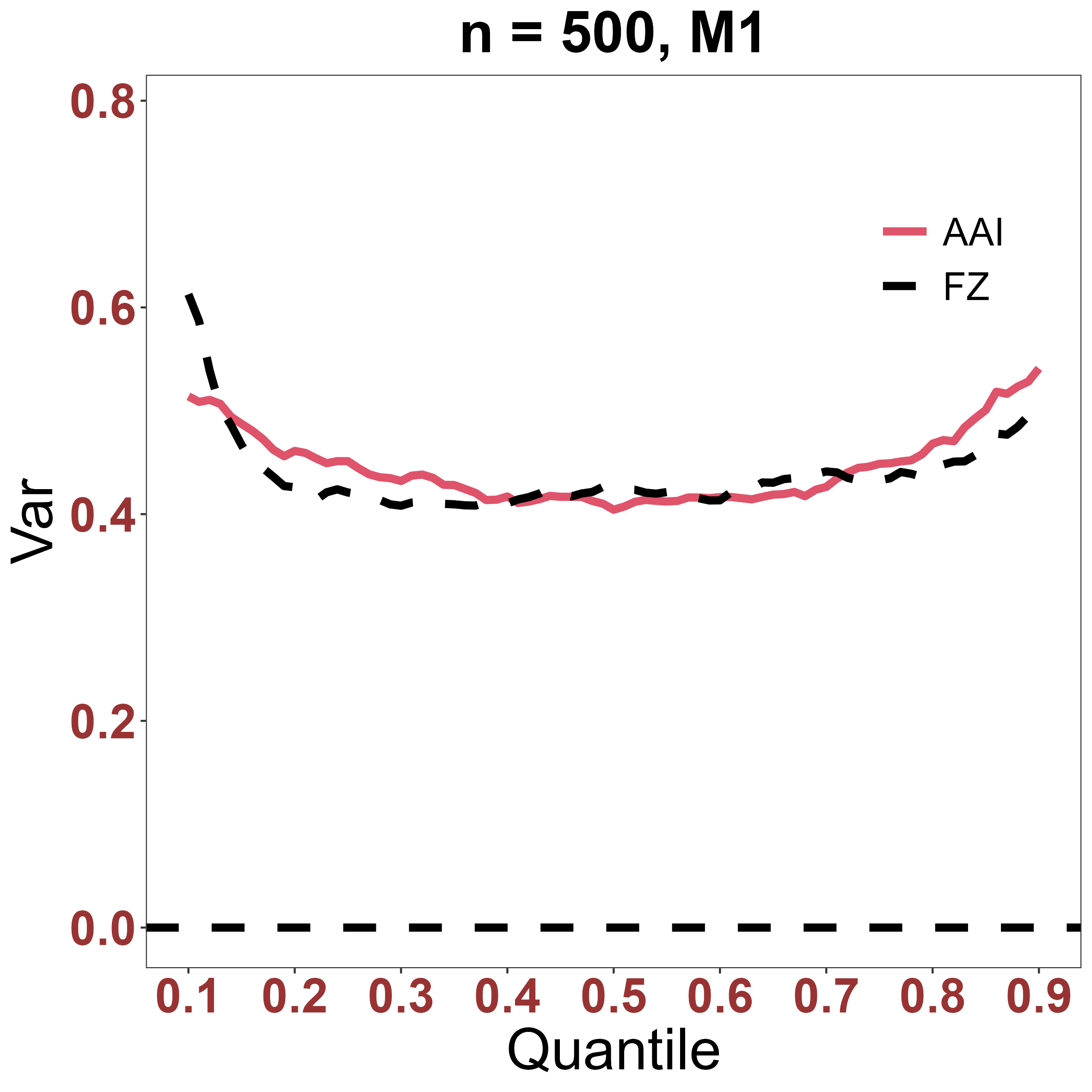}
		\includegraphics[width = 3.9cm, height = 3.2cm]{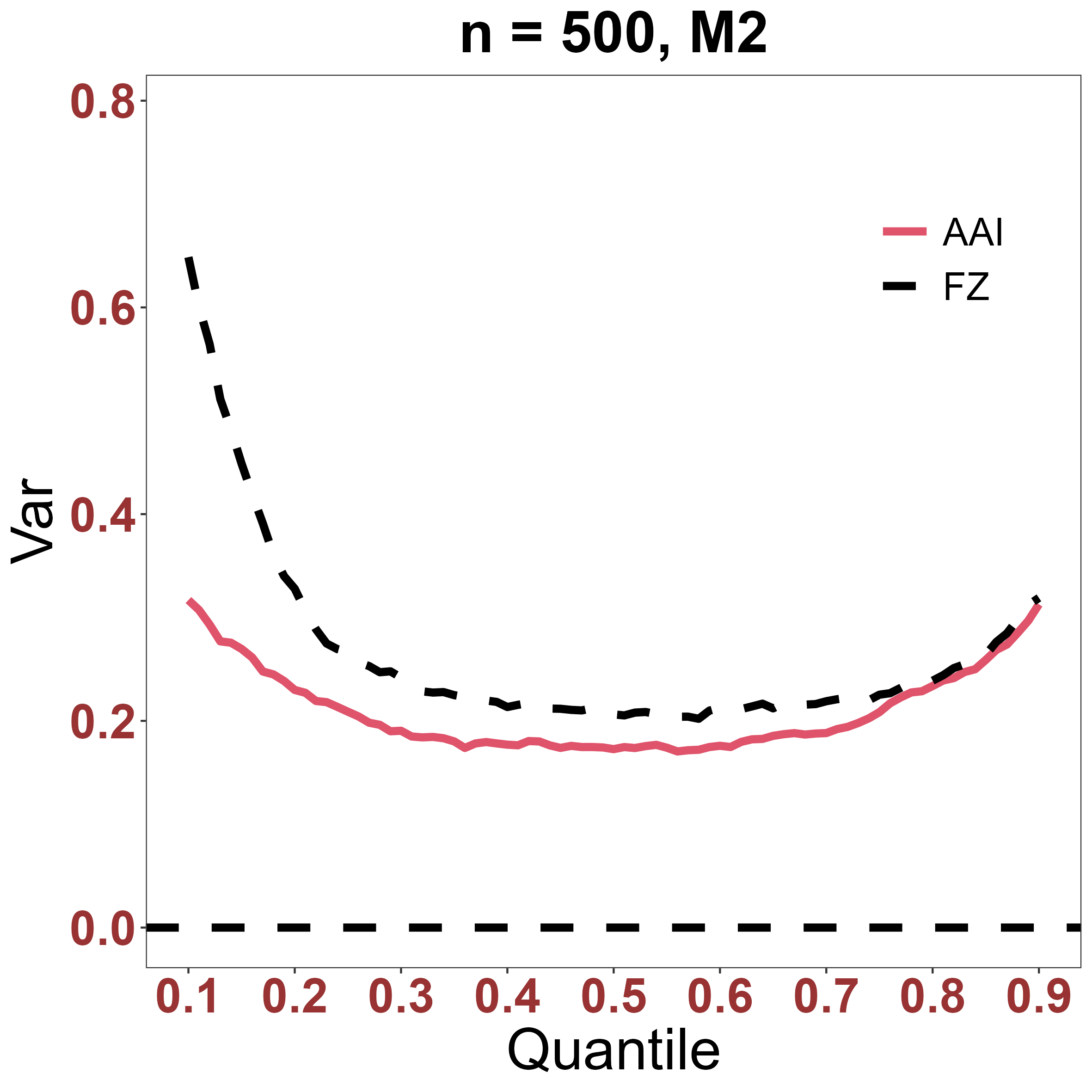}
		\includegraphics[width = 3.9cm, height = 3.2cm]{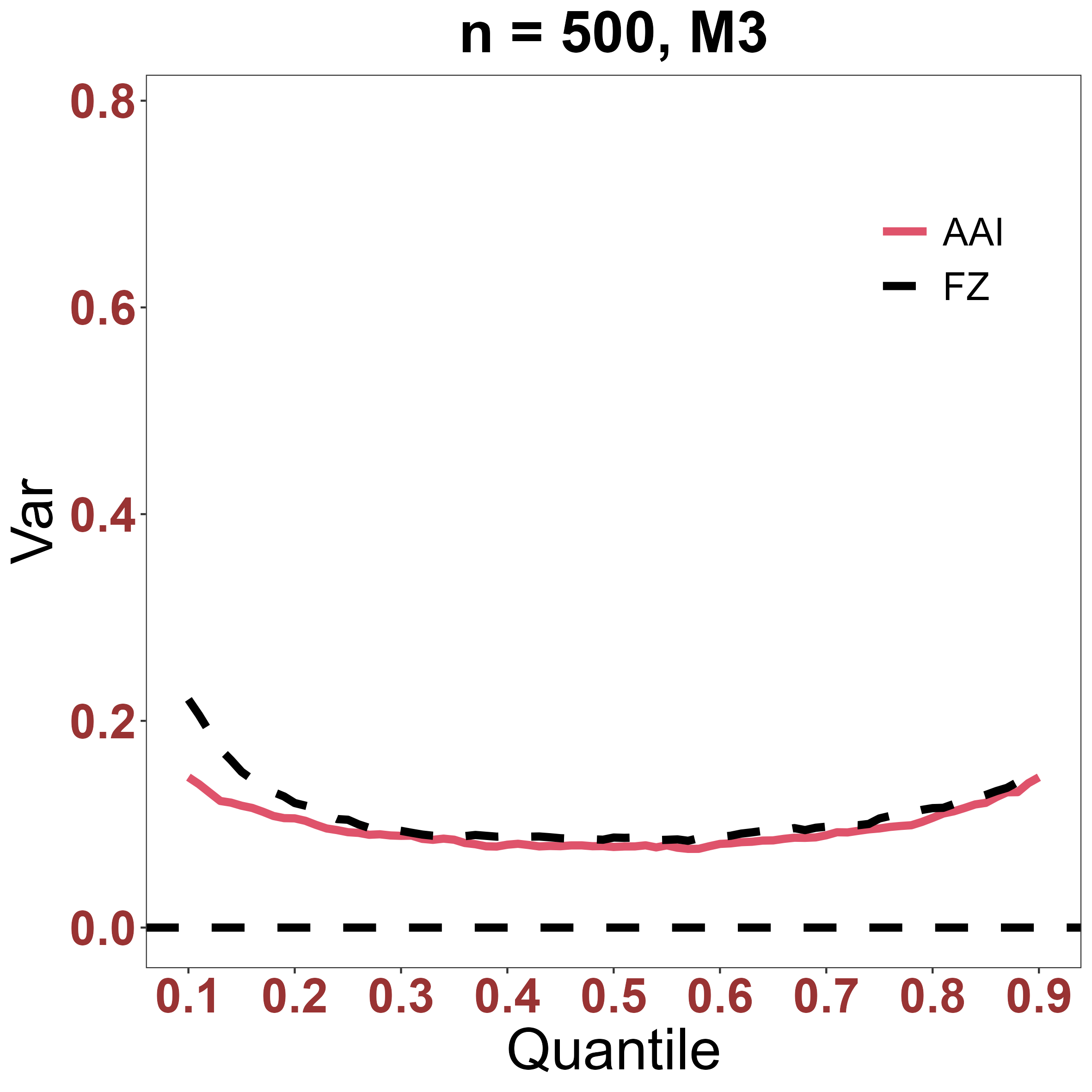}	
		\includegraphics[width = 3.9cm, height = 3.2cm]{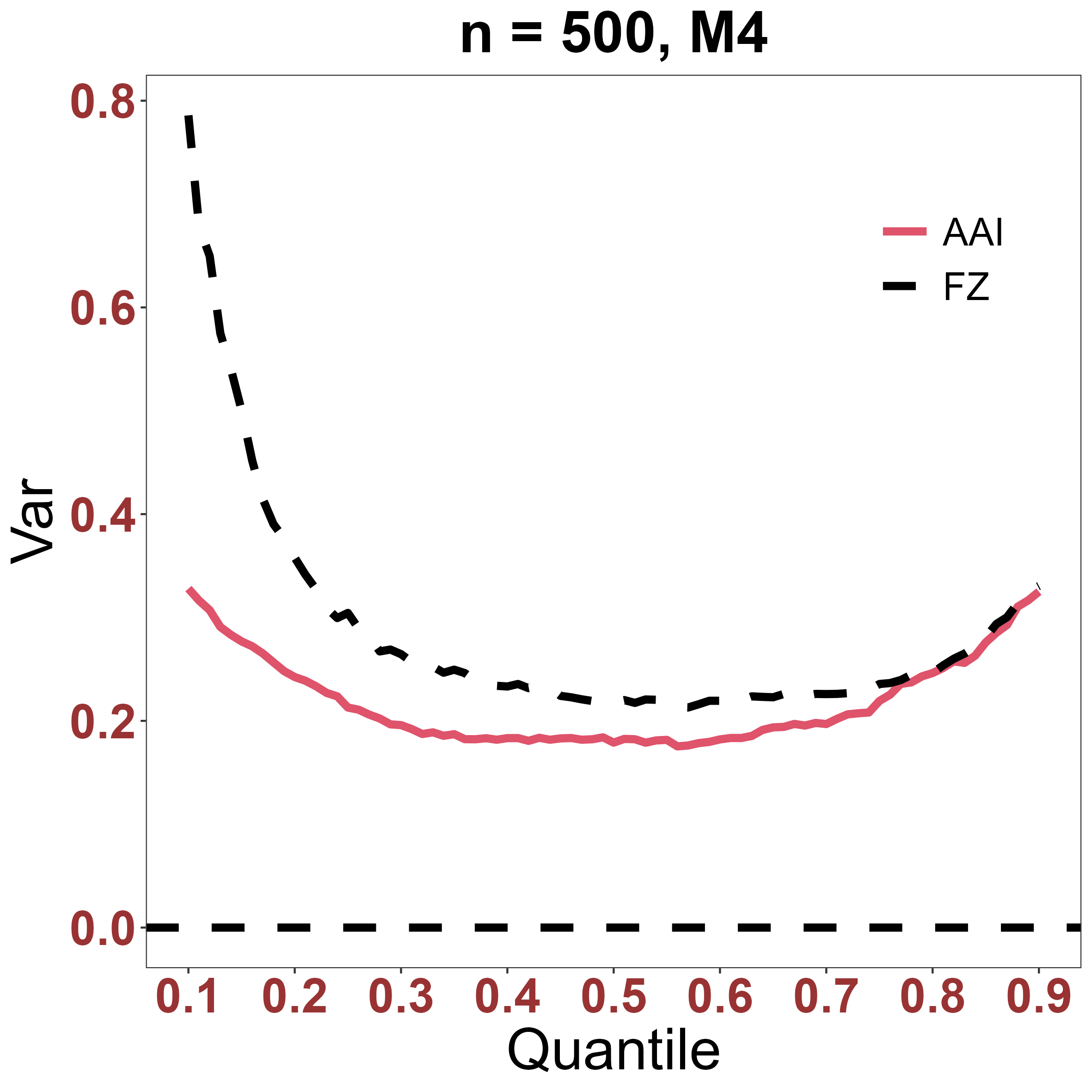}
	}	
	\mbox{	\includegraphics[width = 3.9cm, height = 3.2cm]{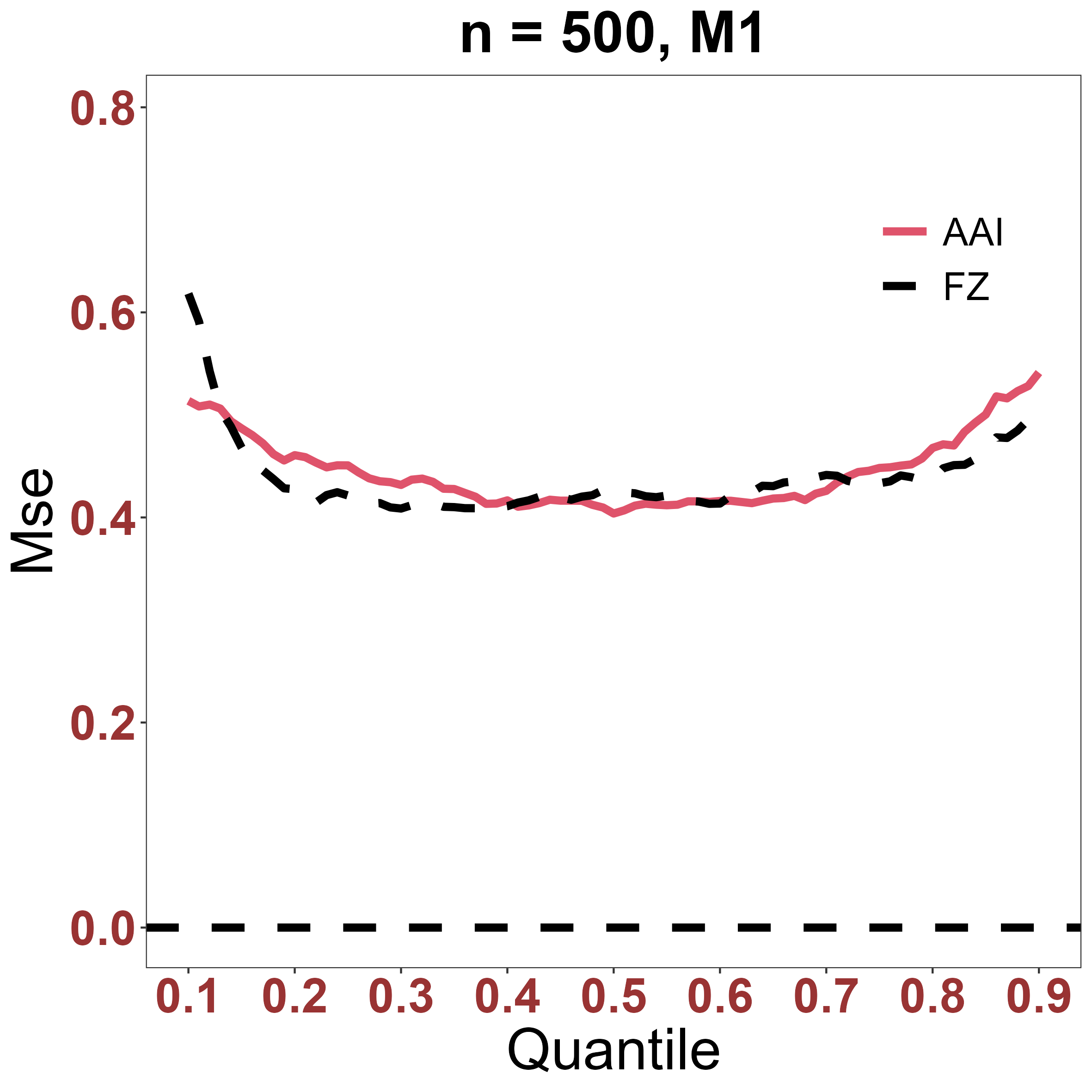}
		\includegraphics[width = 3.9cm, height = 3.2cm]{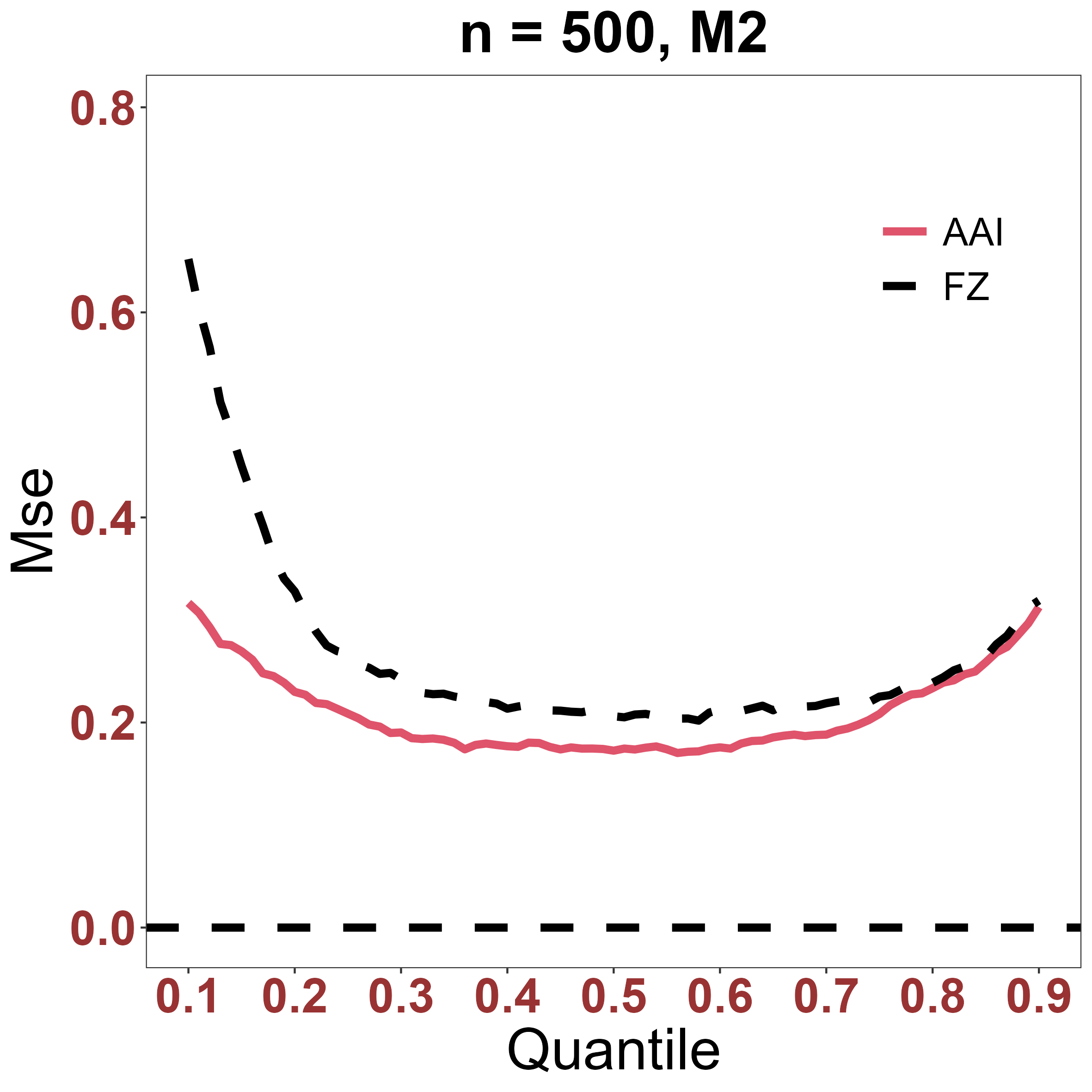}
		\includegraphics[width = 3.9cm, height = 3.2cm]{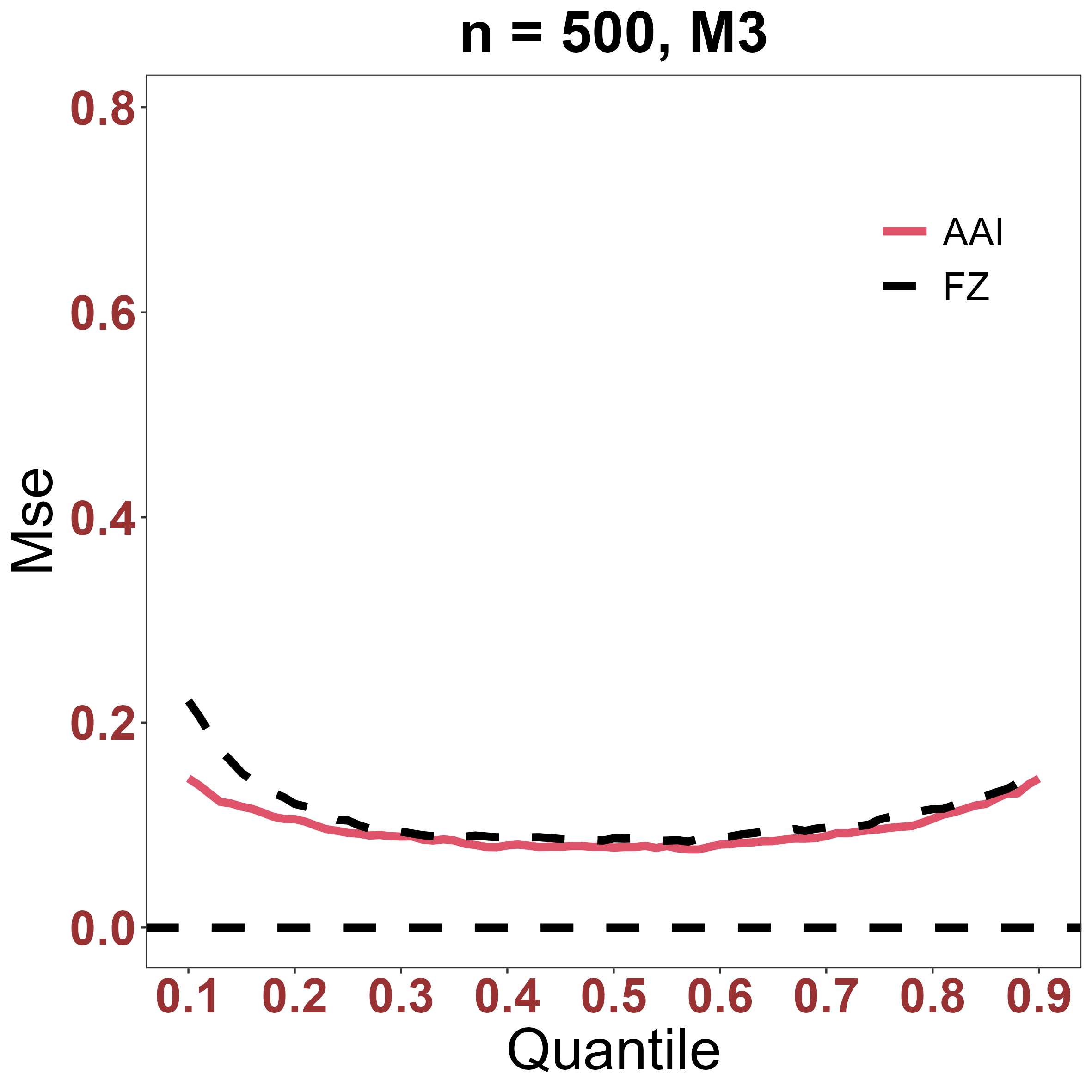}	
		\includegraphics[width = 3.9cm, height = 3.2cm]{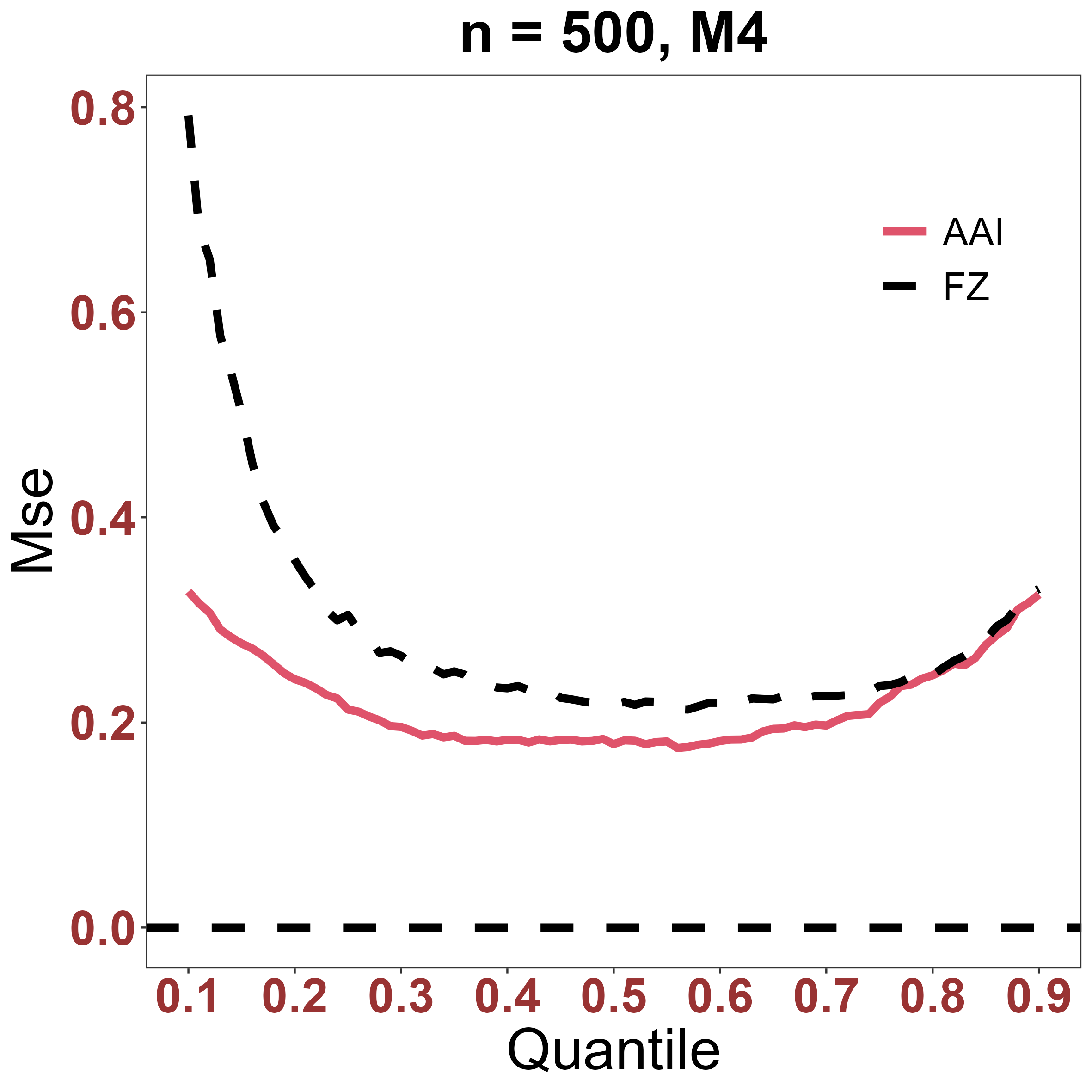}	
	}
	\caption{Bias, variance and MSE of the QTE estimator using the weighted quantile regression (WQR) of \cite{AAI_2002} (AAI) and that using the FZ loss (FZ) when $\rho=0$ and $n=500$.}
	\label{figure11}
\end{figure}

\begin{figure}[!htb]
	\centering
	\mbox{
		\includegraphics[width = 3.9cm, height = 3.2cm]{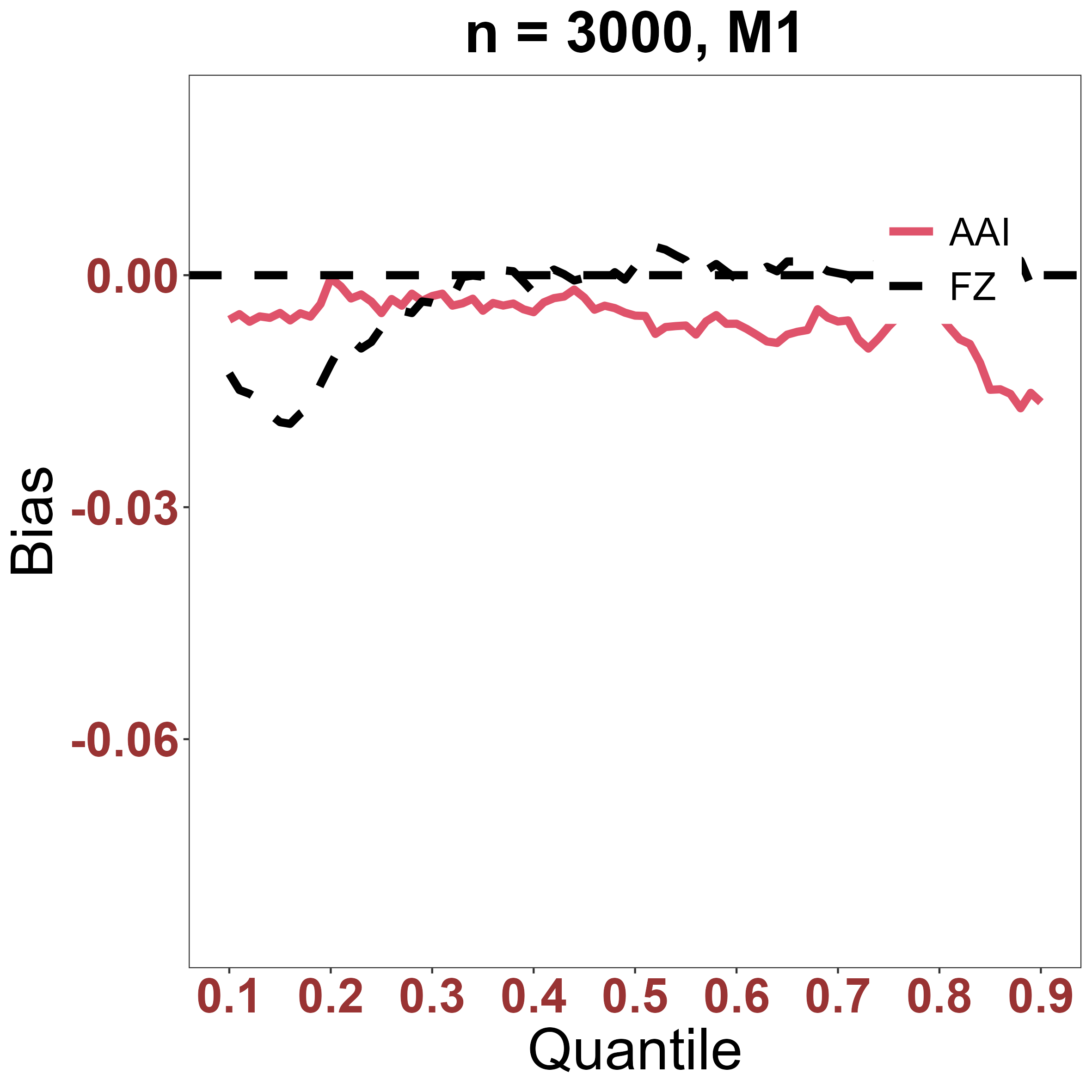}
		\includegraphics[width = 3.9cm, height = 3.2cm]{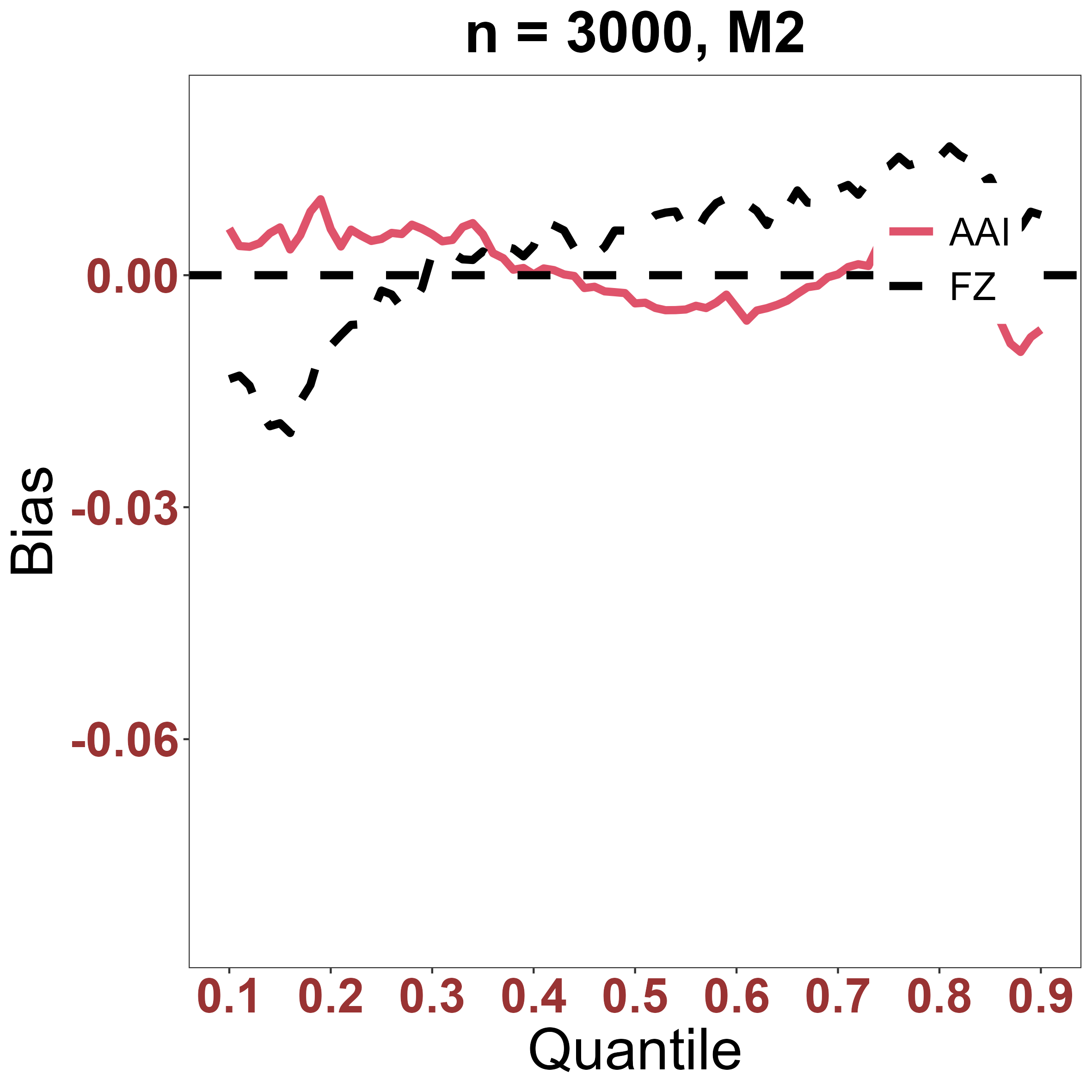}
		\includegraphics[width = 3.9cm, height = 3.2cm]{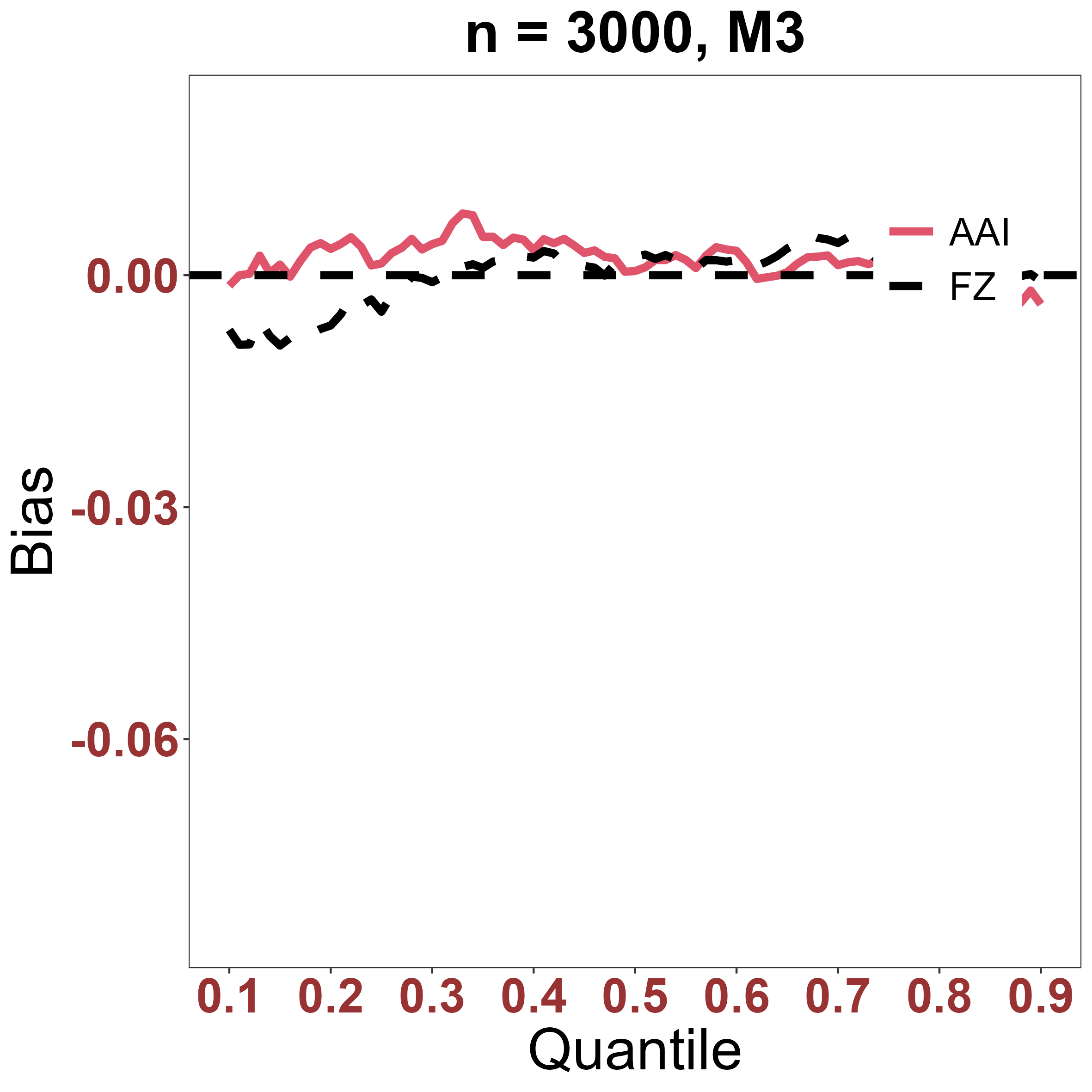}	
		\includegraphics[width = 3.9cm, height = 3.2cm]{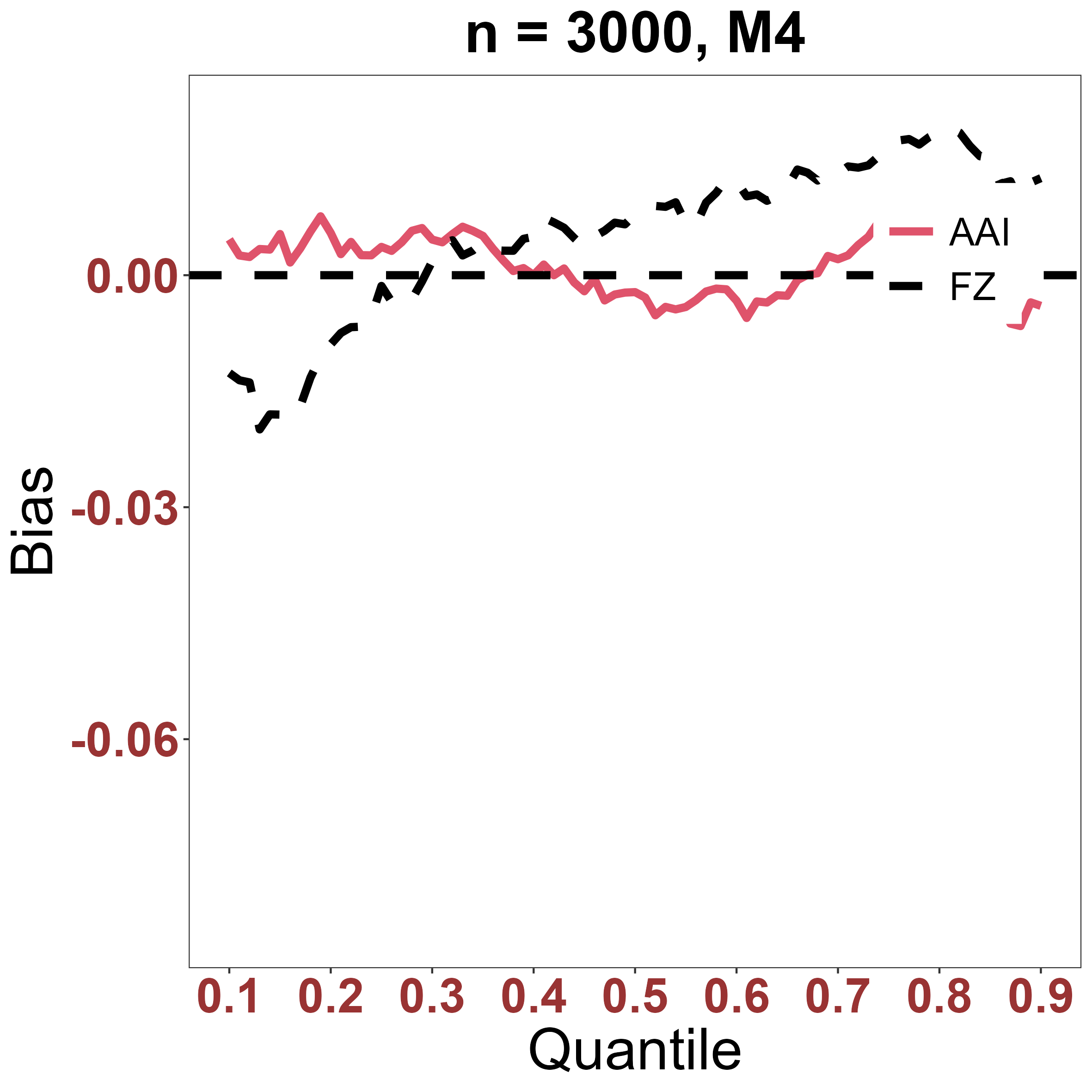}
	}	
	\mbox{
		\includegraphics[width = 3.9cm, height = 3.2cm]{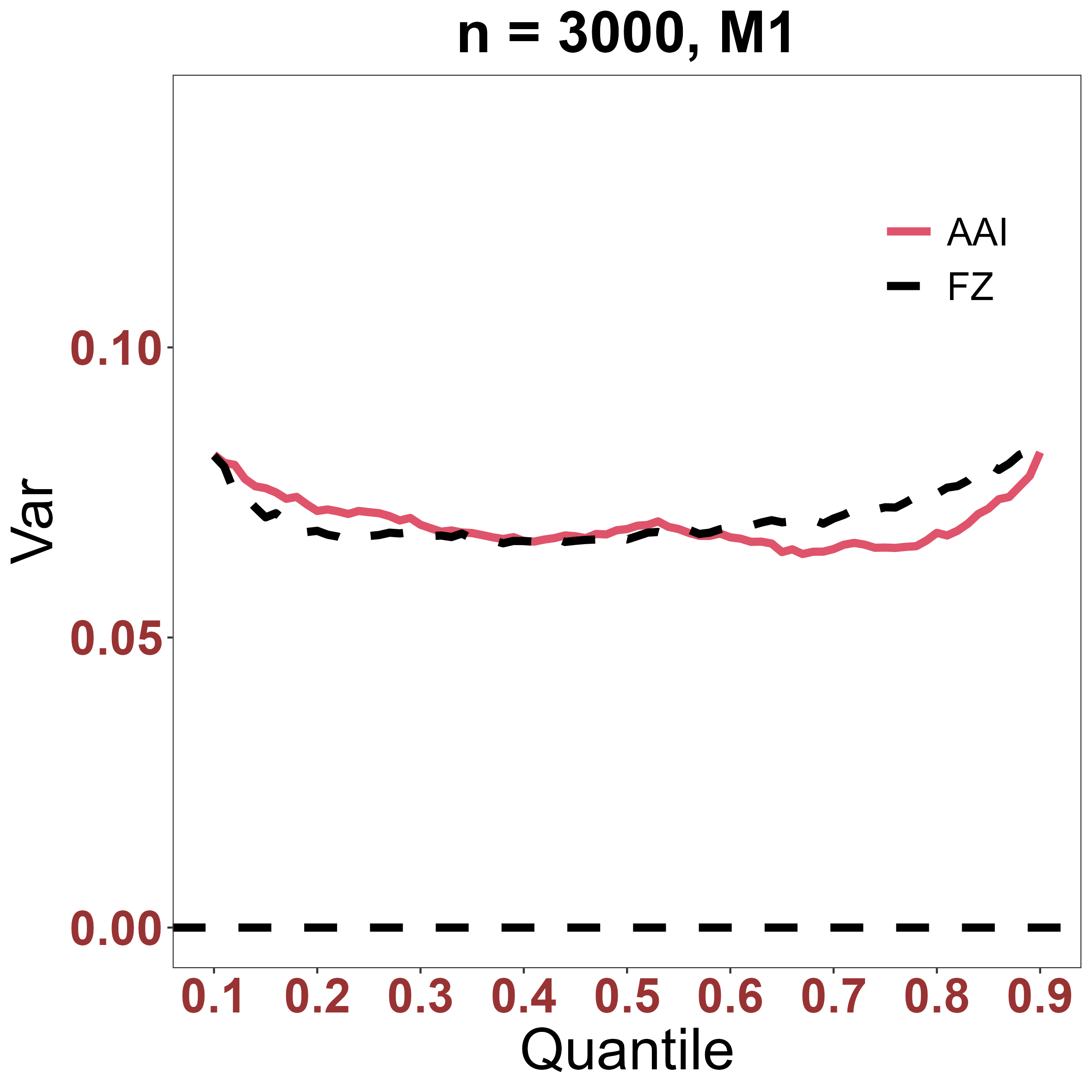}
		\includegraphics[width = 3.9cm, height = 3.2cm]{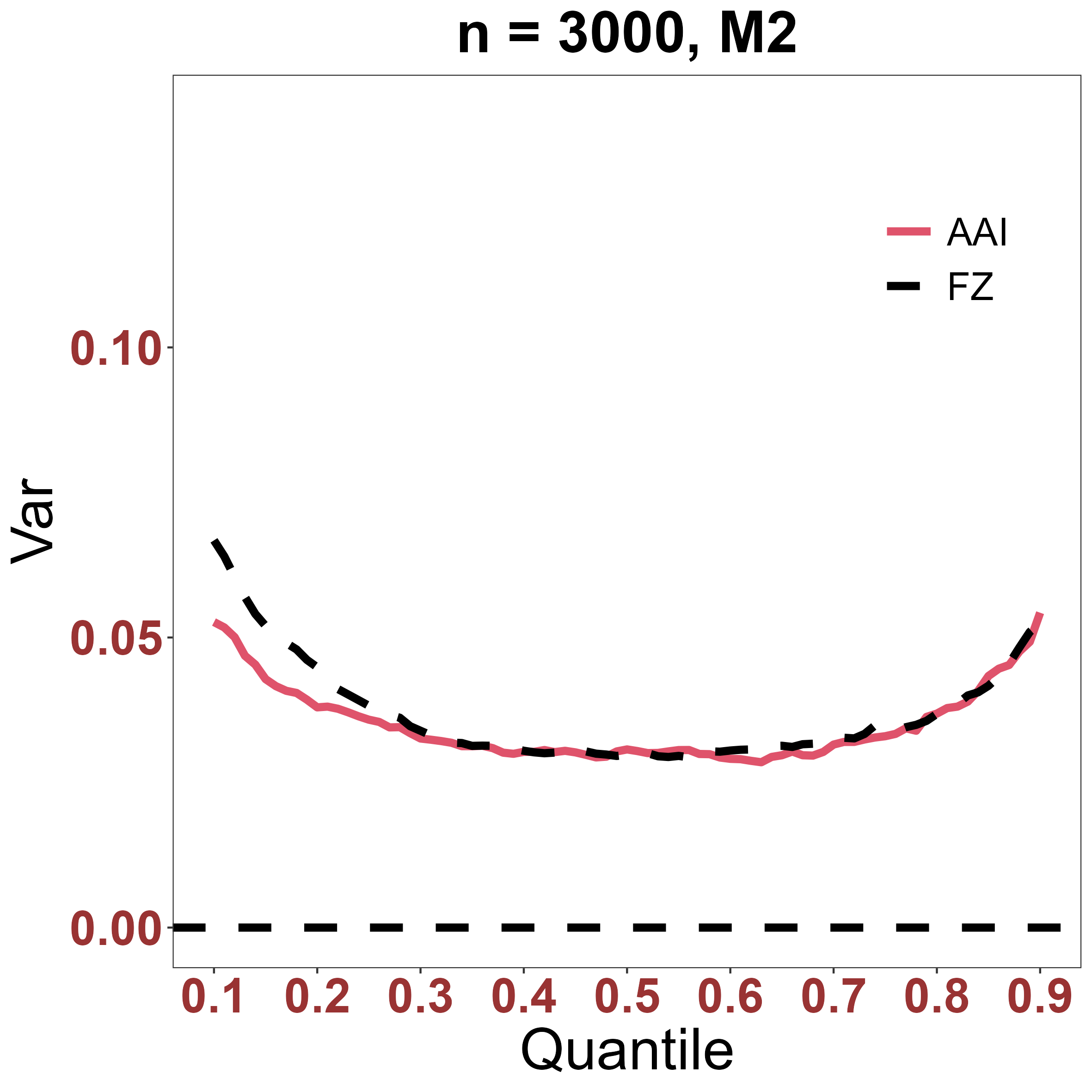}
		\includegraphics[width = 3.9cm, height = 3.2cm]{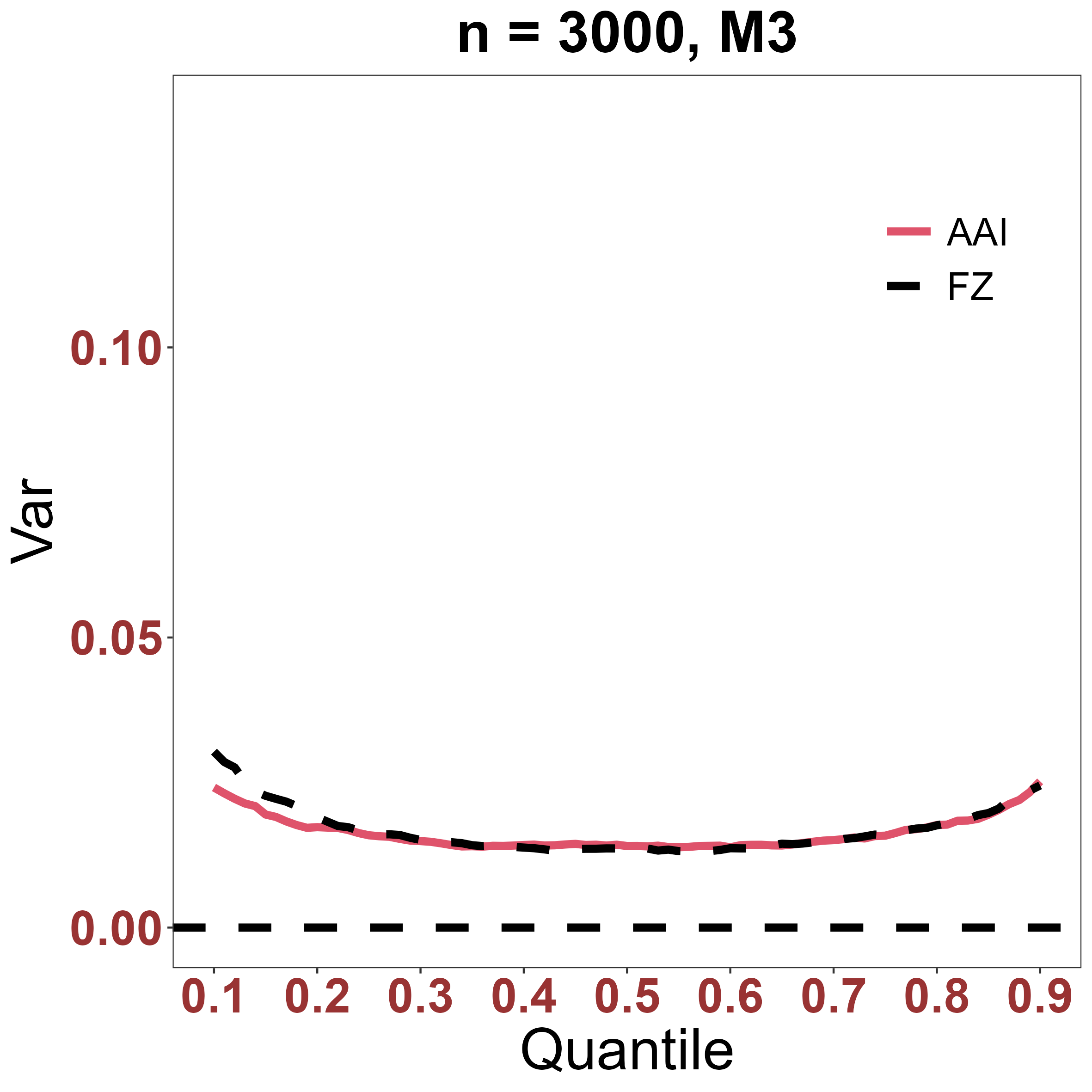}	
		\includegraphics[width = 3.9cm, height = 3.2cm]{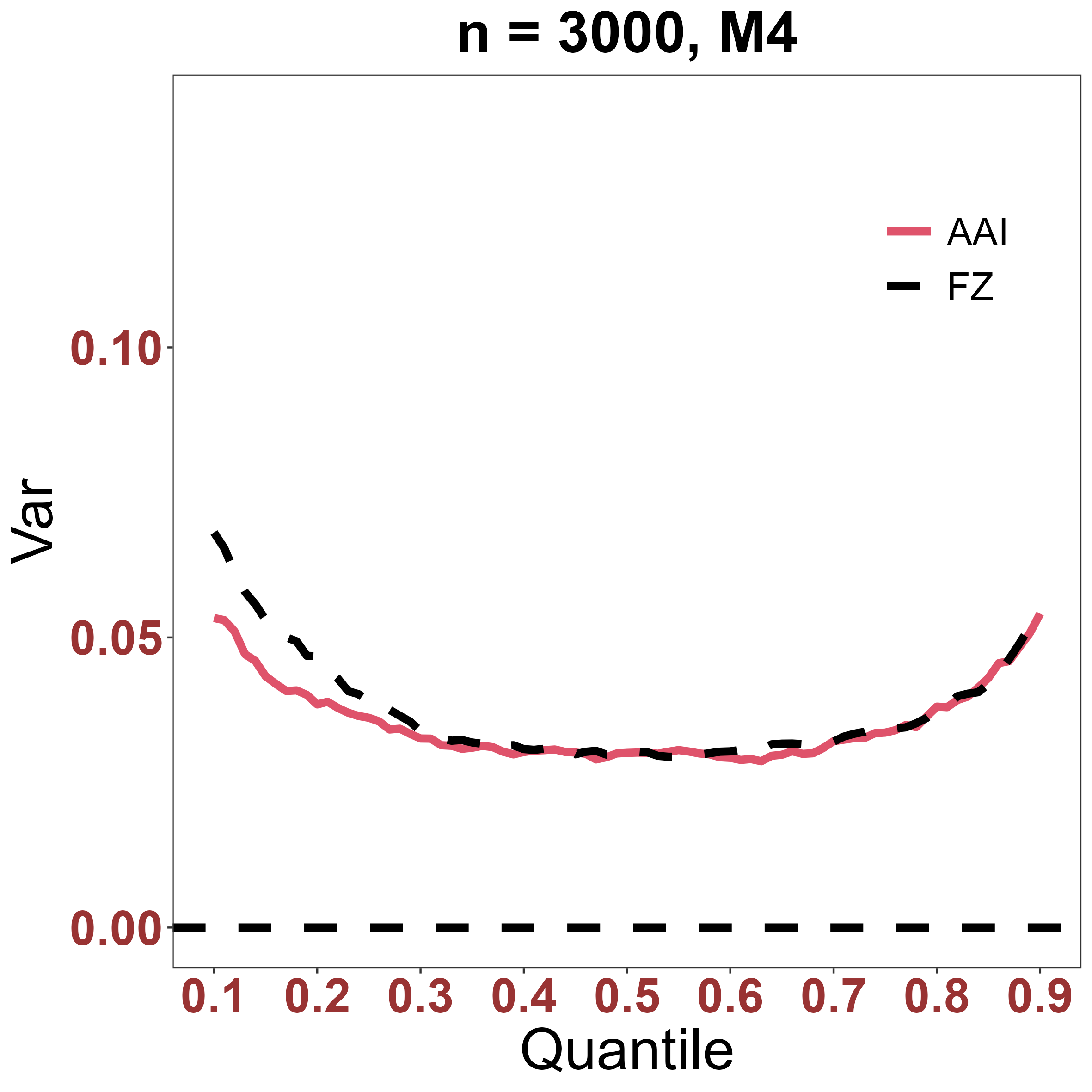}
	}	
	\mbox{
		\includegraphics[width = 3.9cm, height = 3.2cm]{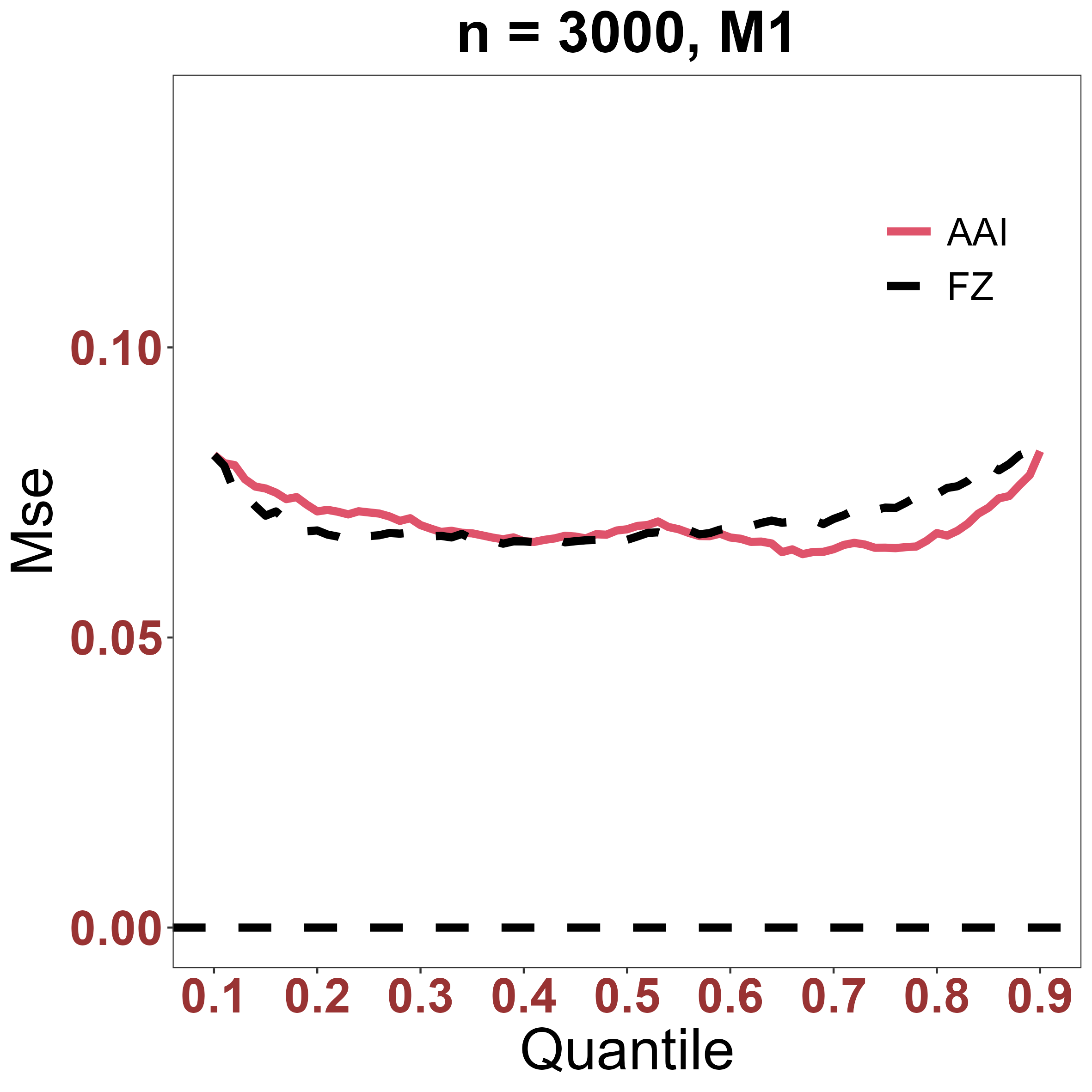}
		\includegraphics[width = 3.9cm, height = 3.2cm]{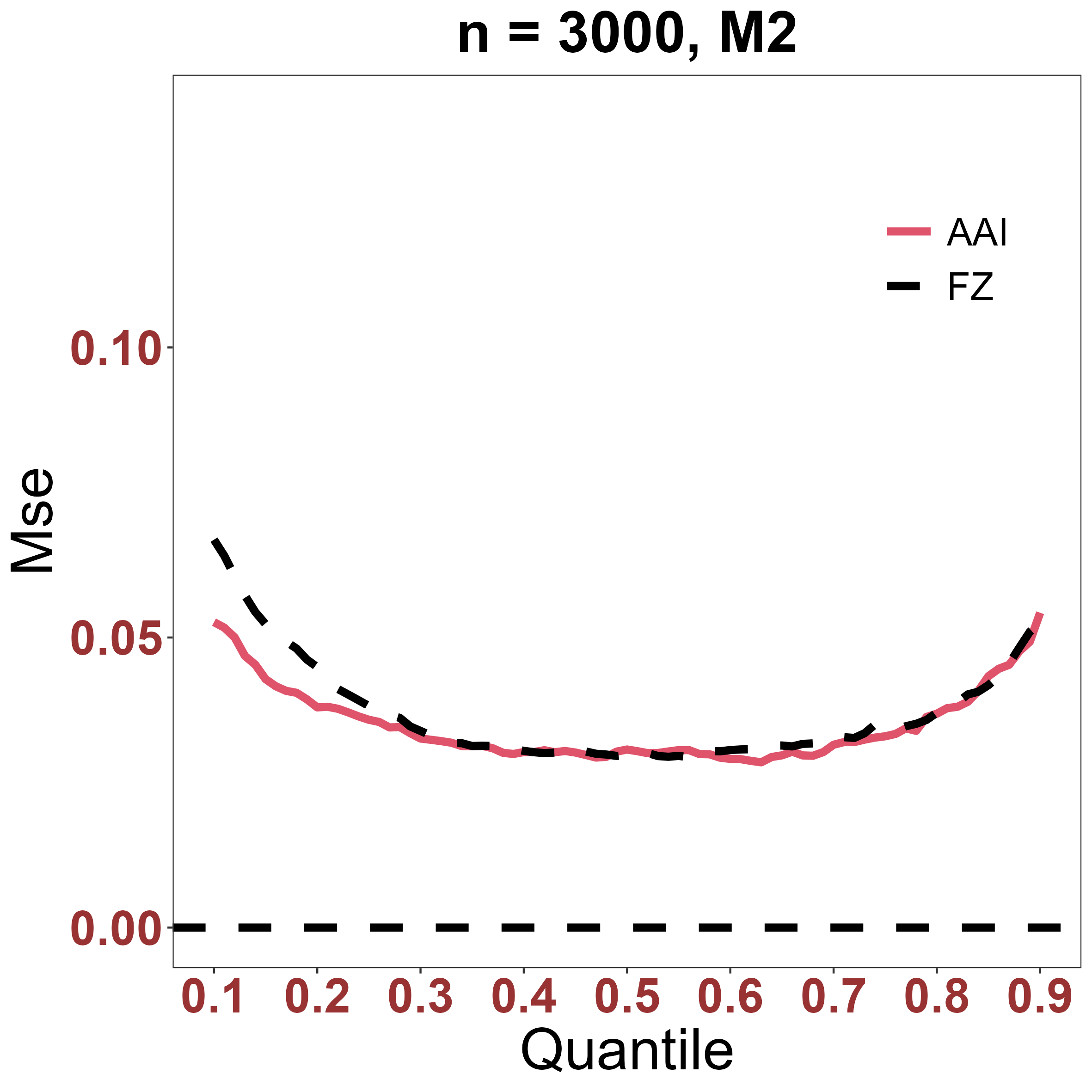}
		\includegraphics[width = 3.9cm, height = 3.2cm]{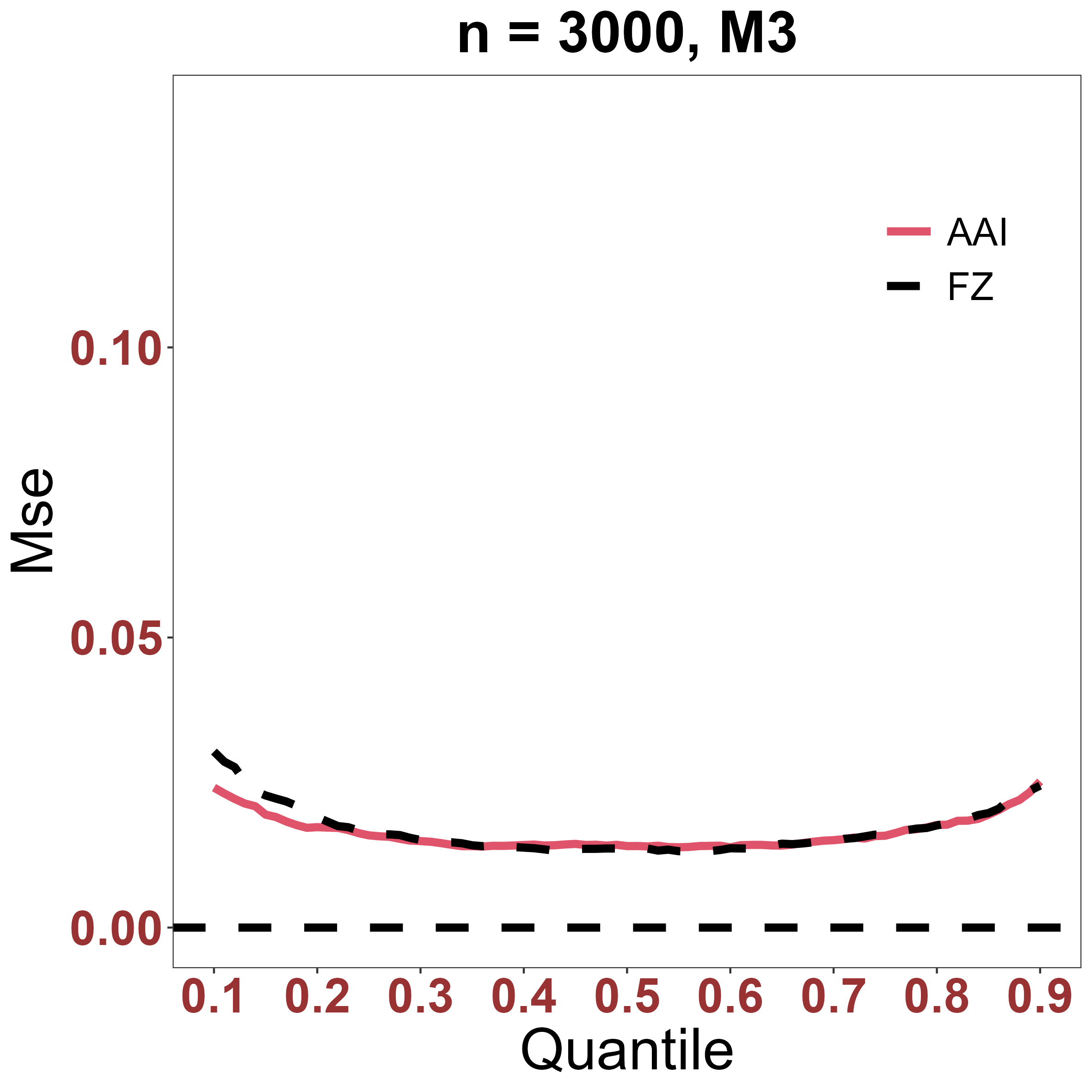}	
		\includegraphics[width = 3.9cm, height = 3.2cm]{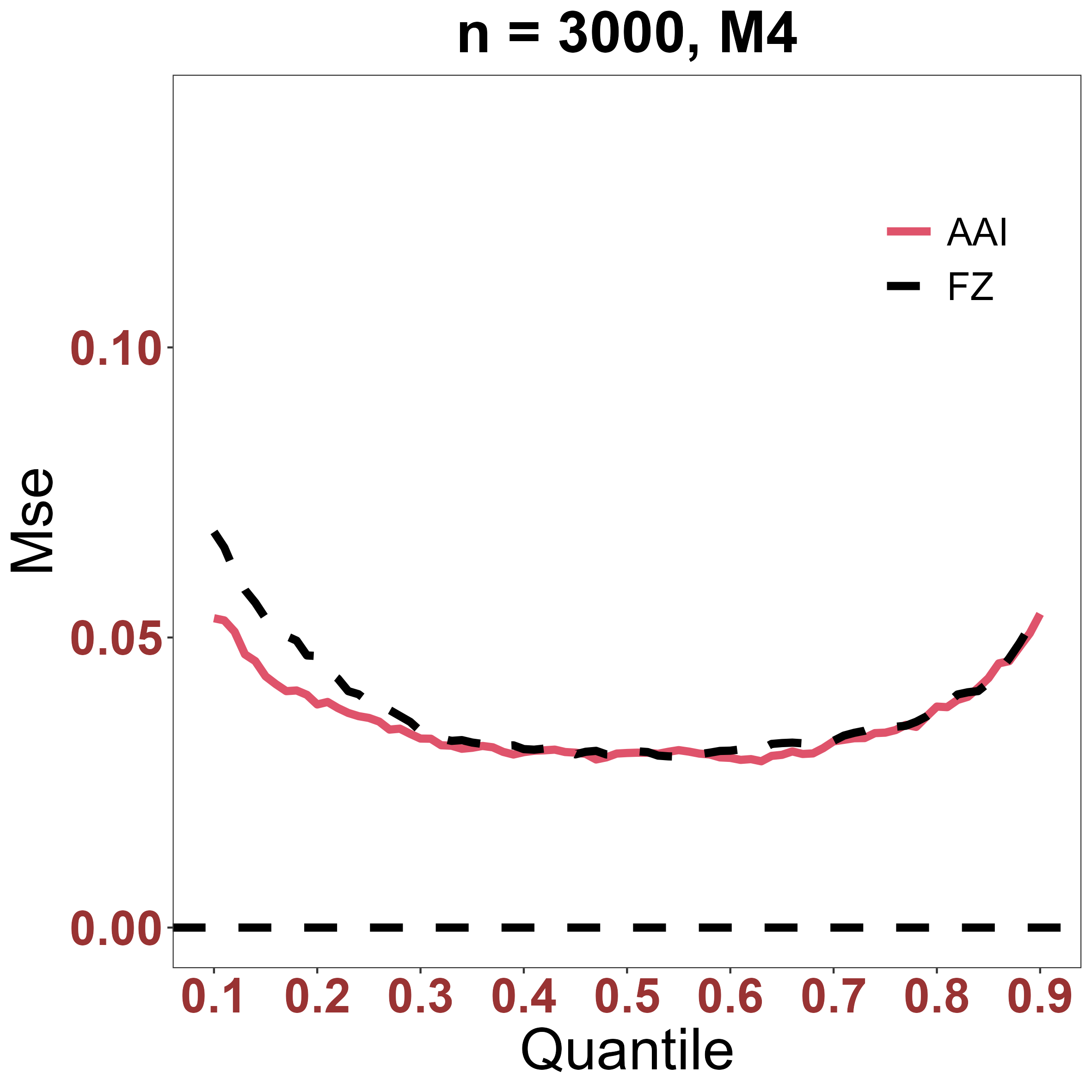}
	}	
	\caption{Bias, variance and MSE of the QTE estimator using the weighted quantile regression (WQR) of \cite{AAI_2002} (AAI) and that using the FZ loss (FZ) when $\rho=0$ and $n=3,000$.}
	\label{figure12}
\end{figure}

\begin{figure}[!htb]
	\centering
	\mbox{
		\includegraphics[width = 3.9cm, height = 3.2cm]{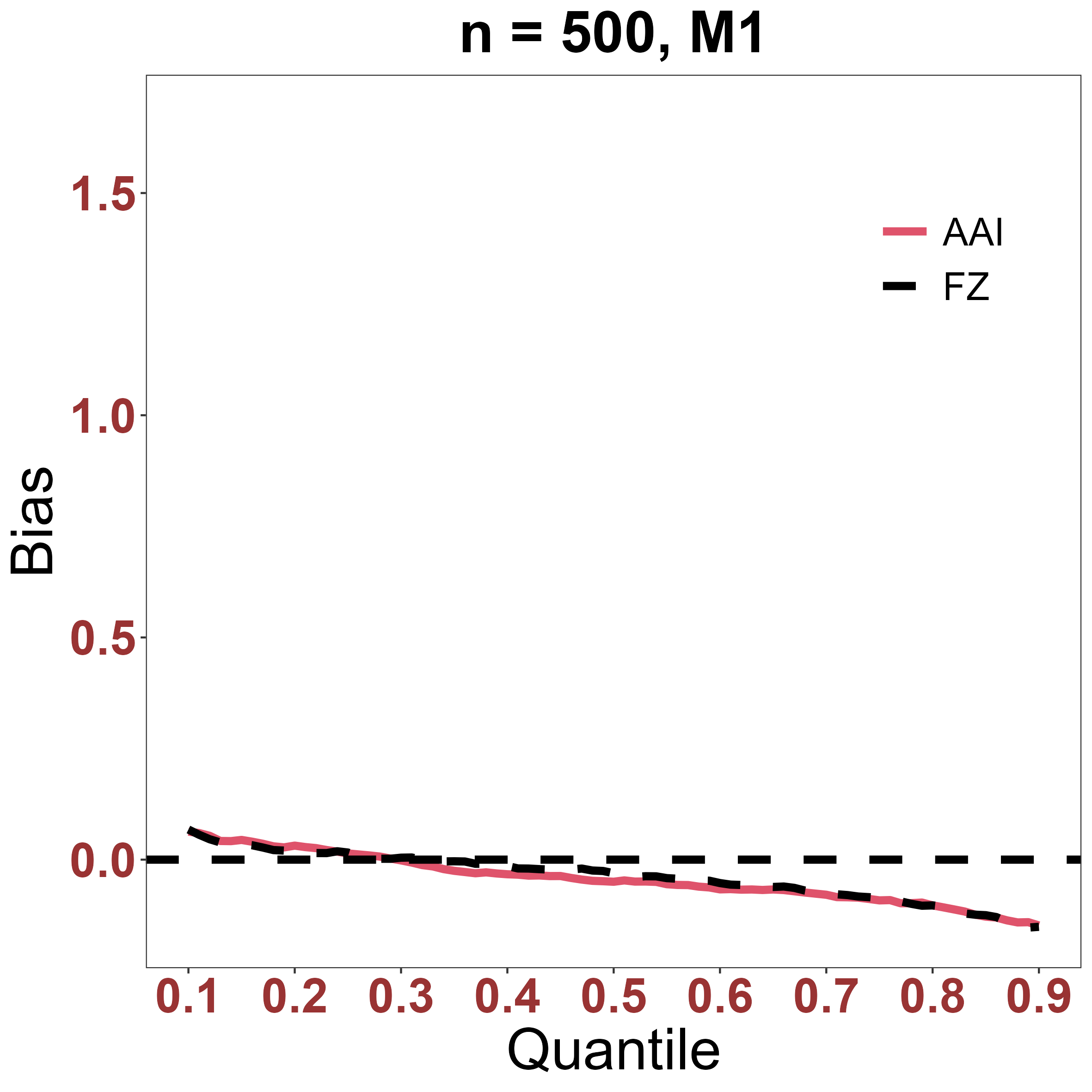}
		\includegraphics[width = 3.9cm, height = 3.2cm]{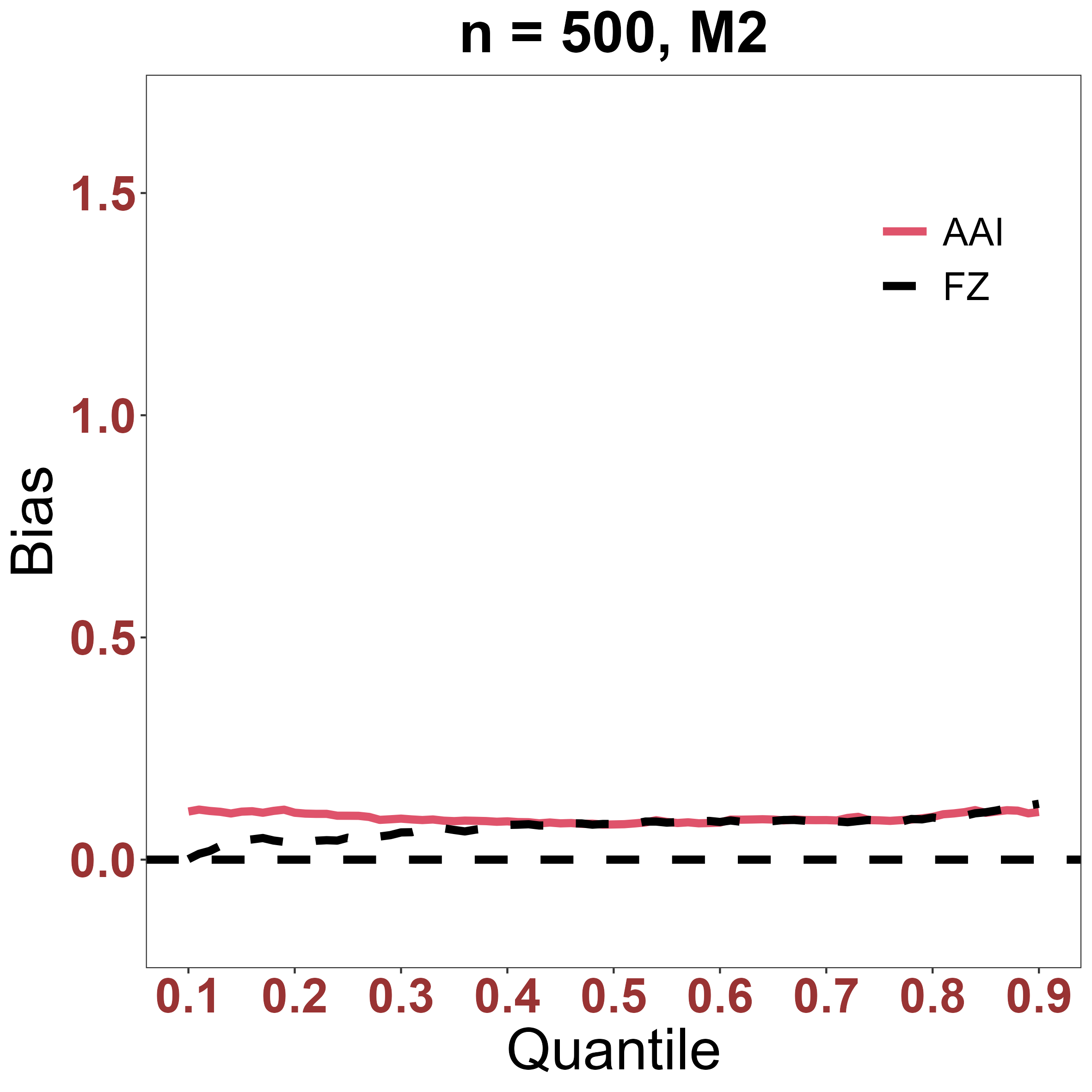}
		\includegraphics[width = 3.9cm, height = 3.2cm]{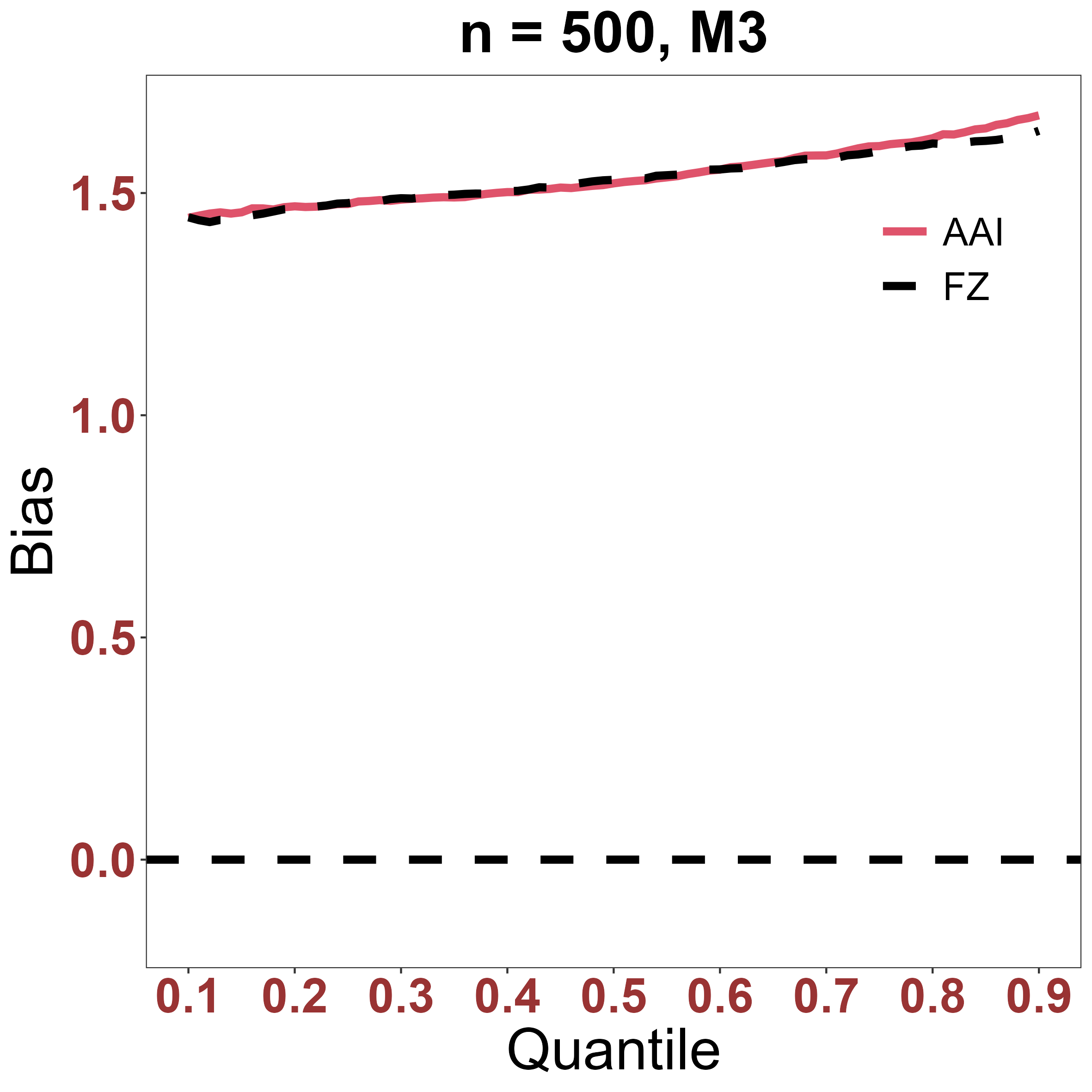}	
		\includegraphics[width = 3.9cm, height = 3.2cm]{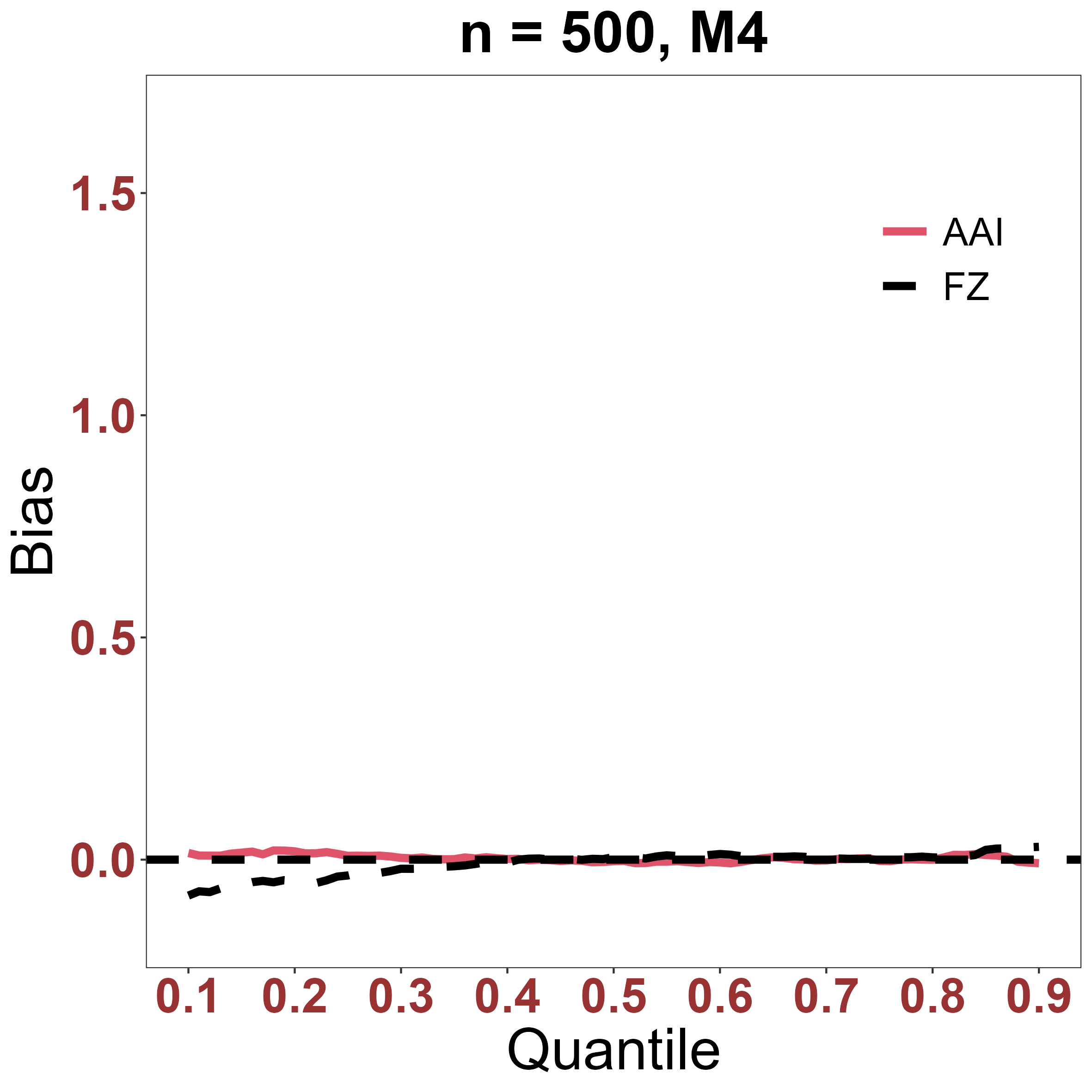}
	}	
	\mbox{	\includegraphics[width = 3.9cm, height = 3.2cm]{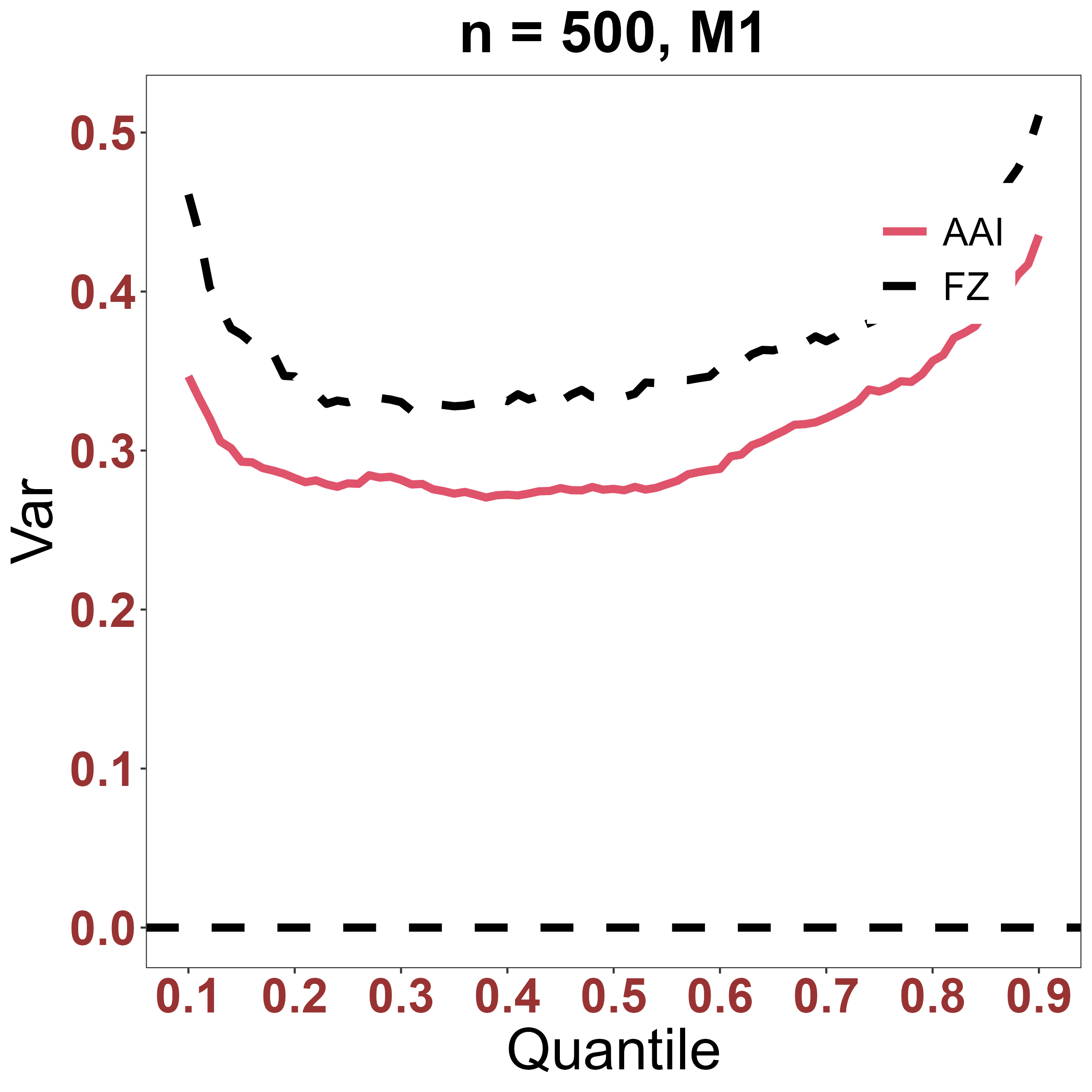}
		\includegraphics[width = 3.9cm, height = 3.2cm]{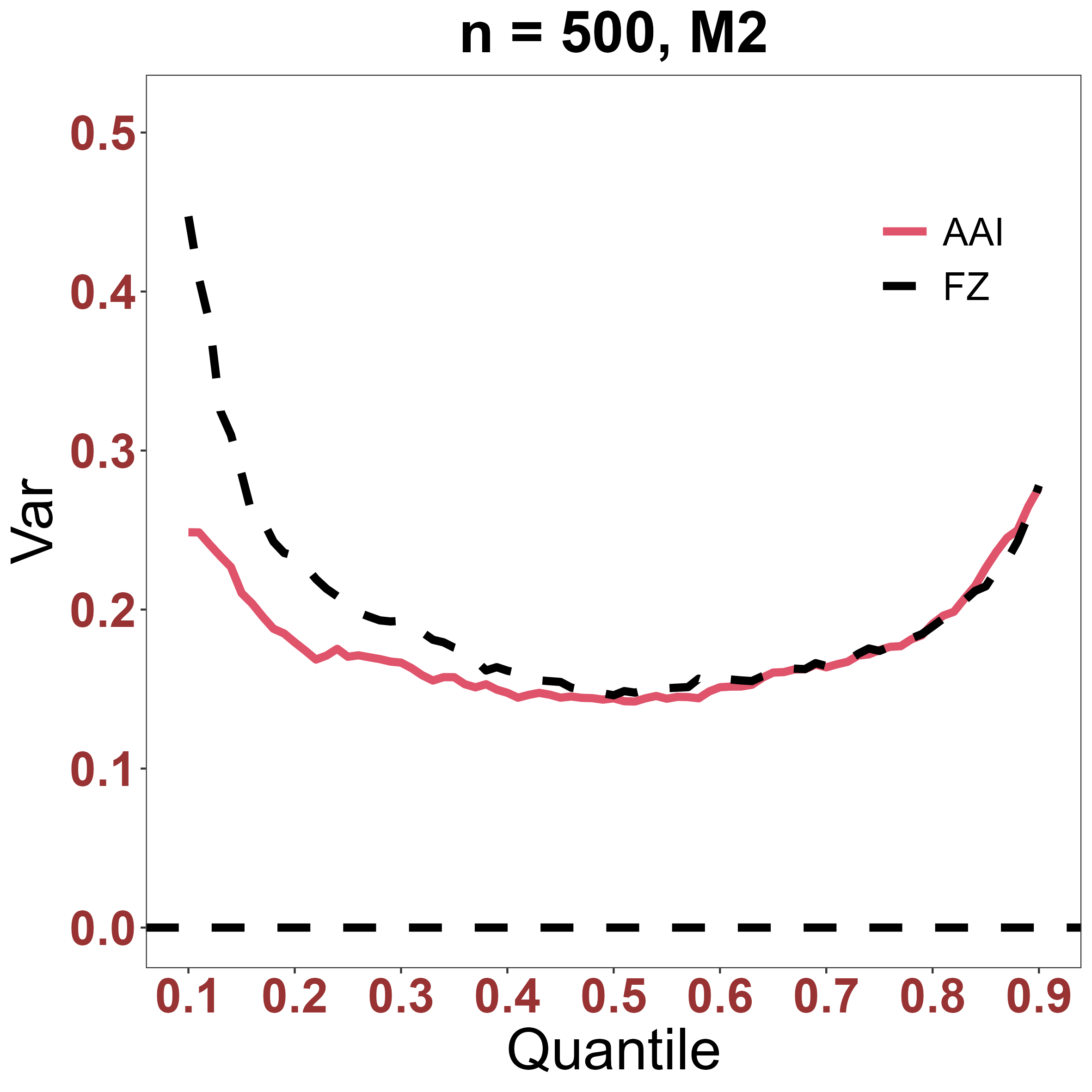}
		\includegraphics[width = 3.9cm, height = 3.2cm]{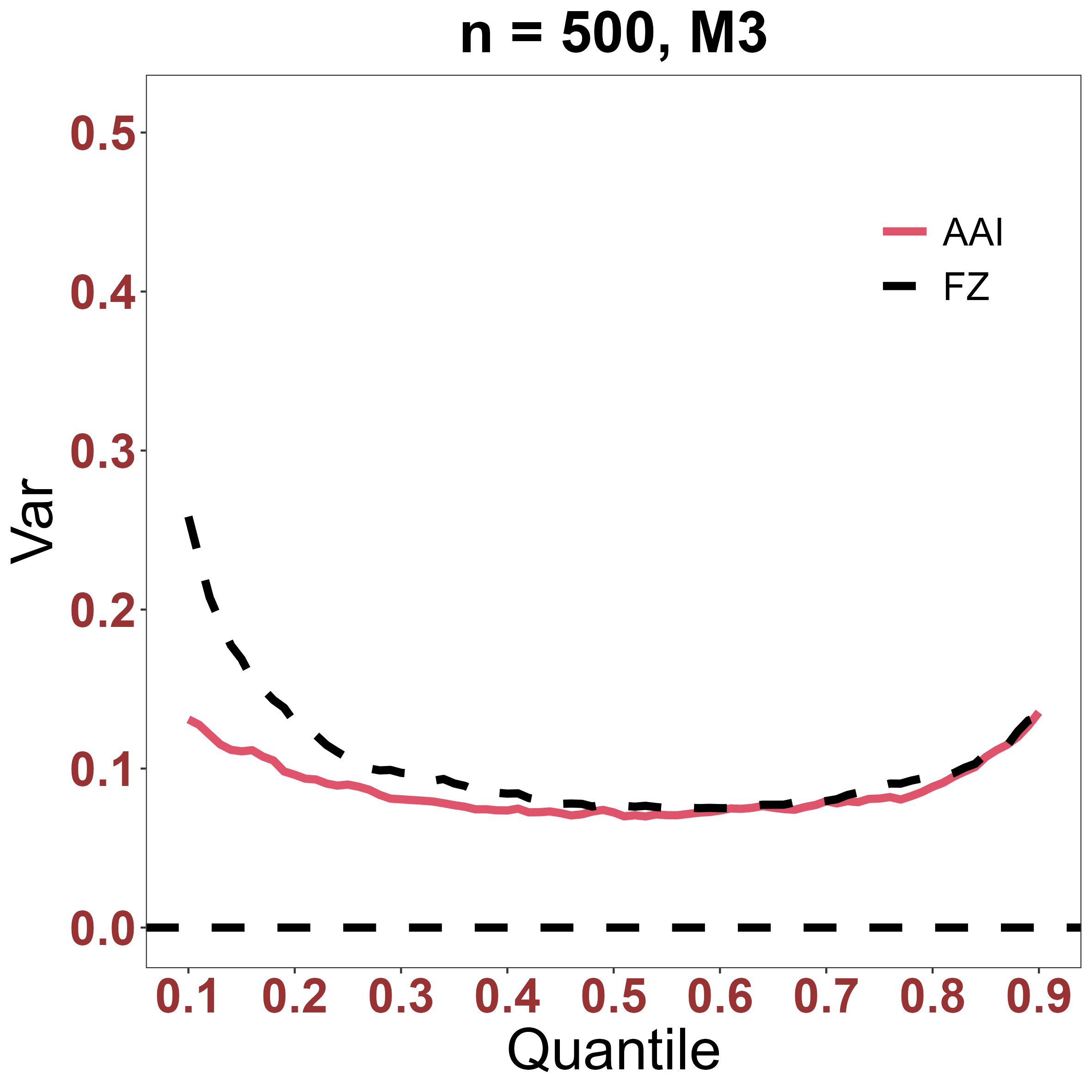}	
		\includegraphics[width = 3.9cm, height = 3.2cm]{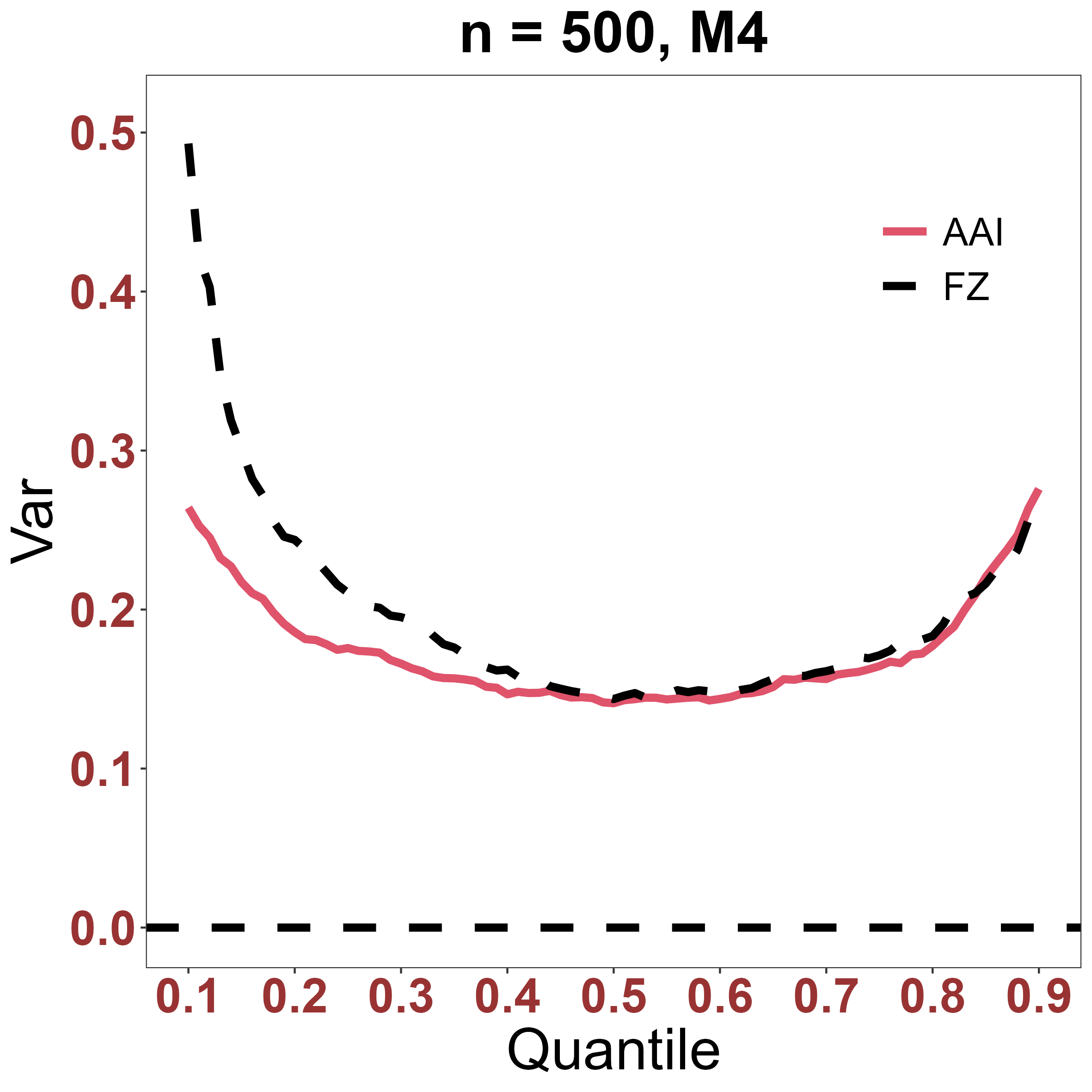}
	}	
	\mbox{	\includegraphics[width = 3.9cm, height = 3.2cm]{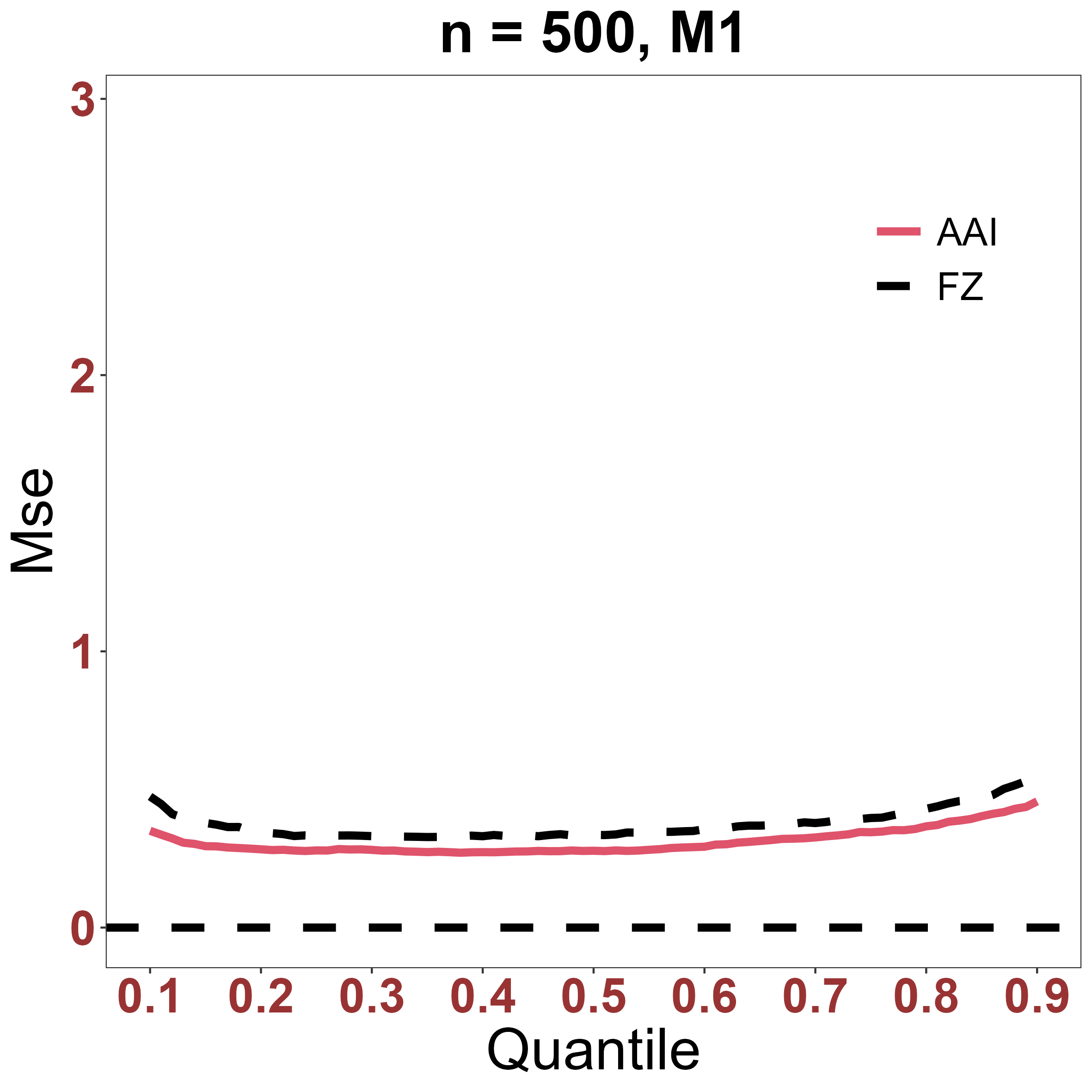}
		\includegraphics[width = 3.9cm, height = 3.2cm]{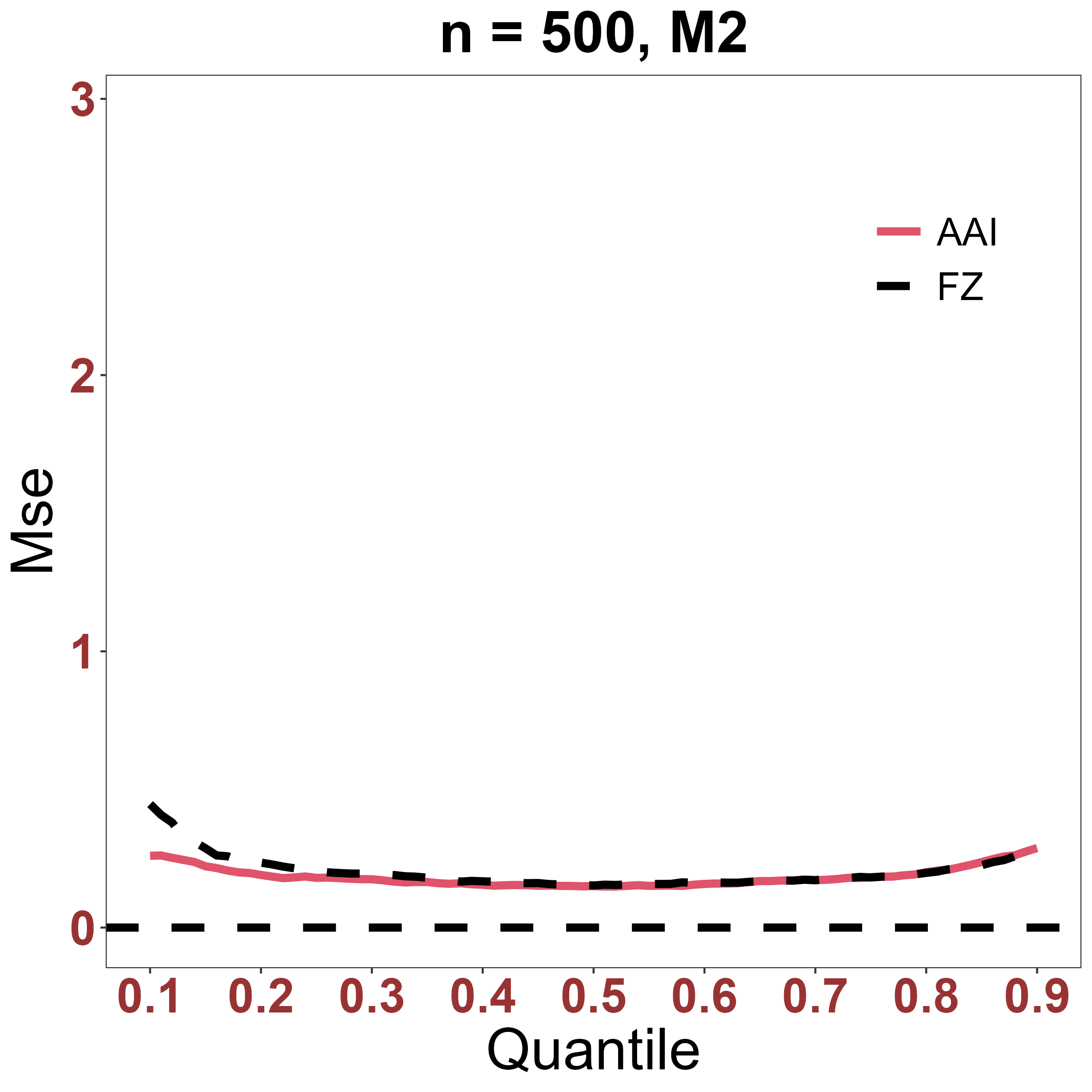}
		\includegraphics[width = 3.9cm, height = 3.2cm]{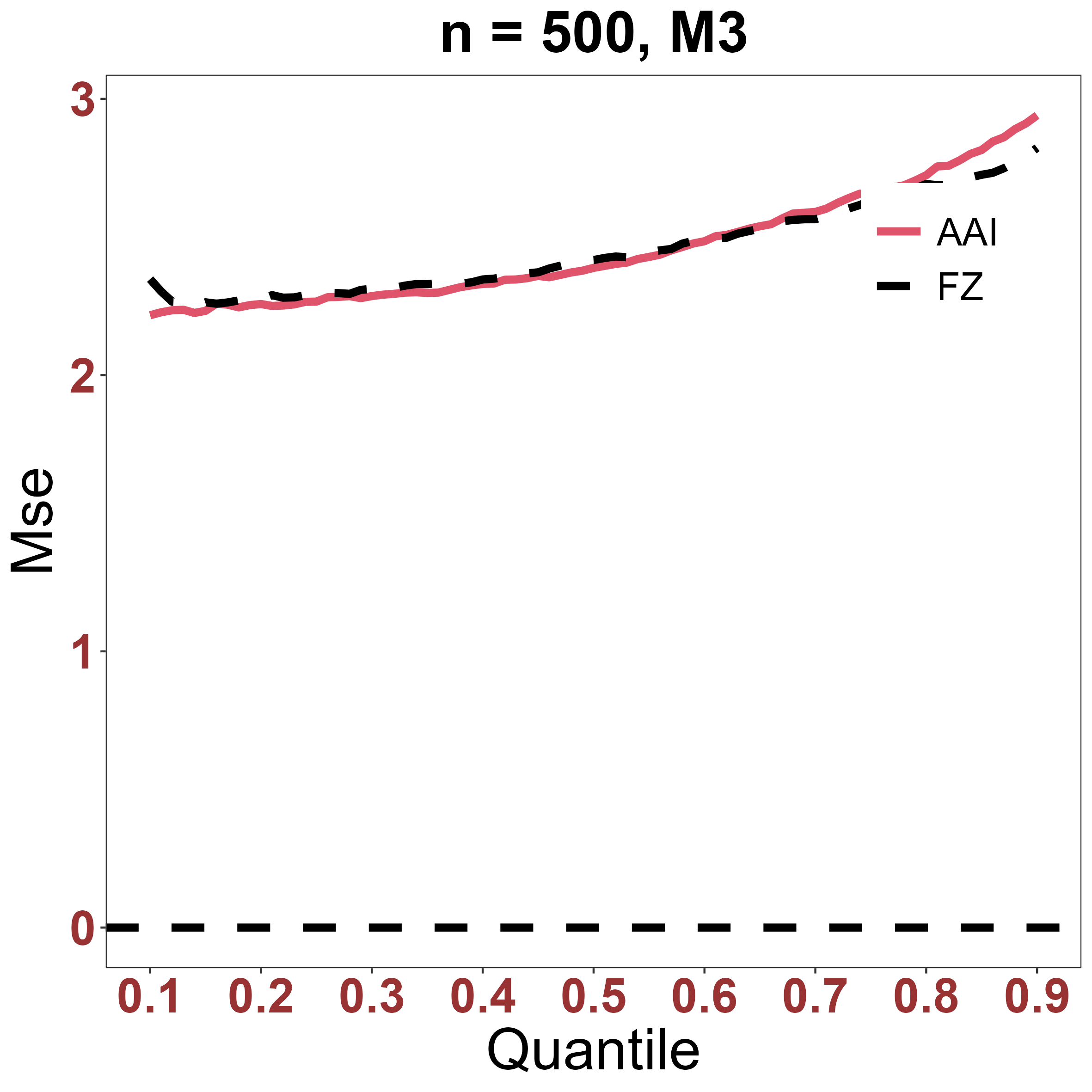}	
		\includegraphics[width = 3.9cm, height = 3.2cm]{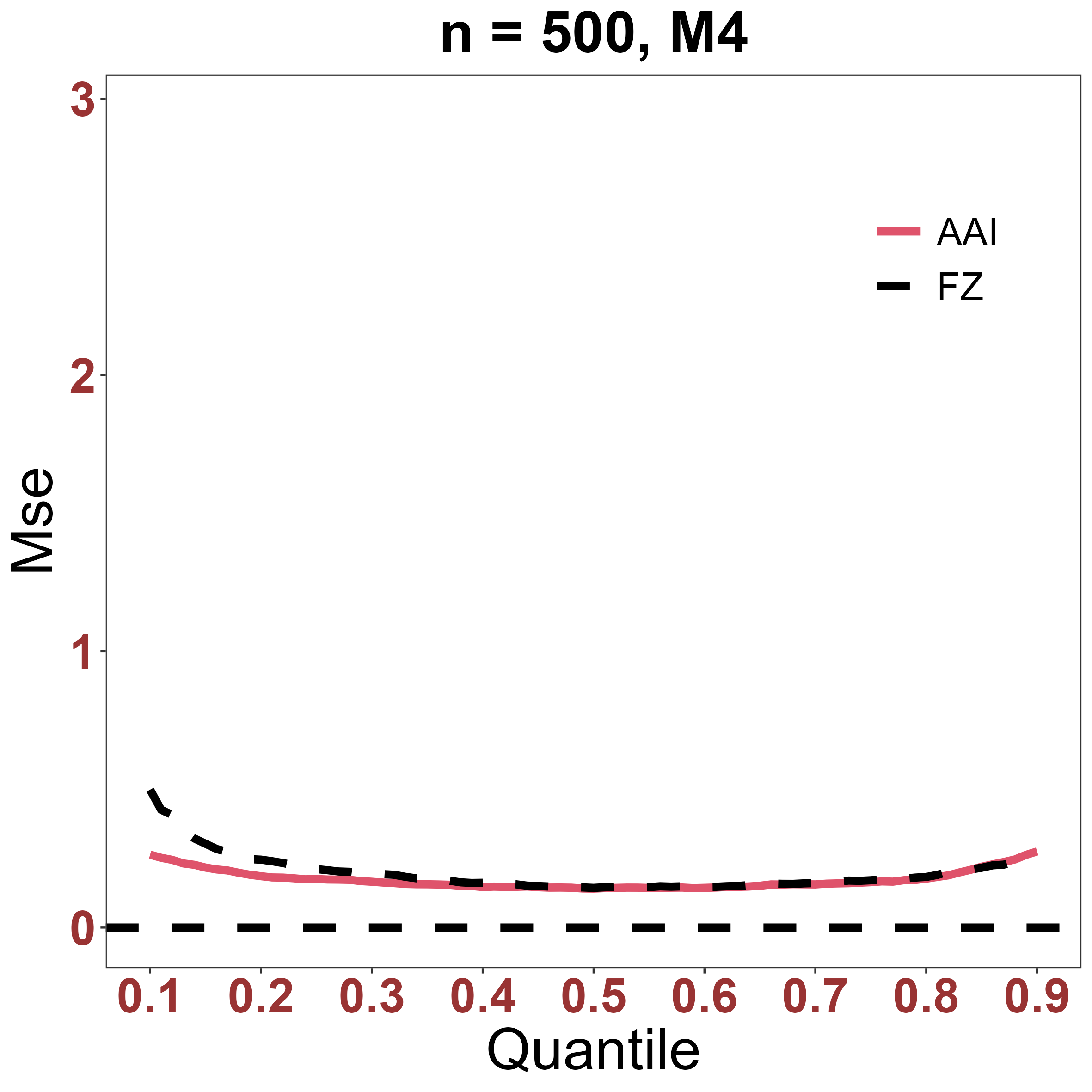}	
	}
	\caption{Bias, variance and MSE of the QTE estimator using the weighted quantile regression (WQR) of \cite{AAI_2002} (AAI) and that using the FZ loss (FZ) when $\rho=0.5$ and $n=500$.}
	\label{figure13}
\end{figure}

\begin{figure}[!htb]
	\centering
	\mbox{
		\includegraphics[width = 3.9cm, height = 3.2cm]{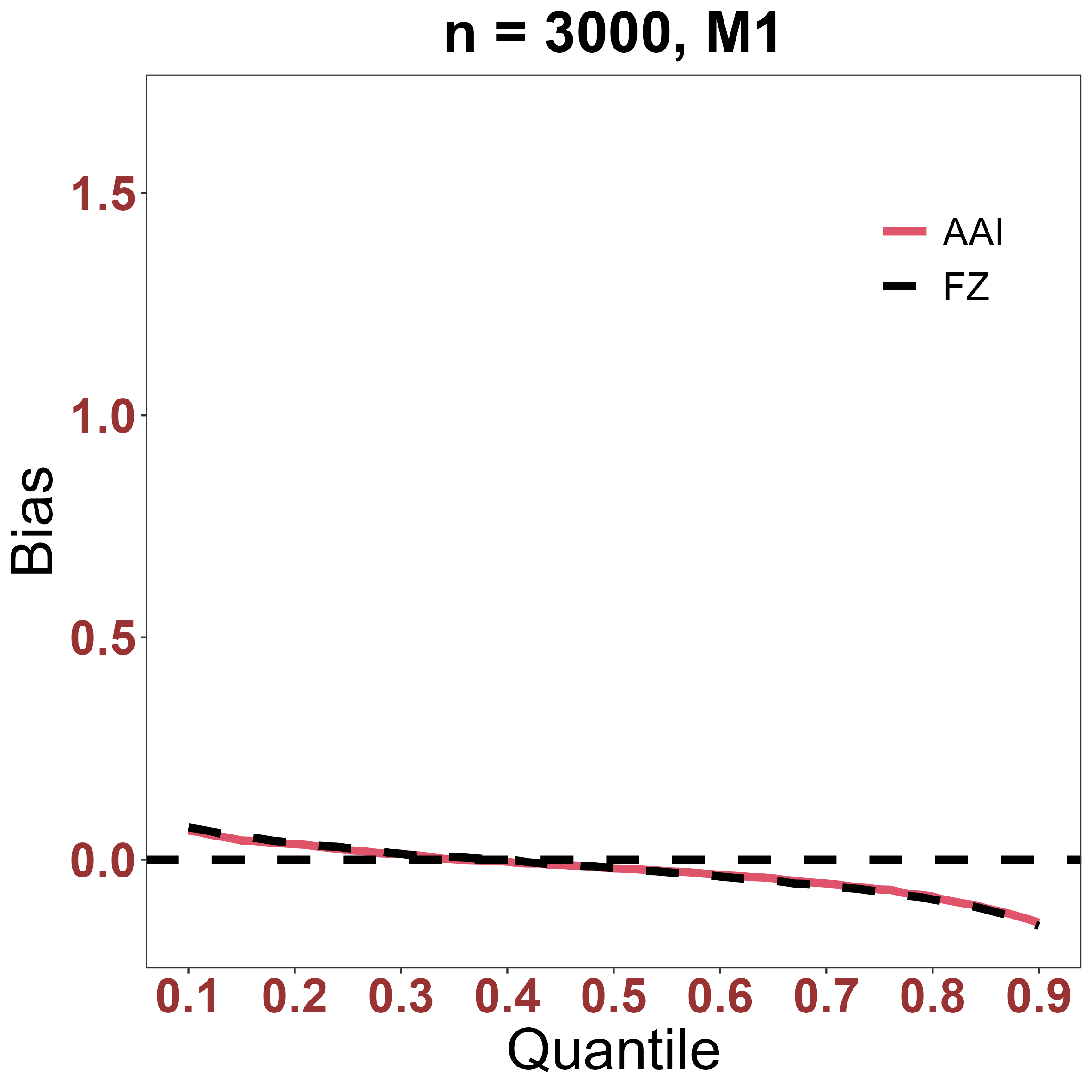}
		\includegraphics[width = 3.9cm, height = 3.2cm]{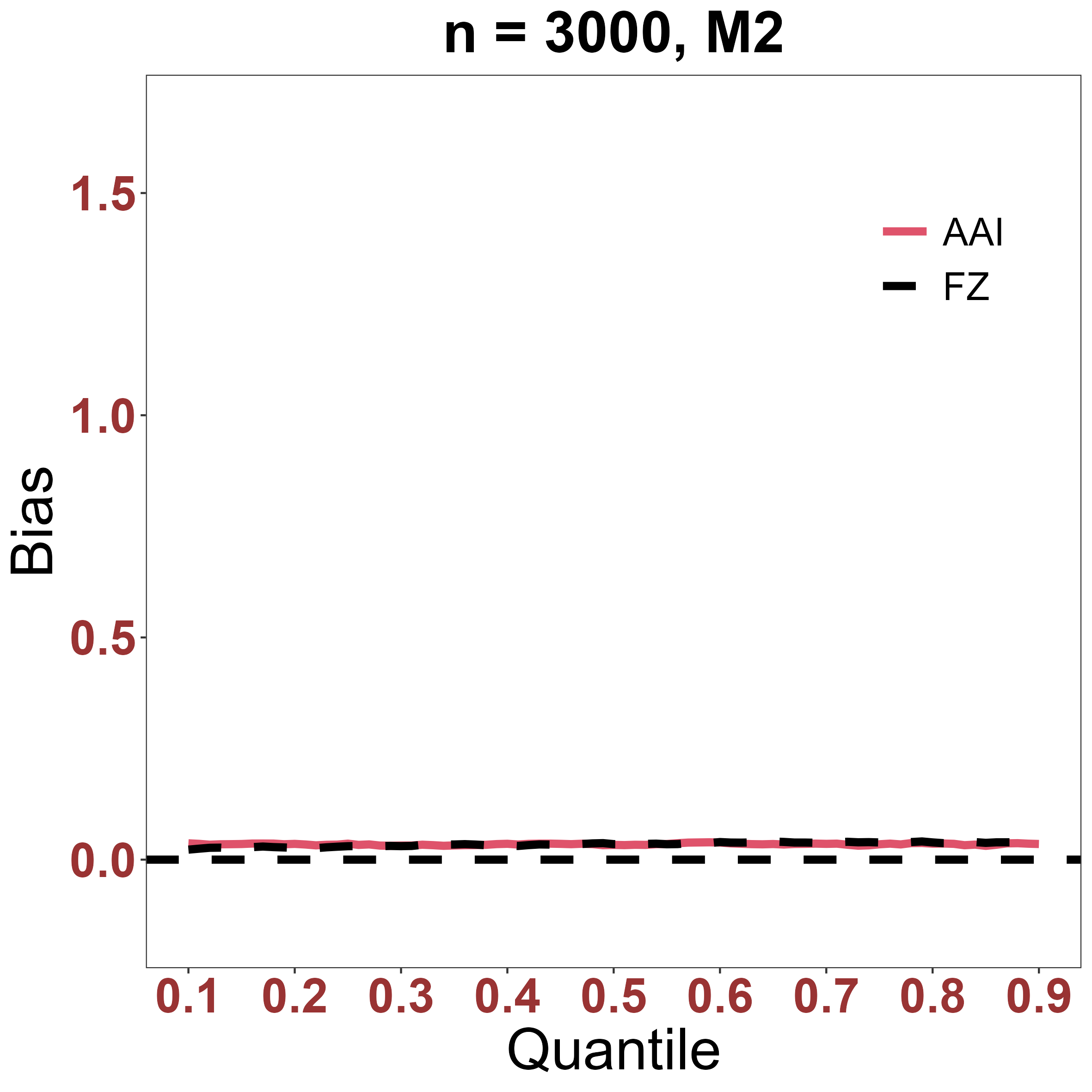}
		\includegraphics[width = 3.9cm, height = 3.2cm]{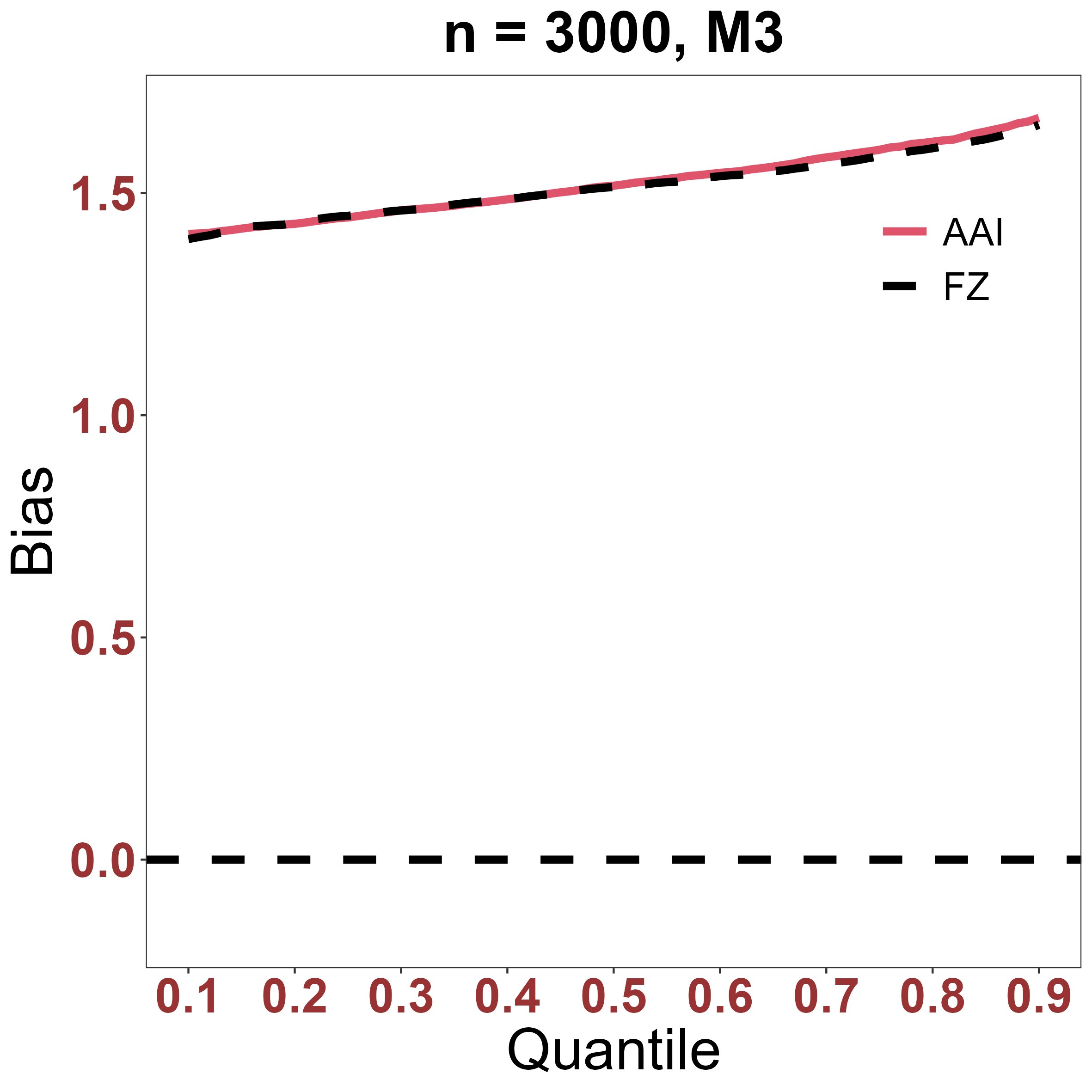}	
		\includegraphics[width = 3.9cm, height = 3.2cm]{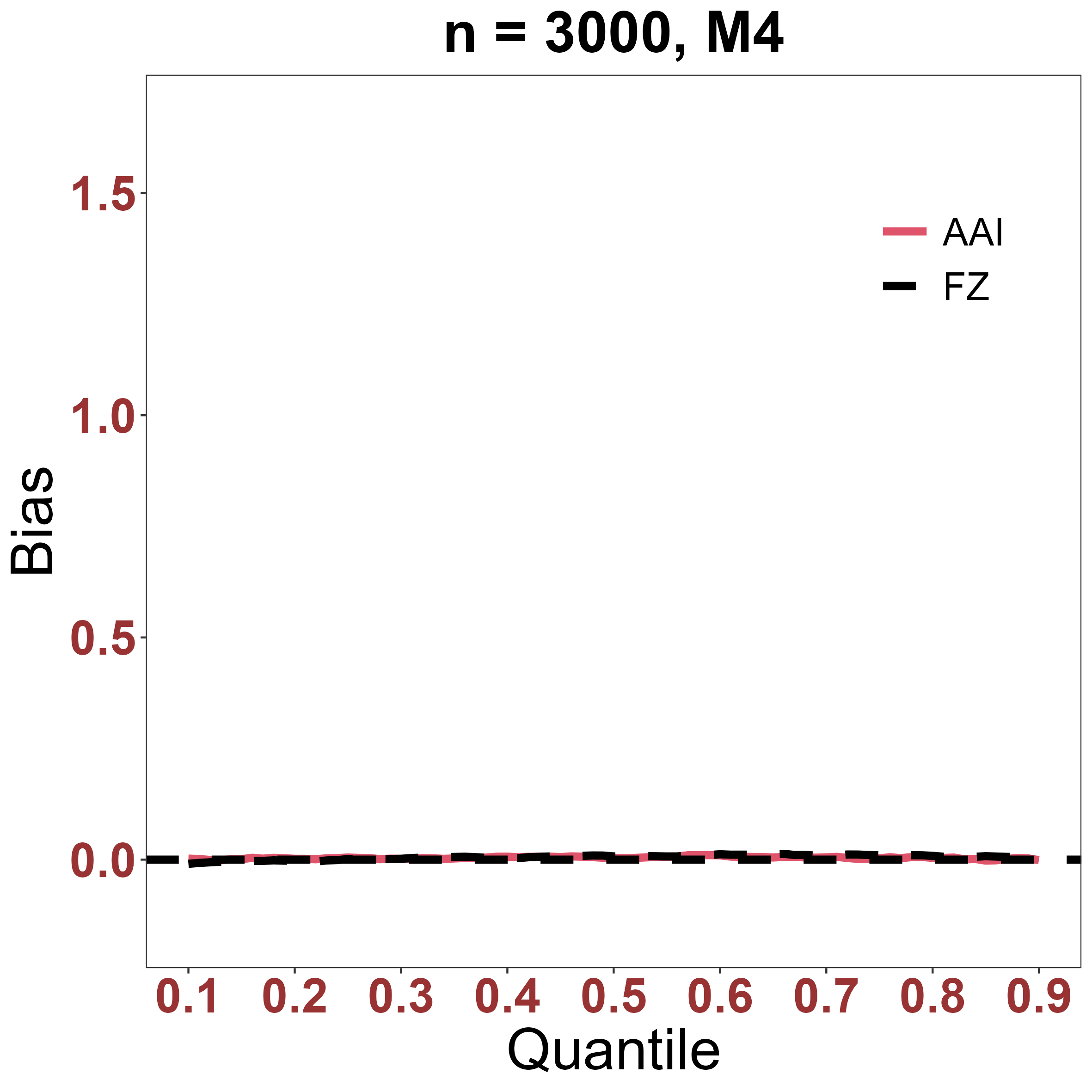}
	}	
	\mbox{
		\includegraphics[width = 3.9cm, height = 3.2cm]{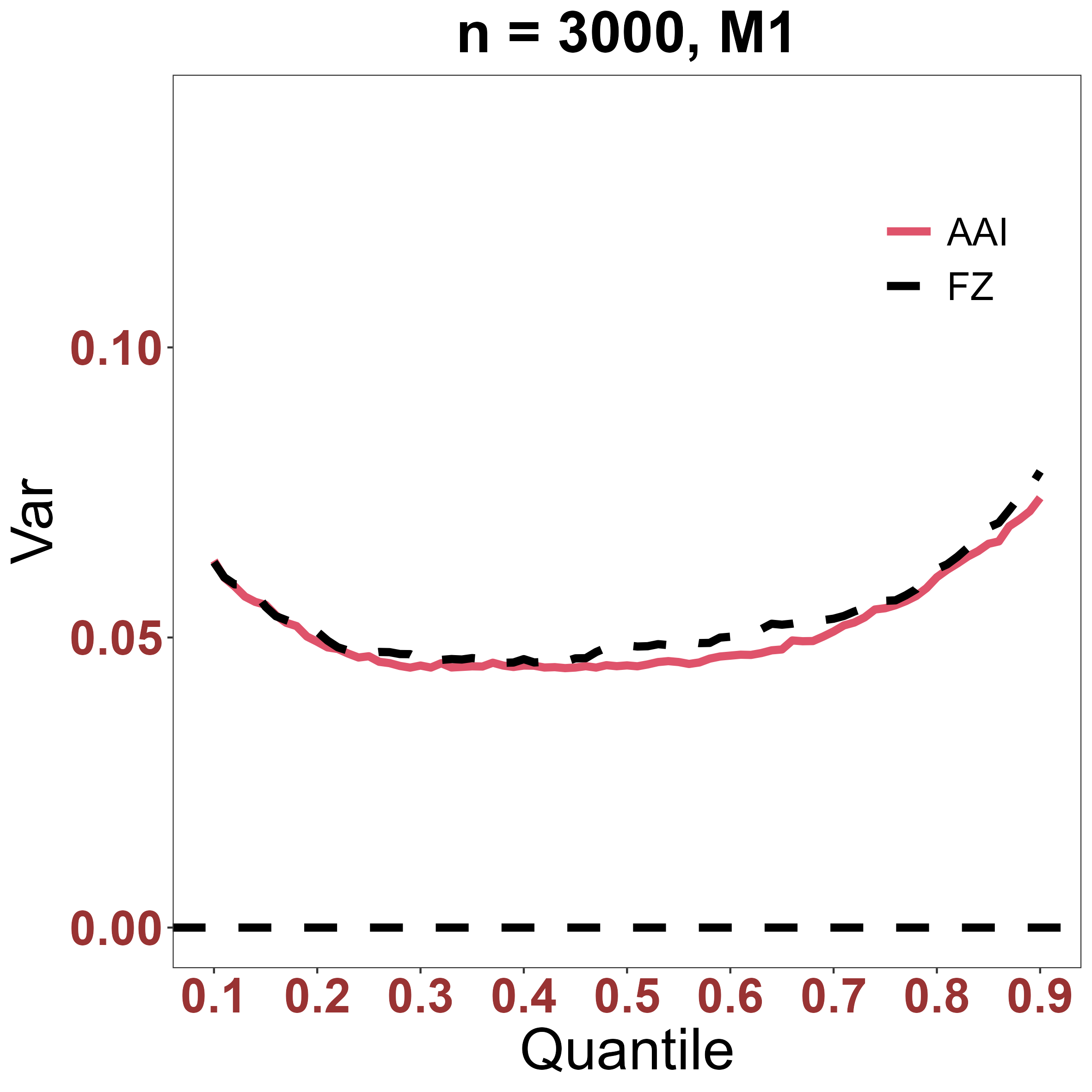}
		\includegraphics[width = 3.9cm, height = 3.2cm]{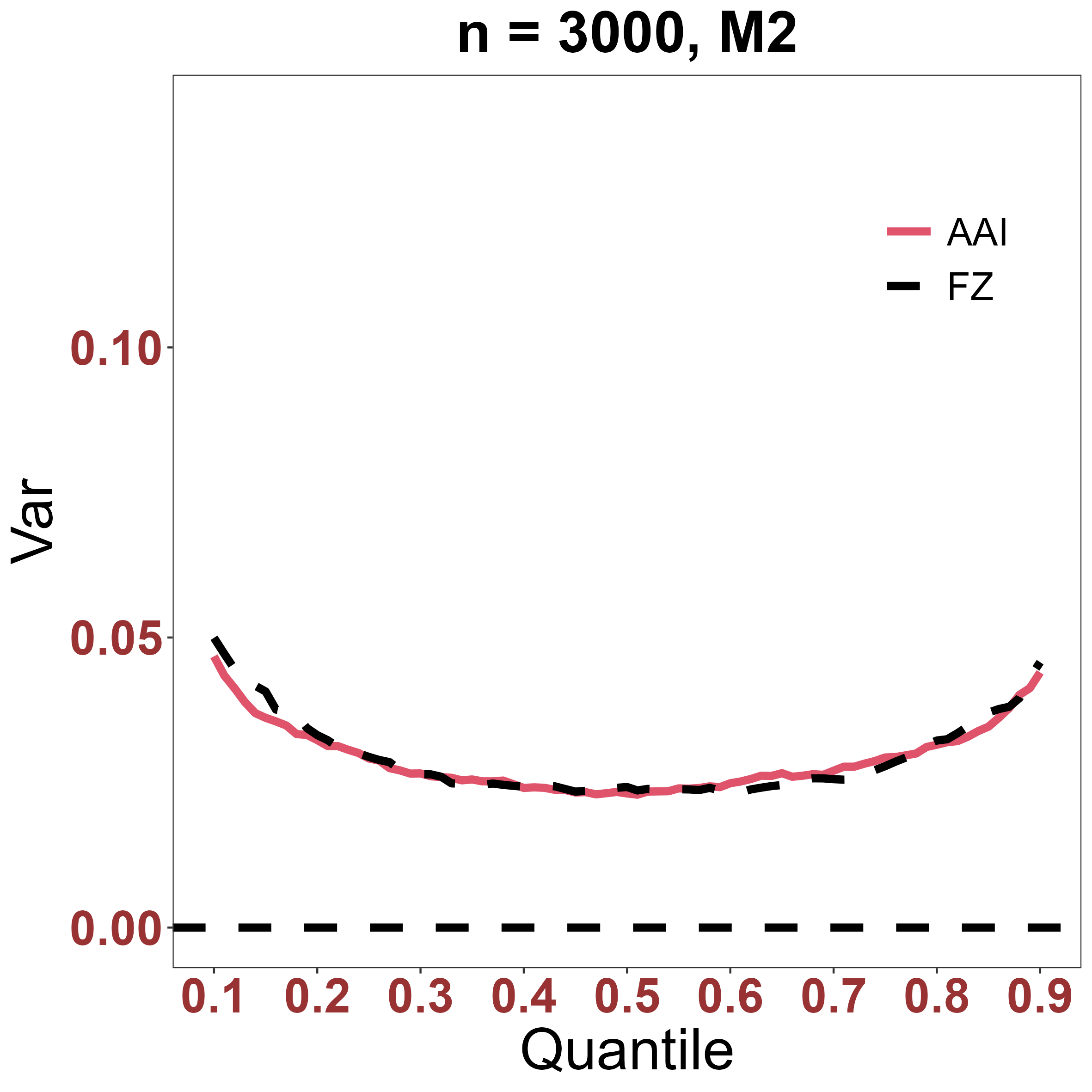}
		\includegraphics[width = 3.9cm, height = 3.2cm]{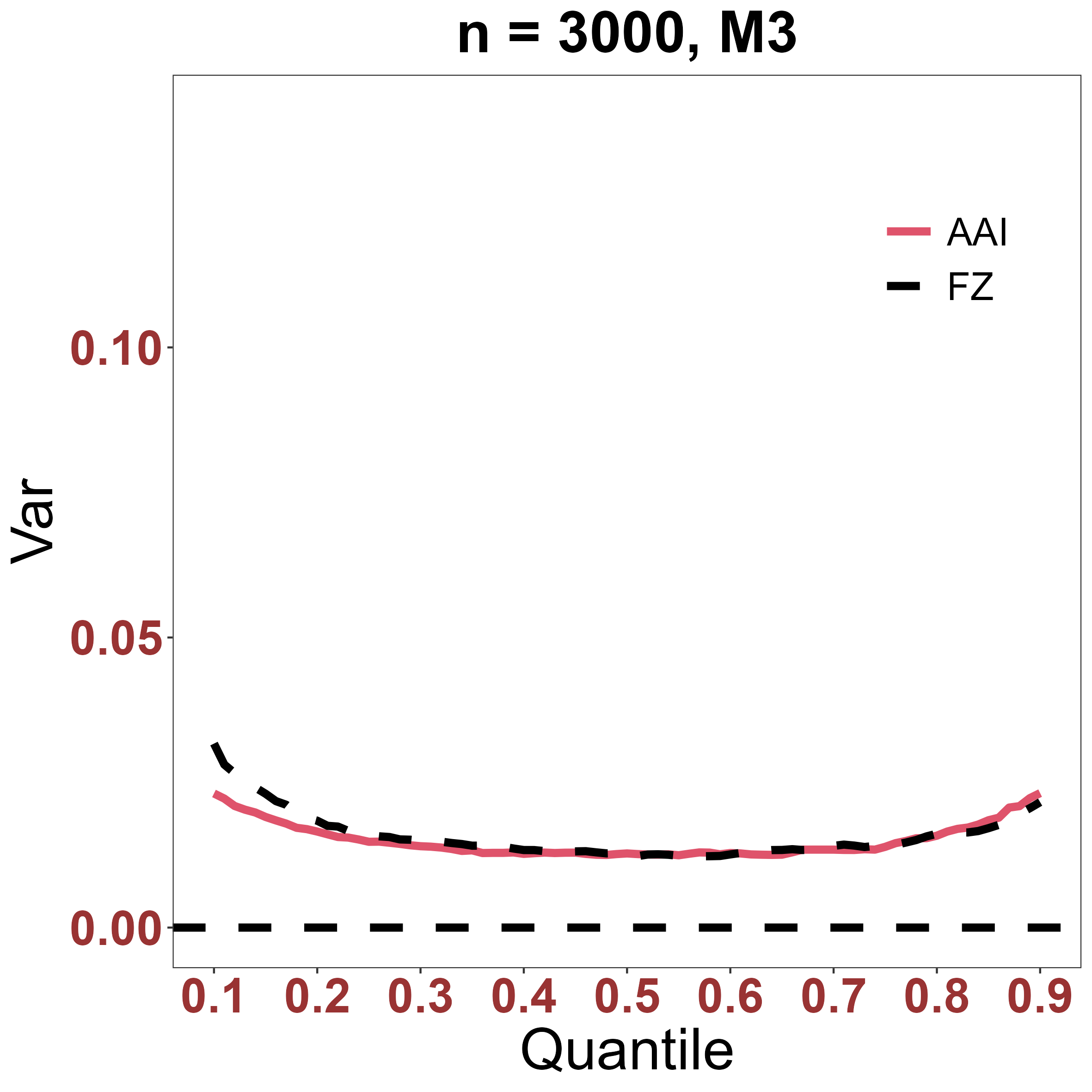}	
		\includegraphics[width = 3.9cm, height = 3.2cm]{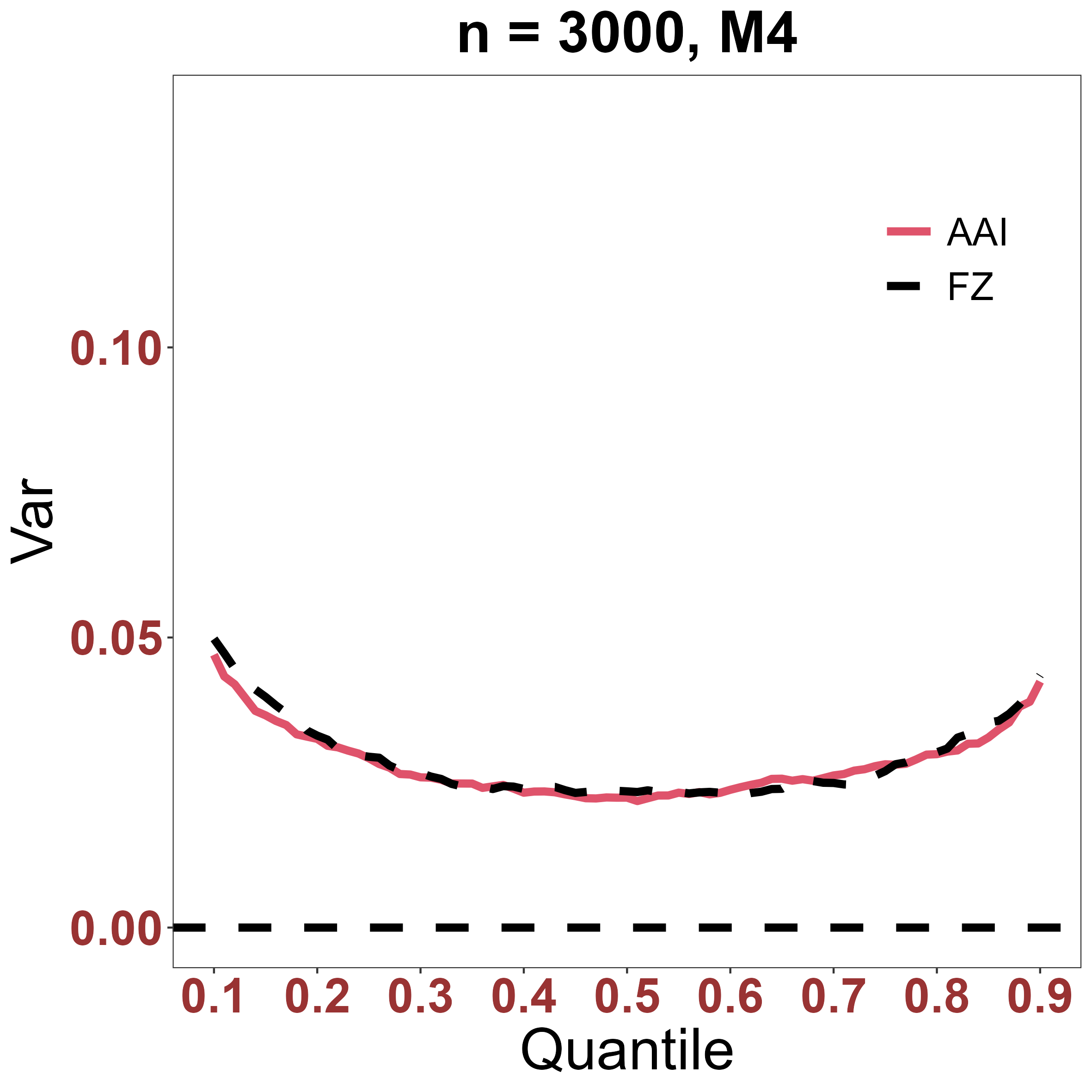}
	}	
	\mbox{
		\includegraphics[width = 3.9cm, height = 3.2cm]{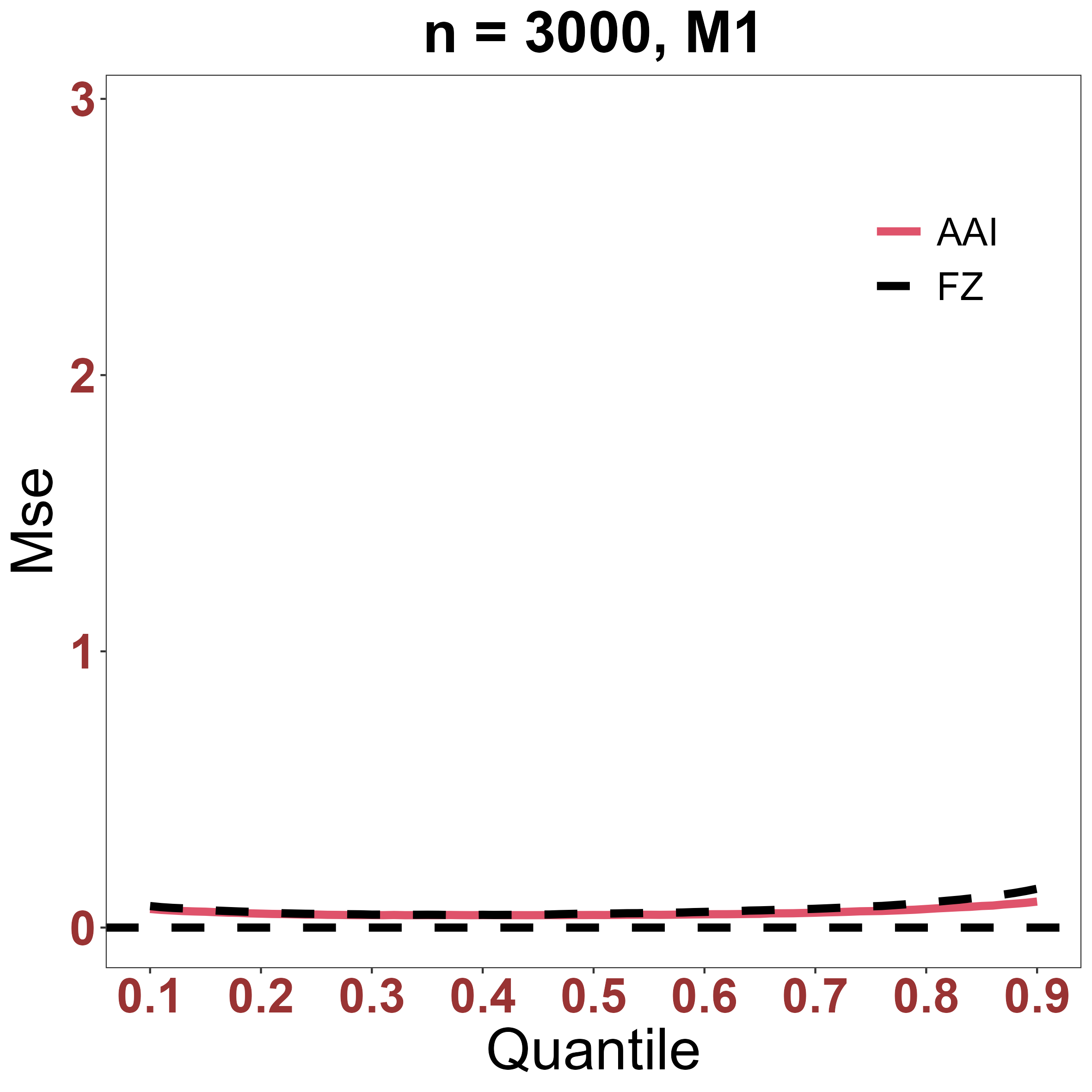}
		\includegraphics[width = 3.9cm, height = 3.2cm]{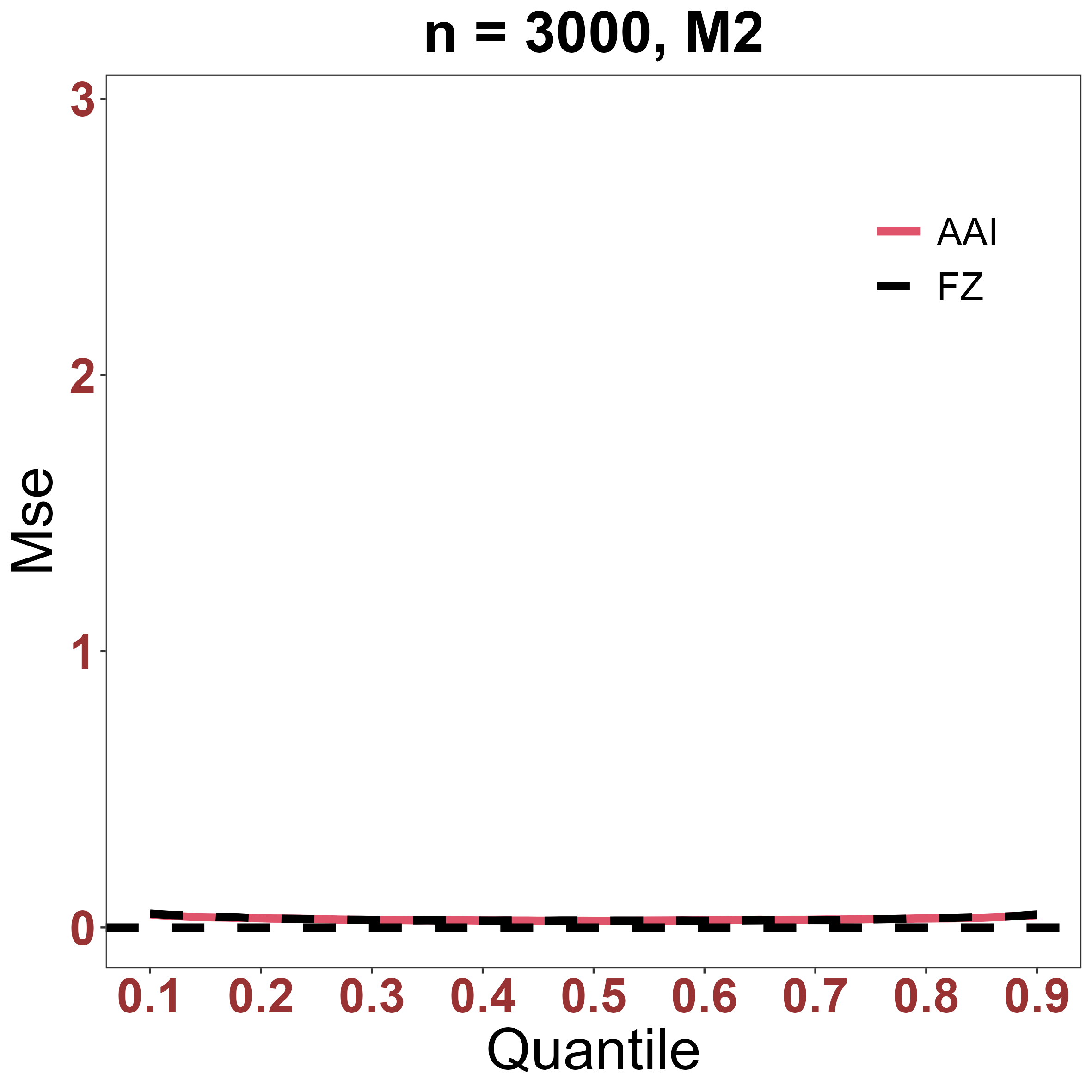}
		\includegraphics[width = 3.9cm, height = 3.2cm]{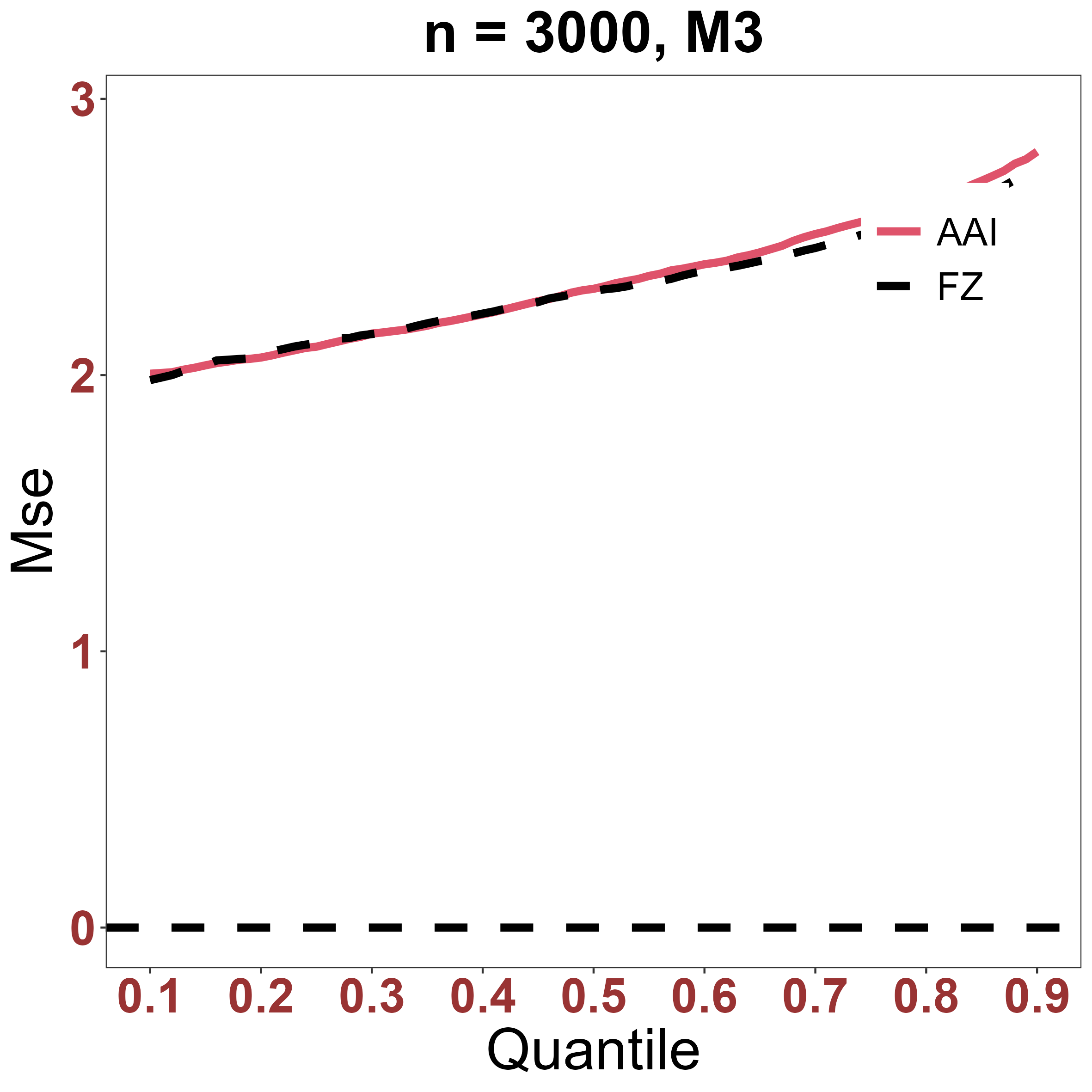}	
		\includegraphics[width = 3.9cm, height = 3.2cm]{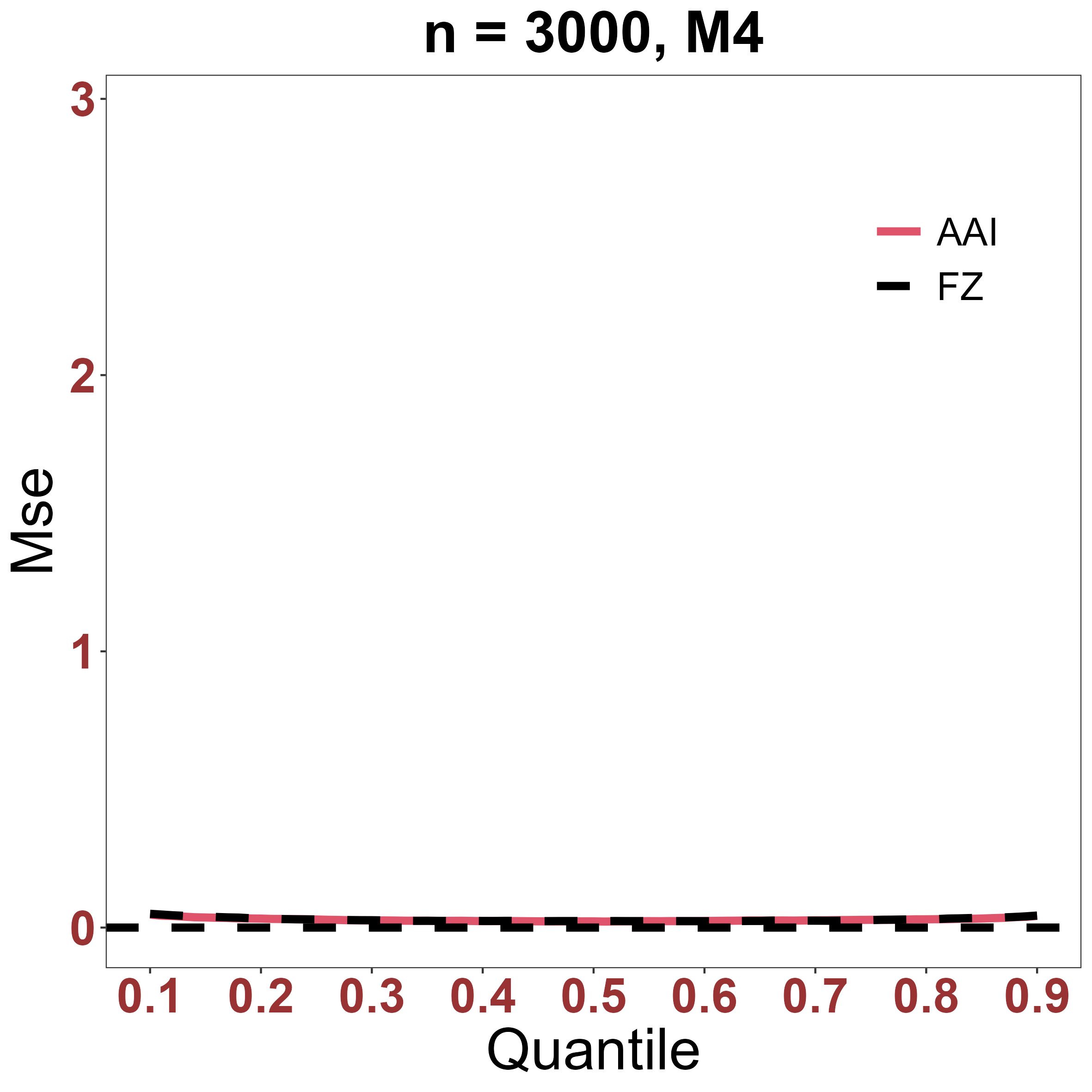}
	}	
	\caption{Bias, variance and MSE of the QTE estimator using the weighted quantile regression (WQR) of \cite{AAI_2002} (AAI) and that using the FZ loss (FZ) when $\rho=0.5$ and $n=3,000$.}
	\label{figure14}
\end{figure}
\clearpage

\begin{figure}[ht]
	\centering
	\mbox{
		\includegraphics[height=8cm,width=8cm]{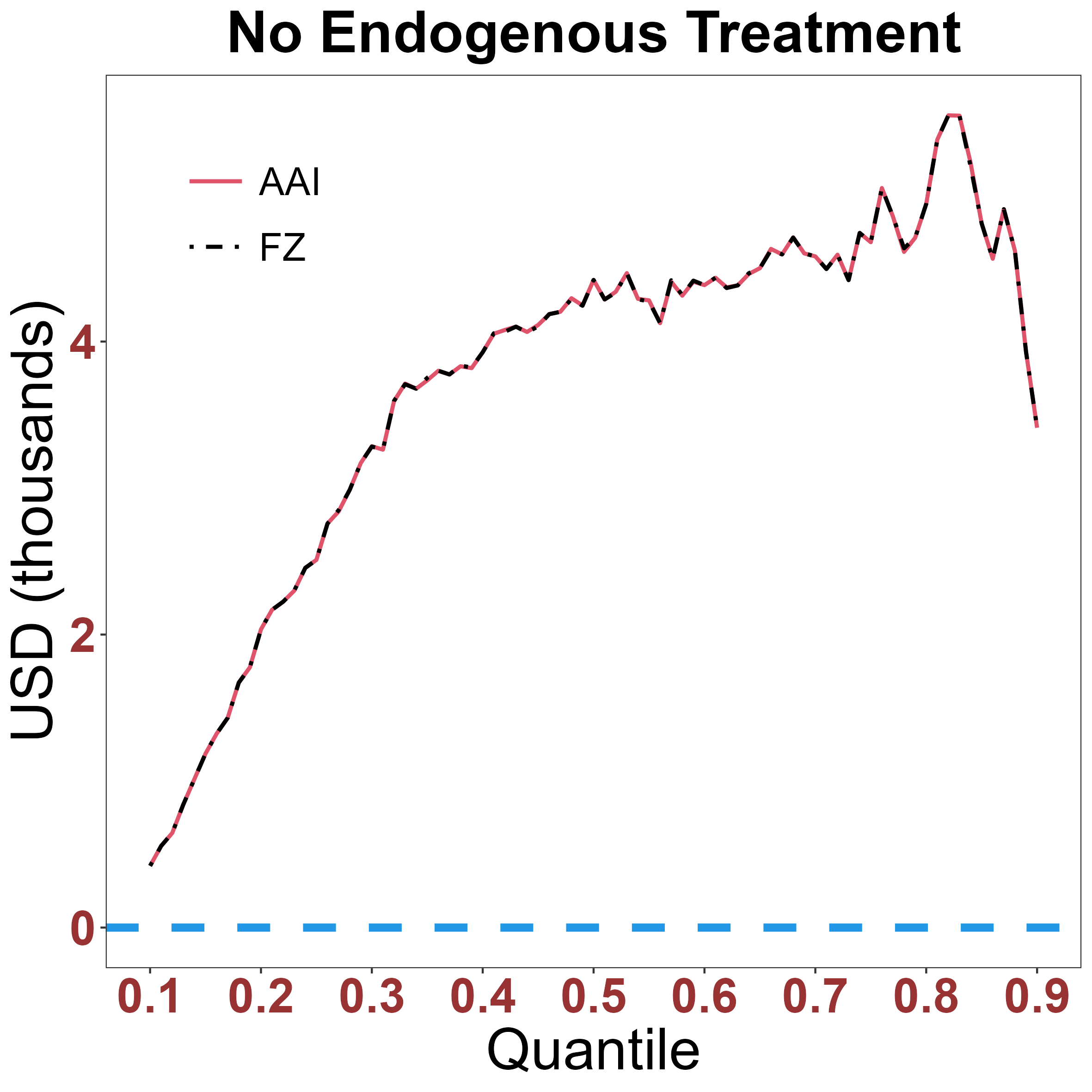}
		\includegraphics[height=8cm,width=8cm]{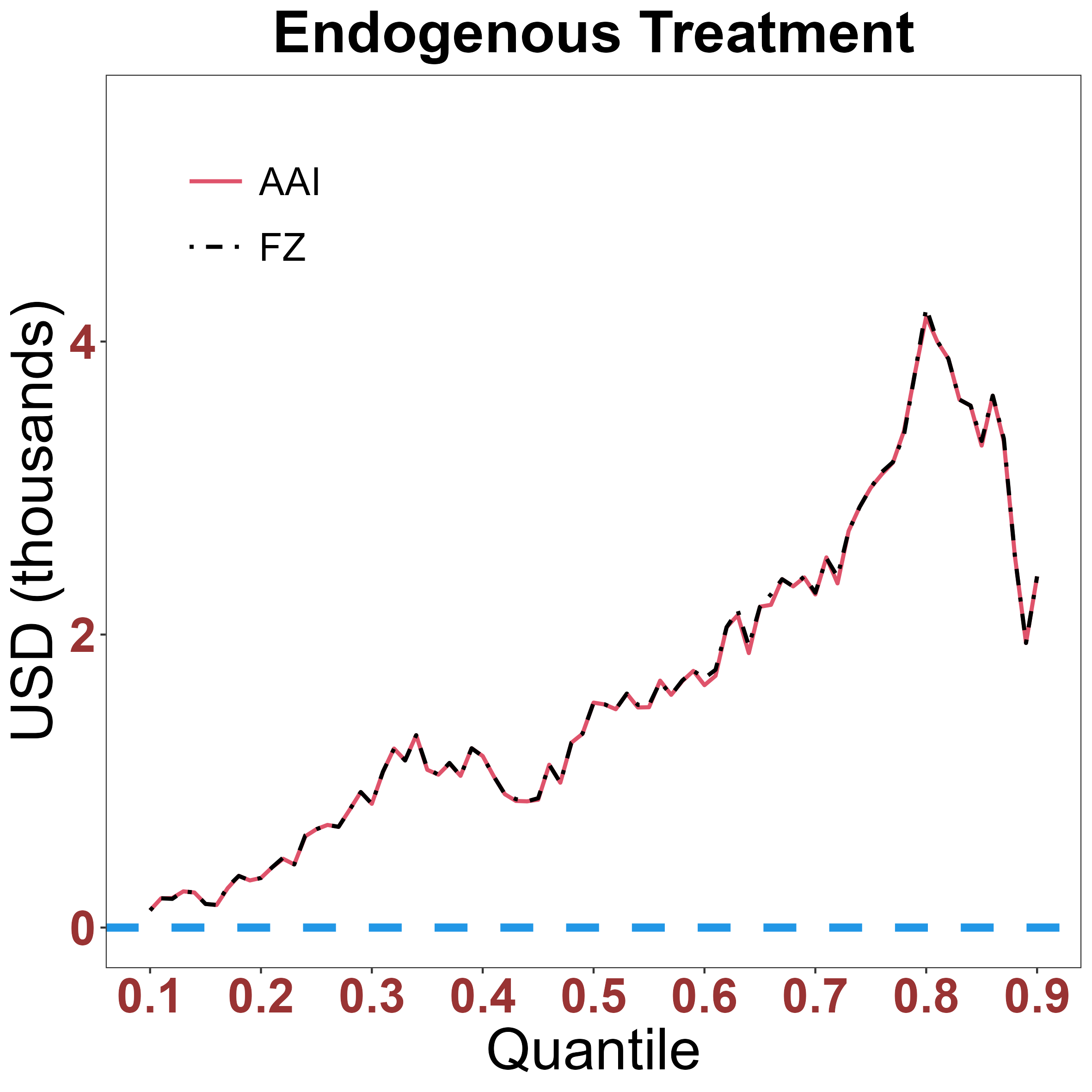}
	}	
	\mbox{	
		\includegraphics[height=8cm,width=8cm]{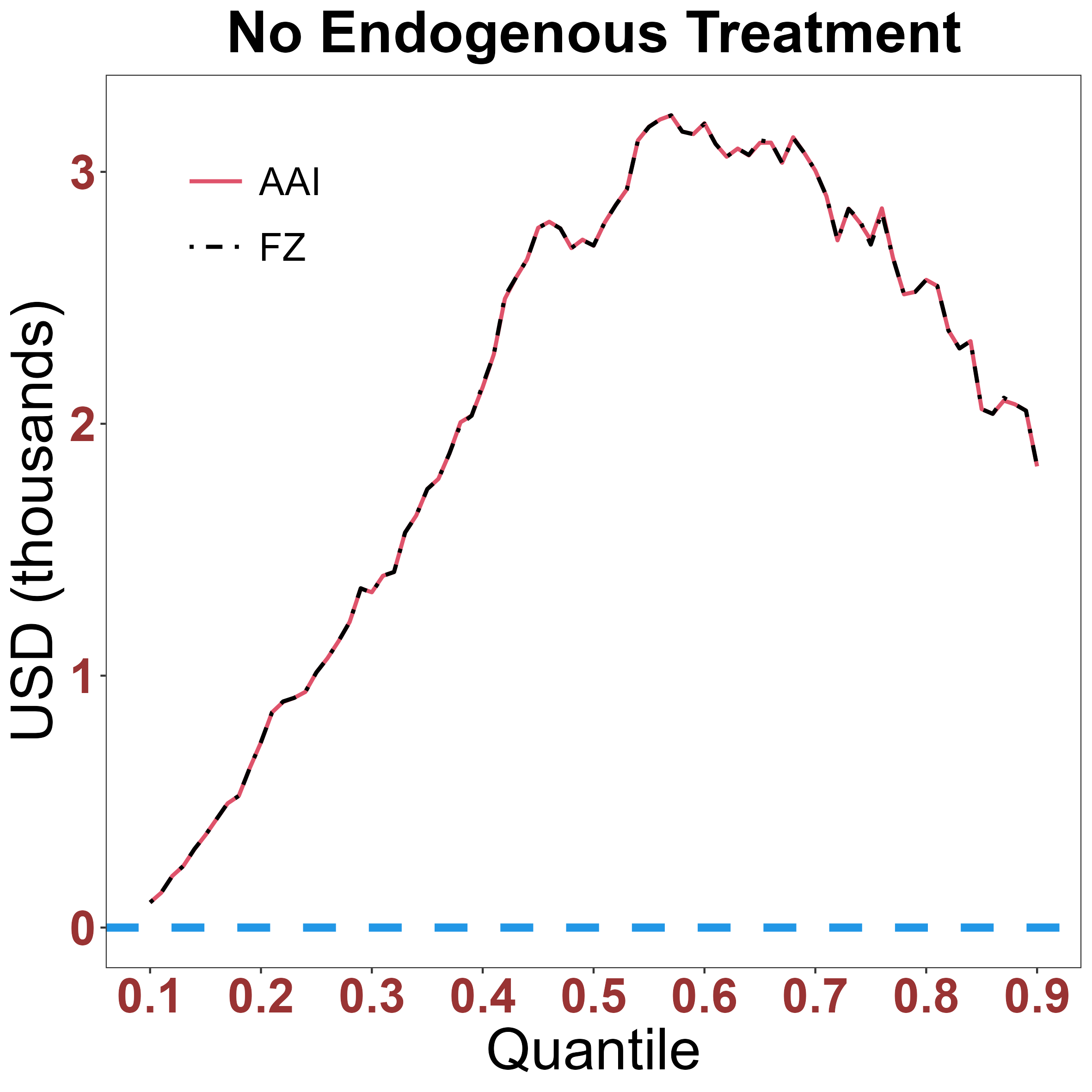}
		\includegraphics[height=8cm,width=8cm]{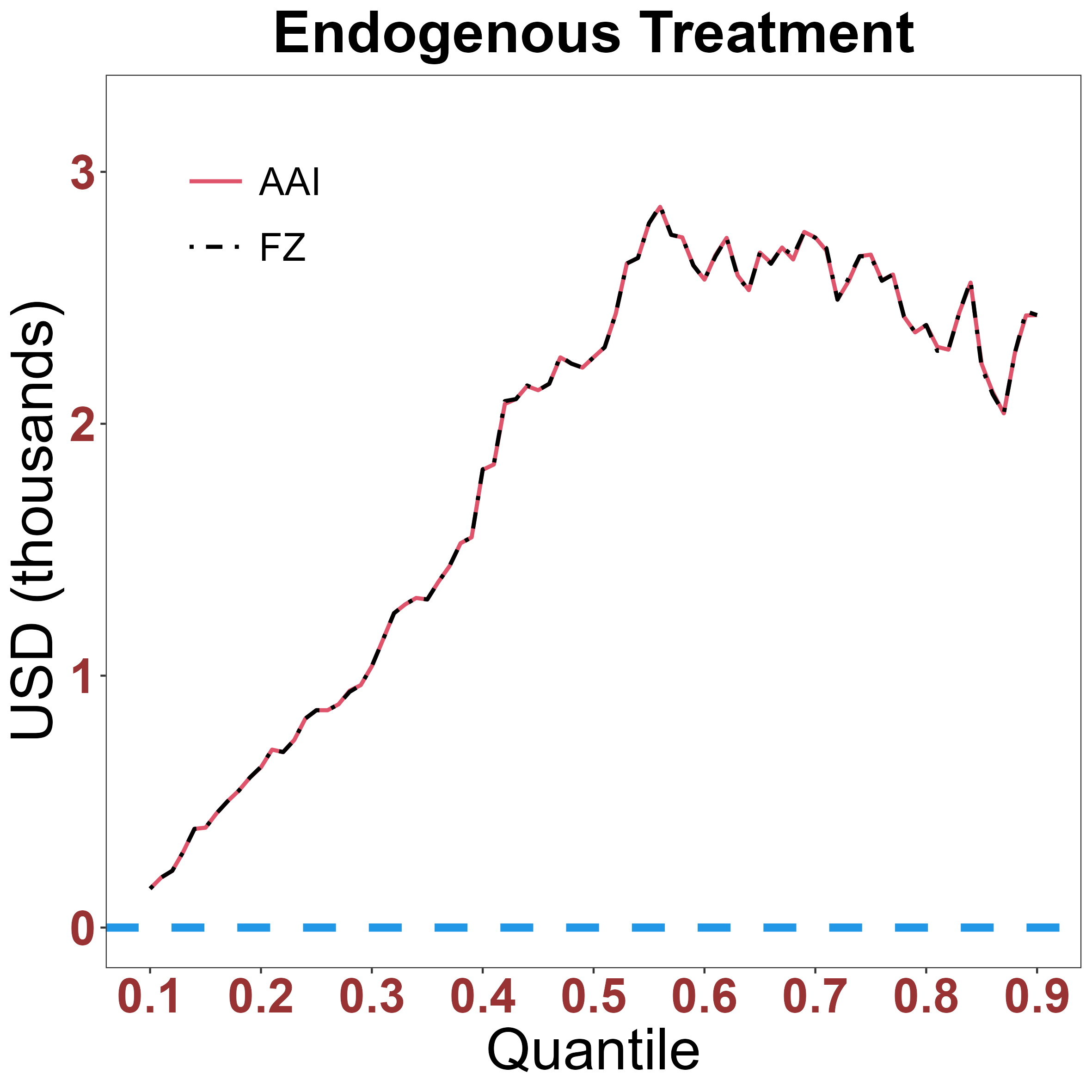}
	}
	\caption{Comparisons of the QTE estimates for compliers from using weighted quantile regression (WQR) of \citet{AAI_2002} (AAI) and from using the FZ loss (FZ). Upper panel: Adult men's earnings. Lower panel: Adult women's earnings.}
	\label{figure15}
\end{figure}

\clearpage
\bibliographystyle{ECTA}
\bibliography{ref_ctate}
\end{document}